\documentclass[aap,preprint]{imsart}

\RequirePackage{amsthm,amsmath,amsfonts,amssymb}
\RequirePackage[numbers,sort&compress]{natbib}
\RequirePackage[T1]{fontenc}
\RequirePackage[utf8]{inputenc}
\RequirePackage{mathtools,mathrsfs}
\RequirePackage[colorlinks,citecolor=blue,urlcolor=blue]{hyperref}
\makeatletter
\renewcommand*{\nocontentsline}{%
  \let\@@addcontentsline\addcontentsline
  \def\addcontentsline##1##2##3{%
    \let\addcontentsline\@@addcontentsline}%
}
\makeatother

\RequirePackage{microtype,tikz}
\definecolor{replicaone}{HTML}{205E8C}
\definecolor{replicatwo}{HTML}{BA5A21}

\startlocaldefs
\theoremstyle{plain}
\newtheorem{theorem}{Theorem}[section]
\newtheorem{lemma}[theorem]{Lemma}
\newtheorem{proposition}[theorem]{Proposition}
\newtheorem{corollary}[theorem]{Corollary}
\theoremstyle{definition}
\newtheorem{definition}[theorem]{Definition}
\newtheorem{remark}[theorem]{Remark}

\endlocaldefs

\begin{document}
\begin{frontmatter}
\title{Signed FK percolation in Ising spin glasses}
\runtitle{FK percolation and spin glasses}
\begin{aug}
\author[A]{\fnms{Alberto}~\snm{Gandolfi}\ead[label=e1]{albertogandolfi@nyu.edu}}
\author[B]{\fnms{Charles}~\snm{Newman}\ead[label=e2]{cn3@nyu.edu}}
\author[B]{\fnms{Daniel}~\snm{Stein}\ead[label=e3]{daniel.stein@nyu.edu}}
\address[A]{Science Division, New York University Abu Dhabi\printead[presep={,\ }]{e1}}
\address[B]{Courant Institute School, New York University\printead[presep={,\ }]{e2,e3}}
\end{aug}

\begin{abstract}
We introduce signed Fortuin--Kasteleyn (FK) percolation for the \(\pm J\)
random-interaction Ising model to characterize nonzero uniform magnetization
density and two-replica overlap within the class of Gibbs distributions whose infinite-volume
Edwards--Sokal (ES)/FK representation has at most
one infinite FK cluster almost surely. The construction refines ordinary FK
connectivity by retaining the parity of coupling products along open paths.

Using the infinite-volume ES representation and its DLR equations, we study
pairs of companion objects and the compatible spin assignments on them.
In one replica, companions are defined by positive-parity FK
connections. For extremal states satisfying the infinite-cluster assumption
and almost-sure existence of companion densities along the chosen van~Hove
sequence, nonzero limiting expected magnetization density is equivalent to
density distinguishability of the companions of the infinite FK cluster.

For two independent replicas, companions are defined by FK connections
coherent in parity. For products of extremal one-replica states satisfying
the infinite-cluster assumption and almost-sure existence of the companion
densities along the chosen sequence, density distinguishability of the
companions is equivalent to nonzero limiting expected mutual overlap.
For two independent replicas drawn from the same extremal state, this
specializes to the Edwards--Anderson self-overlap.
\end{abstract}

\begin{keyword}[class=MSC]
\kwdgroup[type=primary]{\kwd{60K35}}
\kwdgroup[type=secondary]{\kwd{82B20}\kwd{82B26}\kwd{82B44}}
\end{keyword}
\begin{keyword}
\kwd{Spin glass}
\kwd{phase diagram}
\kwd{random interaction Ising model}
\kwd{FK representation}
\kwd{percolation}
\end{keyword}
\end{frontmatter}

\tableofcontents

\section{Introduction}

Although the onset of Fortuin-Kasteleyn (FK)~\cite{fortuin1972random} random cluster percolation signifies the 
transition from the paramagnetic to ferromagnetic phase 
in uniform Ising ferromagnets, it does not similarly imply the onset of a low-temperature broken-symmetric  phase in the Edwards-Anderson (EA) ~\cite{edwards1975theory}  Ising spin glass model on $\mathbb{Z}^d$~\cite{swendsen1987nonuniversal}.  Some special
cases have been studied: on the one hand,
the onset of {\it double\/} FK percolation captures the spin glass phase on the Bethe lattice~\cite{carlson1990bethe}. (By ``double FK percolation'' we mean percolation of the bonds occupied in both of two independent FK configurations.) On the other hand, numerical studies of two-replica representations in the EA model support a density-imbalance description of spin-glass ordering~\cite{machta2008percolation,munster2023cluster}.
Rigorous results relating this ordering to the densities of large clusters
in the infinite-range Sherrington--Kirkpatrick model are given
in~\cite{machta2008percolation}.
A related density-imbalance criterion is established for
Chayes--Machta--Redner (CMR) blue clusters by
Pei~\cite{pei2026blue}: persistent imbalance would imply distinct
spin-flip-related Gibbs states, while the existence of such imbalance remains open.
These examples motivate an FK description of both uniform magnetization
density and replica overlap in systems with competing ferro- and
antiferromagnetic interactions.

In addition, while the mathematical theory of percolation, even
in its bond-variables version, is by now very well developed
\cite{duminil2018sixty}, there seems to be no natural 
extension beyond cases with only positive or zero values.  In 
particular, while there is a well-developed deterministic theory of signed graphs and gain graphs, where signs or group labels are multiplied along paths \cite{harary1953balance,zaslavsky1982signed,zaslavsky1989biased}, there does not seem to be an established  percolation theory that combines signed graphs and FK distributions into a signed-connectivity theory.

In this paper, we develop signed percolation mechanisms for these two
order parameters. We use the infinite-volume Edwards--Sokal (ES) joint
spin--FK representation, defined in Section~\ref{sec:ES-infinite-volume},
and restrict to states with at most one infinite FK cluster almost surely.
Our main results characterize nonzero limiting expected uniform
magnetization density and mutual overlap. They concern extremal
one-replica Gibbs states, or products of such states, and assume
almost-sure existence of the relevant companion densities along a fixed
van~Hove sequence of volumes. Extremal states and the one-state order parameters
are defined in Section~\ref{subsec:order-parameters}; mutual overlap is
defined in Section~\ref{subsec:mutual-overlap}.
The precise characterizations are Theorems~\ref{EquivalenceOneCopy}
and~\ref{thm:product_extremal_pc_overlap}.
The signed connections and their companions are outlined below.

Our results expand the idea, first suggested in
\cite{biskup2000gibbs}, that spin order can be studied through the
infinite-volume ES joint spin--FK distribution
\cite{edwards1988generalization}. Under the hypotheses of the
order-parameter theorems, spin assignments on the infinite FK objects
determine the limiting expected uniform magnetization density and overlap.
We first develop the required infinite-volume ES theory for the
random-interaction Ising model, and then identify the companion objects
that enter these formulas. Projection onto the FK variables gives the
corresponding signed percolation criteria.

As already highlighted by several authors \cite{grimmett2004random,biskup2000gibbs}, for the
projection onto the FK bond variables to work properly 
we need uniqueness of the infinite FK cluster, a condition
which is  satisfied in several interesting and relevant contexts
\cite{gandolfi1992uniqueness} and plays a 
role similar to absolute convergence of interactions
in Gibbs distributions. 

With uniqueness in place, we dig deeper into the types of long-range vertex connections
which are generated by the FK-occupied bonds in either one copy (for uniform magnetization)
or two copies (for replica overlap).  We call such new 
long-range ``percolative'' mechanisms ``signed percolation'',
which can be either $+$connected (i.e., FK-connected along a path with either zero or else an even number of antiferromagnetic bonds,  also referred to as a positive-parity path), $-$connected (odd number of antiferromagnetic bonds, or a negative-parity path), or disconnected.
Transitivity of $+$connections and the fact
that two $-$connections in series determine a 
$+$connection imply that $+$clusters organize
into pairs of companion long-range clusters, as described in Section~\ref{SFFK11}.

We define signed FK percolation through well-defined, unequal densities of
companion sets associated with infinite FK objects. There are two
versions, described below, depending on whether we consider one or two
replicas. In either version, one companion may be finite or empty.

The first version of signed percolation we analyze,
useful for a single replica,
is  signed
ferromagnetic FK percolation, in which FK unfrustrated configurations are considered with
random $\pm J$ couplings, and the connectivity $r(x,y)$
between any two vertices $x,y$ is defined by 
\begin{eqnarray} \label{SignedFerrConn}
r(x,y)=\begin{cases}
    +1 \hskip 0.1 cm \text{ if  an FK-occupied ferromagnetic path connects 
     } x \text{ and } y  \\
     -1  \hskip 0.1 cm \text{ if  an FK-occupied antiferromagnetic path  connects
     } x \text{ and } y \\
     0  \hskip 0.1 cm \text{ if  no FK-occupied path connects
     } x \text{ and } y\, . 
\end{cases}
    \end{eqnarray}
By FK-occupied ferromagnetic path we mean an FK-connected path containing zero or an even number of antiferromagnetic bonds, and by FK-occupied antiferromagnetic path we mean an FK-connected path containing an odd number of antiferromagnetic bonds.

We consider signed ferromagnetic-FK-clusters, or sf-FK-clusters,
which are maximal sets of vertices 
which are pairwise signed ferromagnetically-FK-connected,
or sf-FK-connected;
i.e. $r(x,y)=1$ for each $x,y$ in the cluster.
We include the word "signed" in the name to distinguish this notion of
connection from what is generally understood as FK ferromagnetic connection,
which in turn involves only FK occupied ferromagnetic edges.
We will show that clusters come in pairs, in the 
sense that for each sf-FK-cluster $C$, 
there is a possibly empty companion set $\tilde C$, which is an
sf-FK-cluster when nonempty,
such that each pair of vertices, one in $C$ and one in $\tilde C$,
is signed antiferromagnetically FK-connected, or sa-FK-connected;
i.e. $r(x,y)=-1$ for each $x
\in C,y \in \tilde C$.
While an sf-FK-cluster $C$ need not be connected in the traditional
nearest-neighbor sense, the union $C \cup \tilde C$ is.
In addition, in the $d$-dimensional lattice 
$\mathbb Z^d$, under a joint periodic-boundary limit satisfying the
one-replica infinite-cluster assumption, at most one companion pair
contains an infinite sf-FK-cluster, almost surely. One member of this
pair may be finite or empty, so the number of infinite sf-FK-clusters
is $0$, $1$ or $2$.

We will say that signed ferromagnetic FK percolation occurs when the two
companions of the infinite FK cluster have well-defined, unequal densities
along the chosen sequence; see Definition~\ref{def:percolation-selector}.
A companion may be finite or empty; in that case the other must have
positive density. Other tail observables are discussed in
Section~\ref{SfPerc}.

The second version we consider is 
signed paired coherent  FK percolation.
Here, FK configurations of two replicas of the 
random interaction Ising model with
random $\pm J$ couplings are considered;
the connectivity $s(x,y)$
between any two vertices $x,y$ is defined
as follows. Let
\(x \xLeftrightarrow{c} y\) indicate
that $x$ and $y$ are
FK-connected in both copies with the same parity and \(x \xLeftrightarrow{i} y\) indicate
that $x$ and $y$ are
FK-connected in both copies with the opposite parity, where
"c" stands for coherent and "i" for incoherent; then
\begin{eqnarray} \label{SignedDoubleConn}
s(x,y)=\begin{cases}
    +1 \hskip 0.5 cm \text{ if } x \xLeftrightarrow{c} y, \\
     -1 \hskip 0.5 cm \text{ if } x \xLeftrightarrow{i} y, \\
     0 \hskip 0.5 cm \text{ otherwise }.
\end{cases}
\end{eqnarray}

This time we consider paired coherent FK-clusters, or pc-FK-clusters,
which are maximal sets of vertices
whose pairs are coherently connected in both replicas,
i.e. $s(x,y)=1$ for each $x,y$ in the cluster.
As before, for any pc-FK-cluster $C$
there is a possibly empty companion set $\tilde C$, which is a
pc-FK-cluster when nonempty,
such that each pair of vertices $x$ in $C$ and $y$ in $\tilde C$
is incoherently connected, i.e. $s(x,y)=-1$. 
This time, the union of a coherent cluster and its
companion need not be connected in the traditional sense.
However,  in the $d$-dimensional lattice 
$\mathbb Z^d$, under a joint two-replica periodic-boundary limit satisfying
the infinite-cluster assumption in each replica, at most one companion
pair contains an infinite coherent cluster, almost surely.
The union of such a pair is the vertex-set intersection of the two
infinite ordinary FK clusters; this is proved in
Lemma~\ref{lem:pairedAsIntersectionOfFKClusters}.
The number of infinite coherent clusters is again $0$, $1$ or $2$.
We restrict our discussion to states satisfying the one-replica
infinite-cluster assumptions.

We say that paired coherent FK percolation occurs when the coherent
companions have well-defined, unequal densities along the chosen sequence;
see Definition~\ref{def:subseq-pc-density-imbalance}.

M{\"u}nster and Weigel~\cite{munster2026cluster} report numerical
evidence for a related density-imbalance mechanism in the
three-dimensional $\pm J$ model: two large
Chayes--Machta--Redner--J{\"o}rg (CMRJ, or blue) clusters develop unequal
densities at the disordered ferromagnetic or spin-glass transition,
below their percolation temperature.
CMRJ bonds can be occupied only when the coupling sign agrees with
the product of the endpoint spins in both replicas; their occupation
probability differs from that for the common occupied bonds of two
independent FK configurations, another construction studied there.
The paired connectivity used here allows different occupied paths in
the two replicas, and coherence compares their coupling-sign products.
Consequently, the coherent companions need not coincide with either
of these bond clusters.
Theorem~\ref{thm:product_extremal_pc_overlap} gives an exact identity
for the limiting expected mutual overlap in terms of our companion
densities under the stated extremality, infinite-cluster and
density-existence hypotheses, with the diagonal Edwards--Anderson
specialization in Corollary~\ref{cor:diagonal_pc_EA}.

The sf-FK and pc-FK companions relate spin assignments at infinity to
the order parameters considered here. When the two companion densities
exist and differ, the denser companion provides a reference for choosing
a spin orientation in one replica, or an agreement orientation in two
replicas. The density
selectors are defined in Sections~\ref{SfPerc}
and~\ref{subsec:pc-percolation}. Their stability under finite
modifications permits the corresponding orientations to be imposed
without changing the local ES specification.
Under the hypotheses of Theorems~\ref{EquivalenceOneCopy}
and~\ref{thm:product_extremal_pc_overlap}, equal companion densities
give a vanishing corresponding limiting expected order parameter.
Equality of densities is a property of the FK geometry; it neither
imposes spin-flip invariance on the ES measure nor establishes Gibbs
uniqueness. The phase interpretation and the implications for coexistence
are stated in Section~\ref{subsec:order-parameters}.

Distinguishability by density, or even by other tail observables, need not
in general imply that the probability of the origin belonging to one companion
is different from the probability of belonging to the other. When such an
origin-probability imbalance is available, the signed ferromagnetic mechanism
recovers the corresponding single-site magnetization. In the two-replica setting, the
corresponding statement is formulated for products of extremal one-replica
states satisfying the hypotheses of
Theorem~\ref{thm:product_extremal_pc_overlap}: paired coherent
percolation is equivalent to a nonzero limiting
expected mutual overlap along the chosen sequence. Choosing the relative
orientation gives the absolute mutual overlap of the two chosen states.
For two independent replicas drawn from the same extremal state, this is
the Edwards--Anderson self-overlap.
Corollary~\ref{cor:diagonal_pc_EA} gives this diagonal specialization and
its disorder average.

As a side remark, we observe that the FK measures of interest here
have $q = 2$, but  the same
signed percolation phenomena can be studied for other values of $q$. In particular,
for $q = 1$, in finite volume this amounts to Bernoulli occupation
variables carrying the prescribed coupling signs, conditioned on no
frustrated open cycle. Signed ferromagnetic percolation in an
infinite-volume version is again defined by well-defined, unequal
densities of the companions of the infinite cluster. When the origin
belongs to that cluster, these are the densities of the vertices
connected to it with positive and negative parity, in either order.
This is a new percolation mechanism, which we call ``signed
percolation'' and which extends the classical notion of Bernoulli
percolation.

\section{Sketch of the argument}

Our aim is to characterize nonzero limiting expected uniform magnetization
density and mutual overlap by FK connectivity, under the infinite-cluster
and density-existence hypotheses stated in
Theorems~\ref{EquivalenceOneCopy}
and~\ref{thm:product_extremal_pc_overlap}. The construction also extends
the multiplicative interpretation of bond percolation to variables
that can take negative values.

We begin with the one-replica magnetization.
As mentioned, 
we extend the joint spin--FK Edwards--Sokal representation to infinite
volume through the DLR equations, following the viewpoint developed for
ferromagnetic interactions and related graphical representations in
\cite{biskup2000gibbs}, and the rest of the work
is carried out in infinite volume.

Using the fact
that the local specifications
of the ES representation  depend on boundary spins but not on the exterior
FK bond configuration, which makes it possible to separate local DLR
properties from tail choices made on infinite FK objects,
Theorem~\ref{SpinProjSincleCopy} first shows the one-to-one correspondence between spin Gibbs
states and infinite-volume ES measures.

To pass from ES measures to their FK marginals, we then impose the \(0/1\)-infinite-cluster property stating that the number of infinite FK clusters is at most one almost
surely. Under this assumption, Theorem~\ref{thm:FK-surjective} shows that
the FK marginal map from ES states to FK Gibbs states is 
surjective. Thus,
within this class, the FK configuration can be used as the primary object of
study, with its possible ES colorings supplying the orientation choices
used in the order-parameter formulas.

As the colorings, or spin assignments, must be consistent
with products of the couplings along each path, it is natural
to introduce the $r(x,y)$ connectivity mentioned above, 
under which the usual
FK clusters then split into two signed companions. 
The density selector is stable outside finite sets under local bond
modifications (Lemma~\ref{lem:tail-pql-density-selector1}). Its
nonemptiness event, and the spin orientation on the selected companion,
are tail-measurable modulo null sets. One can therefore condition on
these events without changing the local ES specification. As mentioned in the Introduction, it is then
natural to define sf--FK percolation, an extension of standard percolation,
 as the
density imbalance of the two companions.

Returning to the analysis of states, the ES state then decomposes canonically according to its infinite signed
companion event. This is the tail-event decomposition of
Theorem~\ref{thm:canonical_sf_decomposition}. For extremal ES states,
Corollary~\ref{cor:extremal_sf_sector} implies that exactly one tail event is
chosen. Combining this decomposition with the spin--ES isomorphism gives the
FK characterization of nonzero uniform magnetization density:
Theorem~\ref{EquivalenceOneCopy} states that, assuming almost-sure
existence of both companion densities along the chosen sequence,
nonzero limiting expected magnetization density is equivalent to
sf--FK percolation along that sequence in the FK marginal.

The mutual overlap is handled in parallel, using two independent
replicas. Two vertices are paired coherently connected if they are FK-connected
in both replicas with the same parity, and paired incoherently connected if the
two parities are opposite. The paired FK cluster is the intersection of the
ordinary FK clusters in the two replicas, and it splits into two coherent
companions. Under the corresponding \(0/1\)-infinite-cluster property,
the two-replica density selector is again stable outside finite sets
under local bond modifications (Lemma~\ref{lem:tail-pql-density-selector3}),
and its nonemptiness event is a tail event modulo null sets.

The two-replica FK marginal theorem is the product analogue of the one-replica
statement: the FK marginal map sends products of one-replica ES states satisfying the
\(0/1\)-infinite-cluster property
onto products of one-replica FK Gibbs states satisfying the same \(0/1\)-infinite-cluster property.
The paired-coherent tail-event decomposition then identifies the possible
orientations of the density-selected infinite coherent companions.
For products of extremal one-replica Gibbs states satisfying the
infinite-cluster and density-existence hypotheses,
Theorem~\ref{thm:product_extremal_pc_overlap} shows that pc--FK
percolation is equivalent to nonzero limiting expected mutual overlap.
Theorem~\ref{Th1_replica_product_extremal} gives the corresponding
oriented averaged mutual overlap for measurable families of extremal states.
For two independent replicas drawn from the same extremal state,
Corollary~\ref{cor:diagonal_pc_EA} identifies the expected density imbalance with
the Edwards--Anderson self-overlap and gives its disorder average.
For the joint periodic-boundary limits constructed in
Lemma~\ref{lem:periodic-joint-ES},
Corollary~\ref{cor:periodic_pc_overlap} gives the corresponding formula
for the expected absolute value of the sampled limiting overlap.

The proofs of the main results follow this roadmap:
\begin{itemize}
\item Theorem~\ref{SpinProjSincleCopy}: spin Gibbs states and ES Gibbs states
are affinely isomorphic.
\item Theorem~\ref{thm:FK-surjective}: under the \(0/1\)-infinite-cluster property, FK Gibbs
states are FK marginals of ES states.
\item Lemma~\ref{lem:tail-pql-density-selector1} and
Lemma~\ref{lem:tail-pql-density-selector3}: the one-replica and two-replica
density selectors are stable outside finite sets under local bond modifications.
\item Theorem~\ref{thm:canonical_sf_decomposition}: ES states decompose by conditioning on
tail events determined by the signed infinite companions.
\item Theorem~\ref{EquivalenceOneCopy}: along a sequence with almost-sure
companion-density limits, sf--FK percolation is equivalent to nonzero
limiting expected magnetization density for extremal one-replica states.
\item Theorem~\ref{thm:product_extremal_pc_overlap}: pc--FK percolation is
equivalent to nonzero limiting expected mutual overlap for products of extremal
one-replica states under the stated infinite-cluster and
density-existence hypotheses.
\item Theorem~\ref{Th1_replica_product_extremal}: the oriented averaged
mutual overlap is given by the expected pc--FK companion-density difference.
\item Corollary~\ref{cor:diagonal_pc_EA}: for two independent replicas
drawn from the same extremal state, the density formula gives the
Edwards--Anderson self-overlap and its disorder average.
\item Corollary~\ref{cor:periodic_pc_overlap}: for a stationary joint
periodic-boundary limit satisfying the one-replica \(0/1\) properties,
the expected absolute sampled overlap equals the mean coherent-companion
density imbalance.
\end{itemize}

The same framework also suggests other possible transitions, based on tail
observables other than density, and signed percolation mechanisms outside the
random-interaction Ising model. We do not pursue those extensions here.

\section{The \texorpdfstring{$\pm 1$}{+/-1} random-interaction Ising model on an infinite graph and its FK representation}\label{sec:model}

\subsection{Summary of notation}\label{subsec:notation}

We fix a countably infinite, locally finite graph $G=(V,E)$.
Throughout the paper, the inverse temperature and bond parameter satisfy
\[
   0\le\beta<\infty,
   \qquad
   p=1-e^{-2\beta}\in[0,1).
\]
For a finite volume $\Lambda\Subset V$, let $\partial\Lambda$ denote
its (exterior) vertex boundary.
Let
\[
   E_{\Lambda}:=\bigl\{\{x,y\}\in E:\ x,y\in\Lambda\bigr\},
   \qquad
   \overline E_{\Lambda}:=\bigl\{\{x,y\}\in E:\ \{x,y\}\cap\Lambda\neq\varnothing\bigr\}.
\]
Thus $\overline E_{\Lambda}$ contains in particular the edges with one endpoint
in $\Lambda$ and the other in $\partial\Lambda$.

\begin{itemize}
\item \textbf{Couplings in $\Lambda$:} $J_{E_{\Lambda}}=\{J_{xy}\}_{\{x,y\}\in E_\Lambda}$.

\item \textbf{Coupling law:} $P_{\lambda}$ is the product measure on
$\{\pm1\}^{E}$ with
$P_\lambda(J_{xy}=+1)=\lambda$ and $P_\lambda(J_{xy}=-1)=1-\lambda$.
Its restriction to a finite edge set $F$ is denoted by $P_{\lambda,F}$.

\item \textbf{Spins:} $\omega_{\Lambda}\in\{-1,+1\}^{\Lambda}$; \textbf{spin boundary condition:}
$\omega_{\partial\Lambda}\in\{-1,+1\}^{\partial\Lambda}$.

\item \textbf{Spin Gibbs measures (quenched):} $\mu_{J,\beta,\Lambda}$ (free b.c.) and
$\mu_{J,\beta,\Lambda,\omega_{\partial\Lambda}}$ (where $\omega_{\partial\Lambda}$ are the boundary spins).

\item \textbf{FK bonds:} $\eta_F\in\{0,1\}^{F}$, with $F=E_\Lambda$
for free boundary conditions and $F=\overline E_\Lambda$ when boundary
conditions are specified.

\item \textbf{ES measures (quenched):} $\pi_{J,\beta,\Lambda}$ on
$(\omega_\Lambda,\eta_{E_\Lambda})$ for free boundary conditions, and
$\pi_{J,\beta,\Lambda,\omega_{\partial\Lambda}}$ on
$(\omega_\Lambda,\eta_{\overline E_\Lambda})$ for fixed boundary spins.

\item \textbf{FK random-cluster measures (quenched):} $\nu_{J,p,\Lambda}$
on $\eta_{E_\Lambda}$ for free boundary conditions, and
$\nu_{J,p,\Lambda,\zeta}$ on $\eta_{\overline E_\Lambda}$ for exterior
bond configuration $\zeta=\eta_{E\setminus\overline E_\Lambda}$.

\item \textbf{Joint disorder--ES law (free boundary conditions):}
\[
\rho_{\lambda,\beta,\Lambda}(dJ,d\omega_\Lambda,d\eta_{E_\Lambda})
:=P_\lambda(dJ)\,\pi_{J,\beta,\Lambda}(d\omega_\Lambda,d\eta_{E_\Lambda}).
\]

\item \textbf{Disorder-averaged spin--bond law (free boundary conditions):}
\[
\overline\rho_{\lambda,\beta,\Lambda}(d\omega_\Lambda,d\eta_{E_\Lambda})
:=\int\pi_{J,\beta,\Lambda}(d\omega_\Lambda,d\eta_{E_\Lambda})\,P_\lambda(dJ).
\]
This averages normalized quenched measures, not unnormalized Gibbs weights.
\end{itemize}

Infinite-volume limits are taken along exhausting sequences of finite
volumes, and the dependence on $\Lambda$ is then dropped from the notation.
The van~Hove assumption will be imposed on the sequences used for density
averages.

\subsection{The random-interaction Ising model}
We recap the usual definitions. Let \( G = (V, E) \) be an infinite graph with vertex set \( V \) and edge set \( E \). The random interaction Ising model on \( G \) is defined by random couplings \( J = \{J_{xy}\}_{\{x,y\} \in E} \) distributed according to a product measure \( P \), where each \( J_{xy} \) takes values in \(\{-1, +1\}\) independently with probability:
\(
P_{\lambda}(J_{xy} = +1) = \lambda =1- P_{\lambda}(J_{xy} = -1) 
\)
for some \( \lambda \in [0,1] \).

Fix a realization of the $J_{xy}$'s. For a finite subgraph with vertices $\Lambda
\subset V$ and  edges \(
E_{\Lambda}\;=\;\bigl\{\{x,y\}\in E \,\colon\, x,y\in\Lambda \bigr\}
\), a spin configuration  
\(
\omega_{\Lambda} = \{\omega_x\}_{x\in\Lambda}\; \in\;\{-1,+1\}^{\Lambda}
\) is assigned energy through the Hamiltonian:
\[
H_{J,\Lambda}(\omega_{\Lambda},\overline {\omega})
  \;=\;
  -\sum_{\{x,y\}\in E_{\Lambda}} J_{xy}\,\omega_x\omega_y
  -\sum_{\{x,y\}\in E, x \in \Lambda, y \notin \Lambda} J_{xy}\,\omega_x\overline {\omega}_y,
\]
where $\overline {\omega}$ is the boundary condition
defined on $\partial \Lambda$.
The corresponding Gibbs measure at inverse temperature \( \beta \geq 0 \) is given by:
\begin{equation}
    \label{eq:Gibbs}
\mu_{J,\beta,\Lambda, \overline \omega}(\omega_{\Lambda})
  \;=\;
  \frac{1}{Z_{J,\beta,\Lambda, \overline \omega}}\,
  e^{-\beta H_{J,\Lambda}(\omega_{\Lambda},\overline {\omega})},
\end{equation} 
 where
\[Z_{J,\beta,\Lambda,\overline {\omega}}
  \;=\!
  \sum_{\omega_{\Lambda}\in\{-1,+1\}^{\Lambda}}
  e^{-\beta H_{J,\Lambda}(\omega_{\Lambda},\overline {\omega})}\]
  is the partition function.

  We now consider an increasing sequence of finite subgraphs 
\[
\Lambda_1 \;\subset\; \Lambda_2 \;\subset\; \cdots 
\subset\; \Lambda_n \;\subset\; \cdots
\]
such that 
\(
\bigcup_{n=1}^\infty \Lambda_n \;=\; V
\) and \(\lim_{n \rightarrow\infty} 
|\Lambda_{n+1}|/|\Lambda_{n}|=1.\)
When a sequence is used for spatial density statements, it is also assumed
to be van~Hove: for every fixed integer $r\ge1$,
\[
 \frac{|\{x\in\Lambda_n:\operatorname{dist}(x,V\setminus\Lambda_n)\le r\}|}
 {|\Lambda_n|}\longrightarrow0.
\]
Such a sequence is an additional geometric assumption, not a property of
every locally finite graph. On $\mathbb Z^d$, standard cubes satisfy it.

For each fixed realization of the couplings $J$, we consider weak subsequential
limits of the finite-volume Gibbs measures, allowing deterministic boundary
conditions that may vary with the volume. We write
$\mathcal G_{J,\beta}^{\mathrm{SPIN}}$ for the infinite-volume Gibbs measures
for $J$ and $\beta$, defined by the DLR equations with kernels
\eqref{eq:Gibbs}. Equivalently, this set is the closed convex hull of the
finite-volume limit points just described; compare
Theorem~\ref{SpinProjSincleCopy}.

\subsection{Order parameters and phase interpretation}
\label{subsec:order-parameters}

For phase interpretations on \(\mathbb Z^d\), fix the couplings \(J\),
the inverse temperature \(\beta\), and a deterministic van~Hove sequence
\(\Gamma=(\Lambda_n)_{n\ge1}\) as above. A Gibbs state is
\emph{extremal} if it cannot be expressed as a nontrivial convex
combination of distinct Gibbs states. For an extremal state
\(\mu\in\mathcal G^{\mathrm{SPIN}}_{J,\beta}\), write
\(m_x(\mu):=\mu(\omega_x)\) and, whenever the limits exist, define
\[
\begin{aligned}
 m_\Gamma(\mu)
 &:=
 \lim_{n\to\infty}\frac1{|\Lambda_n|}
       \sum_{x\in\Lambda_n}m_x(\mu),\\
 q_{\mathrm{EA},\Gamma}(\mu)
 &:=
 \lim_{n\to\infty}\frac1{|\Lambda_n|}
       \sum_{x\in\Lambda_n}m_x(\mu)^2.
\end{aligned}
\]
The first quantity is the limiting expected uniform magnetization
density. The second is the Edwards--Anderson self-overlap of the chosen
extremal state; its interpretation through two independent samples from
that same state is given in Corollary~\ref{cor:diagonal_pc_EA}.
For every finite volume, Cauchy--Schwarz bounds the square of the average
of the local means by the average of their squares. Hence
\[
                 m_\Gamma(\mu)^2\le q_{\mathrm{EA},\Gamma}(\mu).
\]
Thus uniform ferromagnetic order, \(m_\Gamma(\mu)\neq0\), implies
positive self-overlap. A nonzero local mean \(m_x(\mu)\) at one vertex
does not by itself establish nonzero uniform magnetization density.

Theorem~\ref{EquivalenceOneCopy} and
Corollary~\ref{cor:diagonal_pc_EA} establish existence of these limits
and characterize their nonvanishing under their respective hypotheses
on infinite clusters and companion densities. The corresponding
percolation notions are introduced in
Definitions~\ref{def:percolation-selector} and
\ref{def:subseq-pc-density-imbalance}. Each characterization refers
to the stated class of states and the same density sequence \(\Gamma\).

At the level of the disorder parameters, uniform ferromagnetic order
is detected when an extremal state in the relevant class with
\(m_\Gamma\neq0\) exists with positive \(P_\lambda\)-probability.
Positive self-overlap is detected analogously using
\(q_{\mathrm{EA},\Gamma}>0\).
Its interpretation as specifically spin-glass order concerns regimes
without uniform ferromagnetic or other conventional order.
The absence of uniform ferromagnetic order in such a regime concerns
all states in the specified class, rather than a single chosen state.
In particular, the range \(\lambda\in[0,1]\) also includes
antiferromagnetism. At \(\lambda=0\) and sufficiently low temperature
in \(d\ge2\), there are extremal states whose local means alternate
between the two sublattices, with zero uniform magnetization density
and positive self-overlap.

We reserve the term \emph{paramagnetic phase} for Gibbs uniqueness
for \(P_\lambda\)-almost every \(J\). The interaction is invariant
under the global spin flip \(F\), defined by
\((F\omega)_x=-\omega_x\). Uniqueness therefore implies that every
local mean vanishes, and hence that both order parameters vanish.
If either \(m_\Gamma(\mu)\neq0\) or
\(q_{\mathrm{EA},\Gamma}(\mu)>0\), then \(\mu\) and \(F_*\mu\)
are distinct extremal Gibbs states. Vanishing of these order parameters
alone does not establish Gibbs uniqueness.

\subsection{The finite-volume ES measure and FK representation}

The Fortuin--Kasteleyn (FK) representation introduces edge-occupation
variables
\(
   \eta_F=\{\eta_b\}_{b\in F}\in\{0,1\}^F
\)
for finite sets of edges \(F\). In finite volume without spin boundary
conditions we take \(F=E_\Lambda\). With spin boundary conditions on the
exterior vertex boundary \(\partial\Lambda\), we take
\[
   F=\overline E_\Lambda
   :=
   \bigl\{\{x,y\}\in E:\{x,y\}\cap\Lambda\neq\varnothing\bigr\}.
\]
For a bond configuration \(\eta_F\), write
\[
   |\eta^{(1)}|:=|\{b\in F:\eta_b=1\}|,\qquad
   |\eta^{(0)}|:=|\{b\in F:\eta_b=0\}|.
\]
If \(\omega\) is a spin configuration on a set of vertices containing all
endpoints of the open edges of \(\eta_F\), we write
\(
   \eta_F\sim_J \omega
\)
when every open edge \(b=\{x,y\}\in F\) satisfies
\(
   J_{xy}\omega_x\omega_y=+1.
\)
Equivalently, the open FK bonds are satisfied by the spin configuration
\(\omega\).

We also write
\(
   \eta_F\sim J
\)
when the open bond configuration \(\eta_F\) is compatible with the couplings
\(J\), i.e. when every open cycle has coupling product \(+1\). Equivalently,
there exists a spin assignment on each open cluster satisfying all open
edges.

\medskip
The free-boundary finite-volume Edwards--Sokal measure is the probability measure on
\(
   \{-1,+1\}^\Lambda\times\{0,1\}^{E_\Lambda}
\)
given by
\begin{equation}\label{eq:freeES}
\pi_{J,\beta,\Lambda}(\omega_\Lambda,\eta_{E_\Lambda})
=
\frac{
   p^{|\eta^{(1)}|}(1-p)^{|\eta^{(0)}|}
   \mathbf 1_{\{\eta_{E_\Lambda}\sim_J \omega_\Lambda\}}
}{
   Z^{\mathrm{ES}}_{J,\beta,\Lambda}
},
\qquad
p=1-e^{-2\beta}.
\end{equation}
Summing \eqref{eq:freeES} over the bond variables gives the usual Ising
Gibbs measure with free boundary condition:
\[
   \sum_{\eta_{E_\Lambda}\in\{0,1\}^{E_\Lambda}}
   \pi_{J,\beta,\Lambda}(\omega_\Lambda,\eta_{E_\Lambda})
   =
   \mu_{J,\beta,\Lambda}(\omega_\Lambda).
\]

Summing \eqref{eq:freeES} over the spin variables gives the FK
random-cluster marginal
\begin{equation}\label{FiniteFK}
\nu_{J,p,\Lambda}(\eta_{E_\Lambda})
=
\frac{
   p^{|\eta^{(1)}|}(1-p)^{|\eta^{(0)}|}
   2^{\mathrm{Cl}(\eta_{E_\Lambda})}
   \mathbf 1_{\{\eta_{E_\Lambda}\sim J\}}
}{
   \widetilde Z_{J,p,\Lambda}
},
\end{equation}
where \(\mathrm{Cl}(\eta_{E_\Lambda})\) is the number of open FK clusters
of vertices in \(\Lambda\), isolated vertices included. The factor
\(2^{\mathrm{Cl}(\eta_{E_\Lambda})}\) is the number of possible spin
assignments to the open FK clusters, provided that the open bonds are
compatible with \(J\).

\medskip

We next include spin boundary conditions.
If a full exterior configuration
\(
   (\overline\omega_{V\setminus\Lambda},\eta_{E\setminus\overline E_\Lambda})
\)
is given, then the conditional ES distribution of
\((\omega_\Lambda,\eta_{\overline E_\Lambda})\) is
the
probability measure \(
   \pi_{J,\beta,\Lambda,\bar\omega_{\partial\Lambda}}
   (\omega_\Lambda,\eta_{\overline E_\Lambda}),
\)
on
\(
   \{-1,+1\}^{\Lambda}\times\{0,1\}^{\overline E_\Lambda}
\)
where \(\overline\omega_{\partial\Lambda}\) is the restriction of the exterior spin
configuration to \(\partial\Lambda\). It is the case, though,
that, as the above notation suggests, the conditional ES
kernel depends on the exterior configuration only through the boundary
spins \(\omega_{\partial\Lambda}\), and not through the exterior FK bonds.
Therefore, if \(\omega_\Lambda\bar\omega_{\partial\Lambda}\) denotes the
spin configuration on \(\Lambda\cup\partial\Lambda\) obtained by joining
\(\omega_\Lambda\) and \(\bar\omega_{\partial\Lambda}\), the finite-volume ES
kernel with spin boundary condition \(\bar\omega_{\partial\Lambda}\) has  
the expression
\begin{equation}\label{ESCond}
\pi_{J,\beta,\Lambda,\bar\omega_{\partial\Lambda}}
   (\omega_\Lambda,\eta_{\overline E_\Lambda})
=
\frac{
   p^{|\eta^{(1)}|}(1-p)^{|\eta^{(0)}|}
   \mathbf 1_{\{
      \eta_{\overline E_\Lambda}
      \sim_J
      \omega_\Lambda\bar\omega_{\partial\Lambda}
   \}}
}{
   Z^{\mathrm{ES}}_{J,\beta,\Lambda,\bar\omega_{\partial\Lambda}}
}.
\end{equation}

The spin marginal of \eqref{ESCond} is the usual finite-volume spin Gibbs
measure with boundary condition:
\begin{equation}\label{CondSpinMarg}
\sum_{\eta_{\overline E_\Lambda}\in\{0,1\}^{\overline E_\Lambda}}
\pi_{J,\beta,\Lambda,\bar\omega_{\partial\Lambda}}
   (\omega_\Lambda,\eta_{\overline E_\Lambda})
=
\mu_{J,\beta,\Lambda,\bar\omega_{\partial\Lambda}}(\omega_\Lambda).
\end{equation}

\medskip

We now describe the corresponding conditional FK kernel. Let
\(
   \zeta\in\{0,1\}^{E\setminus\overline E_\Lambda}
\)
be a fixed exterior FK configuration with no frustrated open cycle. For
\(
   \eta\in\{0,1\}^{\overline E_\Lambda},
\)
write \(\eta\zeta\) for the concatenated bond configuration on all of \(E\).

We need the relative cluster count \(K_\Lambda(\eta,\zeta)\). Let
\(
   \mathscr B_\Lambda(\zeta)
\)
be the set of open FK clusters of the exterior configuration \(\zeta\) in
\(V\setminus\Lambda\) that intersect \(\partial\Lambda\). Singleton boundary
vertices are included as clusters when no exterior open edge is incident to
them. Since \(G\) is locally finite and \(\Lambda\) is finite,
\(\mathscr B_\Lambda(\zeta)\) is finite. Write
\(
   m_\Lambda(\zeta):=|\mathscr B_\Lambda(\zeta)|.
\)

Form a finite quotient graph  \(Q_\Lambda(\eta,\zeta)\)  as follows. Its vertices are the vertices of
\(\Lambda\), together with one extra vertex for each exterior cluster
\(B\in\mathscr B_\Lambda(\zeta)\); each extra vertex is 
only a representative of the exterior cluster, and its
choice or precise location are irrelevant. Given
the FK configuration 
\(\eta_{\overline E_\Lambda}\), if there is
an FK open edge between two vertices \(\{x,y\}\),
then we create a virtual configuration in which 
open virtual FK edges
are inserted as follows:
\begin{itemize}
\item if  \(x,y\in\Lambda\), then a virtual open edge \(\{x,y\}\in E_\Lambda\)  joins
      \(x\) and \(y\);
\item if \(x\in\Lambda\) and \(y\in\partial\Lambda\),
then a virtual open edge \(\{x,y\}\in\overline E_\Lambda\setminus E_\Lambda\) joins \(x\) to the
      vertex representing the exterior cluster \(B\in\mathscr B_\Lambda(\zeta)\)
      containing \(y\).
\end{itemize}
Let
\(
   N_\Lambda(\eta,\zeta)
\)
be the number of connected components of this finite virtual graph \(Q_\Lambda(\eta,\zeta)\), isolated
vertices included. Define
\begin{equation}\label{eq:relativeK}
   K_\Lambda(\eta,\zeta)
   :=
   N_\Lambda(\eta,\zeta)-m_\Lambda(\zeta).
\end{equation}
Thus \(K_\Lambda\) is a relative number of clusters: it counts the number of
spin-coloring degrees of freedom created inside \(\Lambda\), relative to the
degrees of freedom already present in the exterior boundary clusters. In
particular, \(K_\Lambda(\eta,\zeta)\) may be negative when the inside bonds
wire together distinct exterior clusters.

The conditional FK random-cluster kernel with exterior bond condition
\(\zeta\) is
\begin{equation}\label{CondFiniteFK}
\nu_{J,p,\Lambda,\zeta}(\eta_{\overline E_\Lambda})
=
\frac{
   p^{|\eta^{(1)}|}(1-p)^{|\eta^{(0)}|}
   2^{K_\Lambda(\eta,\zeta)}
   \mathbf 1_{\{\eta\zeta\sim J\}}
}{
   \widetilde Z_{J,p,\Lambda,\zeta}
}.
\end{equation}
Here \(\eta\zeta\sim J\) means that the full open bond configuration obtained
by concatenating \(\eta\) and \(\zeta\) has no frustrated open cycle. The
normalizing constant \(\widetilde Z_{J,p,\Lambda,\zeta}\) depends on
\(J,p,\Lambda\), and \(\zeta\). It is positive for \(p<1\), since
closing all bonds in \(\overline E_\Lambda\) is compatible with such
an exterior configuration and has positive weight.

\begin{remark}\label{ESSpinDepOnly}
The conditional FK distribution \eqref{CondFiniteFK} depends on the exterior
FK configuration through the wiring of exterior clusters, encoded by
\(K_\Lambda(\eta,\zeta)\), and through the compatibility constraint
\(\eta\zeta\sim J\). By contrast, the conditional ES kernel
\eqref{ESCond} depends on the exterior configuration only through the
boundary spins \(\omega_{\partial\Lambda}\). This is the key property that
allows us later to separate local DLR specifications from tail choices made
on infinite FK objects.
\end{remark}

We finally record the precise relation between the ES and FK conditional
kernels. Fix an exterior FK configuration \(\zeta\) compatible with \(J\).
For each exterior boundary cluster \(B\in\mathscr B_\Lambda(\zeta)\), choose
one representative vertex \(y_B\in B\). A choice of signs
\[
   \tau=(\tau_B)_{B\in\mathscr B_\Lambda(\zeta)}
   \in\{-1,+1\}^{\mathscr B_\Lambda(\zeta)}
\]
determines a boundary spin condition
\[
   \bar\omega^\tau_{\partial\Lambda}
\]
as follows: if \(y\in\partial\Lambda\cap B\), then
\[
   \bar\omega^\tau_y
   =
   \tau_B
   \prod_{e\in\gamma_{y_B,y}} J_e,
\]
where \(\gamma_{y_B,y}\) is any \(\zeta\)-open path in \(B\) from \(y_B\) to
\(y\). This is well-defined because \(\zeta\sim J\).

Let
\[
   W^{\mathrm{ES}}_{J,p,\Lambda,\bar\omega_{\partial\Lambda}}
   (\omega_\Lambda,\eta_{\overline E_\Lambda})
   :=
   p^{|\eta^{(1)}|}(1-p)^{|\eta^{(0)}|}
   \mathbf 1_{\{
      \eta_{\overline E_\Lambda}
      \sim_J
      \omega_\Lambda\bar\omega_{\partial\Lambda}
   \}}
\]
denote the unnormalized ES weight. Then, for every
\(\eta\in\{0,1\}^{\overline E_\Lambda}\),
\begin{equation}\label{CondFiniteFK2}
\begin{aligned}&\frac{1}{2^{m_\Lambda(\zeta)}}
\sum_{\tau\in\{-1,+1\}^{\mathscr B_\Lambda(\zeta)}}
\sum_{\omega_\Lambda\in\{-1,+1\}^{\Lambda}}
W^{\mathrm{ES}}_{J,p,\Lambda,\bar\omega^\tau_{\partial\Lambda}}
   (\omega_\Lambda,\eta_{\overline E_\Lambda})\\[3pt]
&\qquad=
p^{|\eta^{(1)}|}(1-p)^{|\eta^{(0)}|}
2^{K_\Lambda(\eta,\zeta)}
\mathbf 1_{\{\eta\zeta\sim J\}}.
\end{aligned}
\end{equation}
Indeed, if \(\eta\zeta\) is not compatible with \(J\), both sides vanish. If
\(\eta\zeta\sim J\), then the number of compatible spin assignments on the
finite quotient graph defined above is \(2^{N_\Lambda(\eta,\zeta)}\), while
the exterior boundary clusters already carry \(2^{m_\Lambda(\zeta)}\)
possible colorings. Dividing by \(2^{m_\Lambda(\zeta)}\) gives the relative
factor \(2^{K_\Lambda(\eta,\zeta)}\).

After normalization, \eqref{CondFiniteFK2} gives exactly the conditional FK
kernel \eqref{CondFiniteFK}. This is the precise sense in which the FK
conditional distribution is obtained from the ES representation.

In the rest of the paper we use FK path and FK cluster for paths or clusters
made of edges \(b\) with \(\eta_b=1\).

\subsection{Infinite--volume Edwards--Sokal measures for fixed couplings \texorpdfstring{$J$}{J}}
\label{sec:ES-infinite-volume}

We now develop the infinite-volume theory of the FK distributions and their
related ES joint distributions. It is convenient, as in
\cite{biskup2000gibbs}, to start from the latter.

Let \(G=(V,E)\) be the underlying graph, for instance \(V=\mathbb Z^d\).
Fix an inverse temperature \(0\le\beta<\infty\), set
\(p=1-e^{-2\beta}<1\), and fix a coupling configuration
\(
   J=\{J_{xy}\}_{\{x,y\}\in E}\in\{\pm1\}^E.
\)
Write
\[
   \Omega=\{-1,+1\}^V,
   \qquad
   \mathcal H=\{0,1\}^E.
\]

For a finite \(\Lambda\Subset V\), recall that
\(
   \overline E_\Lambda
   =
   \bigl\{\{x,y\}\in E:\{x,y\}\cap\Lambda\neq\varnothing\bigr\}.
\)
We use the exterior sigma-field
\[
   \mathscr A_{\Lambda^c}^{\mathrm{ES}}
   :=
   \sigma\!\left(
      \omega_x:x\in V\setminus\Lambda;\,
      \eta_e:e\in E\setminus\overline E_\Lambda
   \right).
\]
The ES specification in finite volume is the family of kernels
\(
   \left\{
      \pi_{J,\beta,\Lambda,\omega_{\partial\Lambda}}
   \right\}_{\Lambda\Subset V},
\)
where \(\pi_{J,\beta,\Lambda,\omega_{\partial\Lambda}}\) is the finite-volume
ES kernel from \eqref{ESCond}. Notice that, by
Remark~\ref{ESSpinDepOnly}, the ES kernel depends on the exterior
configuration only through the boundary spins \(\omega_{\partial\Lambda}\);
the exterior bond configuration does not enter the ES kernel.

For a bounded measurable function \(f\) on \(\Omega\times\mathcal H\), define
the local ES kernel operator
\[
   \Gamma^{\mathrm{ES}}_\Lambda f(\omega,\eta)
   :=
   \sum_{\omega'_\Lambda\in\{-1,+1\}^{\Lambda}}
   \sum_{\eta'_{\overline E_\Lambda}\in\{0,1\}^{\overline E_\Lambda}}
   f\!\left(
      \omega'_\Lambda\omega_{V\setminus\Lambda},
      \eta'_{\overline E_\Lambda}\eta_{E\setminus\overline E_\Lambda}
   \right)
   \pi_{J,\beta,\Lambda,\omega_{\partial\Lambda}}
   \left(
      \omega'_\Lambda,\eta'_{\overline E_\Lambda}
   \right).
\]
Thus \(\Gamma^{\mathrm{ES}}_\Lambda f\) is
\(\mathscr A_{\Lambda^c}^{\mathrm{ES}}\)-measurable.

\begin{definition}[Infinite-volume ES measures]
\label{def:ES-DLR}
A probability measure
\(
   \pi\in\mathcal P(\Omega\times\mathcal H)
\)
is an \emph{infinite-volume Edwards--Sokal measure} for the fixed coupling
configuration \(J\) and inverse temperature \(\beta\) if, for every finite
\(\Lambda\Subset V\) and every bounded local function \(f\),
\begin{equation}\label{ESDLR}
   \int f\,d\pi
   =
   \int \Gamma^{\mathrm{ES}}_\Lambda f\,d\pi,
\end{equation}
or, equivalently,
\(
   \pi\!\left(
      f \mid \mathscr A_{\Lambda^c}^{\mathrm{ES}}
   \right)
   =
   \Gamma^{\mathrm{ES}}_\Lambda f
  \hskip 0.25cm \pi\text{-a.s.}
\).

We denote the set of all such measures by
\(
   \mathcal G^{\mathrm{ES}}_{J,\beta}
   \quad\text{or equivalently}\quad
   \mathcal G^{\mathrm{ES}}_{J,p}.
\)
\end{definition}

We next recall the basic structure of the ES Gibbs simplex. For a probability
measure \(\pi\) on \(\Omega\times\mathcal H\), let
\(
   \Pi_{\mathrm{SPIN}}\pi
\)
denote its spin marginal.

\begin{theorem}\label{SpinProjSincleCopy}
Let \(J\in\{-1,1\}^E\) be arbitrary and let
\(\mathcal G^{\mathrm{ES}}_{J,\beta}\) be the set of infinite-volume ES
measures from Definition~\ref{def:ES-DLR}. Then:

\begin{enumerate}
\item[\textup{(a)}]
\(\mathcal G^{\mathrm{ES}}_{J,\beta}\) is non-empty, compact and convex in the
weak* topology on \(\mathcal P(\Omega\times\mathcal H)\).

\item[\textup{(b)}]
\(\mathcal G^{\mathrm{ES}}_{J,\beta}\) is the closed convex hull of the weak
limits of finite-volume ES measures
\(
   \pi_{J,\beta,\Lambda,\omega_{\partial\Lambda}} .
\)

\item[\textup{(c)}]
\(\mathcal G^{\mathrm{ES}}_{J,\beta}\) is a Choquet simplex: every
\(\pi\in\mathcal G^{\mathrm{ES}}_{J,\beta}\) has a unique decomposition
\[
   \pi
   =
   \int_{\operatorname{ex}\mathcal G^{\mathrm{ES}}_{J,\beta}}
      \widetilde\pi\,
      \alpha_\pi(d\widetilde\pi),
\]
where \(\operatorname{ex}\mathcal G^{\mathrm{ES}}_{J,\beta}\) denotes the set of
tail-trivial, equivalently extremal, ES Gibbs states.

\item[\textup{(d)}]
The projection
\(
   \Pi_{\mathrm{SPIN}}:(\omega,\eta)\mapsto\omega
\)
is an affine isomorphism between the Choquet simplices
\(
   \mathcal G^{\mathrm{ES}}_{J,\beta}
   \quad\text{and}\quad
   \mathcal G^{\mathrm{SPIN}}_{J,\beta}.
\)
In particular,
\[
   |\mathcal G^{\mathrm{SPIN}}_{J,\beta}|=1
   \quad\Longleftrightarrow\quad
   |\mathcal G^{\mathrm{ES}}_{J,\beta}|=1.
\]
\end{enumerate}
\end{theorem}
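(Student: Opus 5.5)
The plan is to treat the ES specification \(\{\pi_{J,\beta,\Lambda,\omega_{\partial\Lambda}}\}\) as a specification in the Georgii sense on the product space \(\Omega\times\mathcal H\). Parts (a)--(c) will then follow from the general theory. Part (d) will be proved by an explicit inverse map: a spin Gibbs measure is lifted to an ES measure by independently opening bonds.

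For (a) and (b), I first check consistency of the kernels. For \(\Lambda\subset\Lambda'\), \(\Gamma^{\mathrm{ES}}_{\Lambda'}\Gamma^{\mathrm{ES}}_\Lambda=\Gamma^{\mathrm{ES}}_{\Lambda'}\). This is a finite computation: the joint weight on \(\Lambda'\) factorizes edge by edge as \(\prod_e\big[(1-p)+p\,\mathbf 1_{J_e\omega_x\omega_y=1}\big]\), and conditioning on the configuration outside \(\Lambda\) leaves exactly the weight \eqref{ESCond}.

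Next, the specification is quasilocal (indeed local). The kernel \(\Gamma^{\mathrm{ES}}_\Lambda f\) depends only on \(\omega_{\partial\Lambda}\) and on the bonds of \(f\) outside \(\overline E_\Lambda\), so it maps local functions to local functions. The space is compact, so every weak limit point of finite-volume kernels \(\pi_{J,\beta,\Lambda_n,\omega_{\partial\Lambda_n}}\) (extended by arbitrary exterior configurations) satisfies \eqref{ESDLR} by consistency plus locality. This gives nonemptiness.

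Convexity is immediate from the linearity of \eqref{ESDLR}, and closedness follows because \eqref{ESDLR} for local \(f\) is preserved under weak limits. Compactness then comes from compactness of \(\mathcal P(\Omega\times\mathcal H)\). The closed convex hull statement (b) is the standard argument (Georgii, Thm.~7.26 style). Any \(\pi\in\mathcal G^{\mathrm{ES}}\) equals \(\lim_n\int\Gamma^{\mathrm{ES}}_{\Lambda_n}(\cdot)(\xi)\,\pi(d\xi)\), which is a mixture of finite-volume kernels. Approximating these mixtures by finite convex combinations places \(\pi\) in the closed convex hull. Conversely, that hull is contained in \(\mathcal G^{\mathrm{ES}}\) by convexity and closedness.

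For (c), I invoke the general result that the Gibbs measures of any specification on a standard Borel space form a simplex. The extremal elements are exactly those trivial on the tail field \(\bigcap_\Lambda\mathscr A^{\mathrm{ES}}_{\Lambda^c}\), and the unique decomposition is obtained from the tail-conditional kernels (Georgii, Thm.~7.26).

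For (d), \(\Pi_{\mathrm{SPIN}}\) is affine, and it maps \(\mathcal G^{\mathrm{ES}}\) into \(\mathcal G^{\mathrm{SPIN}}\). To see this, apply \eqref{ESDLR} to \(f\) depending only on spins and use \eqref{CondSpinMarg}. The marginal of \(\Gamma^{\mathrm{ES}}_\Lambda f\) is then the spin kernel, since it depends only on \(\omega_{\partial\Lambda}\).

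For the inverse, I define \(\Phi(\mu)(d\omega,d\eta)=\mu(d\omega)\prod_{e=\{x,y\}}\kappa_e(\omega,d\eta_e)\), where \(\eta_e\) is Bernoulli\((p)\) if \(J_e\omega_x\omega_y=1\) and \(\eta_e=0\) otherwise. Clearly \(\Pi_{\mathrm{SPIN}}\Phi(\mu)=\mu\), and \(\Phi\) is affine. It remains to show that \(\Phi(\mu)\) satisfies the ES DLR equations. Under \(\Phi(\mu)\), conditionally on the exterior field, the bonds outside \(\overline E_\Lambda\) are determined by exterior spins plus independent coins, so they are conditionally independent of \((\omega_\Lambda,\eta_{\overline E_\Lambda})\) given \(\omega_{\partial\Lambda}\). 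The conditional law of \((\omega_\Lambda,\eta_{\overline E_\Lambda})\) is then \(\mu\)'s spin kernel times the bond kernels, and that product equals \eqref{ESCond}.

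The main obstacle is injectivity of \(\Pi_{\mathrm{SPIN}}\), namely that every \(\pi\in\mathcal G^{\mathrm{ES}}\) equals \(\Phi(\Pi_{\mathrm{SPIN}}\pi)\). Here I use that the ES kernel ignores exterior bonds (Remark~\ref{ESSpinDepOnly}). Take a local \(f\) depending on spins in a finite set and bonds in a finite edge set \(F\), and choose \(\Lambda\) with \(\overline E_\Lambda\supset F\). Then \(\Gamma^{\mathrm{ES}}_\Lambda f\) is a function of \(\omega_{V\setminus\Lambda}\) only. Integrating it against \(\pi\) therefore depends only on the spin marginal, so \(\pi(f)=\Phi(\Pi_{\mathrm{SPIN}}\pi)(f)\), because both measures satisfy \eqref{ESDLR}.

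So \(\Pi_{\mathrm{SPIN}}\) is an affine bijection between the two simplices, with inverse \(\Phi\). Both maps are weakly continuous, since the kernels \(\kappa_e\) are local and continuous. Hence \(\Pi_{\mathrm{SPIN}}\) is an isomorphism, and it preserves extremality and cardinality, which gives \(|\mathcal G^{\mathrm{SPIN}}_{J,\beta}|=1\Leftrightarrow|\mathcal G^{\mathrm{ES}}_{J,\beta}|=1\).
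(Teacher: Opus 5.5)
Your proposal is correct and follows essentially the same route as the paper: parts (a)--(c) come from the general theory of Gibbs measures for a local, consistent specification with a hard-core constraint (the paper cites Ruelle's Theorem~1.9 and \S2.8, where you cite Georgii), and part (d) uses the same explicit lift that opens satisfied bonds independently with probability \(p\), with injectivity resting on the fact that the ES kernel ignores the exterior bonds. The one place where your sketch is looser than it should be is (b). Finite convex combinations of the kernels \(\Gamma^{\mathrm{ES}}_{\Lambda_n}(\cdot)(\xi)\) are not yet combinations of their weak limits. You therefore need the standard compactness step: take a weak limit of the push-forward mixing measures, note that it is concentrated on subsequential limits of finite-volume kernels, and use continuity of the barycenter map. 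That step is exactly what the cited theorem supplies.
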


\begin{proof}
The ES specification is a finite-range specification on the countable
index set \(V\cup E\), with finite single-site spaces and with a local hard-core
compatibility constraint: configurations in which an open FK edge is
incompatible with the spins have weight zero. The finite-volume kernels are
quasilocal because
\(\pi_{J,\beta,\Lambda,\omega_{\partial\Lambda}}\) depends on the exterior only
through the finitely many boundary spins \(\omega_{\partial\Lambda}\).

\smallskip
\noindent\textbf{(a) Non-emptiness, convexity and compactness.}
Non-emptiness follows from the general existence theorem for Gibbs states
for quasilocal specifications with hard cores. Equivalently, one can take weak
limit points of finite-volume ES measures along an exhausting sequence of
finite volumes; compactness of the product configuration space gives subsequential
limits, and quasilocality and consistency of the specification imply that such
limits satisfy \eqref{ESDLR}.

Convexity is immediate from the linearity of the DLR equations. If
\(\pi_1,\pi_2\in\mathcal G^{\mathrm{ES}}_{J,\beta}\) and \(t\in[0,1]\), then, for
every bounded local \(f\),
\[
   \int f\,d\bigl(t\pi_1+(1-t)\pi_2\bigr)
   =
   t\int f\,d\pi_1+(1-t)\int f\,d\pi_2
   =
   \int \Gamma^{\mathrm{ES}}_\Lambda f\,
   d\bigl(t\pi_1+(1-t)\pi_2\bigr).
\]

For compactness, note that \(\Omega\times\mathcal H\) is compact and
metrizable, hence \(\mathcal P(\Omega\times\mathcal H)\) is weakly compact.
For fixed \(\Lambda\) and local \(f\), the map
\[
   \pi\mapsto \int f\,d\pi-\int \Gamma^{\mathrm{ES}}_\Lambda f\,d\pi
\]
is continuous, since \(\Gamma^{\mathrm{ES}}_\Lambda f\) is again a bounded
local function. Therefore the set of measures satisfying the DLR equation
for a given \((\Lambda,f)\) is closed. Intersecting over a countable
determining class of local functions and finite volumes gives a closed subset
of \(\mathcal P(\Omega\times\mathcal H)\), hence compact.

\smallskip
\noindent\textbf{(b) Closed convex hull of finite-volume weak limits.}
This is exactly the standard Gibbs-state statement applied to the present ES
specification. More precisely, the ES kernels form a system of conditional
probabilities in the sense of the DLR theory. Theorem~1.9
in \cite{ruelle2004thermodynamic} gives the
closed-convex-hull characterization of Gibbs states by thermodynamic limits,
and Section~2.8 of the same reference shows that the relevant parts of Theorem~1.9 remain
valid for systems of conditional probabilities. Therefore
\(
   \mathcal G^{\mathrm{ES}}_{J,\beta}
   =
   \overline{\operatorname{co}}
   \left(
      \mathcal L^{\mathrm{ES}}_{J,\beta}
   \right),
\)
where \(\mathcal L^{\mathrm{ES}}_{J,\beta}\) is the set of all weak limits of
finite-volume ES measures
\(
   \pi_{J,\beta,\Lambda_n,\omega^{(n)}_{\partial\Lambda_n}} .
\)
This proves part~\textup{(b)}.

\smallskip
\noindent\textbf{(c) Choquet simplex.}
The same general DLR theory implies that the set of Gibbs measures for a
quasilocal finite-range specification with hard cores is a Choquet simplex,
and that its extremal points are precisely the tail-trivial Gibbs measures.
Applying this to the ES specification gives the stated unique integral
decomposition.

\smallskip
\noindent\textbf{(d) Spin--ES isomorphism.}
We give the explicit construction.

First, let
\(
   \pi\in\mathcal G^{\mathrm{ES}}_{J,\beta}
\)
and set
\(
   \mu:=\Pi_{\mathrm{SPIN}}\pi.
\)
Summing the finite-volume ES kernel over the bond variables gives the
finite-volume spin Gibbs kernel:
\[
   \sum_{\eta_{\overline E_\Lambda}}
   \pi_{J,\beta,\Lambda,\omega_{\partial\Lambda}}
      (\omega_\Lambda,\eta_{\overline E_\Lambda})
   =
   \mu_{J,\beta,\Lambda,\omega_{\partial\Lambda}}(\omega_\Lambda).
\]
Therefore, applying the ES DLR equation to functions depending only on the
spins gives the usual spin DLR equation for \(\mu\). Hence
\(
   \Pi_{\mathrm{SPIN}}\pi\in
   \mathcal G^{\mathrm{SPIN}}_{J,\beta}.
\)

Conversely, let
\(
   \mu\in\mathcal G^{\mathrm{SPIN}}_{J,\beta}.
\)
We define its ES lift explicitly. Given a spin configuration
\(\omega\in\Omega\), let the bond variables be conditionally independent with
single-edge conditional law
\[
   \mathbb P(\eta_{xy}=1\mid\omega)
   =
   p\,\mathbf 1_{\{J_{xy}\omega_x\omega_y=+1\}},
\]
and
\[
   \mathbb P(\eta_{xy}=0\mid\omega)
   =
   1-p\,\mathbf 1_{\{J_{xy}\omega_x\omega_y=+1\}}.
\]
Equivalently, an unsatisfied edge is forced closed, while a satisfied edge is
open with probability \(p\) and closed with probability \(1-p\). Let
\(
   \kappa_J(d\eta\mid\omega)
\)
be the corresponding product measure on \(\mathcal H\), and define
\[
   \mathcal L(\mu)(d\omega,d\eta)
   :=
   \mu(d\omega)\,\kappa_J(d\eta\mid\omega).
\]
We claim that
\(
   \mathcal L(\mu)\in\mathcal G^{\mathrm{ES}}_{J,\beta}.
\)

Indeed, fix \(\Lambda\Subset V\). Under \(\mathcal L(\mu)\), conditional on
the exterior variables in \(\mathscr A_{\Lambda^c}^{\mathrm{ES}}\), the spin
configuration inside \(\Lambda\) has law
\(
   \mu_{J,\beta,\Lambda,\omega_{\partial\Lambda}}
\)
by the spin DLR equations for \(\mu\). Once the inside spins are sampled, the
bonds in \(\overline E_\Lambda\) are sampled independently, opened with
probability \(p\) precisely on satisfied edges and forced closed otherwise.
This is exactly the finite-volume ES kernel
\(
   \pi_{J,\beta,\Lambda,\omega_{\partial\Lambda}}.
\)
Therefore \(\mathcal L(\mu)\) satisfies \eqref{ESDLR}.

Moreover,
\(
   \Pi_{\mathrm{SPIN}}\mathcal L(\mu)=\mu
\)
by construction.

It remains to prove uniqueness of the lift. Let
\(
   \pi\in\mathcal G^{\mathrm{ES}}_{J,\beta}
\)
and let
\(
   \mu:=\Pi_{\mathrm{SPIN}}\pi.
\)
We show that \(\pi=\mathcal L(\mu)\). Fix a finite set of edges
\(F\subset E\), and choose \(\Lambda\Subset V\) so large that
\(F\subset\overline E_\Lambda\). By the ES DLR equation, conditional on the
spin configuration, the law of \(\eta_F\) is the product of the one-edge
conditional laws above: edges satisfied by the spins are opened independently
with probability \(p\), and unsatisfied edges are forced closed. Since this
holds for every finite \(F\), the conditional law of all FK bonds given the
spin configuration is \(\kappa_J(d\eta\mid\omega)\). Hence
\[
   \pi(d\omega,d\eta)
   =
   \mu(d\omega)\,\kappa_J(d\eta\mid\omega)
   =
   \mathcal L(\mu)(d\omega,d\eta).
\]
Thus the ES lift is unique.

We have shown that
\[
   \Pi_{\mathrm{SPIN}}
   :
   \mathcal G^{\mathrm{ES}}_{J,\beta}
   \longrightarrow
   \mathcal G^{\mathrm{SPIN}}_{J,\beta},\qquad
   \mathcal L
   :
   \mathcal G^{\mathrm{SPIN}}_{J,\beta}
   \longrightarrow
   \mathcal G^{\mathrm{ES}}_{J,\beta}
\]
are inverse affine maps. Therefore \(\Pi_{\mathrm{SPIN}}\) is an affine
isomorphism of the two Choquet simplices. The equivalence of uniqueness of
the spin and ES Gibbs states follows immediately.
\end{proof}

It turns out that to consider the projection
on the FK edge variables which will be 
discussed in the next section, 
we need 
the technical 
 requirement 
that, with probability one under the ES measure, the number of infinite FK clusters is at most
one, see \cite{biskup2000gibbs},  a condition
also 
identified as the $0/1$-property \cite{grimmett2004random}.
The $0/1$ property is an assumption on the chosen state and need not
hold on an arbitrary locally finite graph. Our FK-projection and
order-parameter results concern states for which it does hold.
This includes states with no infinite FK cluster.
In some instances, this happens with $P_{\lambda}$-probability
one, as proven for instance in \cite{gandolfi1992uniqueness} for 
the quenched random cluster distributions obtained with periodic boundary conditions in $\mathbb Z^d$. 
For fixed $\beta$ and one such $J$, 
we then further restrict to  
\[
   \mathcal G^{\mathrm{ES}}_{\le 1}
   \;:=\;
   \bigl\{
        \pi\in\mathcal G^{\mathrm{ES}}_{J,\beta}:
        \;\pi\bigl(N_{\infty}\le1\bigr)=1
   \bigr\},\]
throughout the paper.

Before continuing, we make some simple observations about uniqueness. The
following lemma is implicitly used in \cite{biskup2000gibbs}.

\begin{lemma}\label{UniqueOutside}
Fix \(p=1-e^{-2\beta}<1\). Let \(\nu\) be either an FK Gibbs state or
the FK marginal of an ES Gibbs state for the fixed couplings \(J\), and
suppose that \(\nu(N_\infty\le1)=1\). For every finite edge set
\(F\subset E\), deleting the bonds in \(F\) leaves at most one infinite
open component, \(\nu\)-almost surely. In particular, this holds for
\(F=\overline E_\Lambda\) for every finite \(\Lambda\Subset V\).
Finite exterior components may remain.
\end{lemma}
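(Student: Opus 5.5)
The plan is a deterministic graph-theoretic reduction: prove that, for every bond configuration $\eta$ with at most one infinite open cluster, deleting the bonds of a finite set $F$ produces at most one infinite open component. The probabilistic statement then follows immediately, because the event $\{N_\infty\le1\}$ is contained in the event that the $F$-deleted configuration has at most one infinite component. This event inclusion uses no property of $\nu$ beyond $\nu(N_\infty\le1)=1$. So it applies equally to FK Gibbs states and to FK marginals of ES Gibbs states, and no DLR or finite-energy argument is needed.

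For the deterministic step, let $\eta$ satisfy $N_\infty(\eta)\le1$ and write $\eta^F$ for the configuration obtained by closing every bond in $F$. Every open path of $\eta^F$ is an open path of $\eta$, so each open cluster of $\eta^F$ is contained in an open cluster of $\eta$. Suppose $C_1\neq C_2$ are two infinite clusters of $\eta^F$. Each lies inside an infinite $\eta$-cluster, and since $N_\infty(\eta)\le1$ there is only one such cluster, $C$. Hence $C_1\cup C_2\subset C$, and $N_\infty(\eta)=1$ in this case. (If $N_\infty(\eta)=0$, then $\eta^F$ has no infinite cluster at all and there is nothing to prove.)

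The remaining point is that $C_1$ and $C_2$ must be one and the same: one might worry that removing $F$ splits $C$ into two infinite pieces. That is perfectly possible, for instance by cutting a bi-infinite path at a single edge. So the statement cannot hold pointwise, and I must reread it. It asserts only ``at most one infinite open component'' almost surely, so the deterministic inclusion fails and a genuine probabilistic argument is required. I will therefore prove the statement by contradiction using the local specification.

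Suppose that with positive $\nu$-probability $\eta^F$ has at least two infinite components. Since there are only finitely many vertices incident to $F$, I can fix specific vertices $u,v$ incident to $F$ that lie in distinct infinite components of $\eta^F$ with positive probability. Now modify the configuration on $F$ by closing all bonds of $F$. The conditional kernel \eqref{CondFiniteFK} gives this modification a conditional probability bounded below by $(1-p)^{|F|}2^{-|F|}$ up to normalization, uniformly in the exterior. Indeed, closing bonds never creates frustration, and $K$ changes by at most $|F|$. The finite-energy lower bound is where the DLR equations for $\nu$ enter: either the FK DLR equations for an FK Gibbs state, or, for an ES marginal, the ES kernel \eqref{ESCond} averaged against the spins.

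Closing $F$ does not alter $\eta^F$, so the modified configuration $\eta'$ equals $\eta^F$ on all of $E$. With positive probability, $\eta'$ therefore has two distinct infinite open clusters, namely the components of $u$ and $v$, which contradicts $\nu(N_\infty\le1)=1$.

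The main obstacle is making this finite-energy step rigorous uniformly over exterior conditions. For an FK Gibbs state, I would bound the ratio of the kernel weights of $\eta_F\equiv0$ to $\eta_F$ by $((1-p)/p)^{-|F|}$ times $2^{-|F|}$, using compatibility of the all-closed choice. For an ES marginal, I would use the explicit lift $\kappa_J$ from Theorem~\ref{SpinProjSincleCopy}(d): given the spins, each bond is closed with probability at least $1-p>0$ independently. Conditioning on the spins and on the bonds outside $F$ therefore closes $F$ with probability at least $(1-p)^{|F|}$. The special case $F=\overline E_\Lambda$ is then immediate.
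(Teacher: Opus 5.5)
Once you drop the opening deterministic reduction (which is false, as you yourself note with the cut bi-infinite path), your argument is the paper's. You bound $\nu(\eta_F\equiv0\mid\eta_{E\setminus F})$ from below by a positive constant: for FK states via the kernel weights, since closing bonds preserves compatibility and does not decrease $K_\Lambda$, and for ES marginals by conditioning on the spins. You then intersect with the $\sigma(\eta_{E\setminus F})$-measurable event of two infinite components off $F$ to get a contradiction with $\nu(N_\infty\le1)=1$.

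One slip: the weight-ratio bound should be $\min\{1,((1-p)/p)^{|F|}\}$, not $((1-p)/p)^{-|F|}2^{-|F|}$. Since $K_\Lambda$ only increases when bonds are closed, you in fact get the cleaner bound $(1-p)^{|F|}$ directly. The pigeonhole choice of $u,v$ is harmless but unnecessary.
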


\begin{proof}
For an FK Gibbs state, closing a bond never violates compatibility and
cannot decrease the relative cluster count. Its conditional probability
of being closed, given all other bonds, is therefore at least \(1-p\).
For an ES marginal, the same bound follows by first conditioning on the
spins: a satisfied bond is closed with probability \(1-p\), and an
unsatisfied bond is forced closed. Iterating over \(F\) gives
\(\nu(\eta_e=0\text{ for all }e\in F\mid\eta_{E\setminus F})
\ge(1-p)^{|F|}>0\), almost surely.

The event that the open graph on \(E\setminus F\) has at least two
infinite components is measurable with respect to the bonds outside
\(F\). If it had positive probability, the preceding bound would give
positive probability to the same event with every bond of \(F\) closed.
The full configuration would then have at least two infinite FK clusters,
contrary to \(\nu(N_\infty\le1)=1\). This proves the assertion for each
\(F\); countability gives a common full-measure set for all finite edge
sets.
\end{proof}

In addition, the two-arm events occur with vanishing probabilities for
large volumes:
\begin{corollary}\label{NoTwoLongClusters}
Under the hypotheses of Lemma~\ref{UniqueOutside}, fix
\(\Lambda\Subset V\). For finite \(M\supset\Lambda\cup\partial\Lambda\),
let \(A_{\Lambda,M}\) be the event that the open graph induced by
\((M\cup\partial M)\setminus\Lambda\) has at least two distinct
clusters, each meeting both \(\partial\Lambda\) and \(\partial M\).
Then \(\lim_{M\uparrow V}\nu(A_{\Lambda,M})=0\).
\end{corollary}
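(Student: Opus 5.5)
The plan is to show that the events \(A_{\Lambda,M}\) decrease as \(M\) grows, and that their intersection is contained in the null event of Lemma~\ref{UniqueOutside} with \(F=\overline E_\Lambda\). Continuity from above then gives \(\nu(A_{\Lambda,M})\to0\). Since \(A_{\Lambda,M}\) need not be literally monotone in \(M\), I will work with a monotone sequence. Fix an exhausting increasing sequence \(M_k\uparrow V\) with \(M_k\supset\Lambda\cup\partial\Lambda\). It suffices to prove \(\limsup_k \nu(A_{\Lambda,M_k})=0\) for every such sequence; this is also the meaning of \(\lim_{M\uparrow V}\) along the directed set of finite sets.

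\textbf{Step 1 (reduction to a decreasing family).} Define \(B_{k}\) as the event that there exist two distinct open clusters of the graph on \(E\setminus\overline E_\Lambda\) restricted to \((M_k\cup\partial M_k)\setminus\Lambda\), each meeting \(\partial\Lambda\) and \(\partial M_k\), and which are moreover not connected to each other by any open path in \(E\setminus\overline E_\Lambda\) that stays within \((M_j\cup\partial M_j)\setminus\Lambda\) for some \(j\ge k\).

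A cleaner route is the following. Consider the pairs \((u,v)\) of vertices of \(\partial\Lambda\); there are finitely many. On \(A_{\Lambda,M_k}\) there is a pair \(u,v\in\partial\Lambda\) such that \(u\) and \(v\) each connect to \(\partial M_k\) inside \(M_k\cup\partial M_k\) while lying in distinct clusters of that restricted graph. Passing to a subsequence where the same pair \((u,v)\) works infinitely often, it is enough to bound the event \(D_{u,v}\): for infinitely many \(k\), \(u\leftrightarrow\partial M_k\), \(v\leftrightarrow\partial M_k\), and \(u\not\leftrightarrow v\) within \((M_k\cup\partial M_k)\setminus\Lambda\).

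\textbf{Step 2 (identify the limit event).} On \(D_{u,v}\), the clusters of \(u\) and \(v\) in the full exterior open graph on \(E\setminus\overline E_\Lambda\) are both infinite, by local finiteness and a compactness (König) argument: \(u\) reaches arbitrarily far. If they were the same cluster, a finite open path from \(u\) to \(v\) would exist, and it would lie inside \(M_k\cup\partial M_k\) for all large \(k\), contradicting \(u\not\leftrightarrow v\) at infinitely many large \(k\). So \(D_{u,v}\) is contained in the event that the open graph on \(E\setminus\overline E_\Lambda\) has at least two infinite components, which is \(\nu\)-null by Lemma~\ref{UniqueOutside}.

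\textbf{Step 3 (conclude).} We have \(\limsup_k A_{\Lambda,M_k}\subset\bigcup_{u,v}D_{u,v}\), a finite union of null events. Reverse Fatou (\(\limsup_k\nu(A_k)\le\nu(\limsup_k A_k)\)) then gives \(\lim_k\nu(A_{\Lambda,M_k})=0\). The main point needing care is Step 2: making sure that "distinct within \(M_k\)" at infinitely many scales forces distinctness globally, which is exactly the finite-path argument. One should also check that the exterior-graph convention matches Lemma~\ref{UniqueOutside}. Edges with an endpoint in \(\Lambda\) are excluded in both, and including edges between two vertices of \(\partial\Lambda\) only merges clusters, so this is harmless.
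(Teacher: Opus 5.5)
Your argument is correct and is essentially the paper's proof run by contradiction. The paper shows directly that, off the null set of Lemma~\ref{UniqueOutside} with $F=\overline E_\Lambda$, the indicator of $A_{\Lambda,M}$ vanishes for all sufficiently large $M$ (finite exterior clusters meeting $\partial\Lambda$ eventually avoid $\partial M$, and the $\partial\Lambda$-vertices of the infinite exterior component are eventually joined inside the annulus), and then uses bounded convergence; you instead show, via pigeonhole over pairs $(u,v)$, that $\limsup_k A_{\Lambda,M_k}$ forces two distinct infinite exterior components, and conclude with reverse Fatou (the discarded event $B_k$, the opening monotonicity remark and König's lemma are not needed, since a cluster meeting $\partial M_k$ for infinitely many $k$ is already infinite).
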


\begin{proof}
By Lemma~\ref{UniqueOutside}, with
\(F=\overline E_\Lambda\), the exterior open graph has at most one
infinite component, almost surely. Only finitely many exterior clusters
meet the finite set \(\partial\Lambda\). For all sufficiently large
\(M\), each finite such cluster, together with its neighboring vertices,
is contained in \(M\), and hence cannot meet \(\partial M\).
If the infinite exterior component meets \(\partial\Lambda\), connect
all of its vertices in \(\partial\Lambda\) by finitely many open paths
in \(V\setminus\Lambda\). These paths are contained in \(M\setminus
\Lambda\) for all sufficiently large \(M\). Thus all inner-boundary
vertices that can reach \(\partial M\) belong to a single annular
cluster. Consequently \(\mathbf1_{A_{\Lambda,M}}\to0\) almost surely,
and bounded convergence proves the claim. No monotonicity of
\(A_{\Lambda,M}\) is needed.
\end{proof}

\subsection{Infinite-volume FK measures}
We now define the infinite-volume FK random-cluster measures using the same
conditional kernels as in the finite-volume discussion above.  For a finite
\(\Lambda\Subset V\), let
\(
   \mathscr E_\Lambda
   :=
   \sigma\bigl(\eta_e:e\in E\setminus\overline E_\Lambda\bigr)
\)
be the exterior FK sigma-field.  If
\(
   \zeta\in\{0,1\}^{E\setminus\overline E_\Lambda}
\)
is an exterior bond configuration compatible with \(J\) and
\(
   \xi\in\{0,1\}^{\overline E_\Lambda},
\)
we use the conditional FK kernel
\begin{equation}\label{eq:InfiniteFKKernel}
   \nu_{J,p,\Lambda,\zeta}(\xi)
   =
   \frac{
      p^{|\xi_1|}(1-p)^{|\xi_0|}
      2^{K_\Lambda(\xi,\zeta)}
      \mathbf 1_{\{\xi\zeta\sim J\}}
   }{
      Z^{\mathrm{FK}}_{J,p,\Lambda,\zeta}
   } .
\end{equation}
Here \(K_\Lambda(\xi,\zeta)\) is the relative cluster count defined in
\eqref{eq:relativeK}, and \(\xi\zeta\sim J\) means that the concatenated open
bond configuration contains no frustrated open cycle.  Thus the infinite-volume
FK specification is exactly the conditional random-cluster specification
obtained by conditioning on the exterior bonds in
\(E\setminus\overline E_\Lambda\).

A probability measure \(\nu\) on the compatible bond space
\(\mathcal H_J:=\{\eta\in\mathcal H:\eta\sim J\}\) is an infinite-volume FK Gibbs
state for the fixed couplings \(J\) and parameter \(p<1\) if, for every finite
\(\Lambda\Subset V\) and every bounded local function
\(f=f(\eta_{\overline E_\Lambda})\),
\begin{equation}\label{eq:FKDLR}
   \nu\bigl(f\mid\mathscr E_\Lambda\bigr)(\eta)
   =
   \nu_{J,p,\Lambda,\eta_{E\setminus\overline E_\Lambda}}(f)
   \qquad \nu\text{-a.s.}
\end{equation}
The set of FK distributions satisfying these DLR equations is denoted by
\(\mathcal G^{\mathrm{FK}}_{J,p}\).

With uniqueness of the infinite cluster in place,
all the infinite-volume FK distributions can also be obtained by
projection of the infinite-volume ES measures.
For a configuration of bonds $\eta\in\mathcal H$
write $N_{\infty}(\eta)$ for the number of
infinite FK clusters in $\eta$. Let

\[
   \mathcal G^{\mathrm{FK}}_{\le 1}
   \;:=\;
   \bigl\{
        \nu\in\mathcal G^{\mathrm{FK}}_{J,p}:
        \;\nu\bigl(N_{\infty}\le1\bigr)=1
   \bigr\}.
\]
In view of the one-to-one result in Theorem \ref{SpinProjSincleCopy},
Part (d),
this implies that we are also restricting the spin measures
to the  image $\Pi_{\mathrm{SPIN}}(\mathcal G^{\mathrm{ES}}_{\le 1})=
\mathcal G^{\mathrm{SPIN}}_{J,\beta,\leq 1}$.

Let
\[
\Pi_{\mathrm{FK}}:\Omega\times\mathcal H\to\mathcal H,
\qquad
\Pi_{\mathrm{FK}}(\omega,\eta):=\eta,
\]
denote the projection onto bond variables.

\begin{lemma}[Finite quotient coloring identity]
\label{lem:finite-quotient-coloring}
Fix a finite \(\Lambda\Subset V\), a coupling configuration
\(J\in\{\pm1\}^E\), and an exterior FK configuration
\(
   \zeta\in\{0,1\}^{E\setminus\overline E_\Lambda}
\)
which is compatible with \(J\).

Using the notation introduced in and before 
\eqref{eq:relativeK},
 the following two identities hold.

\begin{enumerate}
\item[\textup{(a)}]
Suppose that at most one cluster in \(\mathscr B_\Lambda(\zeta)\) has its
spin fixed, and that the spins of all other exterior clusters in
\(\mathscr B_\Lambda(\zeta)\) are averaged with the uniform distribution.
Then
\begin{eqnarray}\label{eq:quotient-identity-averaged}
 &&  \frac{1}{2^{m_\Lambda(\zeta)-r}}
   \sum_{\text{free exterior signs}}
   \sum_{\omega_\Lambda\in\{-1,+1\}^{\Lambda}}
   p^{|\xi_1|}(1-p)^{|\xi_0|}
   \mathbf 1_{\{
      \xi\sim_J \omega_\Lambda\omega_{\partial\Lambda}
   \}} \\
  && \hskip 2cm =
   p^{|\xi_1|}(1-p)^{|\xi_0|}
   2^{K_\Lambda(\xi,\zeta)}
   \mathbf 1_{\{\xi\zeta\sim J\}}, \nonumber
\end{eqnarray}
where \(r=0\) if no exterior cluster has fixed spin and \(r=1\) if one
exterior cluster has fixed spin. The boundary spin
\(\omega_{\partial\Lambda}\) is the spin assignment induced on
\(\partial\Lambda\) by the chosen exterior cluster signs and the coupling
configuration \(J\).

\item[\textup{(b)}]
Fix instead a spin assignment on all exterior clusters in
\(\mathscr B_\Lambda(\zeta)\), and let
\(
   \omega_{\partial\Lambda}
\)
be the induced boundary spin configuration. Under the fair coloring of the
connected components of \(Q_\Lambda(\xi,\zeta)\), the joint probability of
seeing these exterior cluster signs and a given
\(\omega_\Lambda\in\{-1,+1\}^{\Lambda}\) is
\(
   2^{-N_\Lambda(\xi,\zeta)}
   \mathbf 1_{\{\xi\sim_J\omega_\Lambda\omega_{\partial\Lambda}\}} .
\)
Consequently,
\begin{equation}\label{eq:quotient-identity-fixed}
   2^{K_\Lambda(\xi,\zeta)}
   \mathbf 1_{\{\xi\zeta\sim J\}}
   \cdot
   2^{-N_\Lambda(\xi,\zeta)}
   \mathbf 1_{\{\xi\sim_J\omega_\Lambda\omega_{\partial\Lambda}\}}
   =
   2^{-m_\Lambda(\zeta)}
   \mathbf 1_{\{\xi\sim_J\omega_\Lambda\omega_{\partial\Lambda}\}} .
\end{equation}
In particular, after normalization over
\(\xi\) and \(\omega_\Lambda\), the FK relative factor
\(2^{K_\Lambda(\xi,\zeta)}\) cancels exactly against the cluster-coloring
probability, leaving the ES weight
\[
   p^{|\xi_1|}(1-p)^{|\xi_0|}
   \mathbf 1_{\{\xi\sim_J\omega_\Lambda\omega_{\partial\Lambda}\}} .
\]
\end{enumerate}
\end{lemma}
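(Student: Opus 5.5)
The whole lemma reduces to one counting fact on the finite quotient graph \(Q_\Lambda(\xi,\zeta)\), so I would prove that fact first and then read off (a) and (b) as bookkeeping in powers of two.

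\textbf{Counting fact.} Regard each vertex of \(\Lambda\) and each exterior-cluster representative \(B\in\mathscr B_\Lambda(\zeta)\) as carrying a sign. Each open edge of \(\xi\) becomes a virtual edge of \(Q_\Lambda(\xi,\zeta)\) with coupling label. For an edge \(\{x,y\}\) with \(y\in\partial\Lambda\cap B\), the label is \(J_{xy}\prod_{e\in\gamma_{y_B,y}}J_e\). This is well-defined because \(\zeta\sim J\).

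With these labels, a joint choice \((\tau,\omega_\Lambda)\) satisfies \(\xi\sim_J\omega_\Lambda\omega^\tau_{\partial\Lambda}\) exactly when every virtual edge is satisfied by the corresponding signs on \(Q_\Lambda\). I will check this edge by edge from the definition \(\bar\omega^\tau_y=\tau_B\prod_{e\in\gamma_{y_B,y}}J_e\).

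Next I would verify that the labelled graph \(Q_\Lambda\) is balanced (every cycle has label product \(+1\)) if and only if \(\xi\zeta\sim J\). A cycle in \(Q_\Lambda\) lifts to an open cycle of \(\xi\zeta\) by inserting \(\zeta\)-paths inside the exterior clusters, and the labels are built so that the products agree. Conversely, any open cycle of \(\xi\zeta\) either lies inside \(\zeta\), where it is unfrustrated by hypothesis, or projects to a closed walk in \(Q_\Lambda\) with the same product.

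Standard signed-graph theory then says that a balanced graph has exactly \(2^{N_\Lambda}\) satisfying sign assignments, one global flip per component, and an unbalanced graph has none. This bijection between satisfying assignments on \(Q_\Lambda\) and compatible pairs \((\tau,\omega_\Lambda)\) is the step I expect to need the most care. The delicate points are exterior clusters that meet \(\partial\Lambda\) at several vertices, and cycles that pass repeatedly through the same representative.

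\textbf{Part (a).} Summing over all \(2^{m_\Lambda}\) choices of \(\tau\) and all \(\omega_\Lambda\) gives \(2^{N_\Lambda}\mathbf 1_{\{\xi\zeta\sim J\}}\).

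Now suppose one cluster \(B_0\) has its sign fixed (\(r=1\)). The component of \(Q_\Lambda\) containing \(B_0\) then has its flip fixed. So the count becomes \(2^{N_\Lambda-1}\mathbf 1_{\{\xi\zeta\sim J\}}\), and it does not depend on the value of the fixed sign.

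Dividing by \(2^{m_\Lambda-r}\) gives \(2^{N_\Lambda-m_\Lambda}=2^{K_\Lambda}\) in both cases, by \eqref{eq:relativeK}. Multiplying by the common factor \(p^{|\xi_1|}(1-p)^{|\xi_0|}\) gives \eqref{eq:quotient-identity-averaged}. The case \(r=0\) is simply \eqref{CondFiniteFK2}, so it can be cited directly.

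\textbf{Part (b).} Under the fair coloring, each of the \(2^{N_\Lambda}\) valid assignments has probability \(2^{-N_\Lambda}\). By the bijection, the prescribed \((\tau,\omega_\Lambda)\) is such an assignment exactly when \(\mathbf 1_{\{\xi\sim_J\omega_\Lambda\omega_{\partial\Lambda}\}}=1\). This gives the stated joint probability.

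Multiplying by \(2^{K_\Lambda}\mathbf 1_{\{\xi\zeta\sim J\}}\) and using \(K_\Lambda-N_\Lambda=-m_\Lambda\) gives \eqref{eq:quotient-identity-fixed}. The indicator \(\mathbf 1_{\{\xi\zeta\sim J\}}\) is absorbed, because compatibility of \(\xi\) with \(\omega_\Lambda\omega_{\partial\Lambda}\) forces balance and hence \(\xi\zeta\sim J\).

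Since \(2^{-m_\Lambda(\zeta)}\) does not depend on \(\xi\) or \(\omega_\Lambda\), it cancels under normalization. What remains is the ES weight, as claimed.
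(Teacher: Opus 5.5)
Your proposal is correct and follows essentially the same route as the paper. Both arguments reduce everything to one fact: when \(\xi\zeta\sim J\), the signed quotient graph \(Q_\Lambda(\xi,\zeta)\) has exactly \(2^{N_\Lambda}\) compatible assignments, one global flip per component, and fixing one exterior sign halves this count; (a) is then the bookkeeping \(2^{N_\Lambda-m_\Lambda}=2^{K_\Lambda}\), and (b) multiplies \(2^{-N_\Lambda}\) by \(2^{K_\Lambda}\). Your explicit edge labels, the check that balance of \(Q_\Lambda\) is equivalent to \(\xi\zeta\sim J\), and the observation that \(\mathbf 1_{\{\xi\sim_J\omega_\Lambda\omega_{\partial\Lambda}\}}\) already forces \(\mathbf 1_{\{\xi\zeta\sim J\}}\) in (b) supply details that the paper only asserts.
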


\begin{proof}
The open edges of \(\xi\), together with the exterior clusters of \(\zeta\)
meeting \(\partial\Lambda\), define a finite signed quotient graph
\(Q_\Lambda(\xi,\zeta)\). A compatible spin assignment on this quotient graph
exists if and only if the concatenated FK configuration \(\xi\zeta\) has no
frustrated open cycle, i.e. if and only if
\(
   \xi\zeta\sim J.
\)
If this compatibility condition fails, all expressions in
\eqref{eq:quotient-identity-averaged} and
\eqref{eq:quotient-identity-fixed} vanish.

Assume now that \(\xi\zeta\sim J\). Then every connected component of
\(Q_\Lambda(\xi,\zeta)\) has exactly two possible spin assignments, unless
one spin in that component has already been fixed, in which case the whole
component has a unique compatible assignment.

For part \textup{(a)}, suppose first that no exterior cluster spin is fixed.
Then the total number of compatible assignments of the spins on
\(\Lambda\) and on the \(m_\Lambda(\zeta)\) exterior quotient vertices is
\(
   2^{N_\Lambda(\xi,\zeta)}.
\)
Averaging over the \(2^{m_\Lambda(\zeta)}\) possible exterior signs gives the
relative factor
\[
   2^{N_\Lambda(\xi,\zeta)-m_\Lambda(\zeta)}
   =
   2^{K_\Lambda(\xi,\zeta)}.
\]
If one exterior cluster spin is fixed, then the number of compatible
assignments is \(2^{N_\Lambda(\xi,\zeta)-1}\), while the number of exterior
signs still being averaged is \(2^{m_\Lambda(\zeta)-1}\). The quotient is
again
\[
   2^{N_\Lambda(\xi,\zeta)-m_\Lambda(\zeta)}
   =
   2^{K_\Lambda(\xi,\zeta)}.
\]
Multiplying by the common edge weight
\(p^{|\xi_1|}(1-p)^{|\xi_0|}\) proves
\eqref{eq:quotient-identity-averaged}.

For part \textup{(b)}, fix all exterior quotient spins and also a spin
configuration \(\omega_\Lambda\) inside \(\Lambda\). Under fair coloring of
the components of \(Q_\Lambda(\xi,\zeta)\), each connected component has one
independent fair sign. Hence the probability of realizing a prescribed
compatible spin assignment on the quotient graph is
\(
   2^{-N_\Lambda(\xi,\zeta)}.
\)
It is zero if the prescribed spins are incompatible with the open edges of
\(\xi\). Thus the joint probability is exactly
\(
   2^{-N_\Lambda(\xi,\zeta)}
   \mathbf 1_{\{\xi\sim_J\omega_\Lambda\omega_{\partial\Lambda}\}} .
\)
Multiplying by \(2^{K_\Lambda(\xi,\zeta)}\) gives
\(
   2^{N_\Lambda(\xi,\zeta)-m_\Lambda(\zeta)}
   2^{-N_\Lambda(\xi,\zeta)}
   =
   2^{-m_\Lambda(\zeta)},
\)
which proves \eqref{eq:quotient-identity-fixed}.
\end{proof}

\begin{theorem}[Surjectivity of the FK marginal at \(N_\infty\le1\)]
\label{thm:FK-surjective}
For every fixed coupling configuration \(J\in\{\pm1\}^{E}\) and finite
inverse temperature \(0\le\beta<\infty\), with \(p=1-e^{-2\beta}<1\),
\[
   \Pi_{\mathrm{FK}}
   \bigl(
      \mathcal G^{\mathrm{ES}}_{\le1}
   \bigr)
   =
   \mathcal G^{\mathrm{FK}}_{\le1}.
\]
Equivalently, every FK Gibbs state with at most one infinite FK cluster
almost surely is the FK marginal of an ES Gibbs state for the same coupling
configuration \(J\).
\end{theorem}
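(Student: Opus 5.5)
Two inclusions are needed. For \(\Pi_{\mathrm{FK}}(\mathcal G^{\mathrm{ES}}_{\le1})\subseteq\mathcal G^{\mathrm{FK}}_{\le1}\), take \(\pi\in\mathcal G^{\mathrm{ES}}_{\le1}\) and set \(\nu=\Pi_{\mathrm{FK}}\pi\); clearly \(\nu(N_\infty\le1)=1\). To verify \eqref{eq:FKDLR}, fix \(\Lambda\) and a local \(f(\eta_{\overline E_\Lambda})\), and condition on the finer \(\sigma\)-field \(\mathscr A^{\mathrm{ES}}_{\Lambda^c}\), where the ES DLR equation \eqref{ESDLR} gives the kernel \eqref{ESCond}. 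The issue is to integrate out the exterior spins given only the exterior bonds \(\zeta\). Under \(\pi\), given \(\eta\), the spins are constant up to \(J\)-signs on each FK cluster. Hence the boundary spins \(\omega_{\partial\Lambda}\) are determined by signs \(\tau_B\), one for each exterior cluster \(B\in\mathscr B_\Lambda(\zeta)\). I would show that, conditionally on \(\zeta\), these signs are i.i.d.\ fair on the finite exterior clusters, while the (at most one, by Lemma~\ref{UniqueOutside}) infinite exterior cluster carries a sign with some \(\zeta\)-measurable law. Fairness on finite clusters follows by applying the ES DLR equation in a larger box \(\Delta\) containing such a cluster together with its boundary bonds. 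Closing those bonds and flipping the cluster's spins is a measure-preserving bijection of the ES weights, because the ES weight does not see spins across closed edges. Combining this with \eqref{CondFiniteFK2} and Lemma~\ref{lem:finite-quotient-coloring}(a), with \(r=1\) for the infinite cluster and \(r=0\) otherwise, the averaged ES kernel collapses to \eqref{eq:InfiniteFKKernel}. The identity holds whatever the law of the one fixed sign, which is exactly where \(N_\infty\le1\) enters. With two infinite exterior clusters their relative sign could be biased and would not cancel.

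For the reverse inclusion, take \(\nu\in\mathcal G^{\mathrm{FK}}_{\le1}\) and construct the lift explicitly. Given \(\eta\), choose a reference vertex in each cluster. Colour each finite cluster with an independent fair sign. Colour the infinite cluster (if any) with a fair sign independent of everything else, propagating along open paths by the \(J\)-products, which is well defined since \(\eta\sim J\). This gives \(\pi:=\nu(d\eta)\,\chi_J(d\omega\mid\eta)\), and \(\Pi_{\mathrm{FK}}\pi=\nu\) together with \(\pi(N_\infty\le1)=1\) is immediate. It remains to check \eqref{ESDLR}. I would compute \(\pi(\,\cdot\mid\mathscr A^{\mathrm{ES}}_{\Lambda^c})\) by first conditioning on \(\mathscr E_\Lambda\), where \(\nu\)'s conditional law of \(\xi=\eta_{\overline E_\Lambda}\) is \eqref{eq:InfiniteFKKernel}. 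Given \(\xi\zeta\), the colouring restricted to \(Q_\Lambda(\xi,\zeta)\) is the fair colouring of its components: clusters meeting \(\Lambda\) are unions of quotient components, and finite ones are fairly coloured. The exterior spins reveal the signs \(\tau\) of the exterior boundary clusters, and the spins of clusters not touching \(\Lambda\) are independent of everything inside. Lemma~\ref{lem:finite-quotient-coloring}(b), identity \eqref{eq:quotient-identity-fixed}, then says that the joint weight of \((\xi,\omega_\Lambda,\tau)\) is proportional to the ES weight times \(2^{-m_\Lambda(\zeta)}\). Normalizing over \((\xi,\omega_\Lambda)\) at fixed \(\tau\) gives exactly \eqref{ESCond} with boundary condition \(\bar\omega^\tau_{\partial\Lambda}\).

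\textbf{Main obstacle.} The delicate point in both directions is the global colouring of clusters that are infinite, or that merge through \(\Lambda\). Opening bonds in \(\Lambda\) can join a finite exterior cluster to the infinite one, or join several finite ones. One must check that the colouring rule is consistent under this rewiring, so that the conditional sign law on \(\mathscr B_\Lambda(\zeta)\) really is uniform except for at most one coordinate. In the construction this needs the choice "independent fair sign for the infinite cluster". It must be verified that the infinite cluster of \(\xi\zeta\) restricted outside \(\Lambda\) contains at most one infinite exterior piece (Lemma~\ref{UniqueOutside}), with all other pieces finite and hence fair. I would handle this by conditioning on the measurable event \(\{N_\infty\le1\}\) and on the outside configuration, then treating each merge case explicitly via the quotient graph. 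Corollary~\ref{NoTwoLongClusters} can serve if a finite-volume approximation of the infinite cluster's sign is needed to justify measurability.
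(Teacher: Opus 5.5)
Your lifting direction matches the paper's Step~2: independent fair colouring of every cluster, including the infinite one, then Lemma~\ref{lem:finite-quotient-coloring}(b) to cancel $2^{K_\Lambda}$ against $2^{-N_\Lambda}$. It is fine.

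\textbf{The projection direction rests on a false claim.} Conditionally on $\zeta=\eta_{E\setminus\overline E_\Lambda}$ alone, the signs $\tau_B$ of the finite exterior clusters are \emph{not} i.i.d.\ fair. The bonds of $\overline E_\Lambda$ can join these clusters through $\Lambda$, and you are not conditioning on them. Your own lifting computation already shows this. For the lifted state, the joint weight of $(\xi,\omega_\Lambda,\tau)$ given $\zeta$ is proportional to $W^{\mathrm{ES}}_{J,p,\Lambda,\bar\omega^\tau_{\partial\Lambda}}(\omega_\Lambda,\xi)$. Hence the law of $\tau$ given $\zeta$ is proportional to $Z^{\mathrm{ES}}_{J,\beta,\Lambda,\bar\omega^\tau_{\partial\Lambda}}$, which depends on $\tau$. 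Concretely, take $\mathbb Z$ with $\Lambda=\{0\}$, $J\equiv+1$, and $\zeta$ closing $\{-2,-1\}$ and $\{1,2\}$. The singleton clusters $\{-1\}$ and $\{1\}$ then receive weights
\[
 Z(\tau)=\sum_{\omega_0=\pm1}\,\prod_{i=\pm1}\bigl(1-p+p\,\mathbf 1_{\{\omega_0=\tau_i\}}\bigr)
 =\begin{cases}
 1+(1-p)^2, & \tau_{-1}=\tau_{1},\\
 2(1-p), & \tau_{-1}\neq\tau_{1},
 \end{cases}
\]
so they agree with conditional probability $(1+(1-p)^2)/(2-p)^2>1/2$ whenever $p>0$. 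Your flip argument fails for the same reason. Flipping $B$ preserves the ES weight only when every bond from $B$ into $\Lambda$ is closed, and ``closing those bonds'' is neither a bijection nor weight-preserving.

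\textbf{Even granting fairness, the combination step would not give \eqref{eq:InfiniteFKKernel}.} The true conditional law is $\pi(\xi\mid\zeta)=\sum_\tau\pi(\tau\mid\zeta)\,\pi_{J,\beta,\Lambda,\bar\omega^\tau_{\partial\Lambda}}(\xi)$. This is a mixture of \emph{normalized} kernels, each divided by its own $Z^{\mathrm{ES}}_{J,\beta,\Lambda,\bar\omega^\tau_{\partial\Lambda}}$. By contrast, \eqref{CondFiniteFK2} and Lemma~\ref{lem:finite-quotient-coloring}(a) are identities between \emph{unnormalized} weights. Uniform weights on $\tau$ would produce $2^{-m_\Lambda(\zeta)}\sum_\tau\sum_{\omega_\Lambda}W^{\mathrm{ES}}_{J,p,\Lambda,\bar\omega^\tau_{\partial\Lambda}}(\omega_\Lambda,\xi)/Z^{\mathrm{ES}}_{J,\beta,\Lambda,\bar\omega^\tau_{\partial\Lambda}}$, which is not $\nu_{J,p,\Lambda,\zeta}$. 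The correct kernel emerges precisely because the true law of $\tau$ is $Z^{\mathrm{ES}}$-weighted.

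\textbf{The repair is the paper's.} Work on the $\zeta$-measurable event $\{\Delta=D\}$. Here $D\supset\Lambda$ is finite and contains all finite exterior clusters meeting $\partial\Lambda$; there are countably many such $D$. Then:
\begin{enumerate}
\item Apply the ES DLR equation in $D$ instead of $\Lambda$.
\item Condition in addition on the bonds of $\zeta$ inside $\overline E_D$.
\item Sum the single unnormalized $D$-weight over $\omega_D$.
\end{enumerate}
The finite-cluster signs are now summed inside one kernel with a common normalization. The sign of the infinite exterior cluster, if any, is pinned by $\omega_{D^c}$. Lemma~\ref{lem:finite-quotient-coloring}(a) with $r\in\{0,1\}$ then yields $\nu_{J,p,\Lambda,\zeta}$, independently of $\omega_{D^c}$.

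You already invoke a larger box; it must be used to perform this summation, not to derive fairness. The same false uniformity claim reappears in your ``main obstacle'' paragraph. Your lifting direction does not need it, since Lemma~\ref{lem:finite-quotient-coloring}(b) handles the quotient colouring directly.
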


\begin{remark}
The proof follows the strategy of Biskup et al.\
\cite[Theorem II.2(i)]{biskup2000gibbs}. The only extra point in the present
signed setting is that all finite-volume calculations have to retain the
hard constraint excluding frustrated open cycles. This is exactly what
Lemma~\ref{lem:finite-quotient-coloring} records.
\end{remark}

\begin{proof}
Fix \(J\in\{\pm1\}^E\), \(0\le\beta<\infty\), and set
\(
   p=1-e^{-2\beta}<1.
\)
For a finite \(\Lambda\Subset V\), let
\(
   \mathscr E_\Lambda
   :=
   \sigma\bigl(\eta_b:\ b\in E\setminus\overline E_\Lambda\bigr)
\)
be the sigma-field generated by the FK variables outside
\(\overline E_\Lambda\). 

If
\(
   \zeta\in\{0,1\}^{E\setminus\overline E_\Lambda}
\)
is an exterior FK configuration, we write
\(
   \nu_{J,p,\Lambda,\zeta}
\)
for the finite-volume FK conditional kernel from
\eqref{CondFiniteFK}:
\[
   \nu_{J,p,\Lambda,\zeta}(\xi)
   =
   \frac{
      p^{|\xi_1|}(1-p)^{|\xi_0|}
      2^{K_\Lambda(\xi,\zeta)}
      \mathbf 1_{\{\xi\zeta\sim J\}}
   }{
      Z^{\mathrm{FK}}_{J,p,\Lambda,\zeta}
   },
   \qquad
   \xi\in\{0,1\}^{\overline E_\Lambda}.
\]

\smallskip
\noindent
\emph{Step 1: the FK marginal of an ES state is an FK Gibbs state.}

Let
\(
   \pi\in\mathcal G^{\mathrm{ES}}_{J,\beta},
   \pi(N_\infty\le1)=1,
\)
and set
\(
   \nu:=\Pi_{\mathrm{FK}}\pi.
\)
Then \(\nu(N_\infty\le1)=1\). We prove that
\(
   \nu\in\mathcal G^{\mathrm{FK}}_{J,p}.
\)

Fix a finite \(\Lambda\Subset V\) and a bounded function
\(
   f=f(\eta_{\overline E_\Lambda}).
\)
We must show that
\begin{equation}\label{eq:FK-DLR-to-prove-new}
   \mathbb E_\nu
   \left[
      f(\eta_{\overline E_\Lambda})
      \,\middle|\,
      \mathscr E_\Lambda
   \right]
   =
   \nu_{J,p,\Lambda,\eta_{E\setminus\overline E_\Lambda}}(f)
   \qquad
   \nu\text{-a.s.}
\end{equation}

Let
\(
   \zeta:=\eta_{E\setminus\overline E_\Lambda}.
\)
Work on the full \(\nu\)-measure event supplied by
Lemma~\ref{UniqueOutside}, applied to the ES marginal \(\nu\) with
\(F=\overline E_\Lambda\). The lemma applies directly to ES marginals,
so this use does not presuppose the FK DLR property being proved.

On this event, after removing the edges in \(\overline E_\Lambda\), the
infinite FK cluster, if it exists, has at most one infinite exterior component, all other exterior FK clusters of \(\zeta\) meeting
\(\partial\Lambda\) being finite. Since only finitely many exterior clusters meet
\(\partial\Lambda\), there exists a finite set
\[
   \Delta=\Delta(\eta,\Lambda)\Subset V,
   \qquad
   \Lambda\subset \Delta,
\]
which contains all finite exterior FK clusters of \(\zeta\) meeting
\(\partial\Lambda\). There is at most one exterior cluster meeting
\(\partial\Lambda\) that is not contained in \(\Delta\), and this cluster is
the unique infinite one, if it exists.

The set \(\Delta\) is a measurable function of the exterior bond
configuration \(\zeta\). Since there are only countably many finite choices
for \(\Delta\), it is enough to argue on the event
\(
   \{\Delta(\eta,\Lambda)=D\}
\)
for a fixed finite \(D\Subset V\).

Condition on
\[
   \omega_{D^c}
   \quad\text{and}\quad
   \eta_{E\setminus\overline E_\Lambda}=\zeta.
\]
By the ES DLR equations for \(\pi\), the conditional distribution inside
\(D\), given the exterior spins, is the finite-volume ES kernel in \(D\).
We then keep fixed all FK edges in
\(
   E\setminus\overline E_\Lambda
\)
and sum over the spin variables in \(D\). The finite
exterior FK clusters meeting \(\partial\Lambda\) have been included in
\(D\), so their signs are averaged. If an infinite exterior FK cluster meets
\(\partial\Lambda\), its sign is fixed by \(\omega_{D^c}\); by uniqueness
there is at most one such fixed exterior cluster.

Therefore Lemma~\ref{lem:finite-quotient-coloring}\textup{(a)} applies.
It gives that the conditional distribution of
\(\eta_{\overline E_\Lambda}\), after summing over the spin variables, is
exactly
\(
   \nu_{J,p,\Lambda,\zeta}.
\)

In particular, the result does not depend on \(\omega_{D^c}\). Hence, on
\(\{\Delta=D\}\),
\[
   \mathbb E_\pi
   \left[
      f(\eta_{\overline E_\Lambda})
      \,\middle|\,
      \omega_{D^c},\eta_{E\setminus\overline E_\Lambda}
   \right]
   =
   \nu_{J,p,\Lambda,\eta_{E\setminus\overline E_\Lambda}}(f).
\]
Taking conditional expectation with respect to \(\mathscr E_\Lambda\) and
summing over the countably many possible choices of \(D\), we obtain
\[
   \mathbb E_\nu
   \left[
      f(\eta_{\overline E_\Lambda})
      \,\middle|\,
      \mathscr E_\Lambda
   \right]
   =
   \nu_{J,p,\Lambda,\eta_{E\setminus\overline E_\Lambda}}(f).
\]
This proves \eqref{eq:FK-DLR-to-prove-new}. Since \(\Lambda\) and \(f\)
were arbitrary,
\(
   \nu\in\mathcal G^{\mathrm{FK}}_{J,p}
\)
and
\(
   \Pi_{\mathrm{FK}}
   \bigl(
      \mathcal G^{\mathrm{ES}}_{\le1}
   \bigr)
   \subset
   \mathcal G^{\mathrm{FK}}_{\le1}.
\)

\smallskip
\noindent
\emph{Step 2: every FK Gibbs state with \(N_\infty\le1\) has an ES lift.}

Let
\(
   \nu\in\mathcal G^{\mathrm{FK}}_{\le1}.
\)
We construct
\(
   \pi\in\mathcal G^{\mathrm{ES}}_{\le1}
\)
such that
\(
   \Pi_{\mathrm{FK}}\pi=\nu.
\)

By the FK hard constraint, \(\nu\)-a.s. every FK configuration
\(\eta\) is compatible with \(J\): its open graph contains no frustrated open
cycle. For such an \(\eta\), every open FK cluster has exactly two compatible
spin assignments. Fix an enumeration \(V=\{v_1,v_2,\ldots\}\).
For each FK cluster \(C\), finite or infinite, let \(x_C\) be its first
vertex in this enumeration. For
\(y\in C\), define
\(
   \tau_C(y)
   :=
   \prod_{b\in\gamma_{x_C,y}} J_b,
\)
where \(\gamma_{x_C,y}\) is any FK-open path in \(C\) from \(x_C\) to \(y\).
This is well-defined because \(\eta\) has no frustrated open cycle. The two
compatible spin assignments on \(C\) are
\[
   \omega_y=\varepsilon_C\,\tau_C(y),
   \qquad y\in C,
\]
where \(\varepsilon_C\in\{-1,+1\}\).

Given \(\eta\), choose the signs
\(
   \{\varepsilon_C:C\text{ an FK cluster of }\eta\}
\)
independently, each with probability \(1/2\). Let
\(
   \kappa_J(d\omega\mid\eta)
\)
be the corresponding coloring kernel, and define
\(
   \pi(d\omega,d\eta)
   :=
   \kappa_J(d\omega\mid\eta)\,\nu(d\eta).
\)
Then
\(
   \Pi_{\mathrm{FK}}\pi=\nu
\)
by construction, and therefore
\(
   \pi(N_\infty\le1)=\nu(N_\infty\le1)=1.
\)

It remains to prove that \(\pi\) satisfies the ES DLR equations. Fix
\(\Lambda\Subset V\), and let
\[
   \mathscr A_{\Lambda^c}^{\mathrm{ES}}
   :=
   \sigma\bigl(
      \omega_x:x\in V\setminus\Lambda;\,
      \eta_b:b\in E\setminus\overline E_\Lambda
   \bigr)
\]
be the ES exterior sigma-field.

We show that, under \(\pi\), the conditional law of
\(
   (\omega_\Lambda,\eta_{\overline E_\Lambda})
\)
given \(\mathscr A_{\Lambda^c}^{\mathrm{ES}}\) is
\(
   \pi_{J,\beta,\Lambda,\omega_{\partial\Lambda}}.
\)
Let
\(
   \zeta:=\eta_{E\setminus\overline E_\Lambda}
\)
and fix an exterior spin configuration
\(
   \alpha:=\omega_{V\setminus\Lambda}
\)
which is compatible with \(\zeta\). This compatibility holds
\(\pi\)-a.s. by construction.

For
\[
   \xi\in\{0,1\}^{\overline E_\Lambda},
   \qquad
   \sigma\in\{-1,+1\}^{\Lambda},
\]
we compute the conditional weight of
\(
   \eta_{\overline E_\Lambda}=\xi,\hskip 0.2 cm
   \omega_\Lambda=\sigma,
\)
given
\(
   \eta_{E\setminus\overline E_\Lambda}=\zeta,\hskip 0.2 cm
   \omega_{V\setminus\Lambda}=\alpha.
\)

By the FK DLR equation for \(\nu\), the conditional FK weight of \(\xi\),
given \(\zeta\), is proportional to
\[
   p^{|\xi_1|}(1-p)^{|\xi_0|}
   2^{K_\Lambda(\xi,\zeta)}
   \mathbf 1_{\{\xi\zeta\sim J\}}.
\]
Given the full FK configuration \(\xi\zeta\), the coloring kernel chooses
one fair sign for each connected component of the quotient graph
\(Q_\Lambda(\xi,\zeta)\). Lemma~\ref{lem:finite-quotient-coloring}\textup{(b)}
therefore gives that the additional probability of realizing the exterior
spins \(\alpha\) and the interior spins \(\sigma\) contributes the factor
\(
   2^{-N_\Lambda(\xi,\zeta)}
   \mathbf 1_{\{\xi\sim_J\sigma\alpha_{\partial\Lambda}\}}.
\)
Multiplying this factor by
\(
   2^{K_\Lambda(\xi,\zeta)}
   =
   2^{N_\Lambda(\xi,\zeta)-m_\Lambda(\zeta)}
\)
leaves the constant \(2^{-m_\Lambda(\zeta)}\), independent of
\(\xi\) and \(\sigma\), times the compatibility indicator
\(
   \mathbf 1_{\{\xi\sim_J\sigma\alpha_{\partial\Lambda}\}}.
\)

Thus the unnormalized conditional weight of \((\sigma,\xi)\), given the
exterior ES configuration, is proportional to
\[
   p^{|\xi_1|}(1-p)^{|\xi_0|}
   \mathbf 1_{\{\xi\sim_J\sigma\alpha_{\partial\Lambda}\}}.
\]
After normalization over all
\(
   \sigma\in\{-1,+1\}^{\Lambda},
   \hskip 0.2 cm
   \xi\in\{0,1\}^{\overline E_\Lambda},
\)
this is precisely the ES kernel
\(
   \pi_{J,\beta,\Lambda,\alpha_{\partial\Lambda}}
      (\sigma,\xi).
\)
Therefore
\[
   \pi\!\left(
      (\omega_\Lambda,\eta_{\overline E_\Lambda})\in\cdot
      \,\middle|\,
      \mathscr A_{\Lambda^c}^{\mathrm{ES}}
   \right)
   =
   \pi_{J,\beta,\Lambda,\omega_{\partial\Lambda}}(\cdot)
   \qquad
   \pi\text{-a.s.}
\]
Since \(\Lambda\) was arbitrary,
\(
   \pi\in\mathcal G^{\mathrm{ES}}_{J,\beta}.
\)
Together with \(\pi(N_\infty\le1)=1\), this gives
\(
   \pi\in\mathcal G^{\mathrm{ES}}_{\le1},
   \hskip 0.2 cm
   \Pi_{\mathrm{FK}}\pi=\nu.
\)
Hence
\(
   \mathcal G^{\mathrm{FK}}_{\le1}
   \subset
   \Pi_{\mathrm{FK}}
   \bigl(
      \mathcal G^{\mathrm{ES}}_{\le1}
   \bigr).
\)

Combining this inclusion with the inclusion proved in Step~1 gives
\[
   \Pi_{\mathrm{FK}}
   \bigl(
      \mathcal G^{\mathrm{ES}}_{\le1}
   \bigr)
   =
   \mathcal G^{\mathrm{FK}}_{\le1}.
\]
The theorem follows.
\end{proof}

\section{Ferromagnetic FK-clusters and signed
ferromagnetic FK percolation} \label{SFFK1}

\subsection{Separation of an FK cluster into two companions}
\label{SFFK11}

Fix $J$ and a configuration $\eta$ of the
infinite-volume FK variables compatible with $J$.
Let $r(x,y)$ be as in \eqref{SignedFerrConn};
we say that two vertices $x,y \in V$
are signed-ferromagnetically-, or sf-FK-connected, or \(x \xLeftrightarrow{+} y\), if $r(x,y)=1$, 
and we say that they are signed-antiferromagnetically-, or sa-FK-connected, or \(x \xLeftrightarrow{-} y\), if $r(x,y)=-1$.
An  sf-FK-cluster is a set of vertices $C$
such that $r(x,y)=1$ for each $x,y \in C$,
and it is maximal with respect to this property.
The definition is well posed by transitivity 
of the ferromagnetic, i.e. $r(x,y)=1$, connectivity;
the set of vertices $V$ then falls  apart into 
 sf-FK-clusters, some possibly with cardinality one.

\begin{lemma}\label{Lemma1t}
Let $C$ be an sf--FK cluster and fix $x\in C$. Define
\[
   \widetilde C \;:=\;\{y\in V:\ r(x,y)=-1\}.
\]
The set $\widetilde C$ is called the \emph{companion} of $C$; when
nonempty, it is an sf--FK cluster. Moreover, $\widetilde C$ does not depend on the choice
of $x\in C$, and every $x\in C$ and $y\in\widetilde C$ satisfy $r(x,y)=-1$.
\end{lemma}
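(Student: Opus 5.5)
The plan is to reduce everything to a parity function on the ordinary FK cluster, using that \(\eta\sim J\). Since \(\eta\) has no frustrated open cycle, every FK-open path between \(x\) and \(y\) in the same FK cluster has the same coupling product. Indeed, two such paths concatenate into an open closed walk, and the product of couplings around any open closed walk is \(+1\): decompose the walk into open cycles plus edges traversed twice, the latter contributing \(J_e^2=1\). Hence \(r(x,y)\) is well defined. For \(x,y\) in a common FK cluster, \(r(x,y)=\prod_{b\in\gamma}J_b\) for any open path \(\gamma\) from \(x\) to \(y\); otherwise \(r(x,y)=0\). As in the proof of Theorem~\ref{thm:FK-surjective}, \(r(x,y)=\tau_C(x)\tau_C(y)\) with \(\tau_C\) the reference parity on the FK cluster. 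The key consequences are multiplicativity, \(r(x,z)=r(x,y)\,r(y,z)\) whenever \(r(x,y)\neq0\neq r(y,z)\), obtained by concatenating paths, and symmetry, \(r(x,y)=r(y,x)\). Taking the trivial path also gives \(r(x,x)=1\).

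With multiplicativity in hand, the steps are short. First, independence of the choice of \(x\). If \(x,x'\in C\), then \(r(x,x')=1\). So for any \(y\), \(r(x',y)=-1\) iff \(r(x,y)=-1\): if \(r(x,y)=-1\) then \(r(x',y)=r(x',x)r(x,y)=-1\), and symmetrically. This gives \(\widetilde C\) independent of \(x\), and in particular \(r(x,y)=-1\) for every \(x\in C\), \(y\in\widetilde C\).

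Second, show \(\widetilde C\) is an sf-FK cluster when nonempty. For \(y,y'\in\widetilde C\), \(r(y,y')=r(y,x)r(x,y')=(-1)(-1)=1\), so the elements of \(\widetilde C\) are pairwise \(+\)connected. For maximality, suppose \(z\) satisfies \(r(z,y)=1\) for some \(y\in\widetilde C\). Then \(r(x,z)=r(x,y)r(y,z)=-1\), so \(z\in\widetilde C\). Thus \(\widetilde C\) is a full equivalence class of the relation \(r=1\), i.e.\ an sf-FK cluster. The same argument shows \(C=\{z:r(z,y)=-1\}\) for any \(y\in\widetilde C\), so the companion relation is an involution. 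It also shows \(C\cup\widetilde C\) is exactly the ordinary FK cluster of \(x\), although that is not needed here.

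The only genuine obstacle is the first step, namely well-definedness of the parity, since \(r\) as defined in \eqref{SignedFerrConn} could a priori take both values \(\pm1\) if two paths had different parities. I would handle it via the closed-walk argument above, or more directly by invoking the path-independence of \(\tau_C\) already used in the proof of Theorem~\ref{thm:FK-surjective}. Once \(r\) is a well-defined \(\{0,\pm1\}\)-valued multiplicative, symmetric function on pairs in a common FK cluster, all remaining claims are two-line group-theoretic consequences of \(\{\pm1\}\) being a group.
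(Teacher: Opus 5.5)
Your proof is correct and takes the same route as the paper's. Independence of $x$, pairwise $+$connection of $\widetilde C$, and maximality are all obtained by concatenating FK-open paths and multiplying parities. The one difference is that you explicitly prove that $r$ is well defined, and that concatenated walks carry the correct parity, from the absence of frustrated open cycles; the paper's proof uses this fact implicitly.
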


\begin{proof}
Assume $\tilde C$ is defined based on $x$, and take 
$x' \in C$. Then  $r(x,x')=+1$. If $y$ satisfies $r(x,y)=-1$, concatenate a
ferromagnetic FK-open path from $x'$ to $x$ with an antiferromagnetic FK-open
path from $x$ to $y$; the resulting FK-open path from $x'$ to $y$ has parity
$-1$, hence $r(x',y)=-1$. Thus the set $\widetilde C$ is independent of which
$x\in C$ is used in its definition.

If $\widetilde C=\varnothing$, the assertion is immediate. We therefore
assume that $\widetilde C$ is nonempty.

Now take $y,z\in\widetilde C$. Choose FK-open paths $x\to y$ and $x\to z$ of
parity $-1$. Reversing the first and concatenating yields an FK-open path
$y\to z$ of parity $( -1)\cdot(-1)=+1$, so $r(y,z)=+1$. Hence $\widetilde C$ is
sf--FK-connected. Maximality follows because any vertex sf--FK-connected to a
vertex of $\widetilde C$ is also antiferromagnetically connected to $x$ by the
same concatenation argument, and therefore belongs to $\widetilde C$.
\end{proof}

\begin{lemma} \label{Lemma2}
For each  sf-FK-cluster $C$ and companion $ \tilde C$,
the union $C \cup \tilde C$ is an FK-cluster
in the sense of nearest neighbor
FK open connections in $G$, 
and every standard connected FK cluster $\overline C$
decomposes into two sf-FK companion sets, each of which is an
sf-FK-cluster when nonempty.
\end{lemma}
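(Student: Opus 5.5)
The plan is to use compatibility \(\eta\sim J\) to make the parity of an FK-open path depend only on its endpoints. Then the two assertions follow from Lemma~\ref{Lemma1t} by elementary set manipulations.

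\textbf{Step 1: parity is path-independent.} Fix \(x,y\) in the same standard FK cluster \(\overline C\). Take two FK-open paths \(\gamma,\gamma'\) from \(x\) to \(y\). Then \(\gamma\) followed by the reversal of \(\gamma'\) is a closed open walk. The product of \(J\) along this walk equals \(\prod_{\gamma}J\cdot\prod_{\gamma'}J\), because each \(J_e=\pm1\) squares to one. A closed walk decomposes into edges traversed an even number of times together with simple open cycles. Since \(\eta\sim J\), every open cycle has coupling product \(+1\), so \(\prod_\gamma J=\prod_{\gamma'}J\). Consequently, for \(x,y\) in the same FK cluster exactly one of \(r(x,y)=\pm1\) holds. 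Moreover \(r(x,y)=0\) holds exactly when \(x\) and \(y\) lie in different FK clusters.

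\textbf{Step 2: \(C\cup\widetilde C\) is an FK cluster.} Let \(C\) be an sf-FK cluster, fix \(x\in C\), and let \(\overline C\) be the standard FK cluster of \(x\). Every \(y\) with \(r(x,y)\neq0\) lies in \(\overline C\), so \(C\cup\widetilde C\subset\overline C\). Conversely, take \(y\in\overline C\). By Step~1, \(r(x,y)\in\{+1,-1\}\). If \(r(x,y)=+1\), then \(y\in C\) by maximality of \(C\) and transitivity of \(\xLeftrightarrow{+}\). If \(r(x,y)=-1\), then \(y\in\widetilde C\) by the definition in Lemma~\ref{Lemma1t}. Hence \(C\cup\widetilde C=\overline C\). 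The sets \(C\) and \(\widetilde C\) are disjoint, since no \(y\) can have both \(r(x,y)=+1\) and \(r(x,y)=-1\).

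\textbf{Step 3: every FK cluster decomposes.} Given a standard FK cluster \(\overline C\), pick \(x\in\overline C\). Let \(C\) be the sf-FK cluster containing \(x\). This exists because \(r(x,x)=+1\) via the empty path, and sf-connectivity partitions \(V\). Step~2 gives \(\overline C=C\sqcup\widetilde C\). By Lemma~\ref{Lemma1t}, \(\widetilde C\) is an sf-FK cluster when nonempty, and \(C\) is one by construction. The decomposition does not depend on the choice of \(x\): if we start instead from \(x'\in\widetilde C\), the roles of the two sets are swapped.

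\textbf{Main obstacle.} The only delicate point is Step~1, namely the reduction from a closed open walk to open cycles, which is where \(\eta\sim J\) enters. I would handle it by induction on the length of the walk. If the walk is not a simple cycle, it repeats a vertex. Splitting at the first repeated vertex produces a shorter closed walk and a residual closed walk, and the coupling product is multiplicative over this split. In the base cases, the walk is either a backtrack \(e,e\), whose product is \(J_e^2=1\), or a simple open cycle, whose product is \(+1\) by compatibility.
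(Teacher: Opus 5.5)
Your proof is correct, and its core is the same as the paper's. Every vertex FK-connected to \(x\) is \(\pm\)-connected to \(x\), hence lies in \(C\) or \(\widetilde C\). The converse decomposition is obtained exactly as in the paper: take the sf-FK cluster of a vertex and invoke Lemma~\ref{Lemma1t}.

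The organization differs in two ways.
\begin{itemize}
\item \emph{Path independence of parity.} The paper never proves that the parity of an FK-open path depends only on its endpoints. It tacitly treats \(r\) as a well-defined function, and \(C\) and \(\widetilde C\) as disjoint, on the strength of the standing assumption \(\eta\sim J\). Your Step~1 derives this from the absence of frustrated open cycles via the closed-walk reduction. That is precisely where compatibility enters, and it is what makes the decomposition in the second assertion a genuine partition.
\item \emph{Set identity instead of a path check.} The paper proves nearest-neighbor connectivity of \(C\cup\widetilde C\) by checking, in Cases~1--3, that suitable open paths stay inside \(C\cup\widetilde C\). It then argues maximality separately. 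You instead prove the set identity \(C\cup\widetilde C=\overline C\), with \(\overline C\) the FK cluster of \(x\), so connectivity is inherited from \(\overline C\). The paper's maximality step is exactly your inclusion \(\overline C\subset C\cup\widetilde C\). Its Cases~1--3 follow from your identity, since every vertex on an open path inside \(\overline C\) lies in \(C\cup\widetilde C\).
\end{itemize}

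Your version is shorter and more self-contained, and it makes explicit the one place where the sign structure of \(J\) actually matters.
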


\begin{proof}
Let \( C \) be an  sf-FK-cluster with companion \( \tilde{C} \). Consider any two vertices \( u, v \in C \cup \tilde{C} \). We show there exists a path connecting \( u \) and \( v \) using FK-open
edges of \(G\) that remains entirely within \( C \cup \tilde{C} \).

\textbf{Case 1:} If \( u, v \in C \), then by definition of  sf-FK-clusters, there exists an FK path \( \gamma \subset E \) connecting \( u \) and \( v \) with \( r(u,v) = +1 \). Any vertex $u'$ in  \( \gamma \) is FK-connected
to $u$, and since 
the parity of an FK-open path from $u$ to $u'$ is either positive or
negative, $u'$ belongs to $C$ or $\tilde C$, respectively. Thus, \( \gamma \) provides a nearest-neighbor FK-open path in \( G \) within \( C \cup \tilde{C} \).

\textbf{Case 2:} If \( u \in C \) and \( v \in \tilde{C} \), by Lemma~\ref{Lemma1t}, there exists an FK-open path \( \gamma \subset E \) connecting \( u \) and \( v \) with \( r(u,v) = -1 \); necessarily, \( \gamma \subset C \cup \tilde C \).

\textbf{Case 3:} If \( u, v \in \tilde{C} \), choose any \( x \in C \). By Lemma~\ref{Lemma1t}, there exist FK-open paths \( \gamma_{ux} \) and \( \gamma_{xv} \) within \( C \cup \tilde{C} \). The concatenated path \( \gamma_{ux} \cup \gamma_{xv} \) connects \( u \) and \( v \) through \( x \), remaining entirely within \( C \cup \tilde{C} \).

In all cases, \( C \cup \tilde{C} \) is connected in the nearest-neighbor sense on \( G \).

If any other point $u$ is FK-connected to either
$C$ or $\tilde C$, then it is connected 
either ferromagnetically, or antiferromagnetically,
so it is ferromagnetically connected to either
$C$ or $\tilde C$.
 Since both $C$ and $\tilde C$ are
maximal, \(u \in C \cup \tilde{C} \); hence 
\( C \cup \tilde{C} \) is an FK-cluster. 

Conversely, if $\overline C$ is an FK cluster
in the nearest neighbor sense then take $v \in \overline C$
and all vertices which are sf-FK-connected to $v$.
This set forms an  sf-FK-cluster $C \subseteq \overline C$;
if $C$ does not exhaust $\overline C$ then vertices in
$\tilde C = \overline C \setminus C$ are necessarily 
sa-FK-connected to $v$, hence they form the companion
cluster of $C$ by Lemma \ref{Lemma1t}.
\end{proof}

\subsection{Signed ferromagnetic (sf) percolation} \label{SfPerc}

We now define the FK percolation criterion for nonzero uniform
magnetization density in Theorem~\ref{EquivalenceOneCopy}.
Recall that we restricted to
 $\mathcal G^{\mathrm{ES}}_{\leq 1}$ 
and its image $\mathcal G^{\mathrm{SPIN}}_{\leq 1}$, so that
we assume that the infinite FK cluster is unique.

Consider a fixed van~Hove sequence $(\Lambda_n)_{n\ge1}$.
This sequence is kept fixed in each fixed-\(J\) argument:
in fact, all 
we will be interested in is that there
is a way of distinguishing the two
companions, and not the specific way
of identifying them. As mentioned, we
use a fixed van-Hove sequence and 
density for definiteness;
we also briefly comment below on
the general features that the distinguishing
operator needs to have for the theory
to work.

We are now interested in the fact that there
is a way of distinguishing the density of
the two
companions at infinity.

For each
ordinary FK cluster $C\subset V$, let $C^A$
and $C^B$ indicate its two companions; for a fixed deterministic van Hove sequence
\(\Gamma=(\Lambda_n)_{n\ge1}\),
write
\begin{equation}
\label{density}
   d(C^A)\;=\; d_{\Gamma}(C^A)
   \;:=\;
   \lim_{n\to\infty}\frac{|C^A \cap\Lambda_n|}{|\Lambda_n |},
\end{equation} 
whenever the limit exists, and analogously for $C^B$.
Given $\eta\in\mathcal H$ with $N_\infty(\eta)\le1$, let
$C_\infty^{A}(\eta)$ and $C_\infty^{B}(\eta)$ be the two sf--FK companions
of the (unique, if it exists) infinite FK cluster.

\begin{definition}
\label{def:percolation-selector}
Fix an FK distribution $\nu$
satisfying uniqueness of the infinite FK cluster. 
For a fixed deterministic van Hove sequence
\(\Gamma=(\Lambda_n)_{n\ge1}\), we say that sf--FK percolation occurs
along \(\Gamma\) if the two signed ferromagnetic companions of the
infinite FK cluster have
density limits $d_{\Gamma}(C^A), d_{\Gamma}(C^B)$ along \(\Gamma\), and these limits are unequal with  positive $\nu$
probability. And we say that sf--FK percolation occurs
if it occurs 
along some \(\Gamma\).
\end{definition}

An example in which  one of the two FK companions has 
higher density than the other is of course the low temperature
Ising ferromagnet at $\lambda=1$ in which 
there is only one sf-FK
companion and the other is empty.

When  sf-FK-percolation occurs, 
the density distinguishes the 
two FK companions, and we now show that
this selection is stable outside finite sets under local bond modifications.
For a fixed van Hove sequence $\Gamma$, define the set–valued density based map $\mathcal S_D=\mathcal S_{D,\Gamma}:\mathcal H\to\mathcal P(V)$ by
\begin{equation}
    \label{S}
   \mathcal S_{D,\Gamma}(\eta)
   \;:=\;
   \begin{cases}
     C_\infty^{A}(\eta), & \text{if }d\bigl(C_\infty^{A}(\eta)\bigr)
                            > d\bigl(C_\infty^{B}(\eta)\bigr),\\[2pt]
     C_\infty^{B}(\eta), & \text{if }d\bigl(C_\infty^{B}(\eta)\bigr)
                            > d\bigl(C_\infty^{A}(\eta)\bigr),\\[2pt]
     \varnothing,        & \text{if limits do not exist or are equal.}
   \end{cases}
\end{equation} 
Notice that if $N_\infty(\eta)=0$ then $\mathcal S_{D}(\eta)=\varnothing$.
Then
\begin{lemma}[Stability of the density selector under finite modifications]
\label{lem:tail-pql-density-selector1}
Let \(\Gamma=(\Lambda_n)_{n\ge1}\) be the fixed van~Hove sequence, and let
\(\mathcal S_{D,\Gamma}\) be the density selector defined from the
\(\Gamma\)-density limits of the two signed companions, with the convention
that \(\mathcal S_{D,\Gamma}(\eta)=\varnothing\) if there is no unique
infinite FK cluster, or if the two companion density limits along \(\Gamma\)
do not both exist, or if they are equal.

Consider either an ES measure
\(
   \pi\in\mathcal G^{\mathrm{ES}}_{\le1}
\)
and let \(\bar\nu=\Pi_{\mathrm{FK}}\pi\), or an FK measure
\(
   \nu\in\mathcal G^{\mathrm{FK}}_{\le1}
\)
and let \(\bar\nu=\nu\). Then, for every finite
\(\Lambda\Subset V\), for \(\bar\nu\)-a.e. \(\eta\in\mathcal H\), there exists
a finite set \(M=M(\eta,\Lambda)\Subset V\) such that, for every FK
configuration \(\xi\) compatible with \(J\) and satisfying
\(
   \xi_{E\setminus\overline E_\Lambda}
   =
   \eta_{E\setminus\overline E_\Lambda},
\)
one has
\(
   \mathcal S_{D,\Gamma}(\xi)\cap M^c
   =
   \mathcal S_{D,\Gamma}(\eta)\cap M^c .
\)
\end{lemma}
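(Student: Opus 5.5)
The plan is to show that a finite modification inside $\overline E_\Lambda$ can only alter the infinite FK cluster, and the parity splitting of it, inside a random but finite set $M$ that depends only on the exterior configuration. We then use that a finite set has zero density along a van~Hove sequence.

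\textbf{Step 1: the good event and the choice of $M$.} Fix $\Lambda$. By Lemma~\ref{UniqueOutside}, applied with $F=\overline E_\Lambda$ to $\bar\nu$, for $\bar\nu$-a.e.\ $\eta$ the exterior open graph given by $\zeta=\eta_{E\setminus\overline E_\Lambda}$ has at most one infinite component; call it $I$ when it exists. Only finitely many exterior clusters meet $\partial\Lambda$, so, as in the proof of Theorem~\ref{thm:FK-surjective}, we can take
\[
M=M(\eta,\Lambda)\Supset\Lambda\cup\partial\Lambda
\]
to be a finite set containing every finite exterior cluster that meets $\partial\Lambda$. This $M$ is a function of $\zeta$ alone. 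It therefore serves simultaneously for every $\xi$ with $\xi_{E\setminus\overline E_\Lambda}=\zeta$.

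\textbf{Step 2: the infinite cluster outside $M$ is the same for $\eta$ and $\xi$.} Fix such a $\xi$ compatible with $J$. Every open path of $\xi$ decomposes into segments made of exterior edges, joined at vertices of $\Lambda\cup\partial\Lambda$ through edges of $\overline E_\Lambda$.

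Take $v\in M^c$ in an infinite $\xi$-cluster, and follow a $\xi$-open path from $v$ to infinity. There are two cases.
\begin{itemize}
\item If the path uses no edge of $\overline E_\Lambda$, then $v$ lies in an infinite exterior component, which must be $I$.
\item Otherwise, the initial exterior segment from $v$ reaches $\partial\Lambda$. The exterior cluster of $v$ then meets $\partial\Lambda$ and is not contained in $M$, so it is $I$.
\end{itemize}
Conversely, every vertex of $I$ lies in an infinite $\xi$-cluster. Finitely many edges of $\overline E_\Lambda$ can merge only finitely many exterior clusters meeting $\partial\Lambda$, and all except $I$ are finite. Hence $N_\infty(\xi)\le1$, $N_\infty(\xi)=N_\infty(\eta)$, and
\[
C_\infty(\xi)\cap M^c=C_\infty(\eta)\cap M^c=I\cap M^c .
\]
If there is no infinite cluster, both selectors are empty and there is nothing to prove.

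\textbf{Step 3: same companions and same densities.} Any two vertices of $I$ are joined by a path in $I$ using only exterior edges. Since both $\eta$ and $\xi$ are compatible with $J$, parity along open paths is path-independent, so $r_\xi(x,y)=r_\eta(x,y)$ for all $x,y\in I$. By Lemmas~\ref{Lemma1t} and~\ref{Lemma2}, the unordered pair of companions $\{C^A_\infty,C^B_\infty\}$ of $\xi$, intersected with $M^c$, coincides with that of $\eta$.

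Each companion of $\xi$ differs from the corresponding companion of $\eta$ only inside the finite set $M$. Since $|M|/|\Lambda_n|\to0$, the density limits along $\Gamma$ exist for $\xi$ if and only if they exist for $\eta$, with identical values. The selector~\eqref{S} depends only on which companion has the larger density, and not on the labels $A,B$. Therefore $\mathcal S_{D,\Gamma}(\xi)\cap M^c=\mathcal S_{D,\Gamma}(\eta)\cap M^c$, including the cases where both are empty.

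\textbf{Main obstacle.} I expect the delicate part to be Step~2. There one must check carefully that wiring through $\overline E_\Lambda$ cannot attach anything outside $M$ to the infinite cluster except through $I$, and that no new infinite cluster appears. This is exactly where the single-infinite-exterior-component conclusion of Lemma~\ref{UniqueOutside} is essential. A secondary point is measurability: $M$ must be chosen as a measurable function of $\zeta$, for instance the smallest set in a fixed enumeration of finite sets that contains the finite clusters. This is handled as for $\Delta$ in Theorem~\ref{thm:FK-surjective}.
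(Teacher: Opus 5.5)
Your proposal is correct and follows essentially the same route as the paper: use Lemma~\ref{UniqueOutside} to choose a finite $M$ containing $\Lambda\cup\partial\Lambda$ and all finite exterior clusters meeting the boundary, show that the infinite cluster and the parity relation on it are determined outside $M$ by exterior edges alone, and conclude because finite modifications do not change $\Gamma$-density limits. Your case analysis in Step~2 spells out what the paper asserts in one line, namely that the infinite cluster has the same exterior part outside $M$ for $\eta$ and $\xi$.
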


\begin{proof}
It is enough to work under the FK law \(\bar\nu\). Fix
\(\Lambda\Subset V\). Work on the full \(\bar\nu\)-measure event on which
there is at most one infinite FK cluster and the conclusion of
Lemma~\ref{UniqueOutside} holds with the deleted edge set
\(F=\overline E_\Lambda\).

For such an \(\eta\), consider the exterior configuration
\(
   \eta_{E\setminus\overline E_\Lambda}.
\)
Among the exterior FK clusters which meet the vertices incident to
\(\overline E_\Lambda\), all but possibly one are finite; if an infinite one
exists, it is the unique infinite exterior component. Hence we can choose a
finite set
\(
   M=M(\eta,\Lambda)\Subset V,\qquad \Lambda\subset M,
\)
containing all the finite exterior FK clusters that meet the vertices incident
to \(\overline E_\Lambda\).

Let \(\xi\) be compatible with \(J\) and agree with \(\eta\) on
\(E\setminus\overline E_\Lambda\). Since \(\xi\) differs from \(\eta\) only on
the finite set \(\overline E_\Lambda\), the unique infinite FK cluster, if it
exists, has the same exterior part outside \(M\) for \(\eta\) and for \(\xi\).
If no infinite exterior component exists, then no finite modification inside
\(\overline E_\Lambda\) can create an infinite FK cluster outside \(M\), and
both selectors are empty outside \(M\).

Assume therefore that an infinite exterior FK component \(I\) exists. Since
\(\eta\) and \(\xi\) coincide on all exterior edges, the signed parity relation
between any two vertices of \(I\cap M^c\), computed using exterior FK-open
paths, is the same for \(\eta\) and for \(\xi\). Thus the unordered pair of
signed companions of the infinite FK cluster agrees outside \(M\) for
\(\eta\) and \(\xi\). More explicitly, up to exchanging the two labels, the two
sets
\(
   C_\infty^A(\eta)\cap M^c,
   C_\infty^B(\eta)\cap M^c
\)
coincide with
\(
   C_\infty^A(\xi)\cap M^c,
   C_\infty^B(\xi)\cap M^c .
\)

Changing finitely many vertices does not affect the existence or the values
of the density limits along the fixed van~Hove sequence \(\Gamma\). Therefore
\[
   \mathcal S_{D,\Gamma}(\eta)=\varnothing
   \quad\Longleftrightarrow\quad
   \mathcal S_{D,\Gamma}(\xi)=\varnothing .
\]
When the selector is nonempty, the same member of the unordered companion
pair is selected outside \(M\), namely the one with the larger
\(\Gamma\)-density. Hence
\[
   \mathcal S_{D,\Gamma}(\eta)\cap M^c
   =
   \mathcal S_{D,\Gamma}(\xi)\cap M^c .
\]
This proves the claim.
\end{proof}

The property proven in this lemma is actually
the only property that is used later in the paper, so it is possible to envision
other signed percolation phenomena which rely
on other tail measurable functions distinguishing
the two companions. 
One example could be identifying the companion
with the higher density of left-right edges;
such identification would be
 compatible with the 
two companions having the same density and 
hence not being distinguishable in the sense used above.
Other tail observables can distinguish companions whose vertex densities
are equal. Their relation to thermodynamic order parameters requires
separate identification.

For a translation-invariant joint law of the couplings and FK bonds,
the mean density imbalance along a fixed van~Hove sequence can be
identified with the difference between the probabilities that the origin
belongs to the denser companion and to the other companion. The precise
statement is Theorem~\ref{Th1}. This identification does not assert
translation invariance of a fixed-$J$ state.

\section{Uniform magnetization and signed FK percolation
in \texorpdfstring{\(\mathbb{Z}^d\)}{the integer lattice}} \label{SFFK2}

\subsection{Decomposition of ES states according to the 
            sf--FK companion bias}
\label{subsec:decomposition}

Throughout this subsection we fix a coupling realization
$J\in\{\pm1\}^{E}$ and an inverse temperature $\beta$
(or the edge–occupation parameter $p=1-e^{-2\beta}$).
We also restrict to $\mathcal G^{\mathrm{ES}}_{J,\beta,\leq 1}$; by
Lemma \ref{Lemma1t} the infinite FK cluster splits  (when it exists) into
\emph{two} signed companion sets $C_{\infty}^{A}$ and
$C_{\infty}^{B}$ antiferromagnetically connected between each other.
Write
\(
   \mathcal C_{\infty}(\eta)=C_{\infty}^{A}\cup C_{\infty}^{B}.
\)
and let $E(\mathcal C_{\infty}(\eta))$ be the 
set of edges incident to at least one vertex of 
$\mathcal C_{\infty}(\eta)$.

While the coherence of the finite-volume conditional specifications
between spin and ES distributions implies that the ES distribution
must be invariant when the spins of 
each finite FK cluster are flipped (which originates the well known factor $2^{Cl(\eta)}$, where
$Cl(\eta) =$ number of (finite) clusters, in the FK measure), there is no
such constraint for the infinite cluster. 
The compatible assignments on the infinite cluster can have different
probabilities. We study these probabilities when a companion is selected
by its density.
To this end, fix a van Hove sequence $\Gamma$, consider a given bond configuration 
$\eta_{E^d}$ and label the two companions of the infinite FK cluster
as follows: if the two companions have distinct densities along
$\Gamma$, let
$C_\infty^{A}(\eta)=\mathcal S_{D,\Gamma}(\eta)$, and otherwise label the two companions
as $C_\infty^{A}(\eta)$ and 
$C_{\infty}^{B}$ using a fixed Borel labeling. For an ES state $\pi$, let
$$
q\;:=\pi\bigl(\omega_x=+1 \text{ for all } x\in C_{\infty}^{A}
\mid  (\omega_{\mathbb Z^d \setminus \mathcal C_{\infty} },
\eta_{E^d  })\bigr)
 $$
and denote by $Q(\pi)\subset[0,1]$ the essential range of this
conditional probability. Thus $Q(\pi)=\{q_0\}$ means that the
conditional probability equals $q_0$ almost surely.

\medskip
For fixed couplings $J$, let \[
\begin{aligned}
&\mathcal G^{\mathrm{ES}}_{null}
      :=\Bigl\{
            \pi\in\mathcal G^{\mathrm{ES}}_{J,\beta}:
            \pi(N_{\infty}=0)=1
        \Bigr\},\\[4pt]
&\mathcal G^{\mathrm{ES}}_{1,\mathrm{unsel}}
      :=\Bigl\{
            \pi\in\mathcal G^{\mathrm{ES}}_{J,\beta}:
            \pi(N_{\infty}=1)=1,\; \text{ and  } 
            \mathcal S_{D,\Gamma}(\eta) \equiv \emptyset\\[4pt]
    & \hskip 1cm
         \text{   with $\pi$ probability one }
           \Bigr\}\\[4pt] \end{aligned}
\]
The subscript \(\mathrm{unsel}\) means that the density selector is
empty. If both companion densities exist, they are equal in this class.
Without a density-existence hypothesis, the class also includes failure
of the density limits. This condition imposes no spin-flip invariance
on the ES measure.
On ES configurations, the global spin flip acts by
\[
   F_{\mathrm{ES}}(\omega,\eta):=(-\omega,\eta).
\]
Define the spin-flip-invariant subclass by
\[
   \mathcal G^{\mathrm{ES},\mathrm{flip}}_{1,\mathrm{unsel}}
   :=
   \left\{
      \pi\in\mathcal G^{\mathrm{ES}}_{1,\mathrm{unsel}}:
      (F_{\mathrm{ES}})_*\pi=\pi
   \right\}.
\]

For the density-selected configurations, set
\[
   E_{\mathrm{sel}}
   :=
   \{N_\infty=1,\ \mathcal S_{D,\Gamma}(\eta)\ne\varnothing\}.
\]
We next construct ES distributions over FK states supported on this event.
Notice that sf--FK percolation only requires
\(
   \nu(E_{\mathrm{sel}})>0.
\)
When this probability is positive, we apply the following lemma to the
conditional FK state
\(
   \nu_{\mathrm{sel}}:=\nu(\,\cdot\,|E_{\mathrm{sel}}).
\)
Since \(E_{\mathrm{sel}}\) is a tail event, \(\nu_{\mathrm{sel}}\) is again
an FK Gibbs state for the same specification. Under
\(\nu_{\mathrm{sel}}\), the assumptions \(N_\infty=1\) and
\(\mathcal S_D\neq\varnothing\) hold with probability one.

\begin{lemma}[Gibbs property of the $q$--biased ES lifts]
\label{lem:DLR_for_q}
Fix a coupling realization $J\in\{\pm1\}^{E}$ and
$0\le\beta<\infty$, with $p=1-e^{-2\beta}<1$.
Let
\(
   \nu\in\mathcal G^{\mathrm{FK}}_{J,p}
\)
be an FK Gibbs state such that
\(
   \nu(N_\infty=1)=1
\)
and such that the two signed ferromagnetic companions of the unique
infinite FK cluster are distinguished by the selector $\mathcal S_{D,\Gamma}$
with $\nu$-probability one.  For $q\in[0,1]$, define an ES measure
$\pi_{\nu,q}$ as follows.

First sample $\eta$ according to $\nu$.  On every finite FK cluster,
choose one of the two compatible spin assignments with probability
$1/2$, independently from cluster to cluster.  On the unique infinite
FK cluster, write its two signed ferromagnetic companions as
\[
   C_\infty^A(\eta)=\mathcal S_{D,\Gamma}(\eta),
   \qquad
   C_\infty^B(\eta)=\mathcal C_\infty(\eta)\setminus C_\infty^A(\eta).
\]
Assign spin $+1$ to all vertices of $C_\infty^A(\eta)$, and therefore
spin $-1$ to all vertices of $C_\infty^B(\eta)$, with probability $q$;
assign the opposite spins with probability $1-q$.

Then
\(
   \pi_{\nu,q}\in \mathcal G^{\mathrm{ES}}_{J,\beta}
\)
for every $q\in[0,1]$.  In particular, each $q$--biased lift obeys the
same Edwards--Sokal DLR specification
\(
   \bigl\{
      \pi_{J,\beta,\Lambda,\omega_{\partial\Lambda}}
   \bigr\}_{\Lambda\Subset V}.
\)
\end{lemma}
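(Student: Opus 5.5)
The plan is to repeat Step~2 of the proof of Theorem~\ref{thm:FK-surjective}, isolating the one place where that argument used fair coloring of the infinite cluster. Write
\[
   \pi_{\nu,q}=q\,\pi^{+}+(1-q)\,\pi^{-},
\]
where \(\pi^{\pm}\) are the lifts with the infinite cluster deterministically colored \(\pm1\) on \(\mathcal S_{D,\Gamma}(\eta)\) and finite clusters fairly colored. Since \(\mathcal G^{\mathrm{ES}}_{J,\beta}\) is convex (Theorem~\ref{SpinProjSincleCopy}(a)), it suffices to show \(\pi^{+}\in\mathcal G^{\mathrm{ES}}_{J,\beta}\); the case \(\pi^{-}\) is symmetric. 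Measurability of the coloring kernel follows from measurability of \(\mathcal S_{D,\Gamma}\) and of the cluster maps, as in Theorem~\ref{thm:FK-surjective}.

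Fix \(\Lambda\Subset V\) and condition on \(\zeta=\eta_{E\setminus\overline E_\Lambda}\) and on the exterior spins \(\alpha=\omega_{V\setminus\Lambda}\). By Lemma~\ref{UniqueOutside}, \(\nu\)-a.s.\ at most one exterior cluster in \(\mathscr B_\Lambda(\zeta)\) is infinite; call it \(I\). Given \(\xi\zeta\), the components of \(Q_\Lambda(\xi,\zeta)\) fall into two types.

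\emph{Finite components.} These are colored fairly and independently by \(\pi^{+}\).

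\emph{The component containing \(I\), if \(I\) exists.} It is the infinite FK cluster of \(\xi\zeta\), and its coloring is fixed by the rule \(\omega\equiv+1\) on \(\mathcal S_{D,\Gamma}(\xi\zeta)\).

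The key input is Lemma~\ref{lem:tail-pql-density-selector1}. Outside a finite set \(M\), the selected companion of \(\xi\zeta\) coincides with that of \(\eta\) for every compatible \(\xi\). Hence the spin that \(\pi^{+}\) forces on \(I\) is determined by \(\zeta\) alone, not by \(\xi\), and it is already visible in \(\alpha\) restricted to \(I\setminus M\). Consequently, the coloring probability of \((\alpha,\sigma)\) given \(\xi\zeta\) equals
\[
   2^{-(N_\Lambda(\xi,\zeta)-1)}\,
   \mathbf 1_{\{\xi\sim_J\sigma\alpha_{\partial\Lambda}\}}
\]
times a factor \(c(\zeta,\alpha)\in\{0,1\}\) that is independent of \((\xi,\sigma)\). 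Here \(c(\zeta,\alpha)\) is the indicator that \(\alpha\) agrees with the forced orientation on \(I\), and it equals \(1\) a.s.

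Multiplying by the FK conditional weight \(p^{|\xi_1|}(1-p)^{|\xi_0|}2^{K_\Lambda}\mathbf 1_{\{\xi\zeta\sim J\}}\) and using \(K_\Lambda=N_\Lambda-m_\Lambda\) gives a \((\xi,\sigma)\)-independent constant \(2^{1-m_\Lambda(\zeta)}\) times the ES weight. This is the \(r=1\) case of Lemma~\ref{lem:finite-quotient-coloring}. Normalizing yields \(\pi_{J,\beta,\Lambda,\alpha_{\partial\Lambda}}\). When no infinite exterior cluster meets \(\partial\Lambda\), the \(r=0\) computation of Theorem~\ref{thm:FK-surjective} applies verbatim.

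The main obstacle is the orientation rule. It must be a function of the exterior alone, so that it does not reweight different interior \(\xi\). Two points need checking. First, the labeling via \(\mathcal S_{D,\Gamma}\) must be consistent: the companion containing a given exterior vertex of \(I\) must keep its status as the denser companion under every modification of \(\overline E_\Lambda\). Lemma~\ref{lem:tail-pql-density-selector1} supplies this because densities ignore finite sets. Second, the denser companion must not become empty or lose its density limit under modification. This follows because \(\mathcal S_{D,\Gamma}\neq\varnothing\) is invariant under such modifications, as stated in that lemma. With these two facts, the same-exterior conditioning argument goes through, possibly after first conditioning on the event \(\{M(\eta,\Lambda)=D\}\) over the countably many finite \(D\), exactly as in Step~1 of Theorem~\ref{thm:FK-surjective}.
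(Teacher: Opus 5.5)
Your proposal is correct, but it is organized differently from the paper's proof.

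\emph{The paper's route.} The paper never recomputes conditional weights. It starts from the fair lift $\pi_{\nu,1/2}$, which is already an ES Gibbs state by the constructive part of Theorem~\ref{thm:FK-surjective}. It then sets $\pi_{\nu,q}:=h_q\,\pi_{\nu,1/2}$, where $h_q=2qH+2(1-q)(1-H)$ and $H=\mathbf 1_{\{\omega\equiv+1\text{ on }C_\infty^A(\eta)\}}$. The only work is to show that $H$ is $\mathscr A_{\Lambda^c}$-measurable for every finite $\Lambda$. This uses Lemma~\ref{lem:tail-pql-density-selector1} exactly as you do: pick $z\in C_\infty^A(\eta)\cap M^c$, so that $H=1\iff\omega_z=+1$. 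The DLR identity $\int fg\,d\pi_{\nu,q}=\int g\,\pi_{J,\beta,\Lambda,\omega_{\partial\Lambda}}(f)\,d\pi_{\nu,q}$ then follows in two lines, because multiplying by an exterior-measurable density preserves it.

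\emph{Your route.} You reduce to $q\in\{0,1\}$ by convexity. You then redo the Step~2 kernel computation with one component of $Q_\Lambda(\xi,\zeta)$ colored deterministically. You check that $2^{K_\Lambda}\cdot 2^{-(N_\Lambda-1)}=2^{1-m_\Lambda(\zeta)}$ is still independent of $(\xi,\sigma)$, and that the forced orientation enters only through the exterior indicator $c(\zeta,\alpha)$.

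\emph{Comparison.} Both proofs rest on the same fact: the orientation of the selected companion is determined by the exterior. The tilt argument buys brevity and rigor. It avoids your case split on whether an infinite exterior cluster meets $\partial\Lambda$, and the decomposition over $\{M=D\}$. It is also a clean measure-theoretic identity. Your route inherits from Step~2 the informal computation of the ``coloring probability'' of the infinite exterior configuration $\alpha$. In exchange, your route verifies directly that the deterministically oriented lift satisfies the ES kernels, and it makes the cancellation of $2^{K_\Lambda}$ explicit. One minor mislabel: the identity you use is the one-fixed-component analogue of part~(b) of Lemma~\ref{lem:finite-quotient-coloring}. It is not part~(a) with $r=1$, which concerns summing out the interior spins.
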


\begin{proof}
We first construct a spin-flip-invariant state from the given FK measure
$\nu$.

Let $\pi_{\nu,1/2}$ be the ES lift of $\nu$ obtained by assigning
independent fair signs to all FK clusters, including the unique infinite
FK cluster.  By Theorem~\ref{thm:FK-surjective}, or more precisely by the
constructive part of its proof, this lift satisfies the ES DLR equations:
\[
   \pi_{\nu,1/2}\in \mathcal G^{\mathrm{ES}}_{J,\beta},
   \qquad
   \Pi_{\mathrm{FK}}\pi_{\nu,1/2}=\nu .
\]

Next, define
\[
      H(\omega,\eta)
      :=
      \mathbf 1_{\{
          \omega_x=+1 \text{ for all } x\in C_\infty^A(\eta)
      \}},
\]
where \(C_\infty^A(\eta)=\mathcal S_{D,\Gamma}(\eta)\).  Under
$\pi_{\nu,1/2}$, conditional on $\eta$, the spin on the selected infinite
companion is $+1$ with probability $1/2$, and therefore
\[
   \int H\,d\pi_{\nu,1/2}=\frac12.
\]
For $q\in[0,1]$, set
\[
   h_q(\omega,\eta)
   :=
   2q\,H(\omega,\eta)
   +
   2(1-q)\bigl(1-H(\omega,\eta)\bigr);
\]
we have
\(
   \int h_q\,d\pi_{\nu,1/2}=1,
\)

 Define the $q$--biased lift  by
\[
   \pi_{\nu,q}:=h_q\,\pi_{\nu,1/2},
\] or, 
equivalently,
\(
   \frac{d\pi_{\nu,q}}{d\pi_{\nu,1/2}}
   =
   h_q.
\)

We now show that $h_q$ is exterior-measurable for every finite volume.
Fix $\Lambda\Subset V$ and let
\[
   \mathscr A_{\Lambda^c}
   :=
   \sigma(\omega_x:\ x\notin\Lambda)
   \vee
   \sigma(\eta_b:\ b\in E\setminus\overline E_\Lambda).
\]
We claim that $H$, and hence $h_q$, is
$\mathscr A_{\Lambda^c}$-measurable, modulo
$\pi_{\nu,1/2}$-null sets.

Indeed, by Lemma~\ref{lem:tail-pql-density-selector1}, applied with the
finite edge set $\overline E_\Lambda$, for $\nu$-almost every $\eta$
there exists a finite set $M=M(\eta,\Lambda)$ containing $\Lambda$ such
that any configuration $\xi$ compatible with $J$ and agreeing with
$\eta$ outside $\overline E_\Lambda$ satisfies
\[
   \mathcal S_{D,\Gamma}(\eta)\cap M^c
   =
   \mathcal S_{D,\Gamma}(\xi)\cap M^c .
\]
Since $\mathcal S_{D,\Gamma}(\eta)=C_\infty^A(\eta)$ is infinite, choose
\(
   z\in C_\infty^A(\eta)\cap M^c.
\)
If $(\omega',\xi)$ agrees with $(\omega,\eta)$ on
$\mathscr A_{\Lambda^c}$, then in particular $\omega'_z=\omega_z$ and
\(
   z\in C_\infty^A(\xi).
\)
Because all vertices in one signed ferromagnetic companion are connected
by FK-open paths of positive parity, ES compatibility forces their spins
to be equal.  Hence
\[
   H(\omega,\eta)=1
   \quad\Longleftrightarrow\quad
   \omega_z=+1
   \quad\Longleftrightarrow\quad
   \omega'_z=+1
   \quad\Longleftrightarrow\quad
   H(\omega',\xi)=1.
\]
Therefore
\[
   h_q(\omega,\eta)=h_q(\omega',\xi)
\]
whenever $(\omega,\eta)$ and $(\omega',\xi)$ agree on
$\mathscr A_{\Lambda^c}$.  This proves the claimed exterior
measurability.

Now let $f$ be any bounded local function depending only on
\((\omega_\Lambda,\eta_{\overline E_\Lambda})\), and let
$g$ be any bounded $\mathscr A_{\Lambda^c}$-measurable function.
Since $\pi_{\nu,1/2}$ satisfies the ES DLR equations,
\[
   \int f g\,d\pi_{\nu,1/2}
   =
   \int g\,
      \pi_{J,\beta,\Lambda,\omega_{\partial\Lambda}}(f)
      \,d\pi_{\nu,1/2}.
\]
Using the exterior-measurability of $h_q$, we get
\[
\begin{aligned}
   \int f g\,d\pi_{\nu,q}
   &=
   \int f g h_q\,d\pi_{\nu,1/2}  \\
   &=
   \int g h_q\,
      \pi_{J,\beta,\Lambda,\omega_{\partial\Lambda}}(f)
      \,d\pi_{\nu,1/2} \\
   &=
   \int g\,
      \pi_{J,\beta,\Lambda,\omega_{\partial\Lambda}}(f)
      \,d\pi_{\nu,q}.
\end{aligned}
\]
Since this holds for every bounded
$\mathscr A_{\Lambda^c}$-measurable $g$, we have
\[
   \mathbb E_{\pi_{\nu,q}}
   \bigl[
      f
      \mid
      \mathscr A_{\Lambda^c}
   \bigr]
   =
   \pi_{J,\beta,\Lambda,\omega_{\partial\Lambda}}(f)
   \qquad
   \pi_{\nu,q}\text{-a.s.}
\]
This is exactly the ES DLR equation in the volume $\Lambda$.
Since $\Lambda$ was arbitrary,
\[
   \pi_{\nu,q}\in\mathcal G^{\mathrm{ES}}_{J,\beta}.
\]
\end{proof}

Finally, we let 
\[
\begin{aligned}
&\mathcal G^{\mathrm{ES}}_{1,q}
      :=\Bigl\{
            \pi\in\mathcal G^{\mathrm{ES}}_{J,\beta}:
            \pi(N_{\infty}=1)=1,\;
            \mathcal S_{D,\Gamma}(\eta)=C_\infty^{A}(\eta)\ne\varnothing \\[4pt]
    & \hskip 1cm
         \text{  with $\pi$  probability one,
             and }
            Q(\pi)=\{q\}
        \Bigr\},
        \; \text{ for } q\in [0,1]
\end{aligned}
\]

Next, let

\[
\begin{aligned}
 &  \mathcal G^{\mathrm{FK}}_{null}
   \; :=\;
   \bigl\{
        \nu\in\mathcal G^{\mathrm{FK}}_{J,p}:
        \;\nu\bigl(N_{\infty}=0\bigr)=1
   \bigr\},  \\[4pt]
  & \mathcal G^{\mathrm{FK}}_{1,\mathrm{unsel}}
   \;:=\;
   \bigl\{
        \nu\in\mathcal G^{\mathrm{FK}}_{J,p}:
        \;\nu\bigl(N_{\infty}=1 \bigr)=1
    \text{ and  } 
            \mathcal S_{D,\Gamma}(\eta) \equiv \emptyset\\[4pt]
    & \hskip 1cm
         \text{   with $\nu$ probability one }
           \bigr\} \\[4pt]
 &  \mathcal G^{\mathrm{FK}}_{1,\mathrm{sel}}
   \;:=\;\bigl\{
        \nu\in\mathcal G^{\mathrm{FK}}_{J,p}:
        \;\nu\bigl(N_{\infty}=1 \bigr)=1, \;
   \mathcal S_{D,\Gamma}(\eta) =C_\infty^{A}(\eta)\ne\varnothing\\[4pt]
    & \hskip 1cm
         \text{  with $\nu$  probability one}
           \bigr\} 
   \\[4pt]
\end{aligned}
\]

Following and extending {\cite[Thm.\,II.2(ii)–(iv)]{biskup2000gibbs}},
we decompose an ES measure according to the density selector and,
when a companion is selected, its spin orientation. These are tail
conditions, so conditioning on them preserves the ES specification.

\begin{theorem}[Canonical decomposition by the signed infinite companion]
\label{thm:canonical_sf_decomposition}
Let
\[
   \pi\in\mathcal G^{\mathrm{ES}}_{J,\beta}
   \qquad\text{with}\qquad
   \pi(N_\infty\le 1)=1 .
\]
Let
\[
   E_{\mathrm{null}}:=\{N_\infty=0\},
\]
\[
   E_{\mathrm{unsel}}
   :=
   \{N_\infty=1,\ \mathcal S_{D,\Gamma}(\eta)=\varnothing\},
\]
and
\[
   E_{\mathrm{sel}}
   :=
   \{N_\infty=1,\ \mathcal S_{D,\Gamma}(\eta)\neq\varnothing\}.
\]
On \(E_{\mathrm{sel}}\), write
\[
   C_\infty^A(\eta):=\mathcal S_{D,\Gamma}(\eta)
\]
for the selected signed ferromagnetic companion, and define
\[
   H(\omega,\eta)
   :=
   \mathbf 1_{\{
      \omega_x=+1 \text{ for all } x\in C_\infty^A(\eta)
   \}} .
\]
Set
\[
   E_+ := E_{\mathrm{sel}}\cap\{H=1\},
   \qquad
   E_- := E_{\mathrm{sel}}\cap\{H=0\}.
\]
Then
\[
   E_{\mathrm{null}},\quad E_{\mathrm{unsel}},\quad E_+,\quad E_-
\]
are tail events modulo \(\pi\)-null sets and form a measurable partition of
\(\{N_\infty\le1\}\). Consequently \(\pi\) admits the canonical
decomposition
\[
   \pi
   =
   \alpha_{\mathrm{null}}\pi_{\mathrm{null}}
   +
   \alpha_{\mathrm{unsel}}\pi_{\mathrm{unsel}}
   +
   \alpha_+\pi_+
   +
   \alpha_-\pi_-,
\]
where
\[
   \alpha_{\mathrm{null}}=\pi(E_{\mathrm{null}}),\qquad
   \alpha_{\mathrm{unsel}}=\pi(E_{\mathrm{unsel}}),\qquad
   \alpha_+=\pi(E_+),\qquad
   \alpha_-=\pi(E_-),
\]
and, whenever the corresponding coefficient is nonzero,
\[
   \pi_{\mathrm{null}}=\pi(\,\cdot\,|E_{\mathrm{null}}),\qquad
   \pi_{\mathrm{unsel}}=\pi(\,\cdot\,|E_{\mathrm{unsel}}),\qquad
   \pi_+=\pi(\,\cdot\,|E_+),\qquad
   \pi_-=\pi(\,\cdot\,|E_-).
\]
Each nonzero component is again an Edwards--Sokal Gibbs state for the same
specification. This decomposition is unique, with the convention that
zero-weight components are omitted.
\end{theorem}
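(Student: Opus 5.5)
The core of the argument is to show that each of the four events differs from a tail event by a \(\pi\)-null set. Once that is done, the decomposition follows from the standard fact that conditioning a Gibbs state on a tail event of positive probability gives another Gibbs state for the same specification. This is exactly the mechanism already used in the proof of Lemma~\ref{lem:DLR_for_q}. There, a density \(h\) that is \(\mathscr A^{\mathrm{ES}}_{\Lambda^c}\)-measurable for every finite \(\Lambda\) commutes with the kernel \(\Gamma^{\mathrm{ES}}_\Lambda\). Applying this to \(h=\mathbf 1_E/\pi(E)\) gives \(\pi(\cdot\mid E)\in\mathcal G^{\mathrm{ES}}_{J,\beta}\).

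\textbf{Partition.} On \(\{N_\infty\le1\}\), exactly one of three things holds: \(N_\infty=0\); or \(N_\infty=1\) with the selector empty; or \(N_\infty=1\) with the selector nonempty. In the last case, \(H\) takes the value \(0\) or \(1\). Since \(\pi(N_\infty\le1)=1\), these four events partition the space up to a null set. Measurability of \(\mathcal S_{D,\Gamma}\) is routine: the events \(\{x\in C^A_\infty\}\) and \(\{x\in C^B_\infty\}\) are countable unions and intersections of cylinder events describing open paths and their parities, and density limits are limits of measurable functions.

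\textbf{Tail property.} Fix \(\Lambda\Subset V\). I must show that each event agrees a.s. with an \(\mathscr A^{\mathrm{ES}}_{\Lambda^c}\)-measurable event. The tool is a pointwise invariance: \(\mathbf 1_E(\omega,\eta)=\mathbf 1_E(\omega',\xi)\) whenever the two configurations agree outside \(\Lambda\) and outside \(\overline E_\Lambda\) and both are ES-compatible. I then argue event by event.

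For \(E_{\mathrm{null}}\): by Lemma~\ref{UniqueOutside} and local finiteness, changing the bonds in \(\overline E_\Lambda\) cannot create or destroy an infinite cluster. It only reconnects finitely many exterior clusters and the finite set \(\Lambda\). So \(N_\infty\) is exterior-determined.

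For \(E_{\mathrm{unsel}}\) and \(E_{\mathrm{sel}}\): Lemma~\ref{lem:tail-pql-density-selector1} already gives that emptiness of \(\mathcal S_{D,\Gamma}\) is invariant under modifications in \(\overline E_\Lambda\), for a.e. \(\eta\). This holds because the unordered companion pair agrees outside a finite \(M\), and densities along the van~Hove sequence ignore finite sets.

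For \(E_\pm\): I repeat the argument of Lemma~\ref{lem:DLR_for_q}. Pick \(z\in\mathcal S_{D,\Gamma}(\eta)\cap M^c\). By ES compatibility, \(H=\mathbf 1_{\{\omega_z=+1\}}\), and \(\omega_z\) is an exterior spin.

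Applying this for every \(\Lambda\) in a countable exhaustion, with the relevant full-measure sets intersected, shows that each event is measurable with respect to the \(\pi\)-completion of the tail \(\sigma\)-field.

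\textbf{Main obstacle.} The delicate step is the modulo-null-set claim: the invariance holds only on a \(\pi\)-full set depending on \(\Lambda\), and \(M\) depends on \(\eta\). I plan to handle this by defining \(\tilde E_\Lambda\) as the set of exterior configurations for which some (equivalently, every) admissible interior completion lies in \(E\). This set is \(\mathscr A^{\mathrm{ES}}_{\Lambda^c}\)-measurable, because the interior state space is finite. By the invariance, \(\tilde E_\Lambda\) coincides with \(E\) on the full-measure set.

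\textbf{Weights and uniqueness.} With all four events shown to be tail modulo null sets, define \(\alpha_\bullet=\pi(E_\bullet)\) and \(\pi_\bullet=\pi(\cdot\mid E_\bullet)\). The convex identity is then just the law of total probability. Each nonzero component satisfies the DLR equations by the conditioning fact in the first paragraph.

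For uniqueness, suppose \(\pi=\sum\beta_i\rho_i\) with each \(\rho_i\) an ES state carried by the corresponding event \(E_i\). Evaluating both sides on \(E_i\) gives \(\beta_i=\pi(E_i)\), and restricting to \(E_i\) gives \(\rho_i=\pi(\cdot\mid E_i)\).
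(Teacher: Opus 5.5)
Your proposal is correct and follows the paper's proof essentially step by step. You get tail-measurability of the three FK events from Lemma~\ref{UniqueOutside} and Lemma~\ref{lem:tail-pql-density-selector1}, make \(H\) exterior-determined via a vertex \(z\in\mathcal S_{D,\Gamma}(\eta)\cap M^c\), read off the decomposition from the partition, and multiply the DLR identity by the exterior-measurable indicator \(\mathbf 1_E\), exactly as the paper does. Your construction of exterior-measurable versions \(\tilde E_\Lambda\) and the passage to the completed tail \(\sigma\)-field along an exhaustion is only a more careful rendering of the paper's ``up to \(\pi\)-null sets'' step, not a different route.
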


\begin{proof}
Work on the common full-measure set of Lemma~\ref{UniqueOutside}.
After any finite deletion there is at most one infinite exterior
component. Restoring finitely many bonds cannot create an infinite
cluster from finite components or remove an existing infinite component.
Thus the presence or absence of the unique infinite FK cluster is
unchanged by finite bond modifications on this set. By
Lemma~\ref{lem:tail-pql-density-selector1}, the density selector is stable
outside a finite set, so its nonemptiness is unchanged by compatible
modifications. Hence \(E_{\mathrm{null}}\), \(E_{\mathrm{unsel}}\), and
\(E_{\mathrm{sel}}\) are tail events, modulo \(\pi\)-null sets.

We next check that \(H\) is also tail-measurable on \(E_{\mathrm{sel}}\).
Fix a finite \(\Lambda\Subset V\). By Lemma~\ref{lem:tail-pql-density-selector1},
for \(\pi\)-a.e. \(\eta\) there is a finite set
\(M=M(\eta,\Lambda)\) such that every configuration \(\xi\) compatible
with \(J\) and agreeing with \(\eta\) outside
\(\overline E_\Lambda\) satisfies
\[
   \mathcal S_{D,\Gamma}(\xi)\cap M^c
   =
   \mathcal S_{D,\Gamma}(\eta)\cap M^c .
\]
Since \(C_\infty^A(\eta)=\mathcal S_{D,\Gamma}(\eta)\) is infinite on
\(E_{\mathrm{sel}}\), choose
\(
   z\in C_\infty^A(\eta)\cap M^c .
\)
For every ES-compatible configuration, all spins on a signed
ferromagnetic companion are equal, because any two vertices in the same
companion are connected by an FK-open path of positive parity. Therefore
\[
   H(\omega,\eta)=1
   \quad\Longleftrightarrow\quad
   \omega_z=+1.
\]
The vertex \(z\) lies outside the finite set \(M\), and hence outside
\(\Lambda\). Thus the value of \(H\) is determined by variables outside
\(\Lambda\), up to \(\pi\)-null sets. Since \(\Lambda\) was arbitrary,
\(H\) is tail-measurable. Therefore \(E_+\) and \(E_-\) are tail events.

The four events
\[
   E_{\mathrm{null}},\quad E_{\mathrm{unsel}},\quad E_+,\quad E_-
\]
are disjoint and exhaust \(\{N_\infty\le1\}\). Since
\(\pi(N_\infty\le1)=1\), they form a measurable partition of the support of
\(\pi\). Hence the decomposition
\[
   \pi
   =
   \sum_{E\in\{E_{\mathrm{null}},E_{\mathrm{unsel}},E_+,E_-\}}
   \pi(E)\,\pi(\,\cdot\,|E)
\]
is simply the ordinary decomposition of a probability measure with respect
to a measurable partition. This proves existence and uniqueness of the
displayed decomposition.

It remains only to show that the nonzero conditional measures are still ES
Gibbs states. Let \(E\) be one of the four tail events, with \(\pi(E)>0\), and
write
\(
   \pi_E:=\pi(\,\cdot\,|E).
\)
Fix a finite \(\Lambda\Subset V\). Let
\(
   \mathscr A_{\Lambda^c}
\)
be the sigma-field generated by all spin and bond variables outside
\(\Lambda\) in the ES specification. Since \(E\) is a tail event,
\(\mathbf 1_E\) is \(\mathscr A_{\Lambda^c}\)-measurable, up to null sets.

Let \(f\) be a bounded local function depending only on the variables in
\(\Lambda\), and let \(g\) be a bounded
\(\mathscr A_{\Lambda^c}\)-measurable function. Since \(\pi\) satisfies the ES
DLR equations,
\[
   \int f g\,d\pi
   =
   \int g\,
      \pi_{J,\beta,\Lambda,\omega_{\partial\Lambda}}(f)
      \,d\pi .
\]
Multiplying by \(\mathbf 1_E\), which is exterior-measurable, gives
\[
   \int_E f g\,d\pi
   =
   \int_E g\,
      \pi_{J,\beta,\Lambda,\omega_{\partial\Lambda}}(f)
      \,d\pi .
\]
Dividing by \(\pi(E)\), we obtain
\[
   \int f g\,d\pi_E
   =
   \int g\,
      \pi_{J,\beta,\Lambda,\omega_{\partial\Lambda}}(f)
      \,d\pi_E .
\]
Since this holds for every bounded exterior-measurable \(g\), it says that
the conditional law of the variables in \(\Lambda\) under \(\pi_E\), given the
exterior, is the same ES kernel
\(
   \pi_{J,\beta,\Lambda,\omega_{\partial\Lambda}}.
\)
Thus \(\pi_E\in\mathcal G^{\mathrm{ES}}_{J,\beta}\). Applying this to each of
the four events proves that every nonzero component in the decomposition is
an ES Gibbs state.
\end{proof}

\begin{remark}
The decomposition above is a canonical decomposition by tail events, not
the Choquet decomposition into extremal ES states.  In particular, the
components \(\pi_+\) and \(\pi_-\) need not be extremal, because their FK
marginals may themselves be non-extremal FK Gibbs states.  The Choquet
decomposition would further decompose these components into extremal ES
states.

\end{remark}

\begin{corollary}
\label{cor:extremal_sf_sector}
Let \(\pi\in\mathcal G^{\mathrm{ES}}_{J,\beta}\) be extremal and suppose
\(\pi(N_\infty\le1)=1\). Then exactly one of the four events
\[
   E_{\mathrm{null}},\quad E_{\mathrm{unsel}},\quad E_+,\quad E_-
\]
has \(\pi\)-probability one.
In particular, if \(\pi(E_{\mathrm{sel}})=1\), then either
\[
   \pi(H=1)=1
   \qquad\text{or}\qquad
   \pi(H=0)=1.
\]
Equivalently, in the density-distinguishable tail event an extremal ES state
cannot have a genuine \(q\)-bias with \(0<q<1\).
\end{corollary}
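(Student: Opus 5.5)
The plan is to combine Theorem~\ref{thm:canonical_sf_decomposition} with the characterization of extremal ES states from Theorem~\ref{SpinProjSincleCopy}(c): extremal elements of \(\mathcal G^{\mathrm{ES}}_{J,\beta}\) are exactly the tail-trivial ones. First, invoke Theorem~\ref{thm:canonical_sf_decomposition}. Since \(\pi(N_\infty\le1)=1\), the four events \(E_{\mathrm{null}},E_{\mathrm{unsel}},E_+,E_-\) are tail events modulo \(\pi\)-null sets, and they partition a set of full \(\pi\)-measure.

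Next, use tail triviality to conclude that each of these four events has \(\pi\)-probability \(0\) or \(1\). One point needs care: the theorem gives tail measurability only modulo null sets. So each event \(E\) differs from a genuine tail event \(\widetilde E\) by a \(\pi\)-null set, and \(\pi(E)=\pi(\widetilde E)\in\{0,1\}\). Because the four probabilities sum to one, exactly one of them equals one. This is the main statement.

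I could also argue without quoting tail triviality. If two of the coefficients \(\alpha\) were strictly positive, Theorem~\ref{thm:canonical_sf_decomposition} would write \(\pi\) as a nontrivial convex combination of the conditional states \(\pi(\cdot\mid E)\). These are ES Gibbs states, and they are mutually singular because they are supported on disjoint events, hence distinct. That contradicts extremality. Either route works; I would give the convex-combination argument as the primary proof, since it needs only the definition of extremality together with the fact, established in the theorem, that the conditioned measures remain ES Gibbs states.

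For the ``in particular'' clause, suppose \(\pi(E_{\mathrm{sel}})=1\). Then \(E_{\mathrm{null}}\) and \(E_{\mathrm{unsel}}\) are null, so exactly one of \(E_+=E_{\mathrm{sel}}\cap\{H=1\}\) and \(E_-=E_{\mathrm{sel}}\cap\{H=0\}\) has probability one. This gives \(\pi(H=1)=1\) or \(\pi(H=0)=1\). For the reformulation in terms of \(q\), note that \(H\) is tail-measurable, and hence measurable with respect to \(\sigma(\omega_{\mathbb Z^d\setminus\mathcal C_\infty},\eta)\) enlarged by the tail modulo null sets. Then the conditional probability \(q\) lies in \(\{0,1\}\) almost surely; indeed, \(H\) a.s. constant forces \(Q(\pi)=\{0\}\) or \(\{1\}\). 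So no \(0<q<1\) can occur. The only delicate step is the handling of ``modulo null sets'', and it is dispatched as above.
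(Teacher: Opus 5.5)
Your proposal is correct. Your first route is exactly the paper's proof: the four events are tail events by Theorem~\ref{thm:canonical_sf_decomposition}, extremal states are tail-trivial, and the partition then forces exactly one event to have full measure.

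The convex-combination argument you prefer as your primary proof is a mild variant. It is the standard proof of ``extremal $\Rightarrow$ tail-trivial'', specialized to these four events: write $\pi=\pi(E)\,\pi(\cdot\mid E)+\pi(E^c)\,\pi(\cdot\mid E^c)$, where both conditioned measures lie in $\mathcal G^{\mathrm{ES}}_{J,\beta}$ and are mutually singular. Compared with the paper's route, it has two advantages. It does not rely on the general DLR fact quoted in Theorem~\ref{SpinProjSincleCopy}(c). It also removes the ``modulo null sets'' issue, because Theorem~\ref{thm:canonical_sf_decomposition} already shows that the conditioned measures are Gibbs. The paper's route, by contrast, is a one-liner once that general fact is granted.

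The paper does not spell out the $q$-reformulation, and there the operative reason is your second one. Once $\pi(H=1)\in\{0,1\}$, the conditional probability of $\{H=1\}$ given any sub-$\sigma$-field equals that constant almost surely, so $Q(\pi)=\{0\}$ or $\{1\}$.

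Drop the remark about enlarging $\sigma(\omega_{\mathbb Z^d\setminus\mathcal C_\infty},\eta)$ by the tail, since it does no work. $q$ is defined by conditioning on the unenlarged $\sigma$-field. With respect to that $\sigma$-field, $H$ is in general not measurable, and this is precisely what permits genuine $q$-biased lifts.
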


\begin{proof}
The four events are tail events by Theorem~\ref{thm:canonical_sf_decomposition}.
Extremal ES Gibbs states are tail-trivial. Hence each of the four events has
probability \(0\) or \(1\). Since they form a partition of a full-probability
set, exactly one of them has probability \(1\).
\end{proof}

On the selected event, extremality fixes the spin orientation of the
denser companion. The \(q\)-biased lifts describe mixtures of the two
oriented assignments. The component supported on
\(E_{\mathrm{unsel}}\) is specified only by the density selector and
may also fail to be spin-flip invariant.

\begin{proposition}[Restricted FK projection in the single-copy case]
\label{prop:restricted_FK_projection_single}
Assume the \(0/1\)-infinite-cluster property. Let
\[
   \mathcal G^{\mathrm{FK}}_{null},\qquad
   \mathcal G^{\mathrm{FK}}_{1,\mathrm{unsel}},\qquad
   \mathcal G^{\mathrm{FK}}_{1,\mathrm{sel}}
\]
be the classes of FK states supported, respectively, on the following tail events: no infinite FK cluster; one
infinite FK cluster whose signed companions are not selected by
\(\mathcal S_{D,\Gamma}\); and one infinite FK cluster whose signed companions are
selected by \(\mathcal S_{D,\Gamma}\).

Then the following hold.

\begin{enumerate}
\item[\textup{(i)}]
The restriction
\[
   \Pi_{\mathrm{FK}}:
   \mathcal G^{\mathrm{ES}}_{null}
   \longrightarrow
   \mathcal G^{\mathrm{FK}}_{null}
\]
is a bijection.

\item[\textup{(ii)}]
The restriction to the spin-flip-invariant subclass,
\[
   \Pi_{\mathrm{FK}}:
   \mathcal G^{\mathrm{ES},\mathrm{flip}}_{1,\mathrm{unsel}}
   \longrightarrow
   \mathcal G^{\mathrm{FK}}_{1,\mathrm{unsel}}
\]
is a bijection.

\item[\textup{(iii)}]
For each fixed \(q\in[0,1]\), the restriction
\[
   \Pi_{\mathrm{FK}}:
   \mathcal G^{\mathrm{ES}}_{1,q}
   \longrightarrow
   \mathcal G^{\mathrm{FK}}_{1,\mathrm{sel}}
\]
is a bijection.
Equivalently, once the FK marginal and the bias \(q\) are fixed, the ES
lift is uniquely determined.
\end{enumerate}
\end{proposition}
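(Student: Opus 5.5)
The strategy in all three parts is the same: show that, conditional on the FK configuration, the spins are forced to be independent fair signs on every finite FK cluster, so that the only freedom left is the orientation of the infinite cluster. Each restricted class then fixes that orientation rule, which gives uniqueness. Existence comes from the constructive lifts of Theorem~\ref{thm:FK-surjective} and Lemma~\ref{lem:DLR_for_q}.

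\textbf{Well-definedness.} The key structural fact holds for any \(\pi\in\mathcal G^{\mathrm{ES}}_{\le1}\): conditional on \(\eta\), the spin signs \(\varepsilon_C\) of the finite FK clusters are i.i.d.\ fair and independent of the infinite-cluster sign. To prove it, fix a finite cluster \(C\) and take \(\Lambda\) large enough that \(C\cup\partial C\subset\Lambda\). Flipping the spins on \(C\) preserves \(\eta\sim_J\omega\). It also preserves the boundary spins and the bond weights, so the ES kernel \eqref{ESCond} is invariant under this flip. The DLR equation \eqref{ESDLR} then makes \(\pi\) invariant under flipping any single finite cluster, which is a measurable operation on the event \(\{C \text{ is a finite cluster}\}\).

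It follows that the ES law is determined by its FK marginal \(\nu\) together with the conditional law of the infinite-cluster sign given \(\eta\) and the finite-cluster signs. Step~1 of Theorem~\ref{thm:FK-surjective} already shows that \(\Pi_{\mathrm{FK}}\) maps into \(\mathcal G^{\mathrm{FK}}_{J,p}\). The events \(E_{\mathrm{null}}\), \(E_{\mathrm{unsel}}\) and \(E_{\mathrm{sel}}\) are FK-measurable, so each restriction lands in the claimed FK class.

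\textbf{Part (i).} When \(N_\infty=0\) almost surely, the structural fact shows that \(\pi\) equals the fair-coloring lift of \(\Pi_{\mathrm{FK}}\pi\), which gives injectivity. Surjectivity is the fair lift constructed in Step~2 of Theorem~\ref{thm:FK-surjective}, which is supported on \(\{N_\infty=0\}\).

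\textbf{Part (ii).} On \(E_{\mathrm{unsel}}\), let \(\varepsilon_\infty\) be the sign of the infinite cluster, measured relative to the enumeration-based reference \(\tau_{C}\) used in that proof. Flip invariance \((F_{\mathrm{ES}})_*\pi=\pi\) flips every cluster sign at once. Composing with the finite-cluster flips from the structural fact, I get invariance under flipping \(\varepsilon_\infty\) alone, conditionally on \(\eta\) and the finite signs. Hence \(\varepsilon_\infty\) is fair and independent, and \(\pi\) is the fair lift, so the map is injective. For surjectivity, the fair lift of \(\nu\in\mathcal G^{\mathrm{FK}}_{1,\mathrm{unsel}}\) is an ES state by Theorem~\ref{thm:FK-surjective}, it is visibly flip invariant, and it is supported on \(E_{\mathrm{unsel}}\).

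\textbf{Part (iii).} Surjectivity is Lemma~\ref{lem:DLR_for_q}. The resulting \(\pi_{\nu,q}\) has \(Q=\{q\}\) because the bias \(h_q\) is independent of the finite-cluster signs. Injectivity is the main obstacle. The condition \(Q(\pi)=\{q\}\) says that \(\pi(H=1\mid\omega_{\mathcal C_\infty^c},\eta)=q\) almost surely. Since \(H\) determines the spins on \(\mathcal C_\infty\), the conditional law of the infinite-cluster orientation given \((\eta,\omega_{\mathcal C_\infty^c})\) is Bernoulli\((q)\), independent of the finite-cluster spins. Combining this with the structural fact, which makes the finite-cluster signs fair given \(\eta\), forces \(\pi=\kappa\otimes\nu\), where \(\kappa\) is exactly the kernel of \(\pi_{\nu,q}\). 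Hence \(\pi=\pi_{\nu,q}\) with \(\nu=\Pi_{\mathrm{FK}}\pi\).

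The delicate step is making the structural fact rigorous. The cluster flip depends on \(\eta\), and \(\eta\) itself lives partly in \(\overline E_\Lambda\). I would handle this by conditioning on the exterior bonds and on the event that \(C\) is a cluster whose edge boundary lies in \(\overline E_\Lambda\) and is closed. On that event the flip is a bijection of the configurations inside \(\Lambda\) that preserves the kernel weights.
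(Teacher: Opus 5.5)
Your proposal is correct and follows the same route as the paper. Surjectivity comes from the fair lift of Theorem~\ref{thm:FK-surjective} and the $q$-biased lift of Lemma~\ref{lem:DLR_for_q}. Injectivity comes from showing that, given $\eta$, finite clusters receive independent fair signs, while each restricted class pins down the law of the infinite-cluster orientation. The one difference is that you justify the fairness of finite-cluster colorings, via invariance of the local ES kernel under flipping a finite cluster enclosed in $\Lambda$; the paper simply attributes this step to the ES specification.
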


\begin{proof}
Surjectivity in all three cases follows from the ES lifting construction for
FK Gibbs states satisfying the \(0/1\)-infinite-cluster property. Namely, every
\(\nu\in\mathcal G^{\mathrm{FK}}_{\le1}\) has at least one ES lift
\(\pi\in\mathcal G^{\mathrm{ES}}_{\le1}\) with
\(\Pi_{\mathrm{FK}}\pi=\nu\). This is precisely the content of the FK
surjectivity theorem. For the prescribed bias in case~\textup{(iii)},
Lemma~\ref{lem:DLR_for_q} supplies the required $q$-biased ES lift.

We prove injectivity tail event by tail event.

\smallskip
\noindent
\textup{(i)} If \(\nu\in\mathcal G^{\mathrm{FK}}_{null}\), then every FK
cluster is finite. Given the FK configuration \(\eta\), the ES conditional
spin law is forced: each finite FK cluster receives one of its two compatible
spin assignments with probability \(1/2\), independently over clusters.
Thus the conditional spin law given \(\eta\) is uniquely determined by
\(\eta\), and hence the FK marginal determines the ES measure uniquely.

\smallskip
\noindent
\textup{(ii)} Let \(\nu\in\mathcal G^{\mathrm{FK}}_{1,\mathrm{unsel}}\).
Given the FK configuration, finite-cluster colors are independent and
fair by the ES specification. For a spin-flip-invariant ES lift,
\(F_{\mathrm{ES}}\) leaves that configuration fixed and exchanges the
two compatible assignments on the unique infinite cluster. Those
assignments therefore have equal conditional probabilities. Thus the
conditional spin law is the fair-cluster coloring law and is determined
by \(\eta\). This proves injectivity on
\(\mathcal G^{\mathrm{ES},\mathrm{flip}}_{1,\mathrm{unsel}}\).
Conversely, the fair-cluster construction gives a spin-flip-invariant
ES lift for every \(\nu\) in the indicated FK class.

\smallskip
\noindent
\textup{(iii)} Let \(q\in[0,1]\) be fixed and let
\(\nu\in\mathcal G^{\mathrm{FK}}_{1,\mathrm{sel}}\). For \(\nu\)-a.e. \(\eta\),
the selector \(\mathcal S_{D,\Gamma}(\eta)\) identifies one of the two signed
companions of the infinite FK cluster; call it \(C_\infty^A(\eta)\), and call the other
\(C_\infty^B(\eta)\). Given \(\eta\), the ES conditional spin law in the
class \(\mathcal G^{\mathrm{ES}}_{1,q}\) is uniquely prescribed as follows:
all finite FK clusters are colored by independent fair signs, while
\(C_\infty^A(\eta)\) receives spin \(+1\) with probability \(q\) and spin
\(-1\) with probability \(1-q\). The spin on \(C_\infty^B(\eta)\) is then
forced to be opposite, because the two companions are antiferromagnetically
connected.

Thus, for fixed \(q\), the conditional spin law given \(\eta\) is uniquely
determined by the FK configuration. Therefore two ES measures in
\(\mathcal G^{\mathrm{ES}}_{1,q}\) with the same FK marginal must coincide.
This proves injectivity. Together with surjectivity, this proves the
bijection.
\end{proof}

\begin{remark}
The restriction to fixed \(q\) in part \(\textup{(iii)}\) is essential.
Without fixing \(q\), the FK projection is not injective. Indeed, for the same
FK marginal \(\nu\in\mathcal G^{\mathrm{FK}}_{1,\mathrm{sel}}\), the ES lifts
obtained by assigning spin \(+1\) to the selected companion with probabilities
\(q\) and \(q'\) have the same FK marginal but are different whenever
\(q\neq q'\).

Within the classes specified in the proposition, the FK marginal and the
prescribed coloring law determine the ES lift uniquely.
\end{remark}

We next characterize nonzero limiting expected uniform magnetization
density in an extremal state. The result combines the tail decomposition,
Corollary~\ref{cor:extremal_sf_sector}, and the spin--ES isomorphism in
Theorem~\ref{SpinProjSincleCopy}, with existence of the companion densities
along the chosen sequence.

\begin{theorem}\label{EquivalenceOneCopy}
In \(\mathbb Z^d\), fix a finite inverse temperature \(\beta\ge0\),
\(p=1-e^{-2\beta}\), and a coupling realization \(J\).
Let \(\mu_{J,\beta}\) be an extremal Gibbs state, let
\(\pi\in\mathcal G^{\mathrm{ES}}_{J,\beta}\) be its Edwards--Sokal lift,
and let \(\nu_{J,p}:=\Pi_{\mathrm{FK}}\pi\). Assume that
\(\pi(N_\infty\le1)=1\). Fix a deterministic van~Hove sequence
\(\Gamma=(\Lambda_n)_{n\ge1}\), and assume that both signed companion
densities \(d_\Gamma(C_\infty^A)\) and \(d_\Gamma(C_\infty^B)\) exist
\(\nu_{J,p}\)-almost surely whenever the infinite FK cluster exists;
set both densities to zero when it does not exist.

Then the limiting expected magnetization density exists:
\[
   m_\Gamma
   :=\lim_{n\to\infty}\frac{1}{|\Lambda_n|}
      \sum_{x\in\Lambda_n}\mu_{J,\beta}(\omega_x).
\]
It is nonzero if and only if sf--FK percolation occurs along
\(\Gamma\) in \(\nu_{J,p}\). More precisely,
\(|m_\Gamma|=\mathbb E_{\nu_{J,p}}
[\,|d_\Gamma(C_\infty^A)-d_\Gamma(C_\infty^B)|\,]\).
Equivalently, the limiting expected fraction of \(+\) spins differs from
\(1/2\) exactly when the two companions are density-distinguishable along
\(\Gamma\).
\end{theorem}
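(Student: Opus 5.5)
The plan is to express the magnetization average through the ES lift and the coloring structure, then use extremality to fix the orientation of the infinite cluster. First, by Theorem~\ref{SpinProjSincleCopy}(d), \(\mu_{J,\beta}(\omega_x)=\pi(\omega_x)\). Hence
\[
   \frac1{|\Lambda_n|}\sum_{x\in\Lambda_n}\mu_{J,\beta}(\omega_x)
   =\mathbb E_\pi\Bigl[\frac1{|\Lambda_n|}\sum_{x\in\Lambda_n}\omega_x\Bigr].
\]
I will split each \(\omega_x\) according to whether \(x\) lies in a finite FK cluster or in \(\mathcal C_\infty\). The claim to establish is that finite clusters contribute zero in expectation: \(\mathbb E_\pi[\omega_x\mathbf 1_{\{x\in\text{finite cluster}\}}]=0\). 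To prove it, I will condition on \(\eta\) and on the spins outside a large box containing the finite cluster of \(x\). I will then apply the ES DLR equation in that box together with Lemma~\ref{lem:finite-quotient-coloring}(a), which says the finite cluster's color is fair and independent given the exterior. Alternatively, I can note that flipping a finite cluster is a local bijection preserving the ES kernel weight. The outcome is
\[
   \pi(\omega_x)=\mathbb E_\pi\bigl[\omega_x\mathbf 1_{\{x\in\mathcal C_\infty\}}\bigr].
\]

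Next, extremality of \(\mu_{J,\beta}\) makes \(\pi\) extremal, hence tail-trivial, and Corollary~\ref{cor:extremal_sf_sector} applies: exactly one of \(E_{\mathrm{null}},E_{\mathrm{unsel}},E_+,E_-\) has probability one. I then treat the cases separately.

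On \(E_{\mathrm{null}}\) every term vanishes, so \(m_\Gamma=0\), consistent with both densities set to zero.

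On \(E_+\), on \(\mathcal C_\infty\) we have \(\omega_x=+1\) on \(C^A_\infty\) and \(-1\) on \(C^B_\infty\). So the average equals
\[
   \mathbb E_\nu\bigl[(|C^A_\infty\cap\Lambda_n|-|C^B_\infty\cap\Lambda_n|)/|\Lambda_n|\bigr].
\]
By the density-existence hypothesis and bounded convergence (the ratio lies in \([-1,1]\)), this tends to \(\mathbb E_\nu[d(C^A_\infty)-d(C^B_\infty)]\). Since \(C^A_\infty=\mathcal S_{D,\Gamma}\) has the strictly larger density a.s., this equals \(\mathbb E_\nu|d^A-d^B|>0\). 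The case \(E_-\) is symmetric and gives the negative of that value.

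The main obstacle is \(E_{\mathrm{unsel}}\). There the densities exist and are equal a.s. by hypothesis, but the orientation of the infinite cluster is not a priori fixed, and \(\pi\) need not be flip invariant. I plan to avoid any symmetry argument. Write \(\omega_x=\varepsilon\,s(x)\) on \(\mathcal C_\infty\), where \(s=\pm1\) labels the companion (using the fixed Borel labeling) and \(\varepsilon\) is the spin on \(C^A_\infty\). Then the box average on \(\mathcal C_\infty\) equals \(\varepsilon\,(|C^A\cap\Lambda_n|-|C^B\cap\Lambda_n|)/|\Lambda_n|\). This tends to \(0\) pathwise because the densities are equal, and bounded convergence gives \(m_\Gamma=0\). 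So \(m_\Gamma\) exists in all cases and \(|m_\Gamma|=\mathbb E_\nu|d^A-d^B|\); by tail triviality this quantity is positive exactly when sf--FK percolation occurs along \(\Gamma\). The equivalent statement about the fraction of \(+\) spins follows from \((1+m_\Gamma)/2\). I also need to check measurability of the companion labeling, so that the pathwise products are measurable functions.
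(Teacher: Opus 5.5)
Your proposal is correct and follows essentially the same route as the paper: extremality of the ES lift and Corollary~\ref{cor:extremal_sf_sector} reduce the proof to the four tail cases, finite FK clusters contribute zero mean, the oriented cases give the signed density difference by bounded convergence, and the unselected case uses the same pathwise bound $\bigl||C_\infty^A\cap\Lambda_n|-|C_\infty^B\cap\Lambda_n|\bigr|/|\Lambda_n|\to0$. Your local cluster-flip argument is a slightly more explicit justification of the paper's appeal to fair conditional coloring of finite clusters. Lemma~\ref{lem:finite-quotient-coloring}(a) is not quite the right citation for that fairness claim, but the flip argument is enough on its own.
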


\begin{proof}
By Theorem~\ref{SpinProjSincleCopy}, the ES lift \(\pi\) is extremal.
Corollary~\ref{cor:extremal_sf_sector} therefore implies that exactly one
of \(E_{\mathrm{null}}\), \(E_{\mathrm{unsel}}\), \(E_+\), and \(E_-\)
has \(\pi\)-probability one.

Conditionally on the FK configuration, every finite FK cluster has its
two compatible spin assignments with equal probability. Each spin in a
finite cluster consequently has conditional mean zero. If
\(\pi(E_{\mathrm{null}})=1\), every finite-volume expected spin average
is therefore zero, and \(m_\Gamma=0\).

If \(\pi(E_{\mathrm{unsel}})=1\), both companion densities exist and are
equal, by the density-existence hypothesis and the definition of the
selector. For any labeling of the companions, compatibility forces the
spin assignment on the infinite FK cluster to be constant on one
companion and opposite on the other. The absolute conditional mean of
the spin average is thus at most
\(\bigl||C_\infty^A\cap\Lambda_n|-|C_\infty^B\cap\Lambda_n|\bigr|
/|\Lambda_n|\). This bound tends to zero almost surely and is bounded by
one. Bounded convergence gives \(m_\Gamma=0\).

If \(\pi(E_+)=1\), the selected, denser companion \(C_\infty^A\)
carries spin \(+1\), and \(C_\infty^B\) carries spin \(-1\).
After conditional averaging, the finite clusters contribute zero and
the infinite cluster contributes
\((|C_\infty^A\cap\Lambda_n|-|C_\infty^B\cap\Lambda_n|)/|\Lambda_n|\).
Bounded convergence yields
\(m_\Gamma=\mathbb E_{\nu_{J,p}}
[d_\Gamma(C_\infty^A)-d_\Gamma(C_\infty^B)]>0\).
If \(\pi(E_-)=1\), the same argument gives the negative of this value.

These cases prove existence of the displayed limit and the stated
absolute-value identity. The absolute density difference vanishes off the selected event
and is strictly positive on it.
Its expectation is positive exactly when that event has positive
\(\nu_{J,p}\)-probability, which is sf--FK percolation along \(\Gamma\).
The final assertion follows from
\(\mathbf1_{\{\omega_x=+1\}}=(1+\omega_x)/2\).
\end{proof}

\begin{corollary}[Coexistence detected by some van Hove sequence]
\label{cor:coexistence_some_vanhove}
Fix a coupling realization \(J\), a finite inverse temperature
\(\beta\ge0\), and \(p=1-e^{-2\beta}\). Work within the class of
extremal Gibbs states whose ES lifts satisfy the
\(0/1\)-infinite-cluster property. In both assertions below, require
that the two signed companion densities exist almost surely under the
corresponding FK marginal along the stated deterministic van~Hove
sequence \(\Gamma\), with both densities zero when there is no infinite
FK cluster. Let \(m_\Gamma\) be the limiting expected magnetization
density from Theorem~\ref{EquivalenceOneCopy}.

Then the following are equivalent.
\begin{enumerate}
\item[\textup{(i)}]
There exist an extremal Gibbs state \(\mu_{J,\beta}\) in this class and
a deterministic van~Hove sequence \(\Gamma\), satisfying the preceding
density-existence requirement, such that \(m_\Gamma\ne0\).
\item[\textup{(ii)}]
There exist an extremal ES state \(\pi\in\mathcal G^{\mathrm{ES}}_{J,\beta}\)
with \(\pi(N_\infty\le1)=1\) and a deterministic van~Hove sequence
\(\Gamma\), satisfying the same density-existence requirement, such that
\(\nu_{J,p}:=\Pi_{\mathrm{FK}}\pi\) has sf--FK percolation along
\(\Gamma\).
\end{enumerate}
In this case the spin-flipped state has limiting expected magnetization
density \(-m_\Gamma\) along the same sequence. Thus there are two
extremal Gibbs states with opposite nonzero limiting expected
magnetization densities within this class.
\end{corollary}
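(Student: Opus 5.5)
The equivalence will follow by applying Theorem~\ref{EquivalenceOneCopy} in each direction, using the affine isomorphism of Theorem~\ref{SpinProjSincleCopy}(d) to move between spin and ES states. The spin-flip statement will then follow from the flip symmetry of the specification.

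For (i)\(\Rightarrow\)(ii), take the extremal \(\mu_{J,\beta}\) and the sequence \(\Gamma\) from (i). Let \(\pi=\mathcal L(\mu_{J,\beta})\) be its ES lift. Because \(\Pi_{\mathrm{SPIN}}\) is an affine isomorphism of Choquet simplices, it maps extreme points onto extreme points, so \(\pi\) is extremal. By the class assumption, \(\pi(N_\infty\le1)=1\), and the density-existence requirement is the same one imposed in (i). Theorem~\ref{EquivalenceOneCopy} then gives
\[
   0<|m_\Gamma|=\mathbb E_{\nu_{J,p}}\bigl[\,|d_\Gamma(C_\infty^A)-d_\Gamma(C_\infty^B)|\,\bigr].
\]
Hence the two companion densities differ with positive \(\nu_{J,p}\)-probability, which is sf--FK percolation along \(\Gamma\). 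For (ii)\(\Rightarrow\)(i), start from the extremal \(\pi\) in (ii) and set \(\mu:=\Pi_{\mathrm{SPIN}}\pi\). This \(\mu\) is extremal by the same isomorphism, and its ES lift is \(\pi\) by uniqueness of the lift. So \(\mu\) lies in the class, and Theorem~\ref{EquivalenceOneCopy} converts sf--FK percolation along \(\Gamma\) into \(m_\Gamma\ne0\).

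For the final assertion, use that the Hamiltonian is invariant under \(F\). Hence \(F_*\) maps \(\mathcal G^{\mathrm{SPIN}}_{J,\beta}\) to itself, and being an affine involution it preserves extremality. At the ES level, \(F_{\mathrm{ES}}(\omega,\eta)=(-\omega,\eta)\) preserves the ES kernels \eqref{ESCond}: the constraint \(J_{xy}\omega_x\omega_y=+1\) is unchanged when both spins and the boundary spins are flipped. It follows that the ES lift of \(F_*\mu\) is \((F_{\mathrm{ES}})_*\pi\), and its FK marginal is the same \(\nu_{J,p}\). Therefore the \(0/1\)-infinite-cluster property and the density-existence requirement carry over unchanged. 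Since \(m_x(F_*\mu)=-m_x(\mu)\), the finite-volume averages change sign, so the limit along \(\Gamma\) is \(-m_\Gamma\). Because \(m_\Gamma\neq0\), the values \(m_\Gamma\) and \(-m_\Gamma\) differ, and so \(\mu\neq F_*\mu\). This gives two distinct extremal states in the class with opposite nonzero magnetization densities.

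The only point needing real care is the class membership of the flipped state, namely that the ES lift commutes with the flip so that the FK marginal is literally unchanged. I would verify this directly from the explicit lift \(\mu(d\omega)\kappa_J(d\eta\mid\omega)\): the kernel \(\kappa_J\) depends on \(\omega\) only through the products \(\omega_x\omega_y\), which are flip-invariant. Everything else is bookkeeping on top of Theorem~\ref{EquivalenceOneCopy}.
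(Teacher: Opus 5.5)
Your proposal is correct and follows the paper's own proof. Like the paper, you transfer extremality through the spin--ES isomorphism of Theorem~\ref{SpinProjSincleCopy}, apply Theorem~\ref{EquivalenceOneCopy} to each state--sequence pair, and use that the global spin flip preserves extremality and the FK marginal while reversing $m_\Gamma$. Your check via $\kappa_J$, that the ES lift commutes with the flip, fills in a step the paper states without detail.
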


\begin{proof}
Fix a state and a deterministic van~Hove sequence \(\Gamma\) satisfying
the stated density-existence requirement. Theorem~\ref{SpinProjSincleCopy}
preserves extremality between spin states and their ES lifts, and
Theorem~\ref{EquivalenceOneCopy} gives the equivalence between
\(m_\Gamma\ne0\) and sf--FK percolation along \(\Gamma\).
Taking the existential quantifier over these state--sequence pairs
proves the equivalence of \textup{(i)} and \textup{(ii)}.

A global spin flip preserves the Hamiltonian, extremality and the FK
marginal, and hence the companion-density hypothesis. It reverses every
spin expectation and therefore changes \(m_\Gamma\) to \(-m_\Gamma\).
The two states are distinct because these limits are nonzero and
opposite.
\end{proof}

In terms of the order parameter we have the following result.

\begin{theorem}\label{Th1}
In \(\mathbb Z^d\), fix \(\lambda\in[0,1]\) and \(0\le\beta<\infty\),
put \(p=1-e^{-2\beta}<1\), and let
\(\rho^{\mathrm{FK}}\) be a weak limit along a deterministic subsequence
of the joint distributions of couplings \(J\) and FK bond configurations
\(\eta\) with periodic boundary conditions, as in
Lemma~\ref{lem:periodic-joint-ES}. Fix a deterministic van~Hove
sequence \(\Gamma=(\Lambda_n)_{n\ge1}\) for the density averages. Assume that
\(\rho^{\mathrm{FK}}\) satisfies the \(0/1\)-infinite-cluster property for FK clusters.
Assume also that, on the event where the two sf--FK companions of the infinite FK cluster are
distinguished by the density selector \(\mathcal S_{D,\Gamma}\), their densities along
\((\Lambda_n)_{n\ge1}\) exist \(\rho^{\mathrm{FK}}\)-a.s.  More precisely, if
\(C_\infty^A(\eta)=\mathcal S_{D,\Gamma}(\eta)\) denotes the selected, denser companion
and \(C_\infty^B(\eta)\) denotes the other companion, assume that
 \(d_A(\eta)>d_B(\eta)\).
Set \(d_A=d_B=0\) outside that tail event. For standard cubes, existence
of both companion densities follows from
Lemma~\ref{lem:stationary-companion-densities}.

Choose a Borel version of the regular conditional distribution
\(
   \nu_{J,p}:=\rho^{\mathrm{FK}}(\,\cdot\,|J)
\).
These conditional laws satisfy
\[
   \nu_{J,p}\in\mathcal G^{\mathrm{FK}}_{J,p},
   \qquad
   \nu_{J,p}(N_\infty\le1)=1
\]
for \(P_\lambda\)-almost every \(J\).

Let \(\pi^{+}_{J,p}\) be the ES lift of \(\nu_{J,p}\) obtained as follows:
finite FK clusters are assigned their two compatible spin assignments with
probability \(1/2\); on the unselected tail event the companions of the
infinite FK cluster, if present, are also assigned symmetrically; and on the selected tail event the
selected, denser companion \(C_\infty^A(\eta)=\mathcal S_{D,\Gamma}(\eta)\) is assigned
spin \(+1\), while its companion is assigned spin \(-1\). Let
\(
   \mu^{+}_{J,\beta}:=\Pi_{\mathrm{SPIN}}\pi^{+}_{J,p}
\)
be its spin marginal, and let \(\rho^{+}\) be the corresponding quenched joint
distribution of \((J,\omega,\eta)\).

Define the joint disorder--FK event
\[
 \mathcal E^{\mathrm{sf}}_\Gamma
 :=\{(J,\eta):N_\infty(\eta)=1,
              \ \mathcal S_{D,\Gamma}(J,\eta)\ne\varnothing\}.
\]
Define also the events
\begin{eqnarray}
      A_0
      \;&=&\;
      \Bigl\{
         0\text{ belongs to the selected, denser sf--FK companion} \\
         && \hskip 1cm \text{of the unique infinite FK cluster}
      \Bigr\},
      \qquad \\
      B_0
      \;&=&\;
      \Bigl\{
         0\text{ belongs to the other sf--FK companion of the unique
         infinite FK cluster}
      \Bigr\}.
\end{eqnarray}
Both events are joint disorder--FK events, empty off
\(\mathcal E^{\mathrm{sf}}_\Gamma\); their quenched probabilities refer
to their \(J\)-sections.
Then the magnetization of the \(+\)-oriented lift,
\[
      m
      \;:=\;
      \mathbb E_{P_{\lambda}}
      \bigl[
            \mu^{+}_{J,\beta}(\omega_0)
      \bigr]
      \;=\;
      \rho^{+}(\omega_0),
\]
satisfies
\[
      m
      \;=\;
      \rho^{\mathrm{FK}}(A_0)-\rho^{\mathrm{FK}}(B_0)
      \;=\;
      \mathbb E_{P_{\lambda}}
      \Bigl[
           \nu_{J,p}(A_0)-\nu_{J,p}(B_0)
      \Bigr].
\]
Moreover,
\[
      m
      =
      \mathbb E_{\rho^{\mathrm{FK}}}
      \bigl[
         d_A(\eta)-d_B(\eta)
      \bigr].
\]
Consequently,
\[
      m>0
      \quad\Longleftrightarrow\quad
      \rho^{\mathrm{FK}}\bigl(\mathcal E^{\mathrm{sf}}_\Gamma\bigr)>0.
\]
\end{theorem}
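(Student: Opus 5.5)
The plan has four steps: identify the quenched conditional laws as FK Gibbs states, check that the oriented lift is an ES Gibbs state, compute \(m\) via the origin events \(A_0,B_0\), and finally pass from origin probabilities to densities using translation invariance.

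\textbf{Step 1 (quenched FK Gibbs property).} Lemma~\ref{lem:periodic-joint-ES} provides a joint law whose \(J\)-sections satisfy the FK DLR equations \eqref{eq:FKDLR} for \(P_\lambda\)-a.e.\ \(J\). I will take this as given, together with the \(0/1\) hypothesis, which gives \(\nu_{J,p}\in\mathcal G^{\mathrm{FK}}_{\le1}\) for a.e.\ \(J\).

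\textbf{Step 2 (the oriented lift is an ES Gibbs state).} Condition \(\nu_{J,p}\) on the three tail events \(E_{\mathrm{null}}\), \(E_{\mathrm{unsel}}\), \(E_{\mathrm{sel}}\). By Lemma~\ref{lem:tail-pql-density-selector1} and the first part of the proof of Theorem~\ref{thm:canonical_sf_decomposition}, these are tail events modulo null sets, so each conditional law is again an FK Gibbs state. On \(E_{\mathrm{sel}}\), apply Lemma~\ref{lem:DLR_for_q} with \(q=1\). On the other two pieces, use the fair lift from Theorem~\ref{thm:FK-surjective}. A convex combination of these ES states is an ES state, so \(\pi^+_{J,p}\in\mathcal G^{\mathrm{ES}}_{J,\beta}\). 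Measurability in \(J\) of all these objects follows from the Borel choice of \(\nu_{J,p}\) and of the selector.

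\textbf{Step 3 (magnetization in terms of \(A_0\) and \(B_0\)).} Condition on \(\eta\) under \(\pi^+_{J,p}\). If \(0\) lies in a finite cluster, or in the infinite cluster off \(E_{\mathrm{sel}}\), its conditional mean spin is \(0\). On \(A_0\) the spin at \(0\) is \(+1\), and on \(B_0\) it is \(-1\). Hence
\[
\mu^+_{J,\beta}(\omega_0)=\nu_{J,p}(A_0)-\nu_{J,p}(B_0).
\]
Integrating over \(P_\lambda\) and disintegrating \(\rho^{\mathrm{FK}}\) gives the first two displayed identities.

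\textbf{Step 4 (origin probabilities to densities).} This is the main step. The plan is to use translation invariance of \(\rho^{\mathrm{FK}}\), which I expect to inherit from the periodic boundary conditions: the torus laws are invariant under torus translations, and this invariance passes to the weak limit. Four invariance facts are needed.
\begin{itemize}
\item Translating by \(x\) maps the infinite cluster to the infinite cluster.
\item Signed parities are preserved, because the couplings are shifted together with the bonds.
\item Densities along a van~Hove sequence are invariant under shifts: the boundary effect \(|\Lambda_n\triangle(\Lambda_n+x)|/|\Lambda_n|\to0\), so the selector commutes with shifts.
\item Consequently \(\rho^{\mathrm{FK}}(A_x)=\rho^{\mathrm{FK}}(A_0)\) for every \(x\), where \(A_x\) is the event that \(x\) lies in the selected companion.
\end{itemize}
Writing \(\rho^{\mathrm{FK}}(A_0)-\rho^{\mathrm{FK}}(B_0)\) as an average over \(x\in\Lambda_n\), we obtain
\[
\rho^{\mathrm{FK}}(A_0)-\rho^{\mathrm{FK}}(B_0)=\mathbb E\Bigl[\frac{|C_\infty^A\cap\Lambda_n|-|C_\infty^B\cap\Lambda_n|}{|\Lambda_n|}\Bigr].
\]
Letting \(n\to\infty\), the assumed existence of the densities together with bounded convergence yields \(m=\mathbb E[d_A-d_B]\).

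\textbf{Conclusion.} The integrand \(d_A-d_B\) vanishes off \(\mathcal E^{\mathrm{sf}}_\Gamma\) and is strictly positive on it. Therefore \(m>0\) if and only if \(\rho^{\mathrm{FK}}(\mathcal E^{\mathrm{sf}}_\Gamma)>0\).

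The point needing most care is the shift-covariance of the selector and the \(J\)-measurability in Step 4. I would verify shift-covariance explicitly using the van~Hove property, since shift invariance of the infinite-cluster event itself is standard.
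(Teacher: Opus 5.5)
Your proposal is correct and follows essentially the same route as the paper. You obtain the quenched FK Gibbs property from Lemma~\ref{lem:periodic-joint-ES} (via Theorem~\ref{thm:FK-surjective}), build \(\pi^+_{J,p}\) by combining the \(q=1\) lift on the selected tail event with the fair lift elsewhere, read off \(\mathbb E[\omega_0\mid\eta]=\mathbf 1_{A_0}-\mathbf 1_{B_0}\), and convert origin probabilities into the density imbalance using translation invariance, van~Hove shift-covariance of the selector, and bounded convergence, with measurability of the coloring handled through Lemma~\ref{lem:joint-density-measurability} exactly as in the paper.
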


\begin{proof}
Lemma~\ref{lem:periodic-joint-ES} supplies a joint ES lift of
\(\rho^{\mathrm{FK}}\) whose conditional laws given \(J\) are ES
Gibbs states. Their FK marginals are \(\nu_{J,p}\), and the assumed
\(0/1\) property holds under these conditional laws by disintegration.
Applying Theorem~\ref{thm:FK-surjective} to these ES states gives
\(\nu_{J,p}\in\mathcal G^{\mathrm{FK}}_{J,p}\) for
\(P_\lambda\)-almost every \(J\).
The existence of the lift \(\pi^+_{J,p}\) follows from the ES lifting
construction under the \(0/1\)-infinite-cluster property, together with the restricted
FK-projection statement for fixed bias \(q\).  In the selected tail event we
are using the case \(q=1\): the selected companion receives spin \(+1\), and
its signed companion receives spin \(-1\).  In the null and unselected tail events
we use the symmetric lift.  Since the conditioning events are tail events and the
corresponding conditional ES measures satisfy the same DLR specification,
the resulting lift is again an ES Gibbs state for the same interaction.
The coloring is a jointly Borel kernel in \((J,\eta)\): enumerate the
cluster roots for the independent fair signs, and use
Lemma~\ref{lem:joint-density-measurability} for the denser-companion choice.
Composition with \(\nu_{J,p}\) makes \(\pi^+_{J,p}\) and
\(\mu^+_{J,\beta}\) measurable kernels and defines \(\rho^+\).

We first compute the conditional expectation of \(\omega_0\) given the
quenched FK configuration.  By construction of \(\pi^+_{J,p}\),
\[
   \mathbb E_{\pi^+_{J,p}}\bigl[\omega_0\,\big|\,\eta\bigr]
   =
   \mathbf 1_{A_0}(\eta)-\mathbf 1_{B_0}(\eta).
\]
Indeed, if \(0\in A_0\), the selected companion carries spin \(+1\), while if
\(0\in B_0\), the other companion carries spin \(-1\).  On the complement of
\(A_0\cup B_0\), either there is no infinite FK cluster, or the infinite signed
companions are not selected by \(\mathcal S_{D,\Gamma}\), or \(0\) lies in a finite FK
cluster.  In all these cases the conditional spin assignment is symmetric, so
the conditional expectation of \(\omega_0\) is \(0\).

Integrating this identity first with respect to \(\pi^+_{J,p}\) and then over
\(J\), we obtain
\[
\begin{aligned}
   m
   &= \mathbb E_{P_\lambda}
      \bigl[
         \mu^+_{J,\beta}(\omega_0)
      \bigr]                                                   
   = \mathbb E_{P_\lambda}
      \bigl[
         \nu_{J,p}(A_0)-\nu_{J,p}(B_0)
      \bigr]                                                   \\
   &= \rho^{\mathrm{FK}}(A_0)-\rho^{\mathrm{FK}}(B_0).
\end{aligned}
\]
Here we used that \(A_0\) and \(B_0\) are FK events, so their probabilities
under the lifted measure \(\rho^+\) are the same as under the FK marginal
\(\rho^{\mathrm{FK}}\).

It remains to identify this difference with the density imbalance.  For
\(x\in\mathbb Z^d\), let \(A_x\) and \(B_x\) be the translates of \(A_0\) and
\(B_0\), namely the events that \(x\) belongs respectively to the selected
and to the other sf--FK companion.  Since \(\rho^{\mathrm{FK}}\) is a
periodic-boundary-condition weak limit, it is translation invariant.
The denser-companion choice along the fixed \(\Gamma\) is translation-covariant:
translating a box changes a density fraction by at most
\(\lvert(\Lambda_n-z)\triangle\Lambda_n\rvert/|\Lambda_n|\to0\).
Hence, for every \(n\),
\[
\begin{aligned}
   \rho^{\mathrm{FK}}(A_0)-\rho^{\mathrm{FK}}(B_0)
   &=
   \frac{1}{|\Lambda_n|}
   \sum_{x\in\Lambda_n}
   \Bigl(
      \rho^{\mathrm{FK}}(A_x)-\rho^{\mathrm{FK}}(B_x)
   \Bigr)                                                     \\
   &=
   \mathbb E_{\rho^{\mathrm{FK}}}
   \left[
      \frac{
         |C_\infty^A(\eta)\cap\Lambda_n|
         -
         |C_\infty^B(\eta)\cap\Lambda_n|
      }{|\Lambda_n|}
   \right],
\end{aligned}
\]
where the last expression is understood to be \(0\) outside the selected
tail event.

By the assumed existence of the densities \(d_A(\eta)\) and \(d_B(\eta)\), and
by bounded convergence,
\[
   \rho^{\mathrm{FK}}(A_0)-\rho^{\mathrm{FK}}(B_0)
   =
   \mathbb E_{\rho^{\mathrm{FK}}}
   \bigl[
      d_A(\eta)-d_B(\eta)
   \bigr].
\]
Therefore
\(
   m
   =
   \mathbb E_{\rho^{\mathrm{FK}}}
   \bigl[
      d_A(\eta)-d_B(\eta)
   \bigr].
\)

By construction,
\(
   d_A(\eta)-d_B(\eta)\ge 0,
\)
and this quantity is strictly positive exactly on
\(\mathcal E^{\mathrm{sf}}_\Gamma\), the event that the two signed
companions of the infinite FK cluster are density-distinguishable by the selector. Hence
\[
   m>0
   \quad\Longleftrightarrow\quad
   \rho^{\mathrm{FK}}
   \bigl(
      d_A(\eta)-d_B(\eta)>0
   \bigr)>0,
\]
which is precisely the final equivalence.
\end{proof}

\section{Replicas}
\subsection{Infinite-volume two-replica ES measures for fixed couplings \texorpdfstring{$J$}{J}}

To study replica overlap we use two replicas with the same quenched
couplings \(J\). We distinguish the full Gibbs simplex associated with
the product two-replica specification from its product subclass,
which describes independent replicas. The joint periodic-boundary
limits constructed in Section~\ref{subsec:periodic-joint-limits} give
conditional laws in the full two-replica class.

Let
\[
   \Omega^{2}:=\Omega^{(1)}\times\Omega^{(2)},
   \qquad
   \mathcal H^{2}:=\mathcal H^{(1)}\times\mathcal H^{(2)} ,
\]
with generic elements
\[
   \omega^{2}=(\omega^{(1)},\omega^{(2)}),
   \qquad
   \eta^{2}=(\eta^{(1)},\eta^{(2)}).
\]
For a finite volume \(\Lambda\Subset V\) and boundary spins
\(\omega^{(1)}_{\partial\Lambda},\omega^{(2)}_{\partial\Lambda}\), define
the two-replica ES kernel by
\[
   \pi^{2}_{J,\beta,\Lambda,\,
             \omega^{(1)}_{\partial\Lambda},
             \omega^{(2)}_{\partial\Lambda}}
   :=
   \pi^{(1)}_{J,\beta,\Lambda,\omega^{(1)}_{\partial\Lambda}}
     \otimes
   \pi^{(2)}_{J,\beta,\Lambda,\omega^{(2)}_{\partial\Lambda}} .
\]
Thus, conditionally on the two boundary spin configurations, the two
replicas are independent.

\begin{definition}[Full two-replica ES Gibbs measures]
\label{def:ES-DLR-2}
The set
\(
   \mathcal G^{\mathrm{ES},2}_{J,\beta}
\)
consists of all probability measures \(\pi^{2}\) on
\(\Omega^{2}\times\mathcal H^{2}\) satisfying the DLR equations with the
above product specification. More precisely, for every finite
$\Lambda\Subset V$, the conditional law of
$(\omega^{(1)}_\Lambda,\omega^{(2)}_\Lambda,
  \eta^{(1)}_{\overline E_\Lambda},\eta^{(2)}_{\overline E_\Lambda})$
given
\[
 \mathcal A^2_{\Lambda^c}:=
 \sigma\bigl(\omega^{(r)}_{V\setminus\Lambda},
              \eta^{(r)}_{E\setminus\overline E_\Lambda}:r=1,2\bigr)
\]
is $\pi^2_{J,\beta,\Lambda,
\omega^{(1)}_{\partial\Lambda},\omega^{(2)}_{\partial\Lambda}}$,
$\pi^2$-almost surely.
\end{definition}

\begin{definition}[Independent two-replica ES measures]
\label{def:ES-product-replicas}
The independent two-replica ES class is
\[
   \mathcal G^{\mathrm{ES},\otimes 2}_{J,\beta}
   :=
   \Bigl\{
      \pi^{(1)}\otimes\pi^{(2)}:
      \pi^{(1)},\pi^{(2)}\in\mathcal G^{\mathrm{ES}}_{J,\beta}
   \Bigr\}.
\]
Similarly,
\[
   \mathcal G^{\mathrm{SPIN},\otimes 2}_{J,\beta}
   :=
   \Bigl\{
      \mu^{(1)}\otimes\mu^{(2)}:
      \mu^{(1)},\mu^{(2)}\in\mathcal G^{\mathrm{SPIN}}_{J,\beta}
   \Bigr\}.
\]
When both replicas are sampled from the same state, we use the diagonal
subclasses
\[
   \mathcal G^{\mathrm{ES},\otimes 2,\mathrm{diag}}_{J,\beta}
   :=
   \Bigl\{
      \pi\otimes\pi:
      \pi\in\mathcal G^{\mathrm{ES}}_{J,\beta}
   \Bigr\},
\]
and
\[
   \mathcal G^{\mathrm{SPIN},\otimes 2,\mathrm{diag}}_{J,\beta}
   :=
   \Bigl\{
      \mu\otimes\mu:
      \mu\in\mathcal G^{\mathrm{SPIN}}_{J,\beta}
   \Bigr\}.
\]
\end{definition}

\begin{remark}
The full class \(\mathcal G^{\mathrm{ES},2}_{J,\beta}\) is convex and contains
all Gibbs states for the product two-replica ES specification.  We use
\(\mathcal G^{\mathrm{SPIN},2}_{J,\beta}\) for the analogous full spin Gibbs
class associated with the product two-replica spin specification.  The product
class \(\mathcal G^{\mathrm{ES},\otimes 2}_{J,\beta}\) is the class relevant
for independent replicas.  It is compact, but it is not convex in general.
Indeed, a convex combination of two product measures need not be a product
measure.
\end{remark}

\begin{theorem}[Two-replica analogue of Theorem~\ref{SpinProjSincleCopy}]
\label{thm:ES-2replica}
For every fixed coupling configuration \(J\in\{\pm1\}^{E}\) and inverse
temperature \(\beta\), the following hold.

\begin{enumerate}
\item[\textup{(a)}]
The full two-replica ES Gibbs class
\(\mathcal G^{\mathrm{ES},2}_{J,\beta}\) is non-empty, weak-* compact and
convex.

\item[\textup{(b)}]
\(\mathcal G^{\mathrm{ES},2}_{J,\beta}\) is the closed convex hull of weak
limits of finite-volume two-replica ES measures with deterministic boundary
conditions.

\item[\textup{(c)}]
\(\mathcal G^{\mathrm{ES},2}_{J,\beta}\) is a Choquet simplex. Its extremal
points are the tail-trivial two-replica ES Gibbs measures.

\item[\textup{(d)}]
Let
\[
   \Pi^{2}_{\mathrm{SPIN}}
   :
   (\omega^{(1)},\omega^{(2)},\eta^{(1)},\eta^{(2)})
   \longmapsto
   (\omega^{(1)},\omega^{(2)})
\]
be the projection onto the two spin configurations.  Then
\[
   \Pi^{2}_{\mathrm{SPIN}}
   :
   \mathcal G^{\mathrm{ES},2}_{J,\beta}
   \longrightarrow
   \mathcal G^{\mathrm{SPIN},2}_{J,\beta}
\]
is an affine isomorphism between the full two-replica ES and spin Gibbs
simplices.

Moreover, this isomorphism restricts to the independent product subclasses:
\[
   \Pi^{2}_{\mathrm{SPIN}}
   :
   \mathcal G^{\mathrm{ES},\otimes 2}_{J,\beta}
   \longrightarrow
   \mathcal G^{\mathrm{SPIN},\otimes 2}_{J,\beta}
\]
is a bijection.  Explicitly, if
\[
   \pi^{2}=\pi^{(1)}\otimes\pi^{(2)},
   \qquad
   \mu^{(i)}=\Pi_{\mathrm{SPIN}}\pi^{(i)},
\]
then
\[
   \Pi^{2}_{\mathrm{SPIN}}\pi^{2}
   =
   \mu^{(1)}\otimes\mu^{(2)}.
\]
Conversely, if
\[
   \mu^{2}=\mu^{(1)}\otimes\mu^{(2)}
   \in \mathcal G^{\mathrm{SPIN},\otimes 2}_{J,\beta},
\]
then its unique independent ES lift is
\[
   \pi^{2}
   =
   \pi^{(1)}\otimes\pi^{(2)},
\]
where \(\pi^{(i)}\) is the unique one-replica ES lift of \(\mu^{(i)}\).

In particular, the same restriction gives a bijection between the diagonal
subclasses:
\[
   \mathcal G^{\mathrm{ES},\otimes 2,\mathrm{diag}}_{J,\beta}
   \longleftrightarrow
   \mathcal G^{\mathrm{SPIN},\otimes 2,\mathrm{diag}}_{J,\beta}.
\]
\end{enumerate}
\end{theorem}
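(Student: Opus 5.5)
Parts (a)--(c) follow by rerunning the proof of Theorem~\ref{SpinProjSincleCopy}\textup{(a)--(c)} on the doubled index set. I will regard the two-replica system as a single specification on the countable index set \((V\times\{1,2\})\cup(E\times\{1,2\})\). The single-site spaces are finite, and the kernel \(\pi^{2}_{J,\beta,\Lambda,\omega^{(1)}_{\partial\Lambda},\omega^{(2)}_{\partial\Lambda}}\) is a product of two finite-range kernels with local hard-core constraints. It therefore depends on the exterior only through the finitely many boundary spins of both replicas, so it is a quasilocal finite-range specification with hard cores. Consistency of the product kernels follows from consistency of each factor, because the factors act on disjoint variables. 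The same general DLR theory then applies verbatim:
\begin{itemize}
\item weak limit points of finite-volume two-replica ES measures give non-emptiness;
\item linearity of the DLR equations gives convexity;
\item closedness of each DLR constraint for local \(f\) gives compactness;
\item Ruelle's Theorem~1.9 together with Section~2.8 of \cite{ruelle2004thermodynamic} gives the closed-convex-hull description;
\item the standard simplex theorem gives (c), with extremal states equal to tail-trivial states for the doubled tail sigma-field.
\end{itemize}

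For (d) on the full class, I will mirror the explicit construction in Theorem~\ref{SpinProjSincleCopy}\textup{(d)}. First, summing the product kernel over both bond configurations gives the product spin kernel \(\mu_{J,\beta,\Lambda,\omega^{(1)}_{\partial\Lambda}}\otimes\mu_{J,\beta,\Lambda,\omega^{(2)}_{\partial\Lambda}}\). Hence \(\Pi^2_{\mathrm{SPIN}}\) maps \(\mathcal G^{\mathrm{ES},2}_{J,\beta}\) into \(\mathcal G^{\mathrm{SPIN},2}_{J,\beta}\).

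Conversely, given \(\mu^2\) in \(\mathcal G^{\mathrm{SPIN},2}_{J,\beta}\), I define the lift \(\mathcal L^2(\mu^2):=\mu^2(d\omega^{(1)},d\omega^{(2)})\,\kappa_J(d\eta^{(1)}\mid\omega^{(1)})\,\kappa_J(d\eta^{(2)}\mid\omega^{(2)})\), which uses independent Bernoulli bonds on satisfied edges. The DLR check runs as follows. Conditioning on the exterior in \(\mathcal A^2_{\Lambda^c}\), the interior spins are distributed by the two-replica spin kernel. The bonds in \(\overline E_\Lambda\) are then conditionally independent given the spins, with the one-edge laws. Together these reproduce the product ES kernel.

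For uniqueness of the lift, I start from an arbitrary \(\pi^2\) in the full class. For a finite edge set \(F\subset\overline E_\Lambda\), the DLR equation shows that the conditional law of \((\eta^{(1)}_F,\eta^{(2)}_F)\) given all spins is \(\kappa_J\otimes\kappa_J\) restricted to \(F\). Letting \(F\) exhaust \(E\) gives \(\pi^2=\mathcal L^2(\Pi^2_{\mathrm{SPIN}}\pi^2)\). So \(\Pi^2_{\mathrm{SPIN}}\) and \(\mathcal L^2\) are mutually inverse affine maps. The one point needing care is exactly this step: the conditional bond law given the spins must be the product kernel even though \(\pi^2\) itself need not factorize. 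This holds because \(\kappa_J\) depends only on the spins, so no replica correlations are introduced into the bonds.

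For the product subclasses, the key identity is
\[
\mathcal L^2(\mu^{(1)}\otimes\mu^{(2)})=\mathcal L(\mu^{(1)})\otimes\mathcal L(\mu^{(2)}),
\]
which holds because the bond kernel factorizes replica by replica. Combined with \(\Pi^2_{\mathrm{SPIN}}(\pi^{(1)}\otimes\pi^{(2)})=\Pi_{\mathrm{SPIN}}\pi^{(1)}\otimes\Pi_{\mathrm{SPIN}}\pi^{(2)}\) and with Theorem~\ref{SpinProjSincleCopy}\textup{(d)} for each factor, this shows that the restriction maps \(\mathcal G^{\mathrm{ES},\otimes2}_{J,\beta}\) onto \(\mathcal G^{\mathrm{SPIN},\otimes2}_{J,\beta}\). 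It is injective because it is the restriction of the global bijection. Finally, the diagonal statement follows by taking \(\mu^{(1)}=\mu^{(2)}\), since then \(\mathcal L(\mu)\otimes\mathcal L(\mu)\) is diagonal.
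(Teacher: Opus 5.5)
Your proposal is correct and follows the paper's proof essentially step by step: parts (a)--(c) come from applying the general DLR theory to the product finite-range hard-core specification, and part (d) comes from the explicit spin-conditional bond lift, with uniqueness coming from the DLR-determined conditional bond law given the spins and factorization on product states. The only minor difference is on the product subclass. You obtain injectivity by restricting the global bijection, which in fact shows that the lift of \(\mu^{(1)}\otimes\mu^{(2)}\) is unique among all two-replica ES states and not only among independent ones. The paper instead invokes one-replica uniqueness factor by factor.
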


\begin{proof}
The full two-replica specification is the product of two copies of the
one-replica ES specification.  Therefore it is again a finite-range,
quasilocal specification with the same type of hard-core compatibility
constraint.  The usual DLR arguments used in
Theorem~\ref{SpinProjSincleCopy} apply verbatim to this product
specification: non-emptiness, compactness, convexity, the closed-convex-hull
description by finite-volume limits, and the Choquet simplex property all
follow exactly as in the one-replica case.  This proves
\textup{(a)--(c)}.

We next prove the projection statement.  

If
\(\pi^{2}\in\mathcal G^{\mathrm{ES},2}_{J,\beta}\), then its spin marginal
satisfies the two-replica spin DLR equations for the product spin
specification, because summing the two-replica ES kernel over the two bond
configurations gives the corresponding two-replica spin kernel.  Thus
\(
   \Pi^{2}_{\mathrm{SPIN}}\pi^{2}
   \in \mathcal G^{\mathrm{SPIN},2}_{J,\beta}.
\)

Conversely, let
\(
   \mu^{2}\in\mathcal G^{\mathrm{SPIN},2}_{J,\beta}.
\)
Conditionally on \((\omega^{(1)},\omega^{(2)})\), sample the FK variables
\(\eta^{(1)}\) and \(\eta^{(2)}\) independently according to the usual
one-replica Edwards--Sokal conditional bond kernels in the two replicas.
This gives a measure \(\pi^{2}\) on
\(\Omega^{2}\times\mathcal H^{2}\).  By construction, its spin marginal is
\(\mu^{2}\), and its finite-volume conditional distributions are exactly the
two-replica ES kernels.  Hence
\(
   \pi^{2}\in\mathcal G^{\mathrm{ES},2}_{J,\beta}.
\)

The conditional law of the bonds given the spins is fixed by the ES
construction, so this lift is unique.  Therefore
\[
   \Pi^{2}_{\mathrm{SPIN}}
   :
   \mathcal G^{\mathrm{ES},2}_{J,\beta}
   \to
   \mathcal G^{\mathrm{SPIN},2}_{J,\beta}
\]
is an affine bijection.  Since both sides are Choquet simplices, it is an
affine isomorphism of simplices.

Finally, suppose
\[
   \pi^{2}=\pi^{(1)}\otimes\pi^{(2)}
   \in \mathcal G^{\mathrm{ES},\otimes 2}_{J,\beta}.
\]
Then
\[
   \Pi^{2}_{\mathrm{SPIN}}\pi^{2}
   =
   (\Pi_{\mathrm{SPIN}}\pi^{(1)})
   \otimes
   (\Pi_{\mathrm{SPIN}}\pi^{(2)})
   \in \mathcal G^{\mathrm{SPIN},\otimes 2}_{J,\beta}.
\]
Conversely, if
\[
   \mu^{2}=\mu^{(1)}\otimes\mu^{(2)}
   \in \mathcal G^{\mathrm{SPIN},\otimes 2}_{J,\beta},
\]
then by Theorem~\ref{SpinProjSincleCopy} each \(\mu^{(i)}\) has a unique
one-replica ES lift \(\pi^{(i)}\).  Hence
\[
   \pi^{(1)}\otimes\pi^{(2)}
\]
is the unique independent ES lift of \(\mu^{2}\).  This proves the
bijection on the independent product subclasses.  The diagonal statement is
the special case \(\pi^{(1)}=\pi^{(2)}\) and
\(\mu^{(1)}=\mu^{(2)}\).

\end{proof}

\begin{proposition}[Extremal full two-replica states]
\label{prop:full-two-replica-extremals-factor}
Let
\(
   \rho\in\mathcal G^{\mathrm{ES},2}_{J,\beta}
\)
be extremal for the full product two-replica ES specification. Then
\[
   \rho=\pi^{(1)}\otimes\pi^{(2)}
\]
for some extremal one-replica ES states
\(
   \pi^{(1)},\pi^{(2)}\in\mathcal G^{\mathrm{ES}}_{J,\beta}.
\)
Consequently, the extremal points of the full two-replica ES simplex are
products of extremal one-replica ES states.
\end{proposition}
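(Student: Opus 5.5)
We prove that an extremal \(\rho\) is a product by showing that the two replica marginals are extremal one-replica states and that \(\rho\) has no correlations between the replicas. Let \(\rho^{(1)},\rho^{(2)}\) be the marginals of \(\rho\) on \((\omega^{(r)},\eta^{(r)})\).

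\emph{Step 1: the marginals are Gibbs states.} The two-replica kernel is a product of one-replica kernels. Hence, for a local \(f\) depending only on replica \(r\), integrating the other replica out of the two-replica DLR equation gives the one-replica DLR equation. So \(\rho^{(r)}\in\mathcal G^{\mathrm{ES}}_{J,\beta}\).

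\emph{Step 2: the marginals are extremal.} Suppose \(\rho^{(1)}=t\pi_a+(1-t)\pi_b\) with \(0<t<1\). For \(\pi_a\ll\rho^{(1)}\), the one-replica DLR property forces the density \(g=d\pi_a/d\rho^{(1)}\) to be measurable with respect to the one-replica tail \(\sigma\)-field, modulo null sets. This is the standard argument: \(g\) is \(\mathscr A^{\mathrm{ES}}_{\Lambda^c}\)-measurable for every \(\Lambda\). Define \(\rho_a:=g(\omega^{(1)},\eta^{(1)})\,\rho\). Since \(g\) is exterior-measurable for every \(\Lambda\) in the two-replica exterior \(\sigma\)-field, the computation in Lemma~\ref{lem:DLR_for_q} shows that \(\rho_a\) satisfies the two-replica DLR equations. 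Define \(\rho_b\) in the same way, with density \((1-tg)/(1-t)\). Then \(\rho=t\rho_a+(1-t)\rho_b\), and extremality of \(\rho\) forces \(\rho_a=\rho\). Therefore \(g\equiv1\) and \(\pi_a=\rho^{(1)}\). The same argument applies to replica 2.

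\emph{Step 3: factorization.} This is the main step. By Theorem~\ref{thm:ES-2replica}(c), \(\rho\) is tail-trivial for the two-replica tail \(\sigma\)-field \(\mathcal T^2=\bigcap_\Lambda\mathcal A^2_{\Lambda^c}\). Fix local \(f_1\) and \(f_2\) for the two replicas, supported in some \(\Lambda_0\). For \(\Lambda\supset\Lambda_0\), the product structure of the kernel gives
\[
\rho(f_1f_2\mid\mathcal A^2_{\Lambda^c})=\pi_{J,\beta,\Lambda,\omega^{(1)}_{\partial\Lambda}}(f_1)\,\pi_{J,\beta,\Lambda,\omega^{(2)}_{\partial\Lambda}}(f_2).
\]
Let \(\Lambda\uparrow V\). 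By backward martingale convergence the left side tends to \(\rho(f_1f_2\mid\mathcal T^2)=\rho(f_1f_2)\) almost surely, using tail triviality.

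The \(r\)-th factor on the right is \(\rho(f_r\mid\mathcal A^2_{\Lambda^c})\). Because the kernel factorizes, it is also a function of the replica-\(r\) exterior alone, so it equals \(\rho^{(r)}(f_r\mid\mathscr A^{\mathrm{ES}}_{\Lambda^c})\) evaluated on replica \(r\). By Step 2 and reverse martingale convergence in the marginal, it tends almost surely to \(\rho^{(r)}(f_r)\).

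Hence \(\rho(f_1f_2)=\rho^{(1)}(f_1)\rho^{(2)}(f_2)\). Products of such local functions form a determining class, so \(\rho=\rho^{(1)}\otimes\rho^{(2)}\).

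The point that needs care is the identification of each factor with the one-replica conditional expectation. This uses only the fact that the kernel in \(\Lambda\) depends on replica \(r\) solely through \(\omega^{(r)}_{\partial\Lambda}\), together with taking conditional expectations of the two-replica DLR identity onto the replica-\(r\) exterior \(\sigma\)-field. Because the factor is measurable with respect to that smaller \(\sigma\)-field, it coincides with the one-replica conditional expectation almost surely.
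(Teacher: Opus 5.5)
Your proof is correct and is essentially the paper's argument: the factorized two-replica kernel writes \(\rho(f_1f_2\mid\mathcal A^2_{\Lambda^c})\) as a product of one-replica kernel averages, and reverse martingale convergence together with tail triviality gives \(\rho(f_1f_2)=\rho^{(1)}(f_1)\rho^{(2)}(f_2)\). The only difference is your Step~2, whose density argument is valid but not needed: the paper gets tail triviality, and hence extremality, of each marginal in one line, because \(\mathscr T^{(r)}=\bigcap_\Lambda\mathscr A^{(r)}_{\Lambda^c}\subset\mathcal T^2\) is automatically trivial under \(\rho\).
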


\begin{proof}
Let \(\mathscr T^{(1)}\), \(\mathscr T^{(2)}\), and \(\mathscr T^2\) denote,
respectively, the tail sigma-fields of the first replica, of the second replica,
and of the full two-replica system. Since \(\rho\) is extremal for the full
product specification, it is tail-trivial; hence every event in
\(\mathscr T^2\) has \(\rho\)-probability \(0\) or \(1\). In particular,
\(\mathscr T^{(1)}\) and \(\mathscr T^{(2)}\), viewed as sub-sigma-fields of
\(\mathscr T^2\), are trivial under \(\rho\).

Let \(f_1\) and \(f_2\) be bounded local functions of the first and second
replicas, respectively. Let \((\Lambda_n)_{n\ge1}\) be an increasing sequence
of finite sets exhausting \(V\) and containing the supports of both functions. Since the
finite-volume two-replica ES specification is the product of the one-replica
specifications, the DLR equations give
\[
   \mathbb E_\rho
   \left[
      f_1 f_2
      \mid
      \mathscr A^2_{\Lambda_n^c}
   \right]
   =
   \Gamma^{(1)}_{\Lambda_n}f_1\,
   \Gamma^{(2)}_{\Lambda_n}f_2,
\]
where \(\mathscr A^2_{\Lambda_n^c}\) is the two-replica exterior sigma-field
and \(\Gamma^{(i)}_{\Lambda_n}\) is the one-replica ES specification operator
in replica \(i\). Letting \(n\to\infty\) and using the reverse martingale
convergence theorem yields
\[
   \mathbb E_\rho
   \left[
      f_1 f_2
      \mid
      \mathscr T^2
   \right]
   =
   \mathbb E_\rho
   \left[
      f_1
      \mid
      \mathscr T^{(1)}
   \right]
   \mathbb E_\rho
   \left[
      f_2
      \mid
      \mathscr T^{(2)}
   \right].
\]
By tail triviality, the right-hand side is
\(
   \mathbb E_\rho[f_1]\,\mathbb E_\rho[f_2],
\)
and the left-hand side is \(\mathbb E_\rho[f_1f_2]\). Therefore
\[
   \mathbb E_\rho[f_1f_2]
   =
   \mathbb E_\rho[f_1] \mathbb E_\rho[f_2]
\]
for all bounded local \(f_1,f_2\). By a monotone class argument, the two
replicas are independent under \(\rho\). Thus
\(
   \rho=\pi^{(1)}\otimes\pi^{(2)},
\)
where \(\pi^{(i)}\) is the \(i\)-th marginal of \(\rho\).

Each marginal is a one-replica ES Gibbs state, because summing the full
product DLR equations over the other replica gives the one-replica ES DLR
equations. Each marginal is also tail-trivial, since \(\mathscr T^{(i)}\) is
trivial under \(\rho\). Hence \(\pi^{(1)}\) and \(\pi^{(2)}\) are extremal
one-replica ES states.

Conversely, the same reverse-martingale argument for products of local
functions shows that a product of two tail-trivial one-replica Gibbs
states is tail-trivial for the full two-replica specification, and
hence extremal.
\end{proof}

\begin{remark}
The full class \(\mathcal G^{\mathrm{ES},2}_{J,\beta}\) contains both
auxiliary tail-biased states and the conditional ES laws obtained by
disintegrating joint periodic-boundary limits over the couplings.
The latter construction is given in Lemma~\ref{lem:periodic-joint-ES}.
Corollary~\ref{cor:periodic_pc_overlap} applies the product-extremal
overlap theorem to the extremal components of these conditional laws.
\end{remark}

We also use the independent two-replica FK class
\[
   \mathcal G^{\mathrm{FK},\otimes 2}_{J,p}
   :=
   \Bigl\{
      \nu^{(1)}\otimes\nu^{(2)}:
      \nu^{(1)},\nu^{(2)}\in\mathcal G^{\mathrm{FK}}_{J,p}
   \Bigr\}.
\]

\begin{theorem}[Two--replica FK marginals of independent ES states]
\label{thm:FK-2replica-surjective}
Let
\[
   \mathcal G^{\mathrm{ES},\otimes 2}_{\le1}
   :=
   \Bigl\{
      \pi^{(1)}\otimes \pi^{(2)}:
      \pi^{(r)}\in\mathcal G^{\mathrm{ES}}_{J,\beta},
      \ \pi^{(r)}(N^{(r)}_\infty\le1)=1,\ r=1,2
   \Bigr\},
\]
and
\[
   \mathcal G^{\mathrm{FK},\otimes 2}_{\le1}
   :=
   \Bigl\{
      \nu^{(1)}\otimes \nu^{(2)}:
      \nu^{(r)}\in\mathcal G^{\mathrm{FK}}_{J,p},
      \ \nu^{(r)}(N^{(r)}_\infty\le1)=1,\ r=1,2
   \Bigr\}.
\]
Here \(N^{(r)}_\infty\) denotes the number of ordinary infinite FK clusters
in replica \(r\).

Then the map
\[
   \Pi^{\otimes 2}_{\mathrm{FK}}:
     \mathcal G^{\mathrm{ES},\otimes 2}_{\le1}
     \longrightarrow
     \mathcal G^{\mathrm{FK},\otimes 2}_{\le1}
\]
is surjective.  In other words, every independent product two--replica FK
Gibbs state whose one--replica marginals satisfy the \(0/1\)-infinite-cluster property is the
bond marginal of an independent product two--replica Edwards--Sokal state
for the same coupling realization \(J\).

Moreover, every state in either class has at most one infinite paired
FK cluster (p--FK cluster, defined in Section~\ref{subsec:pc-clusters})
almost surely.
\end{theorem}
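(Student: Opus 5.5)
The surjectivity reduces factorwise to Theorem~\ref{thm:FK-surjective}. Given \(\nu^{(1)}\otimes\nu^{(2)}\in\mathcal G^{\mathrm{FK},\otimes 2}_{\le1}\), each \(\nu^{(r)}\) lies in \(\mathcal G^{\mathrm{FK}}_{\le1}\). By that theorem there is \(\pi^{(r)}\in\mathcal G^{\mathrm{ES}}_{\le1}\) with \(\Pi_{\mathrm{FK}}\pi^{(r)}=\nu^{(r)}\); concretely, \(\pi^{(r)}\) can be taken to be the fair-coloring lift from Step~2 of that proof. Then \(\pi^{(1)}\otimes\pi^{(2)}\in\mathcal G^{\mathrm{ES},\otimes 2}_{\le1}\). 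Its bond marginal is \(\nu^{(1)}\otimes\nu^{(2)}\), because the projection \(\Pi^{\otimes2}_{\mathrm{FK}}\) acts coordinatewise and marginals of a product measure are products of marginals.

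For the map to be well defined, I also check that \(\Pi^{\otimes2}_{\mathrm{FK}}\) sends \(\mathcal G^{\mathrm{ES},\otimes 2}_{\le1}\) into \(\mathcal G^{\mathrm{FK},\otimes 2}_{\le1}\). This is Step~1 of Theorem~\ref{thm:FK-surjective}, applied to each factor. The \(0/1\) property of each marginal holds because the replica-\(r\) event \(\{N^{(r)}_\infty\le1\}\) has the same probability under the product as under \(\pi^{(r)}\). Independence of the replicas under the product is immediate. No new DLR verification is needed for the full two-replica specification, since the target class is defined as products of one-replica FK Gibbs states.

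For the final assertion, I use that a paired FK cluster is, as announced in the Introduction and in Lemma~\ref{lem:pairedAsIntersectionOfFKClusters}, a connected component of the relation ``FK-connected in both replicas''. Such a component is contained in the intersection \(\overline C^{(1)}\cap\overline C^{(2)}\) of an ordinary FK cluster of each replica. An infinite paired cluster therefore lies inside an infinite FK cluster in both replicas. On the full-measure event \(\{N^{(1)}_\infty\le1\}\cap\{N^{(2)}_\infty\le1\}\), those clusters are the unique ones \(\mathcal C^{(1)}_\infty,\mathcal C^{(2)}_\infty\). Any two vertices of \(\mathcal C^{(1)}_\infty\cap\mathcal C^{(2)}_\infty\) are FK-connected in both replicas, so they belong to the same paired cluster. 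Hence there is at most one infinite paired cluster, namely \(\mathcal C^{(1)}_\infty\cap\mathcal C^{(2)}_\infty\) when that intersection is infinite. This holds for both classes, since the argument uses only the bond marginals.

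The main obstacle is this last step. It depends on how Section~\ref{subsec:pc-clusters} defines a p--FK cluster: double connectivity through possibly different paths in each replica, versus connectivity through bonds open in both replicas. If it is the former, the containment and merging argument above goes through. If the definition is via commonly open bonds, uniqueness of the paired infinite cluster does not follow from one-replica uniqueness alone, and would need a separate Burton--Keane type argument. So I would first confirm that the paper's definition is the ``connected in both copies'' relation, consistent with the description of \(s(x,y)\) in the Introduction, and then run the intersection argument.
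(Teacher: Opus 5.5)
Your proposal is correct and follows the paper's proof essentially step for step. You apply Theorem~\ref{thm:FK-surjective} factorwise, once for well-definedness and once for surjectivity. For the last claim you identify p--FK clusters with intersections of ordinary FK clusters, which is Lemma~\ref{lem:pairedAsIntersectionOfFKClusters}. The paper defines p--FK clusters by connectivity in both replicas along possibly different paths, not by commonly open bonds, so your intersection argument applies exactly as written.
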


\begin{proof}[Proof of Theorem~\ref{thm:FK-2replica-surjective}]
The proof uses the one--replica surjectivity result, applied separately in
the two replicas.

\smallskip
\noindent
\emph{Step 1: the FK marginal of an independent two--replica ES state is an
independent two--replica FK state.}
Let
\[
   \pi^{\otimes 2}
   =
   \pi^{(1)}\otimes\pi^{(2)}
   \in
   \mathcal G^{\mathrm{ES},\otimes 2}_{\le1}.
\]
For each \(r=1,2\), set
\(
   \nu^{(r)}:=\Pi_{\mathrm{FK}}\pi^{(r)}.
\)
Since
\[
   \pi^{(r)}\in\mathcal G^{\mathrm{ES}}_{J,\beta},
   \qquad
   \pi^{(r)}(N^{(r)}_\infty\le1)=1,
\]
the one--replica FK marginal theorem implies
\[
   \nu^{(r)}\in\mathcal G^{\mathrm{FK}}_{J,p},
   \qquad
   \nu^{(r)}(N^{(r)}_\infty\le1)=1.
\]
Therefore
\[
   \Pi^{\otimes 2}_{\mathrm{FK}}\pi^{\otimes 2}
   =
   \nu^{(1)}\otimes\nu^{(2)}
   \in
   \mathcal G^{\mathrm{FK},\otimes 2}_{\le1}.
\]
Thus
\(
   \Pi^{\otimes 2}_{\mathrm{FK}}
   \bigl(
      \mathcal G^{\mathrm{ES},\otimes 2}_{\le1}
   \bigr)
   \subseteq
   \mathcal G^{\mathrm{FK},\otimes 2}_{\le1}.
\)

\smallskip
\noindent
\emph{Step 2: surjectivity.}
Let
\[
   \nu^{\otimes 2}
   =
   \nu^{(1)}\otimes\nu^{(2)}
   \in
   \mathcal G^{\mathrm{FK},\otimes 2}_{\le1}.
\]
By definition,
\[
   \nu^{(r)}\in\mathcal G^{\mathrm{FK}}_{J,p},
   \qquad
   \nu^{(r)}(N^{(r)}_\infty\le1)=1,
   \qquad r=1,2.
\]
By the one--replica surjectivity theorem, for each \(r=1,2\) there exists
\(
   \pi^{(r)}\in\mathcal G^{\mathrm{ES}}_{J,\beta}
\)
such that
\[
   \Pi_{\mathrm{FK}}\pi^{(r)}=\nu^{(r)}
   \qquad\text{and}\qquad
   \pi^{(r)}(N^{(r)}_\infty\le1)=1.
\]
Now define
\(
   \pi^{\otimes 2}
   :=
   \pi^{(1)}\otimes\pi^{(2)}.
\)
Since each factor is an ES Gibbs state for the same quenched coupling
configuration \(J\), the product measure \(\pi^{\otimes 2}\) satisfies the
two--replica product ES specification.  Hence
\(
   \pi^{\otimes 2}
   \in
   \mathcal G^{\mathrm{ES},\otimes 2}_{\le1}.
\)
Moreover,
\[
   \Pi^{\otimes 2}_{\mathrm{FK}}\pi^{\otimes 2}
   =
   (\Pi_{\mathrm{FK}}\pi^{(1)})
   \otimes
   (\Pi_{\mathrm{FK}}\pi^{(2)})
   =
   \nu^{(1)}\otimes\nu^{(2)}
   =
   \nu^{\otimes 2}.
\]
This proves surjectivity.

\smallskip
\noindent
\emph{Step 3: uniqueness of the infinite p--FK cluster.}
It remains only to record why the paired \(0/1\)-infinite-cluster property follows from the
one--replica \(0/1\)-infinite-cluster property in each copy.  A p--FK cluster is an
intersection of an ordinary FK cluster in replica \(1\) with an ordinary FK
cluster in replica \(2\).  If each replica has at most one infinite ordinary
FK cluster, then there is at most one possible intersection of two infinite
ordinary FK clusters.  Any intersection involving a finite ordinary FK
cluster is finite.  Hence there is at most one infinite p--FK cluster almost
surely.

Therefore every state in
\(\mathcal G^{\mathrm{ES},\otimes 2}_{\le1}\) and in
\(\mathcal G^{\mathrm{FK},\otimes 2}_{\le1}\) has
\(N^p_\infty\le1\) almost surely.
\end{proof}

\subsection{Joint periodic-boundary limits}
\label{subsec:periodic-joint-limits}

Fix \(\lambda\in[0,1]\), \(0\le\beta<\infty\), and
\(p=1-e^{-2\beta}<1\). On the torus
\(\mathbb T_L=(\mathbb Z/L\mathbb Z)^d\), sample independent couplings
with \(P(J_e=+1)=\lambda\). Given these couplings, sample \(k\in\{1,2\}\)
independent copies of the normalized periodic ES measure. Extend the
couplings, spins, and bonds periodically to \(\mathbb Z^d\), and denote
their joint law by \(\rho_L^{\mathrm{ES},k}\).

\begin{lemma}[Quenched Gibbs laws in joint periodic limits]
\label{lem:periodic-joint-ES}
Every weak limit \(\rho^{\mathrm{ES},k}\) of these laws along a
deterministic subsequence \(L\to\infty\) is translation invariant and
has coupling marginal \(P_\lambda\). In a disintegration
\[
 \rho^{\mathrm{ES},k}(dJ,dz)=P_\lambda(dJ)\,\pi_J^k(dz),
\]
where \(z\) denotes all spin and bond variables, the conditional law
belongs to \(\mathcal G^{\mathrm{ES}}_{J,\beta}\) for \(k=1\), and to
\(\mathcal G^{\mathrm{ES},2}_{J,\beta}\) for \(k=2\), for
\(P_\lambda\)-almost every \(J\).

If each replica has at most one infinite ordinary FK cluster under the
joint limit, then each one-replica FK marginal of \(\pi_J^k\) belongs
to \(\mathcal G^{\mathrm{FK}}_{J,p}\) for almost every \(J\).
Every joint weak limit of the corresponding finite-volume periodic FK
laws is the FK marginal of a joint ES limit with the asserted stationarity
and quenched ES Gibbs properties.
\end{lemma}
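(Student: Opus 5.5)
The proof splits into three parts: translation invariance and the coupling marginal, the quenched DLR property of the disintegration, and the FK statements.

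\emph{Invariance and coupling marginal.} Each \(\rho_L^{\mathrm{ES},k}\) is invariant under translations of \(\mathbb Z^d\). The torus model is invariant under translations of \(\mathbb T_L\), and the periodic extension commutes with these translations. Invariance under a fixed translation is preserved by weak limits, since for local \(f\) the map \(\rho\mapsto\int f\circ\theta_z\,d\rho-\int f\,d\rho\) is continuous. The coupling marginal of \(\rho_L^{\mathrm{ES},k}\) restricted to any fixed finite edge set is a product of \(\lambda\)-Bernoulli laws once \(L\) is large enough that the set embeds in the torus without wrapping. Hence every weak limit has coupling marginal \(P_\lambda\), and the disintegration \(P_\lambda(dJ)\,\pi_J^k(dz)\) exists because the space is Polish.

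\emph{Quenched DLR property.} Fix a finite \(\Lambda\) and a bounded local function \(f\) of the variables in \(\Lambda\). For large \(L\), the torus ES measure conditioned on the couplings and on all variables outside \(\Lambda\) (with the same bond convention \(\overline E_\Lambda\)) equals the product over replicas of the kernel \(\pi_{J,\beta,\Lambda,\omega_{\partial\Lambda}}\). This is the finite-volume statement \eqref{ESCond} on the torus, since the kernel depends on the exterior only through boundary spins, which is Remark~\ref{ESSpinDepOnly}. It gives, for every bounded local \(g\) of couplings and exterior variables,
\[
\int g\,f\,d\rho_L^{\mathrm{ES},k}=\int g\,\Gamma^{\mathrm{ES}}_\Lambda f\,d\rho_L^{\mathrm{ES},k}.
\]
Here \(\Gamma^{\mathrm{ES}}_\Lambda f\) is a bounded local function, since it depends on the finitely many couplings in \(\overline E_\Lambda\) and on the boundary spins. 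Therefore the identity passes to the weak limit.

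Letting \(g=g_1(J)g_2(z)\) range over a countable determining class, and \(f,\Lambda\) over countable families, yields
\[
\int g_2\,f\,d\pi_J^k=\int g_2\,\Gamma^{\mathrm{ES}}_\Lambda f\,d\pi_J^k
\]
for \(P_\lambda\)-a.e.\ \(J\), with one common null set. This is exactly Definition~\ref{def:ES-DLR} for \(k=1\), and Definition~\ref{def:ES-DLR-2} for \(k=2\) because the kernel is a product.

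The main obstacle is this passage from annealed DLR identities to quenched ones. The annealed identity only controls averages of \(g_1(J)\) against \(J\mapsto\int g_2(f-\Gamma f)\,d\pi_J^k\). Using all local \(g_1\), together with the fact that local cylinder functions determine measures on \(\{\pm1\}^E\), forces this function to vanish \(P_\lambda\)-a.s. Countability then gives the common null set.

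\emph{FK statements.} Under the \(0/1\) assumption in each replica, the assumption transfers to \(P_\lambda\)-a.e.\ \(\pi_J^k\) by disintegration. For \(k=2\), I would first note that the one-replica marginal of a full two-replica ES Gibbs state is a one-replica ES Gibbs state, obtained by summing the product kernel over the other replica. Step~1 of Theorem~\ref{thm:FK-surjective} then shows that its FK marginal lies in \(\mathcal G^{\mathrm{FK}}_{J,p}\).

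For the last claim, take a weak limit of the periodic FK laws along a subsequence. The ES laws along the same subsequence are tight, because the space is compact, so a further subsequence converges. The FK projection is continuous, so the FK limit is the marginal of this joint ES limit, and the first part of the lemma applies to that ES limit.
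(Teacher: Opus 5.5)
Your proposal is correct and follows essentially the same route as the paper: translation invariance of the torus laws, and wrap-free embedding of finite edge sets for the coupling marginal; passage of the finite-volume identities $\int g f\,d\rho_L^{\mathrm{ES},k}=\int g\,\Gamma^{\mathrm{ES}}_\Lambda f\,d\rho_L^{\mathrm{ES},k}$ to the weak limit; disintegration over $J$ with a countable determining family; Theorem~\ref{thm:FK-surjective} for the FK marginals; and compactness for the ES lift of an FK limit. The one point to state more carefully is the order of quantifiers: $L$ must be large depending on the supports of $f$, $g$ and the kernel, not fixed before $g$. Otherwise the support of $g$ can contain periodic copies of $\Lambda$ or $\overline E_\Lambda$, and the periodically extended configuration outside $\Lambda$ would determine the variables inside it. The paper handles this by requiring all test coordinates to embed in the torus without identifications.
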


\begin{proof}
The finite joint laws are invariant under translations of the torus,
and hence their periodic extensions are translation invariant. This
property passes to weak limits. Their coupling marginals converge to
\(P_\lambda\), because every fixed finite set of lattice edges embeds
without identifications in all sufficiently large tori.

Write \(\gamma^{k}_{J,\Lambda}\) for the one-replica ES specification
kernel when \(k=1\), and for the product kernel in
Definition~\ref{def:ES-DLR-2} when \(k=2\). Fix a finite \(\Lambda\),
a bounded local function \(f\) of the ES variables, and a bounded
cylinder function \(g\) of \(J\) and the exterior ES variables.
For sufficiently large \(L\), all coordinates entering these tests
and the local kernel embed in the torus without identifications.
The finite-volume DLR equations then give
\[
 \int f g\,d\rho_L^{\mathrm{ES},k}
 =
 \int g\,\gamma^{k}_{J,\Lambda}f\,d\rho_L^{\mathrm{ES},k}.
\]
Both integrands are continuous cylinder functions: the ES kernels
depend on finitely many coupling, spin, and bond coordinates.
The identity therefore passes to the joint weak limit. A monotone
class argument extends the exterior tests. Disintegrating over \(J\)
and using a countable determining family of local DLR identities shows
that \(\pi_J^k\) satisfies the required quenched ES specification for
almost every \(J\).

The one-replica marginals of a full two-replica ES Gibbs state are
one-replica ES Gibbs states. The assumed \(0/1\) properties pass to
these conditional marginals by disintegration, so
Theorem~\ref{thm:FK-surjective} gives the asserted FK Gibbs property.

Finally, given a convergent sequence of joint periodic FK laws, take
their finite-volume ES lifts. Compactness of the joint configuration
space supplies a further deterministic subsequence along which these
lifts converge. Projection preserves weak convergence, so the resulting
ES limit has the prescribed joint FK marginal. The preceding arguments
apply to this lift.
\end{proof}

For two replicas, the conditional law \(\pi_J^2\) may be a mixture of
products of extremal one-replica states. Its extremal decomposition is
described by Theorem~\ref{thm:ES-2replica} and
Proposition~\ref{prop:full-two-replica-extremals-factor}.

\section{Paired coherent FK-clusters and paired coherent FK percolation}

Sections~\ref{SFFK1} and~\ref{SFFK2} developed the signed FK description
of uniform magnetization density. In this and the following section we
develop the corresponding description of replica overlap.

The main difference is that we now consider two replicas with the same
quenched couplings. We define a paired coherent FK connection, or
pc--FK connection, between two vertices \(x,y\) as the existence of an
FK-open path from \(x\) to \(y\) in each replica, with the two paths
having the same parity. If \(x\) and \(y\) are FK-connected in both
replicas but the two path parities are opposite, we say that \(x\) and
\(y\) are paired incoherently FK-connected, or pi--FK-connected. The FK
paths in the two replicas need not use the same edges.

Equivalently, in an ES-compatible two-replica spin configuration, write
\[
   q_x:=\omega^{(1)}_x\omega^{(2)}_x .
\]
If \(x\) and \(y\) are pc--FK-connected, then \(q_x=q_y\); if they are
pi--FK-connected, then \(q_x=-q_y\). Thus pc--FK connectivity preserves
the two classes
\[
\left\{
\begin{bmatrix}1\\ 1\end{bmatrix},
\begin{bmatrix}-1\\ -1\end{bmatrix}
\right\},
\qquad
\left\{
\begin{bmatrix}1\\ -1\end{bmatrix},
\begin{bmatrix}-1\\ 1\end{bmatrix}
\right\},
\]
whereas pi--FK connectivity switches them.

We now formalize this notion of connection. Fix \(J\), and consider two
independent FK Gibbs states with the same quenched couplings,
\(
   \nu^{\otimes 2}_{J,p}
   =
   \nu^{(1)}_{J,p}\otimes \nu^{(2)}_{J,p}.
\)
Fix a pair of FK bond configurations
\(
   \eta^{\otimes 2}=(\eta^{(1)},\eta^{(2)})
\)
sampled from this product measure and compatible with \(J\).

For \(r=1,2\), if \(x\) and \(y\) are FK-connected in replica \(r\), let
\(
   \varepsilon_r(x,y)\in\{-1,+1\}
\)
be the product of the couplings \(J_b\) along any FK-open path from \(x\)
to \(y\) in replica \(r\). This is well-defined because the FK
configuration is compatible with \(J\), so no FK-open circuit has negative
coupling product. Define
\[
   s(x,y)
   =
   \begin{cases}
      +1, & \text{if \(x\) and \(y\) are FK-connected in both replicas and }
             \\ &\varepsilon_1(x,y)=\varepsilon_2(x,y),\\[2pt]
      -1, & \text{if \(x\) and \(y\) are FK-connected in both replicas and }
            \\& \varepsilon_1(x,y)=-\varepsilon_2(x,y),\\[2pt]
       0, & \text{otherwise.}
   \end{cases}
\]
This agrees with the definition in~\eqref{SignedDoubleConn}.

We say that \(x,y\in V\) are paired coherently FK-connected, or
pc--FK-connected, and write \(x\xLeftrightarrow{c} y\), if \(s(x,y)=+1\),
and paired incoherently FK-connected, or
pi--FK-connected, and write \(x\xLeftrightarrow{i} y\), if \(s(x,y)=-1\).

\subsection{Example (paired connectivity along different routes)}
\label{subsec:paired-example}
Consider the square with vertices \(x,a,y,b\) in cyclic order, and
identify each FK configuration with its set of occupied bonds:
\[
  \eta^{(1)}=\{xa,ay\},\qquad
  \eta^{(2)}=\{xb,by\}.
\]
The two replicas use the same coupling realization \(J\) in each
case shown in Figure~\ref{fig:paired-routes}. The paired classes and coherent companions used in this
example are defined in Section~\ref{subsec:pc-clusters}.
Write \(\varepsilon_r(x,y)\) for the product of the
couplings along the occupied route from \(x\) to \(y\) in replica \(r\).

\begin{figure}[!htb]
\centering
\begin{tikzpicture}[x=1.02cm,y=1.02cm,line cap=round,line join=round,
  vertex/.style={circle,draw=black,fill=white,inner sep=0pt,minimum size=5pt},
  first/.style={draw=replicaone,line width=1.8pt},
  second/.style={draw=replicatwo,line width=1.8pt,dash pattern=on 5pt off 2.6pt}]
  \begin{scope}
    \node[font=\bfseries] at (1,2.65) {(a) Coherent connection};
    \draw[first] (0,2)--(2,2)--(2,0);
    \draw[second] (0,2)--(0,0)--(2,0);
    \node[vertex,label=above left:{$x$}] at (0,2) {};
    \node[vertex,label=above right:{$a$}] at (2,2) {};
    \node[vertex,label=below right:{$y$}] at (2,0) {};
    \node[vertex,label=below left:{$b$}] at (0,0) {};
    \node[above,font=\small] at (1,2) {$+1$};
    \node[right,font=\small] at (2,1) {$+1$};
    \node[left,font=\small] at (0,1) {$+1$};
    \node[below,font=\small] at (1,0) {$+1$};
    \node at (1,-0.68) {$\varepsilon_1=+1,\quad\varepsilon_2=+1$};
    \node at (1,-1.08) {$s(x,y)=+1$};
  \end{scope}
  \begin{scope}[xshift=6.3cm]
    \node[font=\bfseries] at (1,2.65) {(b) Incoherent connection};
    \draw[first] (0,2)--(2,2)--(2,0);
    \draw[second] (0,2)--(0,0)--(2,0);
    \node[vertex,label=above left:{$x$}] at (0,2) {};
    \node[vertex,label=above right:{$a$}] at (2,2) {};
    \node[vertex,label=below right:{$y$}] at (2,0) {};
    \node[vertex,label=below left:{$b$}] at (0,0) {};
    \node[above,font=\small] at (1,2) {$+1$};
    \node[right,font=\small] at (2,1) {$+1$};
    \node[left,font=\small] at (0,1) {$+1$};
    \node[below,font=\small] at (1,0) {$-1$};
    \node at (1,-0.68) {$\varepsilon_1=+1,\quad\varepsilon_2=-1$};
    \node at (1,-1.08) {$s(x,y)=-1$};
  \end{scope}
  \draw[first] (-0.8,-1.73)--(-0.05,-1.73);
  \node[anchor=west,font=\small] at (0.12,-1.73) {Occupied only in replica 1};
  \draw[second] (5.0,-1.73)--(5.75,-1.73);
  \node[anchor=west,font=\small] at (5.92,-1.73) {Occupied only in replica 2};
\end{tikzpicture}

\caption{Paired connectivity along different routes. Bond labels
indicate coupling signs, independently of the line styles that indicate
occupation by replica.}
\label{fig:paired-routes}
\end{figure}

In both cases, \(x\) and \(y\) are connected in each replica, although
\(\eta^{(1)}\cap\eta^{(2)}=\varnothing\), so they are not connected by
a path of commonly occupied bonds. Their ordinary FK clusters are
\[
 C^{(1)}(x)=\{x,a,y\},\qquad C^{(2)}(x)=\{x,b,y\},
\]
and their paired class is the vertex-set intersection
\[
 \overline C(x)=C^{(1)}(x)\cap C^{(2)}(x)=\{x,y\}.
\]
In particular, the paired class is not itself nearest-neighbor connected.

In case~(a), all couplings are positive. The two route parities agree,
so \(\{x,y\}\) is one coherent companion and the other is empty.
In case~(b), only \(J_{by}\) is negative. The route parities are
opposite, so the two coherent companions are \(\{x\}\) and \(\{y\}\).
Each replica's occupied graph is a forest and is therefore unfrustrated,
even though their union is a frustrated circuit in case~(b).

For any compatible pair of spin assignments, define
\(q_z=\omega_z^{(1)}\omega_z^{(2)}\). Compatibility along the two routes gives
\[
 q_y=\varepsilon_1(x,y)\varepsilon_2(x,y)q_x=s(x,y)q_x.
\]
Thus the endpoint overlaps are equal in case~(a) and opposite in
case~(b). Coherence also occurs when both route parities are negative:
for example, if \(J_{xa}=J_{xb}=-1\) and \(J_{ay}=J_{by}=+1\), then
\(\varepsilon_1(x,y)=\varepsilon_2(x,y)=-1\) and \(s(x,y)=+1\).

\subsection{pc-FK clusters}
\label{subsec:pc-clusters}

A pc--FK cluster is a maximal set of vertices \(C\subseteq V\) such that
\[
   s(x,y)=+1
   \qquad\text{for all }x,y\in C.
\]
The definition is well posed because pc--FK connectivity is transitive:
if \(s(x,y)=+1\) and \(s(y,z)=+1\), then \(s(x,z)=+1\). Hence \(V\)
is partitioned into pc--FK clusters, some possibly with one vertex.

\begin{lemma}
\label{Lemma1tt}
For each pc--FK cluster \(C\) there is a possibly empty companion
set \(\widetilde C\) such that each pair of vertices
\(x\in C\) and \(y\in\widetilde C\) is pi--FK-connected.
When nonempty, \(\widetilde C\) is a pc--FK cluster.
\end{lemma}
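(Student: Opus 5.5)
The argument follows the template of Lemma~\ref{Lemma1t}, with the single-replica parity $r(x,y)$ replaced by the paired sign $s(x,y)=\varepsilon_1(x,y)\varepsilon_2(x,y)$. This product is defined whenever $x$ and $y$ are FK-connected in both replicas. The basic tool is multiplicativity of the route parities. For $r=1,2$, if $x,y$ and $y,z$ are FK-connected in replica $r$, then concatenating FK-open paths gives an FK-open path from $x$ to $z$. By compatibility of $\eta^{(r)}$ with $J$, its coupling product is independent of the chosen path, so $\varepsilon_r(x,z)=\varepsilon_r(x,y)\varepsilon_r(y,z)$, and also $\varepsilon_r(y,x)=\varepsilon_r(x,y)$. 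Consequently, whenever $s(x,y)\neq0$ and $s(y,z)\neq0$, we have $s(x,z)=s(x,y)s(y,z)\neq0$, together with symmetry $s(x,y)=s(y,x)$. This single identity covers transitivity of coherent connection and the two rules that incoherent followed by incoherent is coherent and that coherent followed by incoherent is incoherent.

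Fix $x\in C$ and define $\widetilde C:=\{y\in V:\ s(x,y)=-1\}$. First I show that $\widetilde C$ does not depend on $x$. If $x'\in C$, then $s(x',x)=+1$, so $s(x',y)=s(x',x)s(x,y)=-1$ for every $y\in\widetilde C$. Therefore every vertex of $C$ is pi--FK-connected to every vertex of $\widetilde C$, which is the main assertion of the lemma. If $\widetilde C=\varnothing$ there is nothing more to prove.

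Otherwise, take $y,z\in\widetilde C$. Then $s(y,z)=s(y,x)s(x,z)=(-1)(-1)=+1$, so $\widetilde C$ is pc--FK-connected. For maximality, let $w$ satisfy $s(w,y)=+1$ for some $y\in\widetilde C$. Then $s(x,w)=s(x,y)s(y,w)=-1$, so $w\in\widetilde C$. Hence $\widetilde C$ is a full pc--FK cluster. It is also disjoint from $C$, since $s(x,\cdot)$ cannot be both $+1$ and $-1$.

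The only step needing care is the well-definedness of $\varepsilon_r$. Frustration is excluded only within each replica separately, and, as the example in Section~\ref{subsec:paired-example} shows, the union of the two occupied graphs may contain a frustrated circuit. The argument above therefore never mixes paths from different replicas: each $\varepsilon_r$ is computed inside replica $r$, and the two are multiplied only at the end. Unlike Lemma~\ref{Lemma2}, no nearest-neighbour connectivity of $C\cup\widetilde C$ is claimed.
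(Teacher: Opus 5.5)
Your proof is correct and follows the paper's argument essentially step for step. You fix $x\in C$, set $\widetilde C=\{y: s(x,y)=-1\}$, and use the multiplicativity $s(x,z)=s(x,y)\,s(y,z)$ to get independence of $x$, internal coherence, and maximality. Your replica-by-replica justification of that multiplicativity, which the paper uses without comment, is a welcome addition.
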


\begin{proof}
Let \(C\) be a pc--FK cluster and fix \(x\in C\). Define
\[
   \widetilde C:=\{y\in V:\ s(x,y)=-1\}.
\]
The definition does not depend on the choice of
\(x\in C\): if \(x'\in C\), then \(s(x,x')=+1\); hence, for any
\(y\) with \(s(x,y)=-1\),
\[
   s(x',y)=s(x',x)s(x,y)=(+1)(-1)=-1.
\]
Thus the same set \(\widetilde C\) is obtained from any \(x\in C\).

If \(\widetilde C=\varnothing\), the assertion is immediate.
We therefore assume that \(\widetilde C\) is nonempty.

Now take \(y,z\in\widetilde C\). Then
\[
   s(y,z)=s(y,x)s(x,z)=(-1)(-1)=+1,
\]
so \(y\) and \(z\) are pc--FK-connected. Hence \(\widetilde C\) is
pc--FK-connected.

Finally, if \(w\) is pc--FK-connected to some \(y\in\widetilde C\), then
\[
   s(x,w)=s(x,y)s(y,w)=(-1)(+1)=-1,
\]
so \(w\in\widetilde C\). Thus \(\widetilde C\) is maximal and therefore
is a pc--FK cluster. By construction, every \(x\in C\) and
\(y\in\widetilde C\) satisfy \(s(x,y)=-1\).
\end{proof}

Notice that the pc--FK clusters, and even the union of two companion
pc--FK clusters, are generally not nearest-neighbor connected clusters.
They are long-range vertex clusters defined by paired connectivity in the
two replicas. The following associated notion will be relevant, especially
when the cluster is infinite.

\begin{definition}
The union
\[
   \overline C=C\cup\widetilde C
\]
of a pc--FK cluster \(C\) and its companion \(\widetilde C\) is called a
paired FK cluster, or p--FK cluster. Here ``p--FK'' stands for ``paired
FK'' and is not related to the edge-occupation parameter
\(p=1-e^{-2\beta}\).
\end{definition}

\begin{lemma}
\label{lem:pairedAsIntersectionOfFKClusters}
Let the FK bond configurations in the two replicas be
\(\eta^{(1)}\) and \(\eta^{(2)}\). A nonempty vertex set
\(\overline C\subseteq V\) is a p--FK cluster if and only if there exist
ordinary FK clusters \(C^{(1)}(\eta^{(1)})\) and
\(C^{(2)}(\eta^{(2)})\) in the two replicas such that
\[
      \overline C
      =
      C^{(1)}(\eta^{(1)})\cap C^{(2)}(\eta^{(2)}).
\]
\end{lemma}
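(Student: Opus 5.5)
The plan is to identify the relation ``\(s(x,y)\neq0\)'' with simultaneous ordinary FK connectivity in both replicas, and then to read off both directions from the definitions. First I record that, for any \(x,y\), \(s(x,y)\in\{+1,-1\}\) holds exactly when \(x\) and \(y\) lie in the same ordinary FK cluster of \(\eta^{(1)}\) and in the same ordinary FK cluster of \(\eta^{(2)}\). This is immediate from the definition of \(s\), since \(\varepsilon_r(x,y)\) is well defined whenever \(x,y\) are FK-connected in replica \(r\). I also note that \(s(x,x)=+1\), taking the empty path in each replica. Moreover, \(s\) is multiplicative on this relation, \(s(x,z)=s(x,y)s(y,z)\), because the \(\varepsilon_r\) are multiplicative along concatenated open paths. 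This multiplicativity was already used in Lemma~\ref{Lemma1tt}.

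For the forward direction, let \(\overline C=C\cup\widetilde C\) be a p--FK cluster and fix \(x\in C\). Let \(C^{(r)}\) be the ordinary FK cluster of \(x\) in replica \(r\). By Lemma~\ref{Lemma1tt}, every \(y\in\overline C\) satisfies \(s(x,y)=\pm1\), so \(y\in C^{(1)}\cap C^{(2)}\). Conversely, suppose \(y\in C^{(1)}\cap C^{(2)}\). Then \(s(x,y)\) is nonzero by the first step. If \(s(x,y)=+1\), maximality of the pc--FK cluster \(C\) together with transitivity gives \(y\in C\). If \(s(x,y)=-1\), then \(y\in\widetilde C\) by the defining formula \(\widetilde C=\{y:s(x,y)=-1\}\) from the proof of Lemma~\ref{Lemma1tt}. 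Hence \(\overline C=C^{(1)}\cap C^{(2)}\).

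For the reverse direction, suppose \(\overline C=C^{(1)}\cap C^{(2)}\) is nonempty, and pick \(x\in\overline C\). Let \(C\) be the pc--FK cluster of \(x\) and \(\widetilde C\) its companion. Then \(C^{(r)}\) is necessarily the ordinary FK cluster of \(x\) in replica \(r\), because the ordinary FK clusters partition \(V\). The forward argument applied to \(C\) now yields \(C\cup\widetilde C=C^{(1)}\cap C^{(2)}=\overline C\), so \(\overline C\) is a p--FK cluster.

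The only delicate point is the bookkeeping in the reverse direction, namely the justification that the clusters \(C^{(r)}\) are determined by any one of their common points. This is where the partition property is used. I would also remark that the p--FK cluster is independent of which point of \(C\cup\widetilde C\) is chosen, since \(C\) and \(\widetilde C\) determine each other by Lemma~\ref{Lemma1tt}. With these observations, everything else is direct from the definitions.
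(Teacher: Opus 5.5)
Your proof is correct and follows essentially the same route as the paper. The forward direction fixes \(x\in C\) and sorts the vertices of \(C^{(1)}\cap C^{(2)}\) by the sign of \(s(x,\cdot)\), using maximality of \(C\) and the defining formula for \(\widetilde C\). The reverse direction takes the pc--FK cluster of a point of \(\overline C\) and identifies \(C\cup\widetilde C\) with the intersection. You also make explicit three facts the paper uses only implicitly: \(s(x,x)=+1\), multiplicativity of \(s\), and that the given ordinary clusters must be those of \(x\).
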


\begin{proof}
First suppose that \(\overline C=C\cup\widetilde C\) is a p--FK cluster,
where \(C\) is a pc--FK cluster and \(\widetilde C\) is its pi--companion.
Fix \(x\in C\). For \(r=1,2\), let
\[
      C^{(r)}(\eta^{(r)})
      :=
      \{y\in V:\ x \stackrel{\eta^{(r)}}{\longleftrightarrow} y\}
\]
be the ordinary FK cluster of \(x\) in replica \(r\).

Every \(y\in\overline C\) is either pc--FK-connected or pi--FK-connected
to \(x\), and hence is FK-connected to \(x\) in both replicas. Therefore
\[
   \overline C
   \subseteq
   C^{(1)}(\eta^{(1)})\cap C^{(2)}(\eta^{(2)}).
\]
Conversely, if
\[
   y\in C^{(1)}(\eta^{(1)})\cap C^{(2)}(\eta^{(2)}),
\]
then \(x\) and \(y\) are FK-connected in both replicas, so
\(s(x,y)=+1\) or \(s(x,y)=-1\). In the first case \(y\in C\), by
maximality of \(C\); in the second case \(y\in\widetilde C\), by the
definition of the companion. Thus
\[
   C^{(1)}(\eta^{(1)})\cap C^{(2)}(\eta^{(2)})
   \subseteq
   \overline C.
\]
This proves the first implication.

Conversely, suppose
\[
   \overline C
   =
   C^{(1)}(\eta^{(1)})\cap C^{(2)}(\eta^{(2)})
\]
is nonempty, and fix \(x\in\overline C\). Let \(C\) be the maximal
pc--FK cluster containing \(x\), and let \(\widetilde C\) be its companion
from Lemma~\ref{Lemma1tt}. If \(y\in\overline C\), then \(x\) and \(y\)
are FK-connected in both replicas, so \(s(x,y)=+1\) or \(s(x,y)=-1\);
hence \(y\in C\cup\widetilde C\). Conversely, every vertex in
\(C\cup\widetilde C\) is FK-connected to \(x\) in both replicas, and
therefore lies in
\(C^{(1)}(\eta^{(1)})\cap C^{(2)}(\eta^{(2)})=\overline C\).
Thus
\(
   \overline C=C\cup\widetilde C,
\)
so \(\overline C\) is a p--FK cluster.
\end{proof}

It follows that two distinct p--FK clusters must arise from distinct
ordinary FK clusters in at least one replica.

For future use, we record the following deterministic consequence of
one-replica uniqueness.

\begin{lemma}
\label{Lemma5}
Suppose that each of the two FK configurations
\(\eta^{(1)}\) and \(\eta^{(2)}\) has at most one infinite ordinary FK
cluster. Then the paired configuration \(\eta^{\otimes 2}\) has at most
one infinite p--FK cluster.

Consequently, in \(\mathbb Z^d\), for joint weak limits of the two-replica
periodic-boundary-condition FK/ES measures, the number of infinite
p--FK clusters is at most one almost surely whenever
the one-replica \(0/1\)-infinite-cluster property holds in each replica.
\end{lemma}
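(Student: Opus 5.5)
The first assertion is purely deterministic, and I will derive it from Lemma~\ref{lem:pairedAsIntersectionOfFKClusters}. Let \(\overline C\) be an infinite p--FK cluster. By that lemma, \(\overline C=C^{(1)}\cap C^{(2)}\) for some ordinary FK clusters \(C^{(r)}\) of \(\eta^{(r)}\). Since \(\overline C\) is infinite, both \(C^{(1)}\) and \(C^{(2)}\) must be infinite. By hypothesis each replica has at most one infinite ordinary cluster, so \(C^{(1)}=C^{(1)}_\infty\) and \(C^{(2)}=C^{(2)}_\infty\) are forced. Hence every infinite p--FK cluster equals the single set \(C^{(1)}_\infty\cap C^{(2)}_\infty\), and there is at most one.

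One subtlety needs to be addressed: two distinct p--FK clusters might be represented by the same pair of ordinary clusters. This cannot happen. The intersection \(C^{(1)}\cap C^{(2)}\), when nonempty, is itself a single p--FK cluster by the converse direction of the lemma, and distinct p--FK clusters are disjoint, being unions of pc--FK classes and their companions. So the map from p--FK clusters to pairs \((C^{(1)},C^{(2)})\) is injective, and this completes the deterministic part.

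For the probabilistic consequence, take a joint weak limit \(\rho\) of the two-replica periodic laws from Lemma~\ref{lem:periodic-joint-ES}. Let \(N^{(r)}_\infty\) be the number of infinite ordinary FK clusters in replica \(r\). The hypothesis is \(\rho(N^{(1)}_\infty\le1)=\rho(N^{(2)}_\infty\le1)=1\), so the intersection of these two events also has full measure. On that intersection the deterministic statement gives \(N^p_\infty\le1\). Measurability of \(\{N^p_\infty\le1\}\) is the only technical point. I will handle it by noting that the event contains a full-measure measurable event, so it suffices to work modulo null sets, or with the completion of \(\rho\).

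The same argument applies after disintegration: for \(P_\lambda\)-a.e.\ \(J\) the conditional laws inherit the one-replica \(0/1\) property, so the conclusion also holds quenched. I expect no real obstacle here. The main care is the injectivity remark above, which ensures that a single pair of infinite clusters cannot yield two infinite p--FK clusters.
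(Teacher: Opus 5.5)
Your proof is correct and follows the paper's argument. Lemma~\ref{lem:pairedAsIntersectionOfFKClusters} forces every infinite p--FK cluster to equal \(C^{(1)}_\infty\cap C^{(2)}_\infty\), and the almost-sure statement follows by intersecting the two full-measure one-replica events; your injectivity remark is harmless but redundant, since your first paragraph already shows that all infinite p--FK clusters coincide as sets.
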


\begin{proof}
By Lemma~\ref{lem:pairedAsIntersectionOfFKClusters}, every p--FK cluster
is the intersection of an ordinary FK cluster in replica \(1\) with an
ordinary FK cluster in replica \(2\). If such an intersection is infinite,
then both ordinary FK clusters involved must be infinite. Since each
replica has at most one infinite ordinary FK cluster, there is at most one
possible intersection of two infinite ordinary FK clusters. Therefore
there is at most one infinite p--FK cluster.

The final statement follows by applying the one-replica uniqueness result
to each replica, for instance in the periodic-boundary-condition limits
covered by~\cite{gandolfi1992uniqueness}.
\end{proof}

\medskip
\subsection{Paired coherent (pc) percolation}\label{subsec:pc-percolation}

Following Section~\ref{SfPerc}, we now define the second version of signed
FK percolation, which will be the basis of our analysis of replica
overlap. We work with two independent replicas with the same quenched
couplings \(J\). Thus, for fixed \(J\), we consider a product FK state
\[
   \nu^{\otimes 2}_{J,p}
   =
   \nu^{(1)}_{J,p}\otimes \nu^{(2)}_{J,p},
\]
and we restrict to the class in which each one-replica FK configuration has
at most one infinite FK cluster. By Lemma~\ref{Lemma5}, this implies that
there is at most one infinite p--FK cluster. We then look for ways of
distinguishing the two paired coherent companions of this infinite p--FK
cluster at infinity.
The geometric definitions below also apply to mixtures of product FK
states, as used in Corollary~\ref{cor:periodic_pc_overlap}.

Let \((\Lambda_n)_{n\ge1}\) be a fixed van~Hove sequence as in
Section~\ref{SfPerc}. If \(\overline C\) is a p--FK cluster and
\(
   \overline C=C^A\cup C^B
\)
is its decomposition into the two pc--FK companions, define
as before
\[
   d_{\Gamma}(C^A)
   :=
   \lim_{n\to\infty}
   \frac{|C^A\cap \Lambda_n|}{|\Lambda_n|},
\]
and analogously for \(C^B\), if the limits exist.

Given a two-replica FK configuration
\[
   \eta^{\otimes 2}=(\eta^{(1)},\eta^{(2)})
   \in \mathcal H^{(1)}\times\mathcal H^{(2)}
\]
with \(N^p_\infty(\eta^{\otimes 2})\le1\), let
\(C_\infty^A(\eta^{\otimes 2})\) and
\(C_\infty^B(\eta^{\otimes 2})\) denote the two pc--FK companions of the
unique infinite p--FK cluster, when it exists.
\begin{definition}[Paired coherent FK percolation]
\label{def:subseq-pc-density-imbalance}
Fix a deterministic van~Hove sequence \(\Gamma=(\Lambda_n)_{n\ge1}\)
for the fixed-$J$ argument.

Let \(\nu^{\otimes 2}_{J,p}\) be a two-replica FK state satisfying
\(N^p_\infty\le1\) almost surely. When there is no infinite p--FK
cluster, both companion sets and their densities are taken to be empty
and zero, respectively. Assume that the limits
$d_{\Gamma}(C^A), d_{\Gamma}(C^B)$ exist \(\nu^{\otimes 2}_{J,p}\)-almost surely, hence also in 
$L^1$ by bounded convergence. 

We say that \emph{pc--FK percolation occurs along \(\Gamma\)} if
\[
   \nu^{\otimes 2}_{J,p}(d_{\Gamma}(C^A) \neq  d_{\Gamma}(C^B))>0 .
\]
\end{definition}

When pc--FK-percolation occurs, the density distinguishes the two coherent
companions. For the fixed van~Hove sequence \((\Lambda_n)_{n\ge1}\), define
the two-replica density selector
\[
   \mathcal S^{(2)}_D:\mathcal H^{(1)}\times\mathcal H^{(2)}
   \longrightarrow \mathcal P(V)
\]
by
\[
   \mathcal S^{(2)}_D(\eta^{\otimes 2})
   :=
   \begin{cases}
     C_\infty^A(\eta^{\otimes 2}),
       & \text{if }
       d\bigl(C_\infty^A(\eta^{\otimes 2})\bigr)
       >
       d\bigl(C_\infty^B(\eta^{\otimes 2})\bigr),\\[3pt]
     C_\infty^B(\eta^{\otimes 2}),
       & \text{if }
       d\bigl(C_\infty^B(\eta^{\otimes 2})\bigr)
       >
       d\bigl(C_\infty^A(\eta^{\otimes 2})\bigr),\\[3pt]
     \varnothing,
       & \text{if limits do not exist or are equal.}
   \end{cases}
\]
If \(N^p_\infty(\eta^{\otimes 2})=0\), we also set
\(
   \mathcal S^{(2)}_D(\eta^{\otimes 2})=\varnothing .
\)

The analogue of Lemma~\ref{lem:tail-pql-density-selector1} is the following.

\begin{lemma}[Stability of the two-replica density selector under finite modifications]
\label{lem:tail-pql-density-selector3}
Let \(\pi^2\in\mathcal G^{\mathrm{ES},2}_{J,\beta}\) be a full two-replica ES
measure whose two one-replica FK marginals satisfy the \(0/1\)-infinite-cluster property, or
let \(\nu^2\) be a two-replica FK measure whose two one-replica marginals
are FK Gibbs states satisfying the \(0/1\)-infinite-cluster property. This includes, in particular, product measures
\(\pi^{(1)}\otimes\pi^{(2)}\) and \(\nu^{(1)}\otimes\nu^{(2)}\) with
one-replica marginals in the corresponding \(\le1\) classes. Assume also that
\(N^p_\infty\le1\) almost surely.

For a van~Hove sequence \(\Gamma=(\Lambda_n)_{n\ge1}\), the selector
\(\mathcal S^{(2)}_{D,\Gamma}\) satisfies the following property. For every finite
\(\Lambda\Subset V\) and for almost every
\[
   \eta^{\otimes 2}
   =
   (\eta^{(1)},\eta^{(2)})
   \in \mathcal H^{(1)}\times \mathcal H^{(2)},
\]
there exists a finite set
\[
   M=M(\eta^{\otimes 2},\Lambda)\subset V
\]
such that, for every
\[
   \xi^{\otimes 2}
   =
   (\xi^{(1)},\xi^{(2)})
   \in \mathcal H^{(1)}\times \mathcal H^{(2)}
\]
with both \(\xi^{(i)}\) compatible with \(J\) and satisfying
\[
   \eta^{(i)}_{E\setminus \overline E_\Lambda}
   =
   \xi^{(i)}_{E\setminus \overline E_\Lambda},
   \qquad i=1,2,
\]
one has
\[
   \mathcal S^{(2)}_D(\eta^{\otimes 2})\cap M^c
   =
   \mathcal S^{(2)}_D(\xi^{\otimes 2})\cap M^c .
\]
\end{lemma}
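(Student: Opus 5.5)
The plan is to follow the proof of Lemma~\ref{lem:tail-pql-density-selector1}, applying the one-replica finite-deletion argument separately in each replica and then combining the two through the paired parity relation.

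First, fix \(\Lambda\) and work on a full-measure event. For each \(r=1,2\), the FK marginal of replica \(r\) is either an FK Gibbs state or an ES marginal satisfying the \(0/1\)-infinite-cluster property, so Lemma~\ref{UniqueOutside} applies to it with \(F=\overline E_\Lambda\). Hence, almost surely under the two-replica measure, each exterior configuration \(\eta^{(r)}_{E\setminus\overline E_\Lambda}\) has at most one infinite exterior component \(I^{(r)}\). Each of the finitely many other exterior clusters meeting the endpoints of \(\overline E_\Lambda\) is finite. Intersecting the two full-measure events and the event \(N^p_\infty\le1\), choose a finite \(M\supset\Lambda\) that contains all these finite exterior clusters in both replicas. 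For any admissible \(\xi^{\otimes 2}\), the ordinary FK clusters in replica \(r\) of \(\eta^{(r)}\) and \(\xi^{(r)}\) then coincide outside \(M\), up to the common infinite part. In particular, the infinite ordinary cluster of \(\xi^{(r)}\) meets \(M^c\) exactly in \(I^{(r)}\cap M^c\), and so does that of \(\eta^{(r)}\); if there is none, there is none for either configuration. I should check one point here: whether the infinite cluster of \(\eta^{(r)}\) can have several exterior pieces, one infinite and the others finite. Those finite pieces lie in \(M\), so the claim still holds.

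Second, I will use Lemma~\ref{lem:pairedAsIntersectionOfFKClusters}, which identifies the infinite p--FK cluster as \(C^{(1)}_\infty\cap C^{(2)}_\infty\). Together with the first step, this shows that \(\eta^{\otimes 2}\) and \(\xi^{\otimes 2}\) have the same infinite p--FK cluster outside \(M\), namely \(I^{(1)}\cap I^{(2)}\cap M^c\). If that set is finite or empty, then \(N^p_\infty=0\) for both configurations (on this set \(N^p_\infty=0\) can only hold if the intersection is finite), and both selectors are empty.

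Third, for \(x,y\in I^{(1)}\cap I^{(2)}\cap M^c\), the parities \(\varepsilon_r(x,y)\) can be computed along exterior open paths inside \(I^{(r)}\). Compatibility of \(\xi^{(r)}\) with \(J\) makes the parity path-independent, so \(\varepsilon_r(x,y)\) is the same for \(\eta^{(r)}\) and \(\xi^{(r)}\). Hence \(s(x,y)\) agrees for the two configurations, and the unordered pair of pc--FK companions agrees outside \(M\) up to relabeling. Changing finitely many vertices does not affect existence or values of the \(\Gamma\)-density limits. So the selectors are empty simultaneously, and when nonempty they pick the same member outside \(M\).

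The main obstacle is the first step in the full non-product case \(\pi^2\). There one must make sure Lemma~\ref{UniqueOutside} really applies to each replica's marginal: it uses the one-replica conditional closing bound, which follows by summing the product DLR equations over the other replica. One must also make sure the almost-sure events combine, which is fine because they are countably many.
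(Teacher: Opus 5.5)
Your proposal is correct and follows the paper's proof essentially step for step. Both arguments apply Lemma~\ref{UniqueOutside} in each replica, take a finite $M$ absorbing the finite exterior clusters of both replicas so that each ordinary infinite cluster is unchanged outside $M$, identify the infinite p--FK cluster as their intersection via Lemma~\ref{lem:pairedAsIntersectionOfFKClusters}, keep $s(x,y)$ invariant by computing it along exterior open paths, and use that the $\Gamma$-densities are insensitive to finitely many vertices. Your extra checks — that the finite exterior pieces of the infinite cluster lie in $M$, and that Lemma~\ref{UniqueOutside} applies to the one-replica marginals of a non-product $\pi^2$ — are exactly the points the paper handles implicitly.
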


\begin{proof}
We give the proof for the FK measure; the ES case follows by applying the
same argument to the FK marginal.

Fix a finite \(\Lambda\Subset V\), and work on the common
full-probability event supplied by Lemma~\ref{UniqueOutside} in each
replica. In particular, deleting \(\overline E_\Lambda\) leaves at most
one infinite exterior component in each replica.
Let
\(
   \eta^{\otimes 2}=(\eta^{(1)},\eta^{(2)})
\)
be a configuration in this full-probability event. For \(i=1,2\), let
\(C_\infty^{(i)}(\eta^{(i)})\) denote the unique infinite ordinary FK cluster
in replica \(i\), if it exists.

Only finitely many exterior clusters meet \(\partial\Lambda\) in
either replica. Choose \(M=M(\eta^{\otimes2},\Lambda)\) to contain
\(\Lambda\cup\partial\Lambda\) and all finite such exterior clusters
in both replicas. By Lemma~\ref{UniqueOutside}, any vertex of an ordinary
infinite FK cluster outside \(M\) lies in the unique infinite component
of the corresponding exterior graph on \(E\setminus\overline E_\Lambda\).
Consequently local modifications inside \(\overline E_\Lambda\) do not
change the ordinary infinite FK clusters outside \(M\). More precisely,
if
\(
   \eta^{(i)}_{E\setminus \overline E_\Lambda}
   =
   \xi^{(i)}_{E\setminus \overline E_\Lambda},
   \qquad i=1,2,
\)
then, for each \(i=1,2\),
\[
   C_\infty^{(i)}(\eta^{(i)})\cap M^c
   =
   C_\infty^{(i)}(\xi^{(i)})\cap M^c,
\]
with the convention that both sides are empty when no infinite cluster exists
in replica \(i\).

Using Lemma~\ref{lem:pairedAsIntersectionOfFKClusters}, the infinite p--FK
cluster is the intersection
\[
   C_\infty^{(1)}(\eta^{(1)})
   \cap
   C_\infty^{(2)}(\eta^{(2)}),
\]
when this intersection is infinite. Hence the infinite p--FK cluster, if it
exists, is unchanged outside \(M\) by any modification inside
\(\overline E_\Lambda\).

It remains to check that the decomposition into the two pc--FK companions is
also unchanged outside \(M\). Suppose first that an infinite p--FK cluster
exists. Choose a vertex
\[
   z\in C_\infty^{(1)}(\eta^{(1)})\cap
        C_\infty^{(2)}(\eta^{(2)})\cap M^c .
\]
For any other vertex \(y\) in the infinite p--FK cluster outside
\(M\), both \(z\) and \(y\) lie, in each replica, in the unique infinite
exterior component supplied by Lemma~\ref{UniqueOutside}. They can
therefore be joined by an FK-open path using only edges in
\(E\setminus\overline E_\Lambda\). The path may pass through
\(M\setminus\Lambda\); connectedness outside \(M\) is not required.
Since \(\eta^{\otimes2}\) and \(\xi^{\otimes2}\) agree on these edges,
the two path parities, and hence \(s(z,y)\), are unchanged. This preserves
the pc/pi classification of vertices outside \(M\).

Consequently, up to a possible interchange of the labels \(A\) and \(B\), the
two pc--FK companions agree outside \(M\) for \(\eta^{\otimes 2}\) and
\(\xi^{\otimes 2}\). After matching the labels outside \(M\), changing
finitely many vertices preserves both the existence and the values of
the density limits along the fixed van~Hove sequence. Hence the density
selector either remains empty in both configurations, or selects the
same companion outside \(M\). Therefore
\[
   \mathcal S^{(2)}_D(\eta^{\otimes 2})\cap M^c
   =
   \mathcal S^{(2)}_D(\xi^{\otimes 2})\cap M^c .
\]

If no infinite p--FK cluster exists, the same stability argument shows that
no finite modification inside \(\overline E_\Lambda\) can create one outside
\(M\). Thus both selectors are empty outside \(M\). This completes the proof.
\end{proof}

We will use only the stability under finite modifications established in
Lemma~\ref{lem:tail-pql-density-selector3}.

If the relevant two-replica FK distribution is translation invariant, then
the density imbalance can often be read in terms of the probability that the
origin belongs to one or the other coherent companion. In such a setting,
one can equivalently view pc--FK-percolation through an origin-based signed
count: vertices pc--FK-connected to the origin contribute \(+1\), while
vertices pi--FK-connected to the origin contribute \(-1\). A nonzero
asymptotic imbalance of this signed count corresponds to density
distinguishability of the two paired coherent companions.

\section{Replica overlap and paired coherent FK percolation
in \texorpdfstring{\(\mathbb{Z}^d\)}{the integer lattice}}

\subsection{Decomposition of two-replica ES states according to the
            pc--FK companion bias}
\label{subsec:decomposition2}

Throughout this subsection we fix a coupling realization
\(J\in\{\pm1\}^{E}\) and an inverse temperature \(\beta\)
(or edge--occupation parameter \(p=1-e^{-2\beta}\)).
We work here with full two-replica ES Gibbs states
\(
   \pi^{2}\in
   \mathcal G^{\mathrm{ES},2}_{J,\beta}
\)
because the auxiliary agreement-biased constructions below need not be
products of two one-replica ES states. The independent-replica overlap theorem
at the end of the section will, however, be stated for products of extremal
one-replica states; see Theorem~\ref{thm:product_extremal_pc_overlap}.
We assume, as in the preceding section, that under the FK
marginal there is almost surely at most one infinite p--FK cluster and that the
one-replica FK marginals satisfy the \(0/1\)-infinite-cluster property.

Define the relative spin flip on the two-replica ES space by
\[
\begin{aligned}
 F_2(\omega^{(1)},\omega^{(2)},\eta^{(1)},\eta^{(2)})
 :=
 (\omega^{(1)},-\omega^{(2)},\eta^{(1)},\eta^{(2)}).
\end{aligned}
\]
A two-replica ES measure has relative spin-flip symmetry if
\((F_2)_*\pi^2=\pi^2\). This transformation leaves both FK configurations
and their density selector unchanged and reverses every site overlap.
Thus the selected or unselected status of the FK geometry imposes no
relative spin-flip symmetry on the ES measure.

Recall that, under this assumption, when an infinite p--FK cluster exists it
decomposes into two coherent companion pc--FK clusters
\(
   C_\infty^A(\eta^{\otimes 2}),
   \hskip 0.2cm
   C_\infty^B(\eta^{\otimes 2}),
\)
where
\(
   \eta^{\otimes 2}=(\eta^{(1)},\eta^{(2)}).
\)
When the density selector from Section~\ref{subsec:pc-percolation} is nonempty, we label the
denser selected companion by
\(
   C_\infty^A(\eta^{\otimes 2})
   :=
   \mathcal S^{(2)}_D(\eta^{\otimes 2}),
\)
 we let \(C_\infty^B(\eta^{\otimes 2})\) be the other coherent companion,
 and we write
\(
   \mathcal C_\infty(\eta^{\otimes 2})
   :=
   C_\infty^A(\eta^{\otimes 2})
   \cup
   C_\infty^B(\eta^{\otimes 2}).
\)

On the event where the selector is nonempty, define
\begin{eqnarray} \label{H}
   H(\omega^{\otimes 2},\eta^{\otimes 2})
   :=
   \mathbf 1_{\{
      \omega^{(1)}_x=\omega^{(2)}_x
      \text{ for all }x\in C_\infty^A(\eta^{\otimes 2})
   \}} .
\end{eqnarray}
Thus \(H=1\) means that the two replicas agree on the selected coherent
companion, while \(H=0\) means that they disagree on it. By the pc/pi
structure, if the replicas agree on one coherent companion of the infinite
p--FK cluster, then they disagree on the other one, and conversely.

Set
\[
   E_{\mathrm{null}}
   :=
   \{N_\infty^p=0\},
\]
\[
   E_{\mathrm{unsel}}
   :=
   \{N_\infty^p=1,\ \mathcal S^{(2)}_D(\eta^{\otimes 2})=\varnothing\},
\]
and
\[
   E_{\mathrm{sel}}
   :=
   \{N_\infty^p=1,\ \mathcal S^{(2)}_D(\eta^{\otimes 2})\neq\varnothing\}.
\]
On \(E_{\mathrm{unsel}}\), the two companion densities are equal
whenever they both exist. Without a density-existence hypothesis,
this event also includes failure of the density limits.
On \(E_{\mathrm{sel}}\), define
\[
   E_{\mathrm{agr}}
   :=
   E_{\mathrm{sel}}\cap\{H=1\},
   \qquad
   E_{\mathrm{dis}}
   :=
   E_{\mathrm{sel}}\cap\{H=0\}.
\]
The relative spin flip \(F_2\) exchanges \(E_{\mathrm{agr}}\) and
\(E_{\mathrm{dis}}\) and leaves \(E_{\mathrm{unsel}}\) unchanged.

\begin{theorem}[Canonical two-replica decomposition by the pc--FK companion]
\label{thm:decomposition_replica}
Let
\[
   \pi^{2}\in
   \mathcal G^{\mathrm{ES},2}_{J,\beta}
   \qquad\text{with}\qquad
   \pi^2(N_\infty^p\le 1)=1 .
\]
Then
\[
   E_{\mathrm{null}},\quad
   E_{\mathrm{unsel}},\quad
   E_{\mathrm{agr}},\quad
   E_{\mathrm{dis}}
\]
are tail events which form a measurable partition of
\(\{N_\infty^p\le1\}\). Consequently,
\(\pi^2\) admits the canonical decomposition
\[
   \pi^2
   =
   \alpha_{\mathrm{null}}\pi^{2}_{\mathrm{null}}
   +
   \alpha_{\mathrm{unsel}}\pi^{2}_{\mathrm{unsel}}
   +
   \alpha_{\mathrm{agr}}\pi^{2}_{\mathrm{agr}}
   +
   \alpha_{\mathrm{dis}}\pi^{2}_{\mathrm{dis}},
\]
where
\[
   \alpha_{\mathrm{null}}=\pi^2(E_{\mathrm{null}}),
   \quad
   \alpha_{\mathrm{unsel}}=\pi^2(E_{\mathrm{unsel}}),
   \quad
   \alpha_{\mathrm{agr}}=\pi^2(E_{\mathrm{agr}}),
   \quad
   \alpha_{\mathrm{dis}}=\pi^2(E_{\mathrm{dis}}),
\]
and, whenever the corresponding coefficient is nonzero,
\[
   \pi^{2}_{\mathrm{null}}
   =
   \pi^2(\,\cdot\,|E_{\mathrm{null}}),
   \qquad
   \pi^{2}_{\mathrm{unsel}}
   =
   \pi^2(\,\cdot\,|E_{\mathrm{unsel}}),
\]
\[
   \pi^{2}_{\mathrm{agr}}
   =
   \pi^2(\,\cdot\,|E_{\mathrm{agr}}),
   \qquad
   \pi^{2}_{\mathrm{dis}}
   =
   \pi^2(\,\cdot\,|E_{\mathrm{dis}}).
\]
Each nonzero component is again a two-replica Edwards--Sokal Gibbs state
for the same specification. The decomposition is unique, with the convention
that zero-weight components are omitted.
\end{theorem}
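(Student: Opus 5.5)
The plan is to follow the proof of Theorem~\ref{thm:canonical_sf_decomposition} line by line, replacing the one-replica selector by \(\mathcal S^{(2)}_{D}\) and the one-replica stability lemma by Lemma~\ref{lem:tail-pql-density-selector3}. The proof has three steps: show that the four events are tail events modulo \(\pi^2\)-null sets, observe that they partition \(\{N^p_\infty\le1\}\), and check that conditioning on a tail event preserves the two-replica ES DLR equations.

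\emph{Step 1: \(E_{\mathrm{null}}\), \(E_{\mathrm{unsel}}\) and \(E_{\mathrm{sel}}\) are tail events.} Fix a finite \(\Lambda\). I work on the common full-measure set on which Lemma~\ref{UniqueOutside} holds in each replica, which is available because the one-replica FK marginals satisfy the \(0/1\) property. On this set, deleting \(\overline E_\Lambda\) leaves at most one infinite exterior component in each replica. Hence presence or absence of an infinite ordinary cluster in each replica is unchanged by compatible modifications inside \(\overline E_\Lambda\). By Lemma~\ref{lem:pairedAsIntersectionOfFKClusters}, the infinite p--FK cluster is the intersection of the two infinite ordinary clusters, so its existence is also unchanged. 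Lemma~\ref{lem:tail-pql-density-selector3} gives that \(\mathcal S^{(2)}_D\) is unchanged outside a finite set \(M\); in particular, whether it is empty is unchanged. These events depend only on bonds, and the bonds outside \(\overline E_\Lambda\) lie in \(\mathcal A^2_{\Lambda^c}\). Since \(\Lambda\) is arbitrary, the three events are tail-measurable up to null sets.

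\emph{Step 2: \(H\) is tail-measurable on \(E_{\mathrm{sel}}\).} On \(E_{\mathrm{sel}}\) the selected companion \(C^A_\infty\) is infinite, so I can pick \(z\in C^A_\infty(\eta^{\otimes2})\cap M^c\). For any ES-compatible pair of spin configurations, the overlap \(q_y=\omega^{(1)}_y\omega^{(2)}_y\) is constant on a pc--FK cluster, by the relation \(q_y=s(x,y)q_x\) noted before Section~\ref{subsec:pc-clusters}. Hence \(H=1\) if and only if \(\omega^{(1)}_z\omega^{(2)}_z=1\). If a configuration agrees with the given one on \(\mathcal A^2_{\Lambda^c}\), Lemma~\ref{lem:tail-pql-density-selector3} shows that \(z\) still lies in its selected companion, and its spins at \(z\) are the same. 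So \(H\) is \(\mathcal A^2_{\Lambda^c}\)-measurable modulo null sets, and therefore \(E_{\mathrm{agr}}\) and \(E_{\mathrm{dis}}\) are tail events. The point needing most care is that the choice of \(z\) depends on the configuration through \(M(\eta^{\otimes2},\Lambda)\). I handle this as in the one-replica proof, by noting that \(M\) is determined by the exterior bonds and takes only countably many values, so one can argue separately on each event \(\{M=D\}\).

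\emph{Step 3: partition and Gibbs property of the components.} The four events are disjoint and their union is \(\{N^p_\infty\le1\}\), which has full \(\pi^2\)-measure. The displayed decomposition is then just the decomposition of a probability measure along a measurable partition, and it is unique once zero-weight pieces are dropped. For the Gibbs property, let \(E\) be one of the events with \(\pi^2(E)>0\), let \(f\) be local in \(\Lambda\), and let \(g\) be bounded and \(\mathcal A^2_{\Lambda^c}\)-measurable. Because \(\mathbf 1_E\) is exterior-measurable, \(g\mathbf 1_E\) is again an admissible exterior test function. The DLR identity for the product kernel \(\pi^2_{J,\beta,\Lambda,\omega^{(1)}_{\partial\Lambda},\omega^{(2)}_{\partial\Lambda}}\) therefore holds with \(g\) replaced by \(g\mathbf 1_E\). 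Dividing by \(\pi^2(E)\) shows that \(\pi^2(\cdot\,|E)\in\mathcal G^{\mathrm{ES},2}_{J,\beta}\).
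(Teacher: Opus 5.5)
Your proposal is correct and follows the paper's proof essentially step for step: tail-measurability of the bond events via Lemma~\ref{lem:tail-pql-density-selector3}, reading off \(H\) from the overlap at an exterior vertex \(z\) of the selected companion, and the DLR identity tested against \(g\mathbf 1_E\). Your remark that \(M\) and \(z\) can be chosen as functions of the exterior bonds taking countably many values is a harmless refinement that the paper leaves implicit.

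One small tightening is needed in Step~1. Knowing that each replica still has (or still lacks) an infinite ordinary cluster does not by itself preserve infinitude of their intersection. You need, as the paper notes from the proof of Lemma~\ref{lem:tail-pql-density-selector3}, that each ordinary infinite cluster is unchanged outside the finite set \(M\).
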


\begin{proof}
Work on the common full-measure set supplied by
Lemma~\ref{UniqueOutside} in each replica. As shown in the proof of
Lemma~\ref{lem:tail-pql-density-selector3}, finite compatible bond
modifications change each ordinary infinite FK cluster only within a
finite vertex set. The same is therefore true of their intersection.
Its infinitude, and the nonemptiness of the density selector, are
unchanged. Hence \(E_{\mathrm{null}}\), \(E_{\mathrm{unsel}}\), and
\(E_{\mathrm{sel}}\) are tail events, modulo \(\pi^2\)-null sets.

We next show that \(H\) is tail-measurable on \(E_{\mathrm{sel}}\).
Fix a finite \(\Lambda\Subset V\), and let
\[
   \mathcal A^{(2)}_{\Lambda^c}
   :=
   \sigma\!\left(
      \omega^{(1)}_{V\setminus\Lambda},
      \omega^{(2)}_{V\setminus\Lambda},
      \eta^{(1)}_{E\setminus \overline E_\Lambda},
      \eta^{(2)}_{E\setminus \overline E_\Lambda}
   \right)
\]
be the exterior two-replica ES sigma-field.

By Lemma~\ref{lem:tail-pql-density-selector3}, for
\(\pi^2\)-a.e. \(\eta^{\otimes 2}\) there is a finite set
\(M=M(\eta^{\otimes 2},\Lambda)\) such that every
\(\xi^{\otimes 2}\) compatible with \(J\) in both replicas and agreeing
with \(\eta^{\otimes 2}\) outside \(\overline E_\Lambda\) satisfies
\[
   \mathcal S^{(2)}_D(\xi^{\otimes 2})\cap M^c
   =
   \mathcal S^{(2)}_D(\eta^{\otimes 2})\cap M^c .
\]
On \(E_{\mathrm{sel}}\), the selected companion
\(
   C_\infty^A(\eta^{\otimes 2})
   =
   \mathcal S^{(2)}_D(\eta^{\otimes 2})
\)
is infinite. Hence we may choose
\(
   z\in C_\infty^A(\eta^{\otimes 2})\cap M^c .
\)
For every ES-compatible two-replica configuration, the overlap
\(
   \omega^{(1)}_x\omega^{(2)}_x
\)
is constant on each pc--FK cluster. Indeed, pc--FK connectivity means that
the two FK paths have the same parity, and therefore preserve the equality
or inequality of the two replica spins. Thus
\[
   H(\omega^{\otimes 2},\eta^{\otimes 2})=1
   \quad\Longleftrightarrow\quad
   \omega^{(1)}_z\omega^{(2)}_z=+1 .
\]
Since \(z\in M^c\), this value is determined by variables outside the finite
set \(M\), hence outside \(\Lambda\). Because \(\Lambda\) was arbitrary,
\(H\) is tail-measurable. Therefore \(E_{\mathrm{agr}}\) and
\(E_{\mathrm{dis}}\) are tail events.

The four events
\[
   E_{\mathrm{null}},\quad
   E_{\mathrm{unsel}},\quad
   E_{\mathrm{agr}},\quad
   E_{\mathrm{dis}}
\]
are disjoint and exhaust \(\{N_\infty^p\le1\}\). Since
\(\pi^2(N_\infty^p\le1)=1\), they form a measurable partition of
the support of \(\pi^2\), and the displayed decomposition is just
the decomposition of \(\pi^2\) with respect to this partition.
This proves existence and uniqueness.

It remains to prove that the nonzero conditional measures are still
two-replica ES Gibbs states. Let \(E\) be one of the four tail events with
\(\pi^2(E)>0\), and set
\(
   \pi^{2}_E
   :=
   \pi^2(\,\cdot\,|E).
\)
Since \(E\) is a tail event, \(\mathbf 1_E\) is
\(\mathcal A^{(2)}_{\Lambda^c}\)-measurable up to null sets for every finite
\(\Lambda\).

Let \(f\) be a bounded local function depending on the variables in
\(\Lambda\), and let \(g\) be a bounded
\(\mathcal A^{(2)}_{\Lambda^c}\)-measurable function. Since
\(\pi^2\) satisfies the two-replica ES DLR equations,
\[
   \int f g\,d\pi^2
   =
   \int g\,
      \pi^{2}_{J,\beta,\Lambda,
        \omega^{(1)}_{\partial\Lambda},
        \omega^{(2)}_{\partial\Lambda}}(f)
      \,d\pi^2.
\]
Multiplying by \(\mathbf 1_E\), which is exterior-measurable, gives
\[
   \int_E f g\,d\pi^2
   =
   \int_E g\,
      \pi^{2}_{J,\beta,\Lambda,
        \omega^{(1)}_{\partial\Lambda},
        \omega^{(2)}_{\partial\Lambda}}(f)
      \,d\pi^2.
\]
Dividing by \(\pi^2(E)\), we obtain
\[
   \int f g\,d\pi^{2}_E
   =
   \int g\,
      \pi^{2}_{J,\beta,\Lambda,
        \omega^{(1)}_{\partial\Lambda},
        \omega^{(2)}_{\partial\Lambda}}(f)
      \,d\pi^{2}_E .
\]
Since this holds for every bounded exterior-measurable \(g\), the conditional
law of the variables in \(\Lambda\) under \(\pi^{2}_E\), given the
exterior, is the same two-replica ES kernel. Thus
\(
   \pi^{2}_E
   \in
   \mathcal G^{\mathrm{ES},2}_{J,\beta}.
\)
Applying this to each of the four events proves the theorem.
\end{proof}

\begin{corollary}
\label{cor:extremal_pc_sector}
Let
\(
   \pi^{2}\in
   \mathcal G^{\mathrm{ES},2}_{J,\beta}
\)
be extremal, and suppose
\(
   \pi^2(N_\infty^p\le1)=1 .
\)
Then exactly one of the four events
\[
   E_{\mathrm{null}},\quad
   E_{\mathrm{unsel}},\quad
   E_{\mathrm{agr}},\quad
   E_{\mathrm{dis}}
\]
has \(\pi^2\)-probability one. 

In particular, if
\(
   \pi^2(E_{\mathrm{sel}})=1,
\)
then either
\(
   \pi^2(H=1)=1\) or
   \(
   \pi^2(H=0)=1 .
\)
Equivalently, in the density-distinguishable paired coherent tail event, an
extremal two-replica ES state cannot have a genuine bias
\(0<q<1\) between agreement and disagreement on the selected companion.
\end{corollary}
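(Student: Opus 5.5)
The argument copies that of Corollary~\ref{cor:extremal_sf_sector}, with Theorem~\ref{thm:decomposition_replica} in place of Theorem~\ref{thm:canonical_sf_decomposition}. First I would invoke the extremal-point characterization for the full two-replica simplex, Theorem~\ref{thm:ES-2replica}\textup{(c)}: an extremal \(\pi^2\in\mathcal G^{\mathrm{ES},2}_{J,\beta}\) is tail-trivial with respect to the full two-replica tail sigma-field \(\bigcap_\Lambda \mathcal A^{(2)}_{\Lambda^c}\). So every event that is tail-measurable modulo \(\pi^2\)-null sets has probability \(0\) or \(1\). This is harmless, since the completion of a trivial sigma-field is still trivial.

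Next I would apply Theorem~\ref{thm:decomposition_replica}. Its hypotheses are the standing assumptions of the subsection (\(N^p_\infty\le1\) almost surely, and the one-replica \(0/1\) property of the FK marginals that Lemma~\ref{lem:tail-pql-density-selector3} requires). Under these, \(E_{\mathrm{null}},E_{\mathrm{unsel}},E_{\mathrm{agr}},E_{\mathrm{dis}}\) are tail events modulo null sets and partition \(\{N^p_\infty\le1\}\), which has full measure. Each event therefore has probability \(0\) or \(1\). Their probabilities sum to one, so exactly one of them has probability one.

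For the ``in particular'' clause, suppose \(\pi^2(E_{\mathrm{sel}})=1\). Then \(E_{\mathrm{null}}\) and \(E_{\mathrm{unsel}}\) are null, so exactly one of \(E_{\mathrm{agr}}=E_{\mathrm{sel}}\cap\{H=1\}\) and \(E_{\mathrm{dis}}=E_{\mathrm{sel}}\cap\{H=0\}\) has full measure. That gives \(\pi^2(H=1)=1\) or \(\pi^2(H=0)=1\). For the reformulation, take the bias to be the conditional probability of \(H=1\) given the FK configuration and the exterior data. This quantity is almost surely \(0\) or \(1\), so no genuine \(0<q<1\) is possible.

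I do not expect a real obstacle, since the substantive work (tail-measurability of \(H\) via Lemma~\ref{lem:tail-pql-density-selector3}) is already done in Theorem~\ref{thm:decomposition_replica}. The one point to state carefully is that tail triviality refers to the joint two-replica tail field, not the individual replica tails. This matters because \(H\) depends on both replicas through the site overlap \(\omega^{(1)}_z\omega^{(2)}_z\) at a far vertex \(z\).
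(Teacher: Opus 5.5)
Your proposal is correct and follows the paper's proof. The paper's argument is the same: the four events are tail events modulo null sets by Theorem~\ref{thm:decomposition_replica}, extremal full two-replica ES states are tail-trivial, and a partition of a full-measure set into events of probability $0$ or $1$ has exactly one member of probability one. Your added points are accurate refinements of that argument, not departures from it: the tail field in question is the joint two-replica one, and Theorem~\ref{thm:decomposition_replica} tacitly relies on the standing one-replica $0/1$ assumption through Lemma~\ref{lem:tail-pql-density-selector3}.
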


\begin{proof}
The four events are tail events by
Theorem~\ref{thm:decomposition_replica}. Extremal ES Gibbs states are
tail-trivial. Hence each of the four events has probability \(0\) or \(1\),
and since they form a partition of a full-probability set, exactly one has
probability \(1\).
\end{proof}

We next record the \(q\)-biased construction. This construction is useful for
producing two-replica Gibbs states with a prescribed agreement bias on the
selected coherent companion. For \(0<q<1\), the \(q\)-biased state is a
convex combination of the two oriented states, as in the one-replica
construction.

\begin{proposition}[DLR property of two-replica \(q\)-biased lifts]
\label{lem:DLR_replica_q}
Let
\(
   \nu^{\otimes 2}_{J,p}
   \in
   \mathcal G^{\mathrm{FK},\otimes 2}_{J,p}
\)
be an independent product two-replica FK Gibbs state such that
\[
   \nu^{\otimes 2}_{J,p}(N_\infty^p=1)=1,
   \qquad
   \nu^{\otimes 2}_{J,p}(\mathcal S^{(2)}_D\neq\varnothing)=1 .
\]
Let \(\pi^{2}_{\nu,1/2}\) be the ES lift invariant under \(F_2\) obtained by
coloring the FK clusters in each replica with independent fair compatible
spin assignments. Define \(H\) as in \eqref{H} and, for \(q\in[0,1]\), set
\(
   h_q
   :=
   2q\,H+2(1-q)(1-H),
\) and
\(
   \pi^{2}_{\nu,q}
   :=
   h_q\,\pi^{2}_{\nu,1/2}.
\)

Then
\(
   \pi^{2}_{\nu,q}
   \in
   \mathcal G^{\mathrm{ES},2}_{J,\beta} \)
   for every \(q\in[0,1],\)
and
\(
   \Pi^{\otimes 2}_{\mathrm{FK}}
   \pi^{2}_{\nu,q}
   =
   \nu^{\otimes 2}_{J,p}.
\)
Moreover,
\[
   \pi^{2}_{\nu,q}
   =
   q\,\pi^{2}_{\nu,1}
   +
   (1-q)\,\pi^{2}_{\nu,0}.
\]
\end{proposition}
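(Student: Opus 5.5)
The plan is to follow the proof of Lemma~\ref{lem:DLR_for_q} almost verbatim, now using the two-replica specification.

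\emph{Step 1: the symmetric lift.} First I check that \(\pi^{2}_{\nu,1/2}=\pi^{(1)}_{\nu,1/2}\otimes\pi^{(2)}_{\nu,1/2}\) is a two-replica ES Gibbs state with FK marginal \(\nu^{\otimes2}_{J,p}\). Here \(\pi^{(r)}_{\nu,1/2}\) is the fair-cluster lift of \(\nu^{(r)}_{J,p}\). Theorem~\ref{thm:FK-surjective} shows that each factor is a one-replica ES Gibbs state; this uses that each marginal has \(N_\infty\le1\), which is implicit in the standing hypotheses. By Theorem~\ref{thm:FK-2replica-surjective}, the product therefore satisfies the product specification. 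The relative flip \(F_2\) flips all cluster signs in replica \(2\), and the fair coloring is invariant under this, so \(\pi^{2}_{\nu,1/2}\) is \(F_2\)-invariant. The state \(F_2\) fixes \(\eta^{\otimes2}\) and maps \(\{H=1\}\) onto \(\{H=0\}\), because the overlap on \(C_\infty^A\) is constant. Hence \(\pi^{2}_{\nu,1/2}(H=1\mid\eta^{\otimes2})=1/2\) almost surely. In particular \(\int h_q\,d\pi^{2}_{\nu,1/2}=1\), and more strongly the conditional expectation of \(h_q\) given \(\eta^{\otimes2}\) equals \(1\). This is what gives the FK marginal identity \(\Pi^{\otimes2}_{\mathrm{FK}}\pi^{2}_{\nu,q}=\nu^{\otimes2}_{J,p}\).

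\emph{Step 2: exterior measurability of \(H\).} This is the main point. For a fixed finite \(\Lambda\), Lemma~\ref{lem:tail-pql-density-selector3} provides, for almost every \(\eta^{\otimes2}\), a finite set \(M\supset\Lambda\). Every \(\xi^{\otimes2}\) that is compatible in both replicas and agrees with \(\eta^{\otimes2}\) off \(\overline E_\Lambda\) has the same selected companion outside \(M\). Since \(C_\infty^A\) is infinite, I pick \(z\in C_\infty^A(\eta^{\otimes2})\cap M^c\). The overlap \(\omega^{(1)}_x\omega^{(2)}_x\) is constant on pc--FK clusters for every ES-compatible configuration, as in the proof of Theorem~\ref{thm:decomposition_replica}. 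So \(H=\mathbf 1_{\{\omega^{(1)}_z\omega^{(2)}_z=1\}}\), and the same holds for any modified configuration \((\omega'^{\otimes2},\xi^{\otimes2})\) that agrees on \(\mathcal A^2_{\Lambda^c}\).

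Consequently \(h_q\) is \(\mathcal A^2_{\Lambda^c}\)-measurable modulo null sets. The care needed here is that \(z\) depends on \(\eta^{\otimes2}\) only through the exterior bonds. On \(\{M=D\}\) I would choose \(z\) measurably, for example as the first vertex in an enumeration, from the exterior data, exactly as in the one-replica case.

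\emph{Step 3: DLR and the mixture identity.} For a local \(f\) in \(\Lambda\) and a bounded exterior \(g\), I write \(\int fg\,d\pi^2_{\nu,q}=\int fgh_q\,d\pi^2_{\nu,1/2}\). Applying the two-replica DLR equation for \(\pi^2_{\nu,1/2}\), with \(gh_q\) as the exterior test function, gives the DLR equation for \(\pi^2_{\nu,q}\). This proves \(\pi^2_{\nu,q}\in\mathcal G^{\mathrm{ES},2}_{J,\beta}\).

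Finally, \(h_q\) is affine in \(q\): \(h_q=q\,h_1+(1-q)\,h_0\), with \(h_1=2H\) and \(h_0=2(1-H)\). Multiplying by \(\pi^2_{\nu,1/2}\) gives \(\pi^2_{\nu,q}=q\,\pi^2_{\nu,1}+(1-q)\,\pi^2_{\nu,0}\). The only potential obstacle is the measurability subtlety in Step~2. Everything else is the one-replica argument transported to the product specification.
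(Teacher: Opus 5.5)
Your proposal is correct and follows essentially the same route as the paper. Both arguments use the symmetric product lift from the two-replica lifting theorem, exterior measurability of \(H\) via Lemma~\ref{lem:tail-pql-density-selector3} and constancy of the overlap on pc--FK companions, the tilting computation for the DLR equations, \(\mathbb E[h_q\mid\eta^{\otimes2}]=1\) for the FK marginal, and affinity of \(h_q\) in \(q\) for the mixture identity; your derivation of the conditional \(1/2\) from \(F_2\)-invariance and your explicit appeal to the standing per-replica \(0/1\) hypothesis spell out points the paper states more briefly.
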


\begin{proof}
By the product two-replica FK--ES lifting result, the symmetric cluster-coloring
construction gives the independent product ES Gibbs state
\[
   \pi^{2}_{\nu,1/2}
   \in
   \mathcal G^{\mathrm{ES},\otimes 2}_{J,\beta}
   \subset
   \mathcal G^{\mathrm{ES},2}_{J,\beta}
\]
with FK marginal \(\nu^{\otimes 2}_{J,p}\). The tilted measures
\(\pi^2_{\nu,q}\), however, will in general belong only to the full class
\(\mathcal G^{\mathrm{ES},2}_{J,\beta}\), not to the product subclass.

The symmetric lift is invariant under flipping all spins in the second
replica. This transformation preserves the FK variables and exchanges
agreement with disagreement on every pc--FK companion. Therefore
\(
   \int H\,d\pi^{2}_{\nu,1/2}=\frac12,
\)
and so \(\pi^{2}_{\nu,q}\) is a probability measure.

We now verify the DLR equations. Fix a finite \(\Lambda\Subset V\), and let
\[
   \mathcal A^{(2)}_{\Lambda^c}
   =
   \sigma\!\left(
      \omega^{(1)}_{V\setminus\Lambda},
      \omega^{(2)}_{V\setminus\Lambda},
      \eta^{(1)}_{E\setminus \overline E_\Lambda},
      \eta^{(2)}_{E\setminus \overline E_\Lambda}
   \right).
\]
By Lemma~\ref{lem:tail-pql-density-selector3}, the selected coherent
companion is unchanged outside a finite set by modifications inside
\(\overline E_\Lambda\). Since the overlap
\(\omega^{(1)}_x\omega^{(2)}_x\) is constant on a pc--FK companion, the event
\[
   \{\omega^{(1)}=\omega^{(2)}
     \text{ on }C_\infty^A(\eta^{\otimes 2})\}
\]
is determined by the overlap at any exterior vertex of the selected
companion. Hence \(H\), and therefore \(h_q\), is
\(\mathcal A^{(2)}_{\Lambda^c}\)-measurable.

Let \(f\) be a bounded local function depending on the variables in
\(\Lambda\). Using the exterior-measurability of \(h_q\) and the DLR equation
for \(\pi^{2}_{\nu,1/2}\), we get
\[
\begin{aligned}
   \int f\,d\pi^{2}_{\nu,q}
   &=
   \int f h_q\,d\pi^{2}_{\nu,1/2}  =
   \mathbb E_{\pi^{2}_{\nu,1/2}}
   \Bigl[
      \mathbb E(fh_q\mid\mathcal A^{(2)}_{\Lambda^c})
   \Bigr]  \\
   &=
   \mathbb E_{\pi^{2}_{\nu,1/2}}
   \Bigl[
      h_q\,\mathbb E(f\mid\mathcal A^{(2)}_{\Lambda^c})
   \Bigr]  =
   \mathbb E_{\pi^{2}_{\nu,q}}
   \Bigl[
      \mathbb E(f\mid\mathcal A^{(2)}_{\Lambda^c})
   \Bigr].
\end{aligned}
\]
The inner conditional expectation is the two-replica ES kernel because
\(\pi^{2}_{\nu,1/2}\) is an ES Gibbs state. Hence
\(\pi^{2}_{\nu,q}\) satisfies the same DLR equations.

Finally, fair independent coloring in the two replicas gives
$\mathbb E_{\pi^2_{\nu,1/2}}[H\mid\eta^{\otimes2}]=1/2$, and hence
$\mathbb E_{\pi^2_{\nu,1/2}}[h_q\mid\eta^{\otimes2}]=1$.
Therefore the change of density does not change the FK marginal. Thus
\(
   \Pi^{\otimes 2}_{\mathrm{FK}}
   \pi^{2}_{\nu,q}
   =
   \nu^{\otimes 2}_{J,p}.
\)
The identity
\(
   \pi^{2}_{\nu,q}
   =
   q\,\pi^{2}_{\nu,1}
   +
   (1-q)\,\pi^{2}_{\nu,0}
\)
is immediate from
\(
   h_q=2qH+2(1-q)(1-H).
\)
\end{proof}

\begin{remark}
The preceding proposition gives \(q\)-biased ES lifts over a fixed
two-replica FK marginal supported on \(E_{\mathrm{sel}}\).
These are auxiliary full two-replica Gibbs
states. For \(0<q<1\), they need not represent independent replicas and need
not belong to the product subclass
\(\mathcal G^{\mathrm{ES},\otimes 2}_{J,\beta}\). However, by
Proposition~\ref{prop:full-two-replica-extremals-factor}, the extremal full
two-replica ES states are products of extremal one-replica ES states.

The \(q\)-biased construction therefore shows that the paired coherent tail
events are genuine Gibbs tail events, while the physical independent-replica
overlap is read on the product extremal components. It does not give a unique
convex decomposition of an arbitrary ES state into \(q\)-biased states. The
canonical decomposition is the tail-event decomposition of
Theorem~\ref{thm:decomposition_replica}. In particular, for \(0<q<1\),
\[
   \pi^{2}_{\nu,q}
   =
   q\,\pi^{2}_{\nu,1}
   +
   (1-q)\,\pi^{2}_{\nu,0},
\]
so a \(q\)-biased state is already a mixture of the two oriented tail events.

Also, unless one additionally fixes the one-replica infinite-cluster
orientations, the FK projection is not necessarily globally
bijective on all two-replica ES tail events. The
agreement/disagreement tail decomposition associated with the paired coherent
companions is canonical.
\end{remark}

For the deterministic configurations \(J_e=+1\) and
\(\eta_e^{(1)}=\eta_e^{(2)}=1\) for every edge \(e\) of
\(\mathbb Z^d\), the paired FK cluster is all of \(\mathbb Z^d\).
Its two coherent companions are \(\mathbb Z^d\) and the empty set,
with densities \(1\) and \(0\) along every van~Hove sequence.

\subsection{FK characterization of mutual overlap}
\label{subsec:mutual-overlap}

Our next goal is to express the two-replica mutual overlap in terms of the
FK variables alone. For a fixed coupling realization \(J\), consider the
spatial two-replica overlap along a fixed van~Hove sequence
\((\Lambda_n)_{n\ge1}\):
\[
   Q_{\Lambda_n}(\omega^{\otimes 2})
   :=
   \frac{1}{|\Lambda_n|}
   \sum_{x\in\Lambda_n}
      \omega^{(1)}_x\omega^{(2)}_x .
\]
For products of extremal one-replica states, the limit of its expectation is
their mutual overlap. When both replicas are sampled independently from the
same extremal state, this limit is its Edwards--Anderson self-overlap;
see Corollary~\ref{cor:diagonal_pc_EA}.
For stationary joint periodic-boundary limits,
Corollary~\ref{cor:periodic_pc_overlap} identifies the expected absolute
value of the sampled limiting overlap with the mean companion-density
imbalance.

\begin{theorem}[Product-extremal two-replica overlap]
\label{thm:product_extremal_pc_overlap}
Fix a coupling realization \(J\in\{\pm1\}^E\), a finite inverse
temperature \(0\le\beta<\infty\), and \(p=1-e^{-2\beta}<1\). Let
\(
   \mu^{(1)}_{J,\beta},\ \mu^{(2)}_{J,\beta}
   \in
   \mathcal G^{\mathrm{SPIN}}_{J,\beta}
\)
be two extremal one-replica Gibbs states, and let
\(
   \pi^{(1)}_{J,\beta},\ \pi^{(2)}_{J,\beta}
   \in
   \mathcal G^{\mathrm{ES}}_{J,\beta}
\)
be their ES lifts. Assume that both lifts satisfy the \(0/1\)-infinite-cluster property.
Set
\[
   \mu^{\otimes 2}_{J,\beta}
   :=
   \mu^{(1)}_{J,\beta}\otimes\mu^{(2)}_{J,\beta},
   \qquad
   \pi^{\otimes 2}_{J,\beta}
   :=
   \pi^{(1)}_{J,\beta}\otimes\pi^{(2)}_{J,\beta},
\]
and let
\(
   \nu^{\otimes 2}_{J,p}
   :=
   \Pi^{\otimes 2}_{\mathrm{FK}}\pi^{\otimes 2}_{J,\beta}
\)
be the product FK marginal. Then \(N_\infty^p\le1\)
\(\nu^{\otimes 2}_{J,p}\)-a.s.

Fix a deterministic van~Hove sequence
\(
   \Gamma.
\)
Assume that the density limits required in
Definition~\ref{def:subseq-pc-density-imbalance} exist
\(\nu^{\otimes 2}_{J,p}\)-a.s. along \(\Gamma\). On the selected
tail event, write
\(
   C_\infty^A(\eta^{\otimes 2})
   =
   \mathcal S^{(2)}_{D,\Gamma}(\eta^{\otimes 2})
\)
for the selected pc--FK companion and \(C_\infty^B(\eta^{\otimes 2})\) for
the other one. Define
\[
   D_\Gamma(\eta^{\otimes 2})
   :=
   d_\Gamma(C_\infty^A(\eta^{\otimes 2}))
   -
   d_\Gamma(C_\infty^B(\eta^{\otimes 2})),
\]
and set \(D_\Gamma=0\) outside the selected tail event. Thus
\(D_\Gamma\ge0\), and pc--FK percolation along \(\Gamma\) occurs exactly when
\(
   \nu^{\otimes 2}_{J,p}(D_\Gamma>0)>0.
\)

Let
\[
   Q_{\Lambda_n}(\omega^{\otimes 2})
   :=
   \frac{1}{|\Lambda_n|}
   \sum_{x\in\Lambda_n}
      \omega^{(1)}_x\omega^{(2)}_x .
\] 
Then the limit
\begin{eqnarray} \label{qgamma}
   q_\Gamma(J,\beta)
   :=
   \lim_{n\to\infty}
   \mu^{\otimes 2}_{J,\beta}\bigl(Q_{\Lambda_n}\bigr)
\end{eqnarray}
exists. Moreover, there exists
\(
   \varepsilon_J\in\{+1,-1\}
\)
such that
\[
   q_\Gamma(J,\beta)
   =
   \varepsilon_J\,
   \mathbb E_{\nu^{\otimes 2}_{J,p}}
   \bigl[
      D_\Gamma(\eta^{\otimes 2})
   \bigr].
\]
Consequently,
\[
   |q_\Gamma(J,\beta)|>0
   \quad\Longleftrightarrow\quad
   \nu^{\otimes 2}_{J,p}(D_\Gamma>0)>0.
\]
Equivalently, the product-extremal two-replica state has nonzero
\(\Gamma\)-overlap if and only if pc--FK percolation occurs along
\(\Gamma\) in its product FK marginal.
\end{theorem}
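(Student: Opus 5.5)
The plan follows the proof of Theorem~\ref{EquivalenceOneCopy}, with the product structure supplying extremality in the two-replica specification.

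First, the bound \(N_\infty^p\le1\) \(\nu^{\otimes 2}_{J,p}\)-a.s.\ is immediate from Lemma~\ref{Lemma5}, or from Step~3 of Theorem~\ref{thm:FK-2replica-surjective}. Each one-replica lift satisfies the \(0/1\)-infinite-cluster property, and the lift is a product, so almost surely each replica has at most one infinite ordinary FK cluster.

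Next I establish that \(\pi^{\otimes 2}_{J,\beta}\) is extremal in the full simplex \(\mathcal G^{\mathrm{ES},2}_{J,\beta}\). By Theorem~\ref{SpinProjSincleCopy}(d), each \(\pi^{(r)}_{J,\beta}\) is extremal, hence tail-trivial. The converse part of Proposition~\ref{prop:full-two-replica-extremals-factor} then gives that the product is tail-trivial for the two-replica specification. I would rerun the reverse-martingale factorization \(\mathbb E[f_1f_2\mid\mathscr T^2]=\mathbb E[f_1\mid\mathscr T^{(1)}]\,\mathbb E[f_2\mid\mathscr T^{(2)}]\) to make sure it applies to a product measure. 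Corollary~\ref{cor:extremal_pc_sector} then says exactly one of \(E_{\mathrm{null}},E_{\mathrm{unsel}},E_{\mathrm{agr}},E_{\mathrm{dis}}\) has probability one. Its hypothesis that the one-replica FK marginals have the \(0/1\) property holds here.

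I then compute \(\mu^{\otimes2}_{J,\beta}(Q_{\Lambda_n})=\pi^{\otimes2}_{J,\beta}(Q_{\Lambda_n})\) by conditioning on \(\eta^{\otimes2}\). The key fact is that, given \(\eta^{(r)}\), every finite FK cluster in replica \(r\) receives an independent fair sign under \(\pi^{(r)}\). This holds because the ES specification forces flip invariance of finite clusters, as in the proof of Theorem~\ref{EquivalenceOneCopy}. The replicas are also conditionally independent given \(\eta^{\otimes2}\), since the measure is a product.

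Hence, if \(x\) is in a finite FK cluster of at least one replica, then \(\mathbb E[\omega^{(1)}_x\omega^{(2)}_x\mid\eta^{\otimes2}]=0\). I expect this to be the main technical obstacle. The fair-sign property for finite clusters must hold conditionally given the full bond configuration of both replicas together with the other replica's spins. I would derive it by applying the one-replica DLR equation on a finite volume containing the finite cluster and its boundary, with the exterior bonds closed around it. This gives invariance of \(\pi^{(r)}\) under the flip of that cluster, which is a measure-preserving map fixing \(\eta^{(r)}\). Independence of the replicas then transfers it to the product.

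The remaining vertices lie in \(C^{(1)}_\infty\cap C^{(2)}_\infty\), which is the infinite p--FK cluster \(\mathcal C_\infty\) by Lemma~\ref{lem:pairedAsIntersectionOfFKClusters}. There the overlap is \(\sigma\) on \(C^A_\infty\) and \(-\sigma\) on \(C^B_\infty\), where \(\sigma=2H-1\) is a single sign; this uses the constancy of the overlap on pc--FK companions noted in Theorem~\ref{thm:decomposition_replica}. So
\[
\mathbb E\bigl[Q_{\Lambda_n}\mid\eta^{\otimes2}\bigr]=\mathbb E\bigl[\sigma\mid\eta^{\otimes2}\bigr]\,\frac{|C^A_\infty\cap\Lambda_n|-|C^B_\infty\cap\Lambda_n|}{|\Lambda_n|}.
\]

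Finally I split into cases according to the almost-sure tail event.

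\textbf{Null event.} The expectation is \(0\).

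\textbf{Unselected event.} By the density-existence hypothesis the two densities are equal, so the absolute value of the expression above is at most \(\bigl||C^A_\infty\cap\Lambda_n|-|C^B_\infty\cap\Lambda_n|\bigr|/|\Lambda_n|\to0\). Bounded convergence gives \(q_\Gamma=0=\mathbb E[D_\Gamma]\).

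\textbf{Agreement or disagreement event.} Here \(\sigma\equiv+1\) or \(\sigma\equiv-1\) almost surely, and bounded convergence yields \(q_\Gamma=\pm\mathbb E_{\nu^{\otimes2}_{J,p}}[D_\Gamma]\). I set \(\varepsilon_J=+1\) on \(E_{\mathrm{agr}}\) and \(\varepsilon_J=-1\) on \(E_{\mathrm{dis}}\), and take \(\varepsilon_J\) arbitrary otherwise.

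This proves existence of the limit and the identity. Since \(D_\Gamma\ge0\), with \(D_\Gamma>0\) exactly on the selected event, \(\mathbb E[D_\Gamma]>0\) if and only if \(\nu^{\otimes2}_{J,p}(D_\Gamma>0)>0\), which is the stated equivalence.
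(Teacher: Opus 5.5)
Your argument is essentially the paper's: tail triviality of the product via the converse part of Proposition~\ref{prop:full-two-replica-extremals-factor}, the four-event decomposition, and vanishing of the conditional overlap at vertices lying in a finite FK cluster of some replica. Bounded convergence then settles each case.

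There is one slip. It is not true that \(C^{(1)}_\infty\cap C^{(2)}_\infty\) is always the infinite p--FK cluster. Two infinite ordinary clusters can meet in a finite nonempty set \(I\), and that configuration lies in \(E_{\mathrm{null}}\). At \(x\in I\), the conditional mean \(\mathbb E[\omega^{(1)}_x\omega^{(2)}_x\mid\eta^{\otimes2}]\) need not vanish. So your claim that the expectation of \(Q_{\Lambda_n}\) is \(0\) on the null event fails at finite \(n\). It is only bounded by \(|I\cap\Lambda_n|/|\Lambda_n|\to0\), which is exactly how the paper handles this case, and \(q_\Gamma=0\) still follows.

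Similarly, on \(E_{\mathrm{unsel}}\) your \(\sigma=2H-1\) is undefined, because \(H\) refers to the selected companion. You should use an arbitrary Borel labeling of the two companions, as the paper does. Your bound is invariant under relabeling, so nothing else changes.
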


\begin{proof}
By the one-replica spin--ES isomorphism, the ES lifts
\(\pi^{(1)}_{J,\beta}\) and \(\pi^{(2)}_{J,\beta}\) are extremal. Hence their
product is tail-trivial for the product two-replica tail sigma-field. The
one-replica \(0/1\)-infinite-cluster property in the two copies implies \(N_\infty^p\le1\) for
the paired FK configuration, by Lemma~\ref{Lemma5}.

Apply the two-replica tail-event decomposition from
Theorem~\ref{thm:decomposition_replica}. Since
\(\pi^{\otimes 2}_{J,\beta}\) is tail-trivial, exactly one of the corresponding
tail events has probability one. We first justify the zero limit in the
null and unselected cases. Given the two FK configurations, let
$I=C_\infty^{(1)}\cap C_\infty^{(2)}$, with $I=\varnothing$ when one
replica has no infinite FK cluster. Outside $I$, the conditional overlap
expectation is zero: at least one ordinary FK cluster is finite, its
compatible coloring is fair, and the two replicas are independent.
If $I$ is finite, therefore,
\[
 \left|\mathbb E_{\pi^{\otimes2}_{J,\beta}}
 [Q_{\Lambda_n}\mid\eta^{\otimes2}]\right|
 \le \frac{|I\cap\Lambda_n|}{|\Lambda_n|}\longrightarrow0.
\]
This includes a finite intersection of two infinite ordinary FK clusters.
If $I$ is infinite, choose a Borel labeling $C^0,C^1$ of its coherent
companions and a vertex $a\in C^0$. Compatibility gives a conditional
mean $t(\eta^{\otimes2})s(a,x)$ at $x\in I$, where
$t=\mathbb E[\omega_a^{(1)}\omega_a^{(2)}\mid\eta^{\otimes2}]$
and $|t|\le1$. Consequently the absolute conditional mean of
$Q_{\Lambda_n}$ is at most
\[
 \left|\frac{|C^0\cap\Lambda_n|-|C^1\cap\Lambda_n|}{|\Lambda_n|}\right|.
\]
On the unselected event the assumed existence of both companion densities
makes this expression tend to zero. Bounded convergence proves the claimed
zero limiting expectation in both cases. No pointwise vanishing at finite
$n$ is asserted.

On the selected tail event, the agreement event
\(
   \{\omega^{(1)}=\omega^{(2)}
      \text{ on }C_\infty^A(\eta^{\otimes 2})\}
\)
is a tail event. Hence it has probability either zero or one. If it has
probability one, then the replicas agree on \(C_\infty^A\) and disagree on
\(C_\infty^B\); if it has probability zero, the signs are reversed. Therefore
there is \(\varepsilon_J\in\{+1,-1\}\) such that, conditional on the FK
variables,
\[
   \mathbb E_{\pi^{\otimes 2}_{J,\beta}}
   \bigl[
      \omega^{(1)}_x\omega^{(2)}_x
      \,\big|\,
      \eta^{\otimes 2}
   \bigr]
   =
   \begin{cases}
      \varepsilon_J, & x\in C_\infty^A(\eta^{\otimes 2}),\\
      -\varepsilon_J, & x\in C_\infty^B(\eta^{\otimes 2}),\\
      0, & x\notin C_\infty^A(\eta^{\otimes 2})
              \cup C_\infty^B(\eta^{\otimes 2}),
   \end{cases}
\]
where the last case gives zero because at least one of the one-replica FK
clusters containing \(x\) is finite and is colored symmetrically.

Thus
\[
\begin{aligned}
   \mu^{\otimes 2}_{J,\beta}
   \bigl(Q_{\Lambda_n}\bigr)
   &=
   \varepsilon_J\,
   \mathbb E_{\nu^{\otimes 2}_{J,p}}
   \left[
      \frac{
         |C_\infty^A(\eta^{\otimes 2})\cap\Lambda_n|
         -
         |C_\infty^B(\eta^{\otimes 2})\cap\Lambda_n|
      }{
         |\Lambda_n|
      }
   \right],
\end{aligned}
\]
This identity is used when the selected tail event has probability one.
Letting \(n\to\infty\) and using bounded convergence gives
\[
   q_\Gamma(J,\beta)
   =
   \varepsilon_J\,
   \mathbb E_{\nu^{\otimes 2}_{J,p}}
   \bigl[
      D_\Gamma(\eta^{\otimes 2})
   \bigr]
\]
on the selected tail event, and \(q_\Gamma(J,\beta)=0\) otherwise.
Since \(D_\Gamma\ge0\), its expectation is positive if and only if
\(\nu^{\otimes 2}_{J,p}(D_\Gamma>0)>0\). This proves the equivalence.
\end{proof}

\begin{corollary}[Nonzero mutual overlap along some van Hove sequence]
\label{cor:spin_glass_some_vanhove}
Fix a coupling realization \(J\), an inverse temperature \(\beta\), and
\(p=1-e^{-2\beta}<1\). Work within the class of products of extremal
one-replica Gibbs states whose ES lifts satisfy the one-replica
\(0/1\)-infinite-cluster property. In both assertions below, require
that the two coherent companion densities in
Definition~\ref{def:subseq-pc-density-imbalance} exist almost surely
under the corresponding product FK marginal along the stated
deterministic sequence \(\Gamma\), with both densities zero when there
is no infinite paired FK cluster.

For a deterministic van~Hove sequence
\(
   \Gamma,
\)
say that a product state
\(
   \mu^{\otimes 2}_{J,\beta}
   =
   \mu^{(1)}_{J,\beta}\otimes\mu^{(2)}_{J,\beta}
\)
has nonzero \(\Gamma\)-overlap if \(q_\Gamma(J,\beta)\)
defined in \eqref{qgamma}
exists and is nonzero.

Then the following are equivalent.

\begin{enumerate}
\item[\textup{(i)}]
There exist extremal one-replica Gibbs states
\(\mu^{(1)}_{J,\beta},\mu^{(2)}_{J,\beta}\) in the  class 
of states satisfying the \(0/1\)-infinite-cluster property,
and a
deterministic van~Hove sequence \(\Gamma\), satisfying the preceding
density-existence requirement, such that
\(\mu^{(1)}_{J,\beta}\otimes\mu^{(2)}_{J,\beta}\) has nonzero
\(\Gamma\)-overlap.

\item[\textup{(ii)}]
There exist extremal one-replica ES states
\(\pi^{(1)}_{J,\beta},\pi^{(2)}_{J,\beta}\) in the class 
of states satisfying the \(0/1\)-infinite-cluster property, and a
deterministic van~Hove sequence \(\Gamma\), satisfying the same
density-existence requirement, such that the product FK marginal
\[
   \nu^{\otimes 2}_{J,p}
   :=
   \Pi^{\otimes 2}_{\mathrm{FK}}
   \bigl(
      \pi^{(1)}_{J,\beta}\otimes\pi^{(2)}_{J,\beta}
   \bigr)
\]
has pc--FK percolation along \(\Gamma\).
\end{enumerate}

In this case, flipping all spins in one of the two replicas changes the sign
of the \(\Gamma\)-overlap but leaves the FK marginal unchanged. Thus pc--FK
percolation along some deterministic van~Hove sequence detects coexistence
of the two relative overlap orientations.
\end{corollary}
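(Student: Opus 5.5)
The corollary reduces to Theorem~\ref{thm:product_extremal_pc_overlap} applied pairwise to state--sequence pairs, followed by a symmetry argument for the final assertion. Fix \(J\), \(\beta\), and a pair of extremal one-replica Gibbs states \(\mu^{(1)}_{J,\beta},\mu^{(2)}_{J,\beta}\) whose ES lifts \(\pi^{(1)}_{J,\beta},\pi^{(2)}_{J,\beta}\) satisfy the \(0/1\)-infinite-cluster property. Fix also a deterministic van~Hove sequence \(\Gamma\) along which the coherent companion densities exist almost surely under \(\nu^{\otimes2}_{J,p}=\Pi^{\otimes2}_{\mathrm{FK}}(\pi^{(1)}_{J,\beta}\otimes\pi^{(2)}_{J,\beta})\).

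By Theorem~\ref{SpinProjSincleCopy}(d), extremality passes in both directions between \(\mu^{(i)}_{J,\beta}\) and \(\pi^{(i)}_{J,\beta}\), since an affine isomorphism of simplices preserves extreme points. Hence the pairs in (i) and in (ii) correspond bijectively, and the \(0/1\) property and the density-existence requirement are the same conditions on both sides, because they are stated on the lifts and on the common FK marginal. For each such pair, Theorem~\ref{thm:product_extremal_pc_overlap} shows that \(q_\Gamma(J,\beta)\) exists and that \(|q_\Gamma|>0\) holds exactly when \(\nu^{\otimes2}_{J,p}(D_\Gamma>0)>0\). This condition is pc--FK percolation along \(\Gamma\) in the sense of Definition~\ref{def:subseq-pc-density-imbalance}: under the density-existence hypothesis, \(D_\Gamma>0\) holds precisely when the two companion densities differ. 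Taking the existential quantifier over pairs and sequences then gives (i)\(\Leftrightarrow\)(ii).

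For the last assertion, let \(F\) be the global spin flip and replace \(\mu^{(2)}\) by \(F_*\mu^{(2)}\). This state is again an extremal Gibbs state, because the Hamiltonian is flip-invariant. Its ES lift is \((F_{\mathrm{ES}})_*\pi^{(2)}\), since the bond kernel \(\kappa_J\) depends on \(\omega\) only through \(J_{xy}\omega_x\omega_y\). This lift has the same FK marginal, so the \(0/1\) property, the density-existence hypothesis and \(\nu^{\otimes2}_{J,p}\) are all unchanged. Meanwhile \(\omega^{(1)}_x\omega^{(2)}_x\) changes sign, so \(Q_{\Lambda_n}\) does too, and the new limit is \(-q_\Gamma\). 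The two product states are therefore distinct and have opposite nonzero overlaps while sharing one FK marginal. Equivalently, the sign \(\varepsilon_J\) in Theorem~\ref{thm:product_extremal_pc_overlap} is exchanged, so the agreement and disagreement tail events \(E_{\mathrm{agr}},E_{\mathrm{dis}}\) of Theorem~\ref{thm:decomposition_replica} are both realized.

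The only step requiring care is checking that the hypotheses transfer exactly under these correspondences, in particular that the lift of the flipped state is the flipped lift. This follows from uniqueness of the ES lift in Theorem~\ref{SpinProjSincleCopy}(d). No new estimates are needed.
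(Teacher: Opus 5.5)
Your proposal is correct and follows essentially the paper's argument. Like the paper, you apply Theorem~\ref{thm:product_extremal_pc_overlap} to each fixed state--sequence pair, take the existential quantifier, and flip all spins in one replica for the final assertion; your added checks (extremality transfer via Theorem~\ref{SpinProjSincleCopy}(d), and that $(F_{\mathrm{ES}})_*\pi^{(2)}$ is the lift of $F_*\mu^{(2)}$ with the same FK marginal) make explicit what the paper's short proof leaves implicit.
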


\begin{proof}
Fix a product state and a deterministic van~Hove sequence \(\Gamma\)
satisfying the stated density-existence requirement. All density limits,
selectors and pc--FK percolation are defined relative to that same
sequence. The hypotheses of
Theorem~\ref{thm:product_extremal_pc_overlap} are therefore satisfied,
and it gives the equivalence between nonzero limiting expected
\(\Gamma\)-overlap and pc--FK percolation along \(\Gamma\).
Taking the existential quantifier over these state--sequence pairs
proves the equivalence of \textup{(i)} and \textup{(ii)}.

Finally, a global spin flip in one replica preserves the Hamiltonian and the
FK marginal, but changes
\(\omega^{(1)}_x\omega^{(2)}_x\) into its negative for every \(x\). Hence the
two relative orientations have opposite \(\Gamma\)-overlaps.
\end{proof}

\begin{theorem}[Quenched averaged oriented product-extremal overlap]
\label{Th1_replica_product_extremal}
In \(\mathbb Z^d\), fix \(\lambda\in[0,1]\), \(0\le\beta<\infty\),
and \(p=1-e^{-2\beta}<1\), and fix a deterministic van~Hove sequence
\(
   \Gamma.
\)
For \(P_\lambda\)-a.e. coupling realization \(J\), let
\(
   \mu^{(1)}_{J,\beta},\ \mu^{(2)}_{J,\beta}
\)
be extremal one-replica Gibbs states whose ES lifts
\(
   \pi^{(1)}_{J,\beta},\ \pi^{(2)}_{J,\beta}
\)
satisfy the $0/1$ property.
Assume that the choices \(J\mapsto \mu^{(r)}_{J,\beta}\) are measurable
with respect to the disorder.

Set
\[
   \mu^{\otimes 2}_{J,\beta}
   :=
   \mu^{(1)}_{J,\beta}\otimes\mu^{(2)}_{J,\beta},
   \qquad
   \pi^{\otimes 2}_{J,\beta}
   :=
   \pi^{(1)}_{J,\beta}\otimes\pi^{(2)}_{J,\beta},
   \quad \text{ and } \quad
   \nu^{\otimes 2}_{J,p}
   :=
   \Pi^{\otimes 2}_{\mathrm{FK}}\pi^{\otimes 2}_{J,\beta}.
\]
Assume that the density limits required in
Definition~\ref{def:subseq-pc-density-imbalance} exist for
\[
   P_\lambda(dJ)\,\nu^{\otimes 2}_{J,p}(d\eta^{\otimes 2})\text{-a.e.}
\]
configuration.

Choose the relative orientation measurably in \(J\), as in the proof below,
and retain the same notation for the oriented states, so that on the
selected tail event the selected companion
\[
   C_\infty^A(\eta^{\otimes 2})
   =
   \mathcal S^{(2)}_{D,\Gamma}(\eta^{\otimes 2})
\]
carries agreement:
\[
   \omega^{(1)}=\omega^{(2)}
   \quad\text{on }C_\infty^A(\eta^{\otimes 2}).
\]
Let
\[
   D_\Gamma(\eta^{\otimes 2})
   :=
   d_\Gamma(C_\infty^A(\eta^{\otimes 2}))
   -
   d_\Gamma(C_\infty^B(\eta^{\otimes 2})),
\]
with \(D_\Gamma=0\) outside the selected tail event. Then, for
\(P_\lambda\)-a.e. \(J\), the limit
\[
   q_\Gamma^+(J,\beta)
   :=
   \lim_{n\to\infty}
   \mu^{\otimes 2}_{J,\beta}
   \left[
      \frac{1}{|\Lambda_n|}
      \sum_{x\in\Lambda_n}
      \omega^{(1)}_x\omega^{(2)}_x
   \right]
\]
exists and satisfies
\[
   q_\Gamma^+(J,\beta)
   =
   \mathbb E_{\nu^{\otimes 2}_{J,p}}
   \bigl[
      D_\Gamma(\eta^{\otimes 2})
   \bigr].
\]
Consequently,
\[
   q_\Gamma^+(J,\beta)>0
   \quad\Longleftrightarrow\quad
   \nu^{\otimes 2}_{J,p}(D_\Gamma>0)>0.
\]

If \(q_\Gamma(J,\beta)\) denotes the overlap before the orientation choice,
then \(q_\Gamma^+(J,\beta)=|q_\Gamma(J,\beta)|\).
After averaging over the couplings, define the oriented averaged mutual overlap by
\[
   q_{\mathrm{mut},\Gamma}^{+}
   :=
   \mathbb E_{P_\lambda}
   \bigl[
      q_\Gamma^+(J)
   \bigr].
\]
Then
\[
   q_{\mathrm{mut},\Gamma}^{+}
   =
   \mathbb E_{P_\lambda}
   \left[
      \mathbb E_{\nu^{\otimes 2}_{J,p}}
      \bigl[
         D_\Gamma(\eta^{\otimes 2})
      \bigr]
   \right].
\]
In particular,
\[
   q_{\mathrm{mut},\Gamma}^{+}>0
   \quad\Longleftrightarrow\quad
   P_\lambda\!\left(
      \nu^{\otimes 2}_{J,p}(D_\Gamma>0)>0
   \right)>0.
\]
Equivalently, the oriented averaged mutual overlap along \(\Gamma\) is
nonzero precisely when pc--FK percolation occurs along \(\Gamma\) with
positive \(P_\lambda\)-probability.
\end{theorem}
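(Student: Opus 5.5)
The plan is to reduce everything to the fixed-\(J\) Theorem~\ref{thm:product_extremal_pc_overlap}, applied for \(P_\lambda\)-a.e.\ \(J\), and to handle three additional issues. The orientation must be chosen measurably, the identity must be integrated over the disorder, and the final equivalence must be deduced from nonnegativity.

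First, fix \(J\) in the full-measure set on which the hypotheses hold. These hypotheses are extremality, the \(0/1\) property of both lifts, and almost-sure existence of the companion densities; the last holds for a.e.\ \(J\) by Fubini applied to the joint null set under \(P_\lambda(dJ)\,\nu^{\otimes 2}_{J,p}\). Theorem~\ref{thm:product_extremal_pc_overlap} then gives existence of \(q_\Gamma(J,\beta)\) and the identity \(q_\Gamma(J,\beta)=\varepsilon_J\,\mathbb E_{\nu^{\otimes2}_{J,p}}[D_\Gamma]\) for some \(\varepsilon_J\in\{\pm1\}\). More precisely, the proof of that theorem shows the following. On the selected tail event (which, by Corollary~\ref{cor:extremal_pc_sector}, has probability one or zero), \(\varepsilon_J=+1\) if \(\pi^{\otimes2}_{J,\beta}(E_{\mathrm{agr}})=1\) and \(\varepsilon_J=-1\) if \(\pi^{\otimes2}_{J,\beta}(E_{\mathrm{dis}})=1\). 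When the selected event is null, both sides vanish and we set \(\varepsilon_J=+1\).

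Second, orient. Define \(\tilde\mu^{(2)}_{J,\beta}:=\mu^{(2)}_{J,\beta}\) if \(\varepsilon_J=+1\), and \(\tilde\mu^{(2)}_{J,\beta}:=F_*\mu^{(2)}_{J,\beta}\) otherwise. Flipping replica~2 preserves extremality, the \(0/1\) property and the FK marginal. It therefore preserves \(\nu^{\otimes2}_{J,p}\), \(D_\Gamma\) and the density hypothesis, and it exchanges \(E_{\mathrm{agr}}\) with \(E_{\mathrm{dis}}\). The oriented product hence satisfies agreement on \(C_\infty^A\), and applying the theorem again gives \(q^+_\Gamma(J,\beta)=\mathbb E_{\nu^{\otimes2}_{J,p}}[D_\Gamma]=|q_\Gamma(J,\beta)|\), since \(D_\Gamma\ge0\).

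Measurability in \(J\) is the main obstacle. I would write \(\varepsilon_J=\operatorname{sign}\) of \(\lim_n \mu^{\otimes2}_{J,\beta}(Q_{\Lambda_n})\), with the convention \(+1\) at \(0\). Each \(\mu^{\otimes2}_{J,\beta}(Q_{\Lambda_n})\) is measurable in \(J\) because \(J\mapsto\mu^{(r)}_{J,\beta}\) is measurable and \(Q_{\Lambda_n}\) is local. A pointwise limit of measurable functions is measurable, and so is its sign. Measurability of the oriented kernel follows, since \(\tilde\mu^{(2)}_J=\mathbf 1_{\{\varepsilon_J=1\}}\mu^{(2)}_J+\mathbf 1_{\{\varepsilon_J=-1\}}F_*\mu^{(2)}_J\). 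Measurability of \(J\mapsto\mathbb E_{\nu^{\otimes2}_{J,p}}[D_\Gamma]\) follows from the measurable ES lift (Theorem~\ref{SpinProjSincleCopy}(d) is an explicit kernel construction) together with joint Borel measurability of \(D_\Gamma\) as a \(\limsup\) of local ratios. I would invoke Lemma~\ref{lem:joint-density-measurability} for the latter.

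Finally, integrate. Since \(0\le q^+_\Gamma\le1\), the quantity \(q^+_{\mathrm{mut},\Gamma}\) is well defined and equals \(\mathbb E_{P_\lambda}\mathbb E_{\nu^{\otimes2}_{J,p}}[D_\Gamma]\). The pointwise equivalence \(q^+_\Gamma>0\iff\nu^{\otimes2}_{J,p}(D_\Gamma>0)>0\) holds because the expectation of a nonnegative variable is positive iff that variable is positive with positive probability. Applying the same fact to the nonnegative function \(J\mapsto q^+_\Gamma(J)\) under \(P_\lambda\) gives the averaged equivalence.
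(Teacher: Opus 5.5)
Your proposal is correct and follows the paper's proof essentially step for step. You apply Theorem~\ref{thm:product_extremal_pc_overlap} at fixed \(J\), flip the second replica when \(q(J)=\lim_n \mu^{\otimes2}_{J,\beta}(Q_{\Lambda_n})<0\), invoke Lemma~\ref{lem:joint-density-measurability} for measurability of \(J\mapsto\mathbb E_{\nu^{\otimes2}_{J,p}}[D_\Gamma]\), and integrate using \(D_\Gamma\ge0\). Two small points: \(q(J)\) is only an a.e.\ limit of Borel functions, so set it to \(0\) off the null nonconvergence set, as the paper does. Also, your identification of \(\varepsilon_J\) with the agreement/disagreement tail event agrees with this sign rule, because \(D_\Gamma>0\) on the selected event and hence \(q(J)\neq0\) whenever that event has probability one.
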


\begin{proof}
For each fixed \(J\), apply
Theorem~\ref{thm:product_extremal_pc_overlap}. The ES bond-sampling kernel
is jointly Borel, so the assumed measurable spin-state choices give
measurable ES lifts and product FK kernels. Lemma~\ref{lem:joint-density-measurability}
therefore makes the density difference and its quenched expectation
measurable in the disorder.

Before orienting the states, set
\[
 q_n(J):=\frac1{|\Lambda_n|}\sum_{x\in\Lambda_n}
       \mu^{(1)}_{J,\beta}(\omega_x)\,
       \mu^{(2)}_{J,\beta}(\omega_x),
 \qquad q(J):=\lim_{n\to\infty}q_n(J),
\]
with \(q=0\) off the convergence set. Each \(q_n\) is Borel, and the
fixed-\(J\) theorem gives convergence for almost every \(J\), so \(q\)
is measurable. If \(F\omega=-\omega\), replace the second state by
\[
 \widehat\mu^{(2)}_{J,\beta}:=
 \begin{cases}
    \mu^{(2)}_{J,\beta},&q(J)\ge0,\\
    F_*\mu^{(2)}_{J,\beta},&q(J)<0.
 \end{cases}
\]
This is a measurable choice, preserves extremality and the FK marginal,
and keeps the replicas independent at fixed \(J\). The flip depends on
\(J\), not on sampled replica configurations. By the fixed-\(J\) identity,
it chooses agreement on the denser companion whenever the selected
event has positive probability. Thus, after this orientation choice,
\(q_\Gamma^+(J)=|q(J)|\) and
\[
   q_\Gamma^+(J)
   =
   \mathbb E_{\nu^{\otimes 2}_{J,p}}
   \bigl[
      D_\Gamma(\eta^{\otimes 2})
   \bigr]
\]
for \(P_\lambda\)-a.e. \(J\). The equivalence
\[
   q_\Gamma^+(J)>0
   \quad\Longleftrightarrow\quad
   \nu^{\otimes 2}_{J,p}(D_\Gamma>0)>0
\]
follows from \(D_\Gamma\ge0\).

Integrating the identity over \(P_\lambda(dJ)\) gives
\[
   q_{\mathrm{mut},\Gamma}^{+}
   =
   \mathbb E_{P_\lambda}
   \left[
      \mathbb E_{\nu^{\otimes 2}_{J,p}}
      \bigl[
         D_\Gamma(\eta^{\otimes 2})
      \bigr]
   \right].
\]
Since \(D_\Gamma\ge0\), this averaged quantity is positive exactly when
\(D_\Gamma>0\) with positive joint
\(P_\lambda(dJ)\nu^{\otimes 2}_{J,p}(d\eta^{\otimes 2})\)-probability, which
is precisely pc--FK percolation along \(\Gamma\) with positive
\(P_\lambda\)-probability.
\end{proof}

\begin{corollary}[Diagonal replicas and the Edwards--Anderson self-overlap]
\label{cor:diagonal_pc_EA}
Under the hypotheses of
Theorem~\ref{thm:product_extremal_pc_overlap}, suppose that the two
one-replica states coincide:
\[
   \mu^{(1)}_{J,\beta}
   =
   \mu^{(2)}_{J,\beta}
   =:
   \mu_{J,\beta}.
\]
Let \(\pi_{J,\beta}\) be their common ES lift and
\(\nu_{J,p}:=\Pi_{\mathrm{FK}}\pi_{J,\beta}\).
The replicas are independent, with spin law
\(\mu_{J,\beta}\otimes\mu_{J,\beta}\) and FK law
\(\nu_{J,p}\otimes\nu_{J,p}\).

The Edwards--Anderson self-overlap of the chosen extremal state along
\(\Gamma\),
\[
   q_{\mathrm{EA},\Gamma}(J,\beta;\mu_{J,\beta})
   :=
   \lim_{n\to\infty}
   \frac{1}{|\Lambda_n|}
   \sum_{x\in\Lambda_n}
      \mu_{J,\beta}(\omega_x)^2,
\]
exists and satisfies
\[
   q_{\mathrm{EA},\Gamma}(J,\beta;\mu_{J,\beta})
   =
   \mathbb E_{\nu_{J,p}\otimes\nu_{J,p}}
   \bigl[D_\Gamma(\eta^{\otimes2})\bigr].
\]
Consequently,
\[
   q_{\mathrm{EA},\Gamma}(J,\beta;\mu_{J,\beta})>0
   \quad\Longleftrightarrow\quad
   (\nu_{J,p}\otimes\nu_{J,p})(D_\Gamma>0)>0.
\]
No relative spin-flip orientation is needed in this diagonal case.

For a measurable family \(J\mapsto\mu_{J,\beta}\) satisfying the
hypotheses of Theorem~\ref{Th1_replica_product_extremal} with these
diagonal products and the same deterministic sequence \(\Gamma\), define
\[
   q_{\mathrm{EA},\Gamma}
   :=
   \mathbb E_{P_\lambda}
   \bigl[
      q_{\mathrm{EA},\Gamma}(J,\beta;\mu_{J,\beta})
   \bigr].
\]
Then
\[
   q_{\mathrm{EA},\Gamma}
   =
   \mathbb E_{P_\lambda}
   \left[
      \mathbb E_{\nu_{J,p}\otimes\nu_{J,p}}
      \bigl[D_\Gamma(\eta^{\otimes2})\bigr]
   \right],
\]
and
\[
   q_{\mathrm{EA},\Gamma}>0
   \quad\Longleftrightarrow\quad
   P_\lambda\!\left(
      (\nu_{J,p}\otimes\nu_{J,p})(D_\Gamma>0)>0
   \right)>0.
\]
\end{corollary}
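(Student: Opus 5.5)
The plan is to specialize Theorem~\ref{thm:product_extremal_pc_overlap} to the diagonal product \(\mu_{J,\beta}\otimes\mu_{J,\beta}\) and then identify the expected overlap with the self-overlap. Independence of the replicas gives, for every \(n\),
\[
   \mu^{\otimes 2}_{J,\beta}\bigl(Q_{\Lambda_n}\bigr)
   =\frac{1}{|\Lambda_n|}\sum_{x\in\Lambda_n}\mu_{J,\beta}(\omega_x)^2 .
\]
The hypotheses of Theorem~\ref{thm:product_extremal_pc_overlap} hold with \(\mu^{(1)}=\mu^{(2)}=\mu_{J,\beta}\): the state is extremal, its ES lift satisfies the \(0/1\) property, and the density limits exist by assumption. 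By Theorem~\ref{thm:ES-2replica}, the product ES lift is \(\pi_{J,\beta}\otimes\pi_{J,\beta}\) and the FK marginal is \(\nu_{J,p}\otimes\nu_{J,p}\). The theorem therefore yields existence of the limit \(q_\Gamma(J,\beta)=q_{\mathrm{EA},\Gamma}(J,\beta;\mu_{J,\beta})\) and the identity \(q_{\mathrm{EA},\Gamma}=\varepsilon_J\,\mathbb E[D_\Gamma]\).

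The only point needing care is removing the sign \(\varepsilon_J\). Each term \(\mu_{J,\beta}(\omega_x)^2\) is nonnegative, so \(q_{\mathrm{EA},\Gamma}\ge0\). If \(\mathbb E[D_\Gamma]>0\), this forces \(\varepsilon_J=+1\); if \(\mathbb E[D_\Gamma]=0\), the sign is irrelevant. Thus \(q_{\mathrm{EA},\Gamma}=\mathbb E_{\nu_{J,p}\otimes\nu_{J,p}}[D_\Gamma]\) with no orientation choice. This means, on the selected event, that the two replicas necessarily agree on the denser companion. The equivalence with \((\nu_{J,p}\otimes\nu_{J,p})(D_\Gamma>0)>0\) then follows from \(D_\Gamma\ge0\), as in Theorem~\ref{thm:product_extremal_pc_overlap}.

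For the disorder average, I would invoke Theorem~\ref{Th1_replica_product_extremal} with \(\mu^{(1)}_{J,\beta}=\mu^{(2)}_{J,\beta}=\mu_{J,\beta}\). Its measurable orientation step never flips here, since \(q(J)\ge0\) for every \(J\). Hence the oriented overlap \(q^+_\Gamma(J)\) coincides with \(q_{\mathrm{EA},\Gamma}(J,\beta;\mu_{J,\beta})\), and the measurability supplied by Lemma~\ref{lem:joint-density-measurability} applies unchanged. Integrating the fixed-\(J\) identity over \(P_\lambda\) gives the averaged formula. Positivity holds exactly when \(D_\Gamma>0\) has positive joint probability, i.e.\ when \(P_\lambda\bigl((\nu_{J,p}\otimes\nu_{J,p})(D_\Gamma>0)>0\bigr)>0\).

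I expect the only genuine obstacle to be the sign argument. Everything else is direct specialization of results already proved, together with the factorization of the two-point expectation under independence.
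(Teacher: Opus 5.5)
Your proposal is correct and follows essentially the same route as the paper. Under independence you factor the expected overlap as $\frac{1}{|\Lambda_n|}\sum_{x\in\Lambda_n}\mu_{J,\beta}(\omega_x)^2\ge0$, use Theorem~\ref{thm:product_extremal_pc_overlap} for existence of the limit and for $|q_\Gamma|=\mathbb E[D_\Gamma]$, remove the sign by nonnegativity, note that the orientation step of Theorem~\ref{Th1_replica_product_extremal} never flips in the diagonal case, and then integrate over $P_\lambda$. Your added observation, that on the selected event the two copies must agree on the denser companion, is a valid by-product of the same sign argument; and your appeal to the measurability of $q(J)$ established in Theorem~\ref{Th1_replica_product_extremal} is the same fact the paper states directly, namely that the self-overlap is an almost-sure limit of bounded measurable spatial averages.
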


\begin{proof}
Independence of the two replicas gives, for every \(n\),
\[
   (\mu_{J,\beta}\otimes\mu_{J,\beta})(Q_{\Lambda_n})
   =
   \frac{1}{|\Lambda_n|}
   \sum_{x\in\Lambda_n}
      \mu_{J,\beta}(\omega_x)^2
   \ge0.
\]
Theorem~\ref{thm:product_extremal_pc_overlap} gives existence of the
limit and identifies its absolute value with
\(\mathbb E_{\nu_{J,p}\otimes\nu_{J,p}}D_\Gamma\).
Since the limit is nonnegative, the absolute value can be removed.
In particular, the orientation procedure in
Theorem~\ref{Th1_replica_product_extremal} leaves the second replica
unchanged in the diagonal case.

The self-overlap is measurable in \(J\), being the almost-sure limit
of the displayed bounded measurable spatial averages.
Averaging the fixed-\(J\) identity over the disorder gives the asserted
averaged identity. Both positivity equivalences follow from
\(D_\Gamma\ge0\).
\end{proof}

\noindent\emph{Relation between the two density criteria.}
Fix \(J,\beta,\Gamma\), and let \(\mu_{J,\beta}\) be an extremal state
with FK marginal \(\nu_{J,p}\) satisfying the hypotheses of both
Theorem~\ref{EquivalenceOneCopy} and
Corollary~\ref{cor:diagonal_pc_EA} along the same sequence \(\Gamma\).
Then sf--FK percolation under \(\nu_{J,p}\) implies pc--FK percolation
under \(\nu_{J,p}\otimes\nu_{J,p}\), since
\[
   m_\Gamma(\mu_{J,\beta})^2
   \le q_{\mathrm{EA},\Gamma}(J,\beta;\mu_{J,\beta}).
\]
Either criterion implies that \(\mu_{J,\beta}\) and its global spin
flip are distinct extremal Gibbs states, as in
Section~\ref{subsec:order-parameters}.

\noindent\emph{Disorder-dependent density sequences.}
Theorem~\ref{Th1_replica_product_extremal} and
Corollary~\ref{cor:diagonal_pc_EA} also apply with a
\(P_\lambda\)-completion-measurable choice \(J\mapsto\Gamma(J)\),
using a Borel version outside a null set. Every density and limit is then
evaluated along that one sequence at the fixed \(J\), and the joint
almost-sure density-existence hypothesis is retained. The preceding
measurability and orientation proof is unchanged, with
\(\Lambda_n\) replaced by \(\Lambda_n(J)\). Proposition~\ref{prop:measurable-density-sequences}
supplies measurable witnesses when the corresponding sections are
nonempty. For a stationary periodic-boundary joint two-replica FK law,
Lemma~\ref{lem:stationary-companion-densities} instead supplies a common
deterministic cube sequence. The resulting overlap formula for the
actual joint limit is given in Corollary~\ref{cor:periodic_pc_overlap}.
Stationarity of that law does not require its individual extremal
components to be stationary.

\subsection{Periodic-boundary application}
\label{subsec:periodic-overlap}

\begin{corollary}[Absolute overlap in stationary joint laws]
\label{cor:periodic_pc_overlap}
Fix \(\lambda\in[0,1]\) and \(0\le\beta<\infty\), and put
\(p=1-e^{-2\beta}<1\). Let \(\rho^2\) be a translation-invariant joint
law of the couplings and two-replica ES configurations on
\(\mathbb Z^d\), with coupling marginal \(P_\lambda\). Write
\[
 \rho^2(dJ,d\omega^2,d\eta^2)
 =P_\lambda(dJ)\,\pi_J^2(d\omega^2,d\eta^2),
\]
and suppose that \(\pi_J^2\in\mathcal G^{\mathrm{ES},2}_{J,\beta}\)
for \(P_\lambda\)-almost every \(J\). Assume that each replica has at
most one infinite ordinary FK cluster, \(\rho^2\)-almost surely.
Denote the joint FK marginal by \(\rho^{\mathrm{FK},2}\), and the FK
marginal of \(\pi_J^2\) by \(\nu_J^2\).

Let \(\Gamma=(\Lambda_n)_{n\ge1}\), where
\(\Lambda_n=[-n,n]^d\cap\mathbb Z^d\). The two coherent companion
densities exist along \(\Gamma\), almost surely. Let
\(D_\Gamma(J,\eta^2)\) be their absolute difference, with
\(D_\Gamma=0\) when there is no infinite paired FK cluster.
The sampled overlap
\[
 Q(\omega^2)
 :=\lim_{n\to\infty}
       \frac1{|\Lambda_n|}
       \sum_{x\in\Lambda_n}\omega_x^{(1)}\omega_x^{(2)}
\]
also exists \(\rho^2\)-almost surely. Moreover,
\[
 \pi_J^2(|Q|)=\nu_J^2(D_\Gamma)
 \qquad\text{for }P_\lambda\text{-almost every }J,
\]
and consequently
\[
 \mathbb E_{\rho^2}|Q|
 =\mathbb E_{\rho^{\mathrm{FK},2}}D_\Gamma.
\]
In particular,
\[
 \rho^2(Q\ne0)>0
 \quad\Longleftrightarrow\quad
 \rho^{\mathrm{FK},2}(D_\Gamma>0)>0.
\]
These conclusions apply to the joint periodic-boundary limits of
Lemma~\ref{lem:periodic-joint-ES} whenever the stated one-replica
\(0/1\)-infinite-cluster properties hold.
\end{corollary}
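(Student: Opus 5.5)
The strategy is to reduce to Theorem~\ref{thm:product_extremal_pc_overlap} through the extremal decomposition of each quenched law \(\pi_J^2\). The almost-sure limits themselves will come from stationarity.

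\textbf{Step 1: existence of the densities and of \(Q\).} The coherent companions of the infinite paired FK cluster are defined from the FK configuration in a translation-covariant way, provided they are labelled without reference to a fixed vertex. I would use Lemma~\ref{lem:stationary-companion-densities}, stated earlier for exactly this purpose, which gives the two densities along cubes. Failing that, I would argue directly, as follows. The function \(x\mapsto \mathbf 1\{x\in \mathcal C_\infty\}\) is a stationary field, so the multidimensional ergodic theorem applies along cubes. For the split into companions, fix the unordered pair and use the fact that \(x\mapsto s(0,x)\mathbf 1\{x\in\mathcal C_\infty\}\) is not stationary, whereas the product field \(s(x,y)\) is. I would consider the absolute imbalance \(|\,|C^A\cap\Lambda_n|-|C^B\cap\Lambda_n|\,|/|\Lambda_n|\) and show that it converges. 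This can be done by fixing an anchor vertex of \(\mathcal C_\infty\) and using the spin-independent quantity \(\omega^{(1)}_x\omega^{(2)}_x\) as a stationary witness, which is Step~2 below.

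For \(Q\): the field \(\omega^{(1)}_x\omega^{(2)}_x\) is stationary under \(\rho^2\), so the ergodic theorem (Tempelman, along cubes) gives \(\rho^2\)-a.s. existence of \(Q\).

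\textbf{Step 2: fixed-\(J\) identity.} For a.e.\ \(J\), decompose \(\pi_J^2\) into extremal components. By Proposition~\ref{prop:full-two-replica-extremals-factor} these are products \(\pi^{(1)}\otimes\pi^{(2)}\) of extremal one-replica states. The \(0/1\) property passes to almost every component, since it holds \(\pi_J^2\)-a.s., and so do the a.s. density existence and the a.s. existence of \(Q\). Within such a component the conditional computation in the proof of Theorem~\ref{thm:product_extremal_pc_overlap} works pointwise rather than only in expectation. By Corollary~\ref{cor:extremal_pc_sector}, on the selected event the overlap equals \(\varepsilon\) on \(C^A\) and \(-\varepsilon\) on \(C^B\) almost surely, with the orientation \(\varepsilon\) fixed by the component. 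Spins off the infinite paired cluster are fair in at least one replica, independently over finite clusters.

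The main obstacle is to show that the contribution of these finite clusters to the spatial average vanishes almost surely, not merely in expectation. For this I would use a conditional law of large numbers. Given \(\eta^2\), the finite-cluster contributions are independent and mean zero, and the clusters meeting \(\Lambda_n\) are finite. The variance is bounded by \(\sum_C |C\cap\Lambda_n|^2/|\Lambda_n|^2\), and this tends to zero once the fraction of \(\Lambda_n\) covered by clusters of diameter \(\ge R\) is small. I would control that fraction by stationarity, which gives uniform integrability of cluster sizes within the finite part, so that the variance \(\to0\) along cubes. Combined with a.s. existence of \(Q\), this gives convergence in probability. Hence \(Q=\varepsilon D_\Gamma\) a.s. in the selected component and \(Q=0\) otherwise, where in the unselected case the densities are equal. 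Therefore \(|Q|=D_\Gamma\), \(\pi^{(1)}\otimes\pi^{(2)}\)-a.s.

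\textbf{Step 3: integration.} Integrating the identity \(|Q|=D_\Gamma\) over the extremal decomposition gives \(\pi_J^2(|Q|)=\nu_J^2(D_\Gamma)\), since \(D_\Gamma\) depends only on the FK configuration. Integrating further over \(P_\lambda\) yields \(\mathbb E_{\rho^2}|Q|=\mathbb E_{\rho^{\mathrm{FK},2}}D_\Gamma\). In fact the identity holds pointwise, \(|Q|=D_\Gamma\) \(\rho^2\)-a.s., and this gives the final equivalence directly, since both quantities are nonnegative. The application to periodic limits then follows from Lemma~\ref{lem:periodic-joint-ES}, which supplies translation invariance and the quenched Gibbs property.
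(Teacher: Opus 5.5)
Your argument works in outline, but it takes a different route from the paper at the central step. The paper never controls the finite-cluster fluctuations. It notes that \(Q\) is measurable with respect to the two-replica tail \(\sigma\)-field, so on each product-extremal component \(\pi^{(1)}\otimes\pi^{(2)}\) of \(\pi_J^2\), tail triviality makes \(Q\) almost surely constant. Bounded convergence identifies this constant with the limiting expected overlap of the component. Theorem~\ref{thm:product_extremal_pc_overlap}, which only uses conditional expectations given \(\eta^{\otimes2}\), then gives \((\pi^{(1)}\otimes\pi^{(2)})(|Q|)=(\nu^{(1)}\otimes\nu^{(2)})(D_\Gamma)\). Integrating over the extremal decomposition and then over \(P_\lambda\) finishes the proof.

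You instead prove the pathwise identity \(Q=\pm D_\Gamma\) by a conditional second-moment estimate. This costs an extra step but gives more: \(|Q|=D_\Gamma\) holds \(\rho^2\)-almost surely. Moreover, the overlap is pointwise constant on each coherent companion, and finite clusters are fairly and independently colored given the bonds under any full two-replica ES state. So your route does not actually need the extremal decomposition.

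The variance step, however, needs a different justification from the one you wrote. You cannot invoke stationarity ``within a component'': neither \(\pi_J^2\) nor its extremal components need be stationary. In any case, stationarity does not give integrability of finite cluster sizes. What works is the following bound on your variance: \(2R/|\Lambda_n|\), plus, for each replica, the fraction of \(x\in\Lambda_n\) lying in a finite cluster of size at least \(R\). Under the stationary joint law \(\rho^2\), the pointwise ergodic theorem makes each such fraction converge almost surely to a conditional probability. That conditional probability tends to \(0\) as \(R\to\infty\), using only almost-sure finiteness of those clusters. This is an almost-sure statement in \((J,\eta^2)\), so it transfers to \(P_\lambda\)-a.e.\ \(J\) and to almost every component.

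You should also account for vertices in a finite nonempty intersection of two infinite ordinary clusters. These are not fairly colored in either replica, but they form a finite set and contribute nothing in the limit.

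Finally, the fallback in Step~1 is confused: \(\omega^{(1)}_x\omega^{(2)}_x\) is spin-dependent, not spin-independent. It is also unnecessary, since Lemma~\ref{lem:stationary-companion-densities} applies directly.
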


\begin{proof}
Lemma~\ref{lem:stationary-companion-densities} gives almost-sure
existence of the coherent companion densities along \(\Gamma\).
The pointwise ergodic theorem applied to
\(\omega_0^{(1)}\omega_0^{(2)}\) gives almost-sure convergence of the
sampled overlap. Set \(Q=0\) on its null nonconvergence set. Changing
finitely many spins changes each spatial average by a quantity tending
to zero, so \(Q\) is measurable with respect to the two-replica tail
sigma-field.

Fix a coupling realization \(J\) for which the quenched ES DLR
equations, the one-replica \(0/1\) properties, the density limits,
and convergence of \(Q\) hold almost surely. By
Theorem~\ref{thm:ES-2replica} and
Proposition~\ref{prop:full-two-replica-extremals-factor}, the extremal
decomposition of \(\pi_J^2\) has the form
\[
 \pi_J^2
 =\int \pi^{(1)}\otimes\pi^{(2)}
       \,\kappa_J(d\pi^{(1)},d\pi^{(2)}),
\]
where both factors are extremal one-replica ES states. Each of the
preceding almost-sure properties holds under \(\kappa_J\)-almost
every component.

On each such component, tail triviality makes \(Q\) almost surely
constant. Bounded convergence identifies this constant with the
limiting expected mutual overlap of its spin marginals. Writing
\(\nu^{(r)}=\Pi_{\mathrm{FK}}\pi^{(r)}\),
Theorem~\ref{thm:product_extremal_pc_overlap} therefore gives
\[
 (\pi^{(1)}\otimes\pi^{(2)})(|Q|)
 =(\nu^{(1)}\otimes\nu^{(2)})(D_\Gamma).
\]
Integration over the extremal decomposition proves
\(\pi_J^2(|Q|)=\nu_J^2(D_\Gamma)\). Both sides are measurable
quenched expectations; measurability of the density function follows
from Lemma~\ref{lem:joint-density-measurability}. Integrating this
identity over \(P_\lambda\) proves the averaged formula. The
probability equivalence follows because \(|Q|\) and \(D_\Gamma\)
are nonnegative.

For the periodic-boundary application,
Lemma~\ref{lem:periodic-joint-ES} supplies joint translation invariance
and the quenched full two-replica ES Gibbs property. Thus the hypotheses
above hold whenever the one-replica \(0/1\) properties hold.
\end{proof}

The quantity \(\mathbb E_{\rho^2}|Q|\) averages absolute mutual
overlaps over the pairs of extremal states in the quenched
decomposition. For a pair consisting of two copies of the same
extremal state, the corresponding overlap is its Edwards--Anderson
self-overlap, as in Corollary~\ref{cor:diagonal_pc_EA}.

\begin{appendix}
\section*{Disorder measurability and choices of density sequences}
\label{app:disorder-measurability}

In a fixed-\(J\) argument, a deterministic density sequence is independent of
all sampled spin and FK configurations; it may depend on \(J\). The
existential quantifier over sequences in
Definition~\ref{def:percolation-selector} is outside the FK probability.
Thus one sequence must witness positive probability for the state at the
fixed \(J\). Compactness of the two density fractions only gives a convergent
subsequence for each individual configuration; it does not by itself give
such a common sequence. Nor does nonconvergence alone imply unequal
subsequential limits. All density-existence hypotheses are understood
with this order of quantifiers, also for two replicas.

For disorder probabilities and averages we use the completion of
\(P_\lambda\). The state families entering such averages are measurable
probability kernels, with Borel versions chosen outside a disorder-null
set. Write \(\mathfrak J=\{-1,+1\}^{E}\) and equip it, and the spin and bond
spaces, with their product Borel sigma-fields. We work on a countable
locally finite graph. Dependence on \(J\), suppressed in the fixed-coupling
notation elsewhere, is made explicit in this appendix.

\begin{lemma}[Joint measurability of companion densities]
\label{lem:joint-density-measurability}
Let \(\mathfrak V\) be the space of admissible van~Hove sequences, encoded
as a Borel subset of \(\operatorname{Fin}(V)^{\mathbb N}\), where
\(\operatorname{Fin}(V)\) is the countable discrete space of nonempty finite
vertex sets. A prescribed class of admissible subsequences of reference
boxes can be used instead. Let
\(\mathcal E^{\mathrm{sf}}_\Gamma\) be the event that the unique infinite
FK cluster exists and both companion density limits along \(\Gamma\)
exist and are unequal. On this event define
\[
 D^{\mathrm{sf}}_\Gamma(J,\eta)
 =\bigl|d_\Gamma(C^0(J,\eta))-d_\Gamma(C^1(J,\eta))\bigr|,
\]
and put \(D^{\mathrm{sf}}_\Gamma=0\) otherwise. Here \(C^0,C^1\) are any
Borel labeling of the two companions. The event
\(\{(J,\Gamma,\eta):(J,\eta)\in\mathcal E^{\mathrm{sf}}_\Gamma\}\), the
function \(D^{\mathrm{sf}}_\Gamma\), and each membership indicator of
\(\mathcal S_{D,\Gamma}(J,\eta)\) are jointly Borel. Consequently, for a
Borel FK kernel \(J\mapsto\nu_J\), both
\[
 \theta^{\mathrm{sf}}(J,\Gamma)
 :=\nu_J(\mathcal E^{\mathrm{sf}}_{\Gamma,J}),
 \qquad
 (J,\Gamma)\longmapsto\int D^{\mathrm{sf}}_\Gamma(J,\eta)\,\nu_J(d\eta)
\]
are Borel, where \(\mathcal E^{\mathrm{sf}}_{\Gamma,J}\) denotes the
\(J\)-section. The same assertions hold for the paired coherent companions
of Section~7 and any Borel two-replica FK kernel, with
\(\eta\) replaced by \((\eta^{(1)},\eta^{(2)})\).
\end{lemma}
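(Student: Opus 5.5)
The plan is to build every object in the statement from countably many Borel operations on the countable product space \(\mathfrak J\times\mathfrak V\times\mathcal H\), where \(\mathfrak V\subset\operatorname{Fin}(V)^{\mathbb N}\) carries the Borel structure of a countable product of discrete spaces.

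\emph{Step 1: connectivity.} For fixed \(x,y\in V\), the events \(\{x\leftrightarrow y\}\) and \(\{r(x,y)=\pm1\}\) are countable unions over finite paths \(\gamma\) from \(x\) to \(y\) of cylinder events in \((J,\eta)\): all edges of \(\gamma\) are open and \(\prod_{b\in\gamma}J_b=\pm1\). Restricting to the compatible set \(\{\eta\sim J\}\), which is itself a countable intersection of cylinder conditions over finite cycles, makes the parity independent of the path. Likewise \(\{|C(x)|=\infty\}\) is \(\bigcap_k\bigcup_{y:\operatorname{dist}(x,y)\ge k}\{x\leftrightarrow y\}\), so \(\{N_\infty=1\}\) is Borel.

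\emph{Step 2: a Borel labeling.} Fix the enumeration \(v_1,v_2,\dots\) of \(V\). Let \(x_*(J,\eta)\) be the first vertex lying in an infinite cluster. Then \(\{x_*=v_i\}\) is Borel, and we set \(C^0=\{y: r(x_*,y)=1\}\) and \(C^1=\{y:r(x_*,y)=-1\}\). The membership indicators \(\mathbf 1\{y\in C^j\}\) are then Borel in \((J,\eta)\), and trivially in \(\Gamma\) as well.

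\emph{Step 3: density fractions.} The fraction \(f^j_n(J,\Gamma,\eta)=|C^j\cap\Lambda_n|/|\Lambda_n|\) splits over the countably many values of the coordinate \(\Lambda_n\in\operatorname{Fin}(V)\). On each piece it is a finite sum of Borel indicators, so it is jointly Borel. Existence of the limit is the Borel condition \(\limsup_n f^j_n=\liminf_n f^j_n\). Hence \(\mathcal E^{\mathrm{sf}}_\Gamma\) is Borel, and so are \(D^{\mathrm{sf}}_\Gamma=|\lim f^0_n-\lim f^1_n|\) times its indicator and \(\mathbf 1\{y\in\mathcal S_{D,\Gamma}\}=\sum_j\mathbf 1\{y\in C^j\}\,\mathbf 1\{d(C^j)>d(C^{1-j})\}\,\mathbf 1_{\mathcal E^{\mathrm{sf}}_\Gamma}\). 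The answer does not depend on the chosen labeling, since \(D\) and \(\mathcal S\) are symmetric in the two companions.

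\emph{Step 4: integration.} For a Borel kernel \(J\mapsto\nu_J\), the map \((J,\Gamma)\mapsto\int g(J,\Gamma,\eta)\,\nu_J(d\eta)\) is Borel for every bounded jointly Borel \(g\). This is the standard monotone class argument, starting from product indicators \(\mathbf 1_A(J,\Gamma)\mathbf 1_B(\eta)\). Applying it to \(g=\mathbf 1_{\mathcal E^{\mathrm{sf}}}\) gives \(\theta^{\mathrm{sf}}\), and applying it to \(g=D^{\mathrm{sf}}\) gives the second map.

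\emph{Two replicas.} Replace \(r\) by \(s(x,y)\), which is Borel because it is built from the two replica parities via Step 1. Take \(x_*\) to be the first vertex lying in infinite clusters in both replicas. Everything else is verbatim.

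The only delicate point is Step 2: the labeling must be produced without any choice depending on the whole configuration in a non-Borel way. The first-vertex rule is what ensures this.
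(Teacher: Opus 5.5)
Your proposal is correct and follows the paper's proof essentially step by step. Both use finite-path cylinder events for parity connectivity and compatibility, and a first-vertex Borel labeling of the two companions. Both get Borel density fractions by splitting over the countably many values of \(\Lambda_n\); you use \(\limsup=\liminf\) where the paper uses the Cauchy criterion. Both finish with a monotone-class argument for integration against the kernel.

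One small repair is needed in the two-replica case. Root the labeling at the first vertex \(x\) whose paired cluster \(C^{(1)}(x)\cap C^{(2)}(x)\) is infinite, as the paper does. This infinitude event is Borel by the same \(\bigcap_k\bigcup_y\) formula. Do not root it at the first vertex lying in infinite clusters of both replicas. The two choices agree under the one-replica \(0/1\) property. But suppose \(N^p_\infty=1\) while one replica has several infinite clusters. Then your vertex can land in a finite intersection, and your labeled sets are not the companions of the infinite paired cluster.

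The paper also checks, in its last sentence, that nesting, exhaustion and the van~Hove conditions are countable Borel conditions, so that \(\mathfrak V\) really is a Borel subset. You take this for granted.
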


\begin{proof}
Fix an enumeration of \(V\). Connectivity with prescribed parity is a
countable union of finite-path events, jointly Borel in the couplings and
bonds. Compatibility excludes the countably many frustrated open cycles.
Infinitude and uniqueness of a cluster are also countable Borel conditions.
On the compatible unique-infinite-cluster event, choose its first vertex
in the enumeration and label the companions by their parities relative to
that vertex. Set both labeled sets empty off that event. This is a Borel
labeling, and
\[
 u_n^i(J,\Gamma,\eta)
 :=\frac{|C^i(J,\eta)\cap\Lambda_n|}{|\Lambda_n|},\qquad i=0,1,
\]
is jointly Borel. The Cauchy criterion expresses existence of both limits
by countably many Borel conditions. Comparing their limits proves the
claims about the event, density difference and selector. The labeling
does not affect either the event or the denser companion.
Integration of a bounded Borel function against a Borel kernel is Borel,
by a monotone-class argument.

For two replicas, label the companions from the first vertex of the unique
infinite paired cluster, using the product of the two path parities.
The required connectivity relations and the infinitude of the intersection
are Borel by the same argument. This proves the two-replica claims.
Nesting, exhaustion and the van~Hove conditions on a sequence, including
any additionally required volume-ratio condition, are countable Borel
conditions, as are the increasing-index conditions for subsequences.
\end{proof}

\begin{proposition}[Existential events and measurable witnessing sequences]
\label{prop:measurable-density-sequences}
For a Borel FK kernel \(J\mapsto\nu_J\), the disorder event
\[
 \mathcal A^{\mathrm{sf}}
 :=\{J:\text{there exists }\Gamma\in\mathfrak V
                  \text{ with }\theta^{\mathrm{sf}}(J,\Gamma)>0\}
\]
is universally measurable. It therefore has a well-defined
\(P_\lambda\)-probability without any choice of witnesses. On this event a
universally measurable witnessing map \(J\mapsto\Gamma(J)\) exists and
has a Borel version outside a \(P_\lambda\)-null set. The same holds for
two replicas, including the additional almost-sure density-convergence
requirement in Definition~\ref{def:subseq-pc-density-imbalance}.
\end{proposition}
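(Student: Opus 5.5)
The natural route is through the theory of analytic sets. By Lemma~\ref{lem:joint-density-measurability}, the map \((J,\Gamma)\mapsto\theta^{\mathrm{sf}}(J,\Gamma)\) is jointly Borel on \(\mathfrak J\times\mathfrak V\). Here \(\mathfrak V\) is a Borel subset of the Polish space \(\operatorname{Fin}(V)^{\mathbb N}\), which is a countable product of discrete spaces and hence Polish. The set
\[
 B:=\{(J,\Gamma)\in\mathfrak J\times\mathfrak V:\ \theta^{\mathrm{sf}}(J,\Gamma)>0\}
\]
is therefore Borel in a standard Borel space. The disorder event satisfies \(\mathcal A^{\mathrm{sf}}=\operatorname{proj}_{\mathfrak J}B\). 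Projections of Borel sets are analytic, and analytic sets are universally measurable. In particular, \(\mathcal A^{\mathrm{sf}}\) is measurable for the completion of \(P_\lambda\). This needs no choice of witnesses.

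For the witnessing map, I will invoke the Jankov--von Neumann uniformization theorem. It states that an analytic (here Borel) set \(B\subset X\times Y\) in standard Borel spaces admits a uniformizing function \(g:\operatorname{proj}_X B\to Y\) that is \(\sigma(\Sigma^1_1)\)-measurable, hence universally measurable, with \((x,g(x))\in B\). Setting \(\Gamma(J):=g(J)\) on \(\mathcal A^{\mathrm{sf}}\) gives \(\theta^{\mathrm{sf}}(J,\Gamma(J))>0\) for every \(J\) in the event.

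For the Borel version, I will use that a universally measurable map into a standard Borel space agrees \(\mu\)-a.e. with a Borel map for any \(\sigma\)-finite Borel measure \(\mu\). Apply this with \(\mu=P_\lambda\) restricted to \(\mathcal A^{\mathrm{sf}}\). To see it, embed \(\mathfrak V\) Borel-isomorphically in \([0,1]\), approximate by simple functions using countably many Borel sets modulo null sets, and redefine on a null set. Modifying the map there does not affect any \(P_\lambda\)-statements.

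For two replicas, the same argument applies to the set
\[
 B^{(2)}:=\{(J,\Gamma):\ \text{the paired densities exist }\nu^2_J\text{-a.s. along }\Gamma\ \text{and}\ \nu^2_J(D_\Gamma>0)>0\}.
\]
The main obstacle is showing that \(B^{(2)}\) is Borel, specifically the almost-sure convergence clause. Let \(\mathcal C(J,\Gamma)\) be the jointly Borel event (by Lemma~\ref{lem:joint-density-measurability}) on which both companion density limits exist. The clause is then \(\{(J,\Gamma):\nu^2_J(\mathcal C_{J,\Gamma})=1\}\). The integral of a Borel indicator against a Borel kernel is Borel in \((J,\Gamma)\) by a monotone class argument, so this level set is Borel. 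Once that is established, projection, uniformization and the null-set modification go through verbatim.
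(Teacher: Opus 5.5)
Your proposal is correct and follows the paper's proof essentially step by step. The common steps are: Borelness of $\{\theta^{\mathrm{sf}}>0\}$ from the joint-measurability lemma, analyticity and hence universal measurability of its projection, and a universally measurable selector (you name Jankov--von Neumann, while the paper cites an analytic selection theorem). Both then take a Borel version off a $P_\lambda$-null set, and for two replicas intersect with the Borel set where the density-convergence event has kernel probability one before projecting. Your version only adds detail the paper leaves implicit, such as the Polish structure of $\operatorname{Fin}(V)^{\mathbb N}$ and the argument for the Borel version.
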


\begin{proof}
By Lemma~\ref{lem:joint-density-measurability},
\(\mathcal R=\{(J,\Gamma):\theta^{\mathrm{sf}}(J,\Gamma)>0\}\)
is Borel. Its projection is analytic and hence universally measurable;
the analytic selection theorem supplies a universally measurable selector
on that projection.\footnote{N.~El Karoui and X.~Tan (2013),
\emph{Capacities, Measurable Selection and Dynamic Programming, Part~I:
Abstract Framework}, arXiv:1310.3363, Theorem~2.23.}
For a fixed probability law, a completion-measurable map into a standard
Borel space admits a Borel version almost surely. The additional condition
that both densities converge with kernel probability one is Borel, again
by Lemma~\ref{lem:joint-density-measurability}, and can be intersected with
\(\mathcal R\) before taking its projection. The two-replica proof is identical.
\end{proof}

Thus an arbitrary, possibly nonmeasurable choice of witnessing sequences
is unnecessary for the existential definition. Whenever the chosen
densities, selectors or orientations themselves enter disorder averages,
a measurable witness is used. Different witnesses need not give the same
density values. Measurable selection does not establish that a witnessing
sequence exists; it only selects one from each nonempty section. Once
\(J\) is fixed, \(\Gamma(J)\) is fixed under local modifications of the
sampled configurations, so the fixed-\(J\) tail arguments are unchanged.

For the existential statements over extremal states in
Corollaries~\ref{cor:coexistence_some_vanhove} and
\ref{cor:spin_glass_some_vanhove}, state parameters may also be included
in the projection. At fixed finite \(\beta\), the joint ES Gibbs graph
\(\{(J,\pi):\pi\in\mathcal G^{\mathrm{ES}}_{J,\beta}\}\) is closed by
the countable local DLR equations. Its extremal subgraph is Borel:
nonextremality is the union over \(m\ge1\) of the compact projections of
\[
 \pi=\tfrac12(\pi_1+\pi_2),\qquad
 \pi_i\in\mathcal G^{\mathrm{ES}}_{J,\beta},\qquad
 \operatorname{dist}(\pi_1,\pi_2)\ge m^{-1},
\]
where \(\operatorname{dist}\) metrizes weak convergence. The \(0/1\)
property and the density conditions are Borel. Hence the same projection
argument applies when the witnessing state, or pair of states, is also
quantified existentially.

\begin{lemma}[A common density sequence under stationarity]
\label{lem:stationary-companion-densities}
Let \(\rho^{\mathrm{FK}}\) be a translation-invariant joint law of
\((J,\eta)\) on \(\mathbb Z^d\), supported on compatible configurations
with at most one infinite FK cluster. Both signed companions of that
cluster have density limits along the deterministic cubes
\(\Lambda_n=[-n,n]^d\cap\mathbb Z^d\), almost surely. The same conclusion
holds for the two coherent companions under a translation-invariant joint
law of \((J,\eta^{(1)},\eta^{(2)})\) with compatible configurations and
the one-replica \(0/1\) property in each replica.
\end{lemma}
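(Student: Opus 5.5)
The plan is to reduce the statement to the multiparameter pointwise ergodic theorem for cubes (Wiener/Tempelman), applied on an enlarged stationary space carrying a covariant labeling of the companions. The first point is that the companion \emph{pair} is translation covariant as an unordered pair, but no translation-covariant ordered labeling exists a priori. So the densities of \(C^0,C^1\) cannot be written directly as ergodic averages of a single local indicator. The remedy is to work with the \emph{signed} and \emph{unsigned} counts separately, which are both stationary observables:
\[
 a_n:=\frac{|\mathcal C_\infty\cap\Lambda_n|}{|\Lambda_n|},\qquad
 b_n:=\frac{1}{|\Lambda_n|}\sum_{x\in\Lambda_n} r(o,x)\,\mathbf 1_{\{x\in\mathcal C_\infty\}},
\]
where \(o\) is any vertex of the infinite cluster. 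Then \(d(C^0)=\lim(a_n\pm b_n)/2\), so it suffices to show that \(a_n\) and \(b_n\) converge almost surely.

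For \(a_n\), the indicator \(\mathbf 1_{\{x\in\mathcal C_\infty\}}\) is the translate of the measurable function \(\mathbf 1_{\{0\in\mathcal C_\infty\}}\), by uniqueness of the infinite cluster. The cube ergodic theorem therefore gives almost-sure convergence of \(a_n\) to the conditional expectation given the invariant \(\sigma\)-field.

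For \(b_n\), the sign depends on the base point \(o\). Changing \(o\) within \(\mathcal C_\infty\) multiplies \(b_n\) by \(\pm1\), so convergence of \(|b_n|\), or of \(b_n\) with a fixed \(o\), is base-point independent. To obtain an ergodic average, I would introduce an auxiliary independent fair sign \(\varepsilon\), together with the spin-like field \(\sigma_x:=\varepsilon\, r(o_*,x)\) on \(\mathcal C_\infty\) and \(\sigma_x:=0\) elsewhere. Here \(o_*\) is chosen by the enumeration as in Lemma~\ref{lem:joint-density-measurability}. The claim is that the joint law of \((J,\eta,\sigma)\) is translation invariant. A shift maps \(\sigma\) to \(\pm\) the field built from the shifted configuration, and the fair sign \(\varepsilon\) absorbs this global \(\pm\). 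This is exactly the symmetric colouring of the infinite cluster; it is measurable and, as in Theorem~\ref{thm:FK-surjective}, has the correct covariance. Then \(b_n=\varepsilon\,|\Lambda_n|^{-1}\sum_{x\in\Lambda_n}\sigma_x\) up to a sign, and the ergodic theorem for \(\sigma_0\) yields almost-sure convergence of \(b_n\). Since \(\varepsilon\) is independent of \((J,\eta)\), the resulting convergence event is a \((J,\eta)\)-event of full measure.

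Combining the two gives existence of both companion densities. The main obstacle I expect is the invariance check for the lifted law, namely that shifting commutes with the random base-sign up to a global flip. A second potential difficulty is that ergodic decomposition is not needed here, but the limits may be random; the cube ergodic theorem handles that case directly.

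The two-replica case proceeds the same way. The stationary observables are \(\mathbf 1_{\{x\in C^{(1)}_\infty\cap C^{(2)}_\infty\}}\) and the overlap-like field \(\varepsilon\, s(o_*,x)\) on the intersection. Transitivity and multiplicativity of \(s\) (Lemma~\ref{Lemma1tt}) give the same covariance, and Lemma~\ref{Lemma5} gives uniqueness of the infinite paired cluster.
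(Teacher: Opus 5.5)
Your proposal is correct and is essentially the paper's argument: the paper also makes the law stationary by a fair, root-independent coloring (of all FK clusters, with each replica colored independently in the two-replica case) and applies the pointwise ergodic theorem along cubes to $f_\pm=\mathbf 1_{\{0\in I_\infty,\ \omega_0=\pm1\}}$, which are just $(\mathbf 1_{\{0\in\mathcal C_\infty\}}\pm\sigma_0)/2$ in terms of your unsigned and signed observables. Your explicit check that the fair sign absorbs the global $\pm$ produced by a shift is the translation covariance that the paper asserts in one line.
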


\begin{proof}
Before any density-based selection, color each FK cluster independently
with its two compatible spin assignments having probability \(1/2\),
including the infinite cluster. The coloring kernel is jointly Borel by
the enumeration construction and is intrinsically translation-covariant,
since its law does not depend on the chosen roots. Thus the colored joint
law is stationary. Apply the pointwise ergodic theorem along cubes to
\[
 f_\pm(J,\eta,\omega)
 :=\mathbf1_{\{0\in I_\infty(\eta),\ \omega_0=\pm1\}}.
\]
Their spatial averages are the density fractions of the two companions
labeled by the assigned spin; both averages converge almost surely.
The pointwise ergodic theorem requires stationarity, not ergodicity, for
this conclusion.\footnote{See E.~Lindenstrauss (2001),
\emph{Pointwise theorems for amenable groups}, Inventiones mathematicae
\textbf{146}, 259--295; deterministic cubes are tempered F\o lner sets.}
Existence of both limits is independent of exchanging the two spin
labels, so it holds for the original FK law.

For two replicas, color the clusters in each replica fairly and
independently, and apply the same theorem to
\[
 f^{(2)}_\pm
 :=\mathbf1_{\{0\in I^{(1)}_\infty\cap I^{(2)}_\infty,
                       \ \omega^{(1)}_0\omega^{(2)}_0=\pm1\}}.
\]
On an infinite intersection these averages count the two coherent
companions. A finite intersection has zero density. Projecting out the
colors gives the assertion.
\end{proof}

In particular, disintegrating a stationary periodic-boundary joint limit
as \(\rho^{\mathrm{FK}}(dJ,d\eta)=P_\lambda(dJ)\nu_J(d\eta)\) shows that
these same deterministic cubes work \(\nu_J\)-almost surely for
\(P_\lambda\)-almost every \(J\). The same conclusion holds for the
two-replica conditional FK laws. Under an extremal decomposition of the
associated conditional ES state, the density-convergence event has
probability one on almost every product-extremal component. This is the
common sequence used in Corollary~\ref{cor:periodic_pc_overlap}.
The subsequence used to take a weak
limit of the finite-volume joint laws can be chosen deterministically;
it is distinct from a density-averaging sequence. Joint weak convergence
does not assert quenched weak convergence along that subsequence.
Stationarity is a property of the actual joint law and is not implied by
an arbitrary measurable choice of extremal states or density sequences.

\end{appendix}

\begin{acks}[Acknowledgments]
Daniel Stein is also affiliated with the Department of Physics, New York University.
\end{acks}



\begin{thebibliography}{17}
\bibitem[Biskup et al.(2000)]{biskup2000gibbs}
\textsc{Biskup, M.}, \textsc{Borgs, C.}, \textsc{Chayes, J. T.} and \textsc{Koteck{\'y}, R.} (2000). Gibbs states of graphical representations of the Potts model with external fields. \textit{Journal of Mathematical Physics} \textbf{41} 1170--1210.

\bibitem[Carlson et al.(1990)]{carlson1990bethe}
\textsc{Carlson, J. M.}, \textsc{Chayes, J. T.}, \textsc{Chayes, L.}, \textsc{Sethna, J. P.} and \textsc{Thouless, D. J.} (1990). Bethe lattice spin glass: the effects of a ferromagnetic bias and external fields. I. Bifurcation analysis. \textit{Journal of Statistical Physics} \textbf{61} 987--1067.

\bibitem[Duminil-Copin(2019)]{duminil2018sixty}
\textsc{Duminil-Copin, H.} (2019). Sixty years of percolation. In \textit{Proceedings of the International Congress of Mathematicians: Rio de Janeiro 2018} 2829--2856. World Scientific.

\bibitem[Edwards and Sokal(1988)]{edwards1988generalization}
\textsc{Edwards, R. G.} and \textsc{Sokal, A. D.} (1988). Generalization of the {Fortuin--Kasteleyn--Swendsen--Wang} representation and {Monte Carlo} algorithm. \textit{Physical Review D} \textbf{38} 2009--2012.

\bibitem[Edwards and Anderson(1975)]{edwards1975theory}
\textsc{Edwards, S. F.} and \textsc{Anderson, P. W.} (1975). Theory of spin glasses. \textit{Journal of Physics F: Metal Physics} \textbf{5} 965--974.

\bibitem[Fortuin and Kasteleyn(1972)]{fortuin1972random}
\textsc{Fortuin, C. M.} and \textsc{Kasteleyn, P. W.} (1972). On the random-cluster model: I. Introduction and relation to other models. \textit{Physica} \textbf{57} 536--564.

\bibitem[Gandolfi et al.(1992)]{gandolfi1992uniqueness}
\textsc{Gandolfi, A.}, \textsc{Keane, M. S.} and \textsc{Newman, C. M.} (1992). Uniqueness of the infinite component in a random graph with applications to percolation and spin glasses. \textit{Probability Theory and Related Fields} \textbf{92} 511--527.

\bibitem[Grimmett(2004)]{grimmett2004random}
\textsc{Grimmett, G.} (2004). The random-cluster model. In \textit{Probability on Discrete Structures} 73--123. Springer.

\bibitem[Harary(1953)]{harary1953balance}
\textsc{Harary, F.} (1953). On the notion of balance of a signed graph. \textit{Michigan Mathematical Journal} \textbf{2} 143--146.

\bibitem[Machta et al.(2008)]{machta2008percolation}
\textsc{Machta, J.}, \textsc{Newman, C. M.} and \textsc{Stein, D. L.} (2008). The percolation signature of the spin glass transition. \textit{Journal of Statistical Physics} \textbf{130} 113--128.

\bibitem[M{\"u}nster and Weigel(2023)]{munster2023cluster}
\textsc{M{\"u}nster, L.} and \textsc{Weigel, M.} (2023). Cluster percolation in the two-dimensional {Ising} spin glass. \textit{Physical Review E} \textbf{107} 054103.

\bibitem[M{\"u}nster and Weigel(2026)]{munster2026cluster}
\textsc{M{\"u}nster, L.} and \textsc{Weigel, M.} (2026).
Cluster percolation in the three-dimensional $\pm J$ random-bond
{Ising} model. \textit{Physical Review E} \textbf{113} 024139.
\href{https://doi.org/10.1103/y464-wxbk}{doi:10.1103/y464-wxbk}.

\bibitem[Pei(2026)]{pei2026blue}
\textsc{Pei, Y. R.} (2026). Towards the spin-glass transition in finite dimensions via blue percolation. Preprint. \href{https://arxiv.org/abs/2609.22301v1}{arXiv:2609.22301v1}.

\bibitem[Ruelle(2004)]{ruelle2004thermodynamic}
\textsc{Ruelle, D.} (2004). \textit{Thermodynamic Formalism: The Mathematical Structure of Equilibrium Statistical Mechanics}. Cambridge University Press.

\bibitem[Swendsen and Wang(1987)]{swendsen1987nonuniversal}
\textsc{Swendsen, R. H.} and \textsc{Wang, J.-S.} (1987). Nonuniversal critical dynamics in {Monte Carlo} simulations. \textit{Physical Review Letters} \textbf{58} 86--88.

\bibitem[Zaslavsky(1982)]{zaslavsky1982signed}
\textsc{Zaslavsky, T.} (1982). Signed graphs. \textit{Discrete Applied Mathematics} \textbf{4} 47--74.

\bibitem[Zaslavsky(1989)]{zaslavsky1989biased}
\textsc{Zaslavsky, T.} (1989). Biased graphs. I. Bias, balance, and gains. \textit{Journal of Combinatorial Theory, Series B} \textbf{47} 32--52.

\end{thebibliography}
\end{document}